\title{Near-Optimal Distributed Degree+1 Coloring\thanks{This paper incorporates results from the technical report \cite{HNT21} (by a subset of the authors of this paper) on $(\Delta+1)$-coloring in the {\LOCAL} model and is to be considered as the publication of that work. This excludes the additional results in \cite{HNT21} needed for a {\CONGEST} implementation, which will be published separately later.}
}
\newcommand{\CoAuthorMark}[1]{\footnotemark[#1]}
\author[1]{Magn\'us M. Halld\'orsson\thanks{Partially supported by Icelandic Research Fund grants 174484, 217965.}}
\author[2]{Fabian Kuhn}
\author[1]{Alexandre Nolin\protect\CoAuthorMark{2}}
\author[3]{Tigran Tonoyan\thanks{Partially supported by the European Union’s Horizon 2020 Research and Innovation Programme under grant agreement no. 755839.}}
\affil[1]{ICE-TCS \& Department of Computer Science, Reykjavik University, Iceland.}
\affil[2]{Department of Computer Science, University of Freiburg, Germany}
\affil[3]{Technion -- Israel Institute of Technology, Israel}
\affil[ ]{ \href{mailto:mmh@ru.is}{mmh@ru.is}; \href{mailto:kuhn@cs.uni-freiburg.de}{kuhn@cs.uni-freiburg.de}; \href{mailto:alexandren@ru.is}{alexandren@ru.is}; \href{mailto:ttonoyan@gmail.com}{ttonoyan@gmail.com}}
\newcommand*{\bx}{bx}
\newcommand*{\IfBold}{
  \ifx\f@series\bx
    \expandafter\@firstoftwo
  \else
    \expandafter\@secondoftwo
  \fi
}
\newcommand{\alg}[2][]{{\IfBold{\MakeUppercase{#2}}{\textup{\textsc{#2}}}}{#1}\xspace}
\newcommand{\trycolor}[1][]{\alg[#1]{TryColor}}
\newcommand{\tryrandomcolor}{\alg{TryRandomColor}}
\newcommand{\computeacd}{\alg{ComputeACD}}
\newcommand{\disjointsample}[1][]{\alg[#1]{DisjointSample}}
\newcommand{\slackgeneration}[1][]{\alg[#1]{GenerateSlack}}
\newcommand{\putaside}[1][]{\alg[#1]{PutAside}}
\newcommand{\synchronizedcolortrial}{\alg{SynchColorTrial}}
\newcommand{\multitrial}[1][]{\alg[#1]{MultiTrial}}
\newcommand{\slackcolor}[1][]{\alg[#1]{SlackColor}}
\newcommand{\transversal}[1][]{\alg[#1]{Transversal}}
\newcommand{\lowdegreesample}[1][]{\alg[#1]{LowDegreeSample}}
\newcommand{\model}[1]{{\IfBold{\MakeUppercase{#1}}{\textup{\textsc{#1}}}}\xspace}
\newcommand{\LOCAL}{\model{Local}}
\newcommand{\CONGEST}{\model{Congest}}
\newcommand{\MPC}{\model{MPC}}
\newcommand{\eps}{\varepsilon}
\renewcommand{\epsilon}{\eps}
\DeclareMathOperator{\poly}{poly}
\definecolor{darkgreen}{rgb}{0,0.5,0}
\definecolor{darkblue}{rgb}{0,0,0.6}
\Crefname{remark}{Remark}{Remarks}
\Crefname{observation}{Observation}{Observations}
\crefname{section}{Sec.}{Sec.}
\crefname{algorithm}{Alg.}{Alg.}
\newtheorem{theorem}{Theorem}
\newtheorem{lemma}{Lemma}
\newtheorem{claim}{Claim}
\newtheorem{corollary}{Corollary}
\newtheorem{definition}{Definition}
\newtheorem{observation}{Observation}
\newtheorem{proposition}{Proposition}
\DeclareMathOperator*{\Exp}{\mathbb{E}}
\DeclarePairedDelimiter{\abs}{\lvert}{\rvert}
\DeclarePairedDelimiter{\card}{\lvert}{\rvert}
\DeclarePairedDelimiter{\set}{\lbrace}{\rbrace}
\DeclarePairedDelimiter{\event}{\lbrack}{\rbrack}
\DeclarePairedDelimiter{\range}{\lbrack}{\rbrack}
\DeclarePairedDelimiter{\parens}{\lparen}{\rparen}
\DeclarePairedDelimiter{\ceil}{\lceil}{\rceil}
\newcommand{\knuthupuparrow}{\mathbin{\uparrow\uparrow}}
\newcommand{\bbI}{\mathbb{I}}
\newcommand{\BAD}{\ensuremath{\mathtt{BAD}}} 
\newcommand{\cshat}{c_{\mathrm{shatter}}}
\newcommand{\colSpace}{\mathcal{C}} 
\newcommand{\colspace}{\colSpace} 
\newcommand{\hcol}{\colSpace^{\mathrm{heavy}}}
\newcommand{\lcol}{\colSpace^{\mathrm{light}}}
\newcommand{\hcolS}[1]{\colSpace^{\mathrm{heavy},#1}}
\newcommand{\lcolS}[1]{\colSpace^{\mathrm{light},#1}}
\newcommand{\pal}{\Psi} 
\newcommand{\col}{\psi} 
\newcommand{\acset}{\mathcal{S}_{\mathrm{ac}}} 
\newcommand{\core}{I} 
\newcommand{\disc}{\bar{\eta}} 
\newcommand{\unev}{\eta} 
\newcommand{\barsigma}{\bar{\sigma}}
\newcommand{\spar}{\zeta} 
\newcommand{\rmsup}[2]{#1^{\mathrm{#2}}}
\newcommand{\discbal}{\rmsup{\disc}{bal}} 
\newcommand{\discunb}{\rmsup{\disc}{unb}} 
\newcommand{\sparbal}{\rmsup{\spar}{bal}} 
\newcommand{\sparunb}{\rmsup{\spar}{unb}} 
\newcommand{\Nbal}{\rmsup{N}{bal}} 
\newcommand{\Nunb}{\rmsup{N}{unb}} 
\newcommand{\dbal}{\rmsup{d}{bal}} 
\newcommand{\pgen}{p_{\mathrm{g}}} 
\newcommand{\pdisj}{p_{\mathrm{s}}} 
\newcommand{\cext}{c_{\mathrm{e}}} 
\newcommand{\cant}{c_{\mathrm{a}}} 
\newcommand{\smin}{s_{\min}}
\newcommand{\sminpow}{\rho}
\newcommand{\iratio}{\vartheta} 
\newcommand{\rmsub}[2]{#1_{\mathrm{#2}}}
\newcommand{\Vrm}[1]{\rmsub{V}{#1}}
\newcommand{\Vsp}{\Vrm{sparse}}  
\newcommand{\Vdense}{\Vrm{dense}} 
\newcommand{\Vun}{\Vrm{uneven}} 
\newcommand{\Vue}{\Vrm{ue}} 
\newcommand{\Vbal}{\Vrm{balanced}} 
\newcommand{\Vst}{\Vrm{start}} 
\newcommand{\Vstart}{\Vst} 
\newcommand{\Veasy}{\Vrm{easy}} 
\newcommand{\Vheavy}{\Vrm{heavy}} 
\newcommand{\Vtough}{\Vrm{tough}} 
\newcommand{\Vdisc}{\Vrm{disc}}
\newcommand{\Vhe}{\rmsup{V}{he}} 
\newcommand{\UP}{A_v}
\newcommand{\UE}{B_v}
\newcommand{\bUP}{\tilde{\UP}}
\newcommand{\epsrm}[1]{\rmsub{\eps}{#1}}
\newcommand{\ehc}{\epsrm{hc}}  
\newcommand{\eub}{\epsrm{ub}}  
\newcommand{\eacd}{\epsrm{ac}}  
\newcommand{\espa}{\epsrm{sp}}  
\newcommand{\ehat}{\hat{\eps}}  
\newcommand{\DeloLC}{\ensuremath{\mathrm{\Delta1LC}}\xspace}
\newcommand{\DeloC}{\ensuremath{\mathrm{\Delta1C}}\xspace}
\newcommand{\degoLC}{\ensuremath{\mathrm{D1LC}}\xspace}
\newcommand{\degoC}{\ensuremath{\mathrm{D1LC}}\xspace}
\begin{document}

\pagenumbering{roman}
\begin{titlepage}

\date{}

\maketitle

\begin{abstract}
  We present a new approach to randomized distributed graph coloring that is simpler and more efficient than previous ones. In particular, it allows us to tackle the $(\deg+1)$-list-coloring ($\degoLC$) problem, where each node $v$ of degree $d_v$ is assigned a palette of $d_v+1$ colors, and the objective is to find a proper coloring using these palettes. While for $(\Delta+1)$-coloring (where $\Delta$ is the maximum degree), there is a fast randomized distributed $O(\log^3\log n)$-round algorithm (Chang, Li, and Pettie \cite{CLP20}), no $o(\log n)$-round algorithms are known for the $\degoLC$ problem.
  
We give a randomized distributed algorithm for $\degoLC$
that is optimal under plausible assumptions about the deterministic complexity of the problem.  Using the recent deterministic algorithm of Ghaffari and Kuhn~\cite{GK21}, our algorithm runs in $O(\log^3 \log n)$ time, matching the best bound known for $(\Delta+1)$-coloring. In addition, it colors all nodes of degree $\Omega(\log^7 n)$ in $O(\log^* n)$ rounds.
  
  A key contribution is a subroutine to generate slack for $\degoLC$. When placed into the framework of Assadi, Chen, and Khanna \cite{ACK19} and Alon and Assadi \cite{AA20}, this almost immediately leads to a palette sparsification theorem for $\degoLC$, generalizing the results of \cite{ACK19,AA20}. That gives fast algorithms for $\degoLC$ in three different models: an $O(1)$-round algorithm in the \MPC model with $\tilde{O}(n)$ memory per machine; a single-pass semi-streaming algorithm in dynamic streams; and an $\tilde{O}(n\sqrt{n})$-time algorithm in the standard query model.
\end{abstract}
\thispagestyle{empty}
\end{titlepage}

{   \newpage
    \smallskip
    \hypersetup{linkcolor=darkblue}
    \tableofcontents
    \setcounter{page}{2}
    \thispagestyle{empty}
}
\newpage
\pagenumbering{arabic}

\section{Introduction and Related Work}
\label{sec:intro}

The distributed vertex coloring problem is one of the defining and probably the most intensively studied problem of the area of distributed graph algorithms. In the standard version of the problem we are given a graph $G=(V,E)$ that at the same time defines a communication network and also the graph to be colored. In a distributed coloring algorithm, the nodes $V$ of $G$ communicate with each other in synchronous rounds by exchanging messages over the edges $E$ of $G$. Initially, the nodes of $G$ do not know anything about $G$ (except possibly for some global parameters such as, for example, the number of nodes $n$, the maximum degree $\Delta$, or approximations thereof), and at the end, each node $v\in V$ must output a color such that adjacent nodes are colored with different colors and such that the overall number of colors is from a given restricted domain. If adjacent nodes can exchange arbitrarily large messages in every communication round, this distributed model is known as the \LOCAL model and if messages are restricted to $O(\log n)$ bits per edge in each round, the model is known as the \CONGEST model~\cite{peleg00}. The time or round complexity of a distributed algorithm in the \LOCAL or \CONGEST model is the total number of rounds until all nodes terminate.

\paragraph{Early work on distributed coloring.} The distributed coloring problem was first studied in a seminal paper by Linial~\cite{linial92}, which essentially also started the whole area of local distributed graph algorithms. Linial showed in particular that any deterministic distributed algorithm for computing an $O(1)$-coloring of a ring network requires $\Omega(\log^* n)$ rounds. He also showed that in $O(\log^* n)$ rounds, it is possible to (deterministically) color arbitrary graphs of maximum degree $\Delta$ with $O(\Delta^2)$ colors. The $\Omega(\log^* n)$ lower bound was later extended to randomized algorithms by Naor~\cite{Naor91}. With a simple sequential greedy algorithm, one can color the vertices of a graph with at most $\Delta+1$ colors, and most of the work on distributed coloring was therefore also on solving the $(\Delta+1)$-coloring problem. Already when Linial's paper came out in 1987, it was clear that the randomized parallel maximal independent set algorithms developed shortly before by Luby~\cite{luby86} and Alon, Babai, and Itai~\cite{alon86} can be used to obtain a randomized distributed $O(\log n)$-round algorithm to compute a $(\Delta+1)$-coloring. In fact, even the na\"{\i}ve parallel  coloring algorithm, where each node repeatedly chooses a uniformly random color among the still available colors and keeps the color if no neighbor concurrently tries the same color, leads to an $O(\log n)$-round distributed $(\Delta+1)$-coloring algorithm~\cite{johansson99}.

\paragraph{A brief history on distributed \boldmath$(\Delta+1)$-coloring algorithms.} Given that there are very simple $O(\log n)$-time randomized distributed $(\Delta+1)$-coloring algorithms, until relatively recently, most of the work was on deterministic distributed coloring algorithms. Given a coloring with more than $\Delta+1$ colors, it is straightforward to reduce the number of colors by one in a single round. The $O(\log^* n)$-time $O(\Delta^2)$-coloring algorithm of Linial~\cite{linial92} therefore directly leads to an $O(\Delta^2 + \log^* n)$-time distributed algorithm for $(\Delta+1)$-coloring and thus in bounded-degree graphs, a $(\Delta+1)$-coloring can be computed in optimal $O(\log^* n)$ rounds. Over the years, the dependency on $\Delta$ has been improved in a long sequence of papers to the current best algorithm, which has a round complexity of $O(\sqrt{\Delta\log\Delta} + \log^* n)$~\cite{goldberg88,SzegedyV93,KuhnW06,Kuhn2009WeakColoring,BarenboimEK14,Barenboim16,FHK,BEG18,MT20}. For the $(2\Delta-1)$-edge coloring problem (i.e., for the same problem on line graphs), the time complexity has even been improved recently to $(\log\Delta)^{O(\log\log\Delta)}+O(\log^* n)$~\cite{BE11_neighborhoodInd,Kuhn20,BKO20}. 

As a function of the number of nodes $n$, the fastest known deterministic algorithms have long been based on computing a so-called network decomposition (a decomposition of the graph into clusters of small diameter together with a coloring of the clusters with a small number of colors). Until a recent breakthrough by Rozho\v{n} and Ghaffari~\cite{RG20}, the best deterministic algorithm for computing such a network decomposition and the best resulting $(\Delta+1)$-coloring algorithm had a round complexity of $2^{O(\sqrt{\log n})}$~\cite{awerbuch89,panconesi1992improved}. Rozho\v{n} and Ghaffari~\cite{RG20} improved this to $\poly\log n$ rounds. When focusing on the dependency on $n$, there is also work on computing vertex and edge colorings directly, without going through network decomposition~\cite{barenboimE10,FGK17,FOCS18-derand,HarrisEdge19,Kuhn20,GK21}. This has culminated in the recent work of Ghaffari and Kuhn~\cite{GK21}, who showed that a $(\Delta+1)$-coloring can be computed in $O(\log^2\Delta\cdot\log n)$ rounds deterministically. The algorithm of \cite{GK21} also works directly in the \CONGEST model. 

In light of the simple $O(\log n)$-time randomized distributed $(\Delta+1)$-coloring algorithm from the late 1980s, work on faster randomized distributed coloring algorithms only started a bit more than 10 years ago. In \cite{KSOS06}, it was shown that computing an $O(\Delta)$-coloring can be done in $O(\sqrt{\log n})$ rounds and in \cite{SW10}, this was even improved to $O(\log^* n)$ as long as $\Delta\geq \log^{1+\eps}n$. As one of the results of the current paper, we show that for $\Delta\geq \log^{2+\eps}n$, also a $(\Delta+1)$-coloring can be computed in only $O(\log^* n)$ rounds. The first improvements on the complexity of the $(\Delta+1)$-coloring problem were obtained in \cite{SW10,BEPSv3} and the first sub-logarithmic-time algorithms for $(2\Delta-1)$-edge coloring  and for $(\Delta+1)$-vertex coloring were subsequently developed in \cite{EPS15,HSS18}. This development led to the algorithm of Chang, Li, and Pettie~\cite{CLP20}, which in only $O(\log^* n)$ rounds manages to compute a partial $(\Delta+1)$-vertex coloring such that all remaining uncolored components are of polylogarithmic size. In combination with the deterministic algorithm of \cite{GK21}, this leads to a randomized $(\Delta+1)$-coloring algorithm with a round complexity of $O(\log^3\log n)$. An adaptation of the algorithm of \cite{CLP20} to the \CONGEST model appeared in \cite{HKMT21}. We will provide a more detailed discussion of the papers \cite{SW10,BEPSv3,EPS15,HSS18,CLP20,HKMT21} that are most relevant for the present work in \Cref{sec:techoverview}.

\paragraph{From $(\Delta+1)$-coloring to \boldmath$(\mathrm{degree}+1)$-list-coloring.} In a $c$-list-coloring problem, each node $v$ is given as input a list or palette consisting of $c$ colors from some color space $\colSpace$, and the objective is to compute a proper coloring of the graph, where each node $v$ is colored with a color from its list. In the $(\mathrm{degree}+1)$-list coloring (\degoLC) problem, the list of each node is of size (at least) $d_v+1$, where $d_v$ is the (initial) degree of $v$. The \degoLC problem is a natural generalization of the $(\Delta+1)$-coloring problem that can still be solved by the na\"{\i}ve sequential greedy algorithm. Further, after computing a partial solution to a given $(\Delta+1)$-coloring problem, the remaining coloring problem on the uncolored nodes in general is a \degoLC problem, where the palette of each node consists of the colors not used by any of the neighbors. In some sense, the \degoLC problem is the more fundamental and also the more natural problem than the $(\Delta+1)$-coloring problem. The \degoLC problem is self-reducible: After computing a partial solution to a given \degoLC problem, the remaining problem is still a \degoLC problem. Also the \degoLC problem naturally appears as a subproblem when solving more constrained coloring problems. \degoLC is, for example, used as a subroutine in the distributed coloring algorithms of \cite{BamasEsperet19} for computing optimal colorings in graphs with chromatic number close to $\Delta$ and in the distributed $\Delta$-coloring algorithms of \cite{PanconesiS95,GHKM18}.

\paragraph{Distributed \degoLC algorithms.} First note that all the fastest deterministic and randomized $(\Delta+1)$-coloring algorithms discussed above also work for the more general $(\Delta+1)$-list coloring problem. In fact, many of those algorithms critically rely on the fact that they solve some version of the list coloring problem, e.g., \cite{Barenboim16,FHK,MT20,HSS18,CLP20,Kuhn20,BKO20}. Further, by using techniques developed in \cite{FHK,Kuhn20}, one can deterministically reduce $\degoLC$ to $(\Delta+1)$-list coloring with only an $O(\log\Delta)$ multiplicative and $O(\log^* n)$ additive overhead.\footnote{If the round complexity is polynomial in $\Delta$, the multiplicative dependency even reduces to $O(1)$.} For deterministic algorithms, at least currently, there is therefore no significant gap between the complexities of $(\Delta+1)$-(list)-coloring and \degoLC. This is however very different for randomized algorithms. While the best known $(\Delta+1)$-list coloring algorithm requires only $O(\log^3\log n)$ rounds~\cite{CLP20}, the best known randomized algorithm that works for the \degoLC problem is from \cite{BEPSv3} and it has a round complexity of $O(\log\Delta + \log^3\log n)$. For general graphs, this can be as large as $O(\log n)$ and it is therefore not faster than the simple randomized $O(\log n)$-time distributed coloring algorithms~\cite{alon86,luby86,linial92,johansson99} from the 1980s and 1990s (those algorithms also work  directly for the \degoLC problem).

\subsection{Our Contributions}

The main technical contribution of our paper is an $O(\log^* n)$-time randomized distributed algorithm that for a given \degoLC problem, colors almost all nodes of an exponentially large degree range. More concretely, we prove the following technical theorem.

\begin{restatable}{theorem}{maintechtheorem}\label{thm:main_tech}
    Let $G=(V,E)$ be an $n$-node graph with maximum degree at most $\Delta$ and let $V_H$ be the nodes of $G$ of degree at least  $\log^7\Delta$. Then, for every positive constant $c>0$, there is an $O(\log^* n)$-round randomized distributed algorithm that for a given \degoLC instance on $G$ computes a partial proper coloring of the nodes in $V_H$ such that for every node $v\in V_H$, the probability that $v$ is not colored at the end of the algorithm is at most $1/\Delta^c$, even if the random bits of nodes outside the $2$-hop neighborhood of $v$ are chosen adversarially.
\end{restatable}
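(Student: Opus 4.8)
I would follow the three-phase template underlying all known sublogarithmic randomized colorings --- \emph{(i)} compute an almost-clique decomposition, \emph{(ii)} create a permanent slack of $\Omega(d_v)$ at every node $v\in V_H$, \emph{(iii)} finish with a slack-driven color-trial procedure --- but adapt each phase from $(\Delta+1)$-coloring to \degoLC. In phase (i) I would spend $O(1)$ rounds partitioning the graph induced by $V_H$ into: a set $\Vsp$ of \emph{sparse} nodes, those with $\Omega(d_v^2)$ non-edges inside their neighborhood; a set $\Vun$ of \emph{uneven} nodes, those with $\Omega(d_v)$ neighbors $u$ whose palette $\Psi_u$ or degree $d_u$ deviates substantially from $v$'s own (the \degoLC-specific slack source); and a collection of \emph{balanced almost-cliques}, in each of which every node misses only a small constant fraction of the clique and the palettes inside the clique nearly coincide. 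Each node can read off its role from $2$-hop information, which keeps the whole construction compatible with the claimed locality; the only subtlety is that the numerical thresholds separating sparse/uneven/balanced must be fixed so that \emph{all} concentration statements in phases (ii)--(iii) hold at once, and tracing these requirements is what pins down the exponent $7$ in $\log^7\Delta$.

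\paragraph{Slack for sparse and uneven nodes.} For phase (ii) on $\Vsp\cup\Vun$ I would run a single random color trial: each node, independently with probability $1/2$, picks a uniformly random color from its palette and keeps it iff no neighbor picks the same color. For $v\in\Vsp$, each of the $\Omega(d_v^2)$ non-edges $\{x,y\}\subseteq N(v)$ (with overlapping palettes) independently has probability $\Theta(1/d_v)$ of wasting a color, i.e.\ of having $x$ and $y$ colored alike, so that $v$'s residual palette shrinks by one fewer than its residual degree; summing gives $\Exp[\mathrm{slack}(v)]=\Omega(d_v)$. For $v\in\Vun$ the slack instead comes from the $\Omega(d_v)$ neighbors whose chosen color falls outside $\Psi_v$ with constant probability. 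Since $d_v\ge\log^7\Delta$, a Talagrand-type inequality applied to the color choices in $N^2[v]$ --- the only coins on which $\mathrm{slack}(v)$ depends --- upgrades this to $\Pr[\mathrm{slack}(v)<\delta d_v]\le\Delta^{-c'}$ for a constant $\delta>0$ and any desired constant $c'$, and crucially this holds conditioned on an arbitrary fixing of the coins outside $N^2[v]$. No union bound over $V_H$ is taken, which is precisely what the per-node formulation asks for.

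\paragraph{Slack for balanced almost-cliques (the main obstacle).} The hard case is a balanced almost-clique $C$, which behaves like a clique with a common palette and hence generates slack through neither mechanism above. Here I would use put-aside sets: compute a small deferred subset $P_C\subseteq C$, color $C\setminus P_C$ by $O(1)$ \emph{synchronized color trials} (a near-bijection between clique nodes and colors, so that internal conflicts vanish and each node fails only if one of its $o(d_v)$ external neighbors grabbed its color), and argue that (a) only a small fraction of $C\setminus P_C$ stays uncolored --- with the right probability and locality --- so that each survivor's residual degree collapses fast enough to produce a large, and then growing, slack-to-degree ratio, and (b) the deferred nodes of $P_C$ inherit slack $\Omega(d_v)$ once $C\setminus P_C$ is colored, because the colors spent there are almost all distinct while $P_C$ is comparatively small. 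Carrying this over from $(\Delta+1)$-coloring to \degoLC is where the genuinely new work lies, and I expect it to be the technical heart of the argument: the clique nodes no longer share one palette of size $|C|$, so the synchronized trial must let each node bid inside its own palette while still acting like a permutation, and one has to determine exactly how close to balanced a clique must be for the survival-fraction bound and the $P_C$-slack bound to hold --- the placement of the balanced/uneven threshold is the crux, and it interacts back with how $\Vun$ was defined in phase (i).

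\paragraph{Finishing.} After phase (ii), with probability $\ge 1-\Delta^{-c'}$ against any fixing of out-of-$2$-hop coins, every still-uncolored $v\in V_H$ has slack $\Omega(d_v)$; since slack is non-decreasing as neighbors become colored, the slack-to-degree ratio only grows through the remainder of the run. I would then invoke a slack-based multi-trial routine in the style of \cite{SW10}: in round $i$ each uncolored node bids $\Theta(\text{current ratio})$ random colors from its residual palette, so it fails a round only if \emph{all} its bids are blocked, which happens with probability doubly-exponentially small in its ratio; iterating drives the per-node failure probability below $\Delta^{-c}$ in $O(\log^*\Delta)=O(\log^* n)$ rounds, and the sets $P_C$ --- small and carrying proportional slack --- are absorbed within the same budget. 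Chaining the three phases yields round complexity $O(\log^* n)$, and every $v\in V_H$ remains uncolored with probability at most $\Delta^{-c}$, with the analysis depending only on coins inside $N^2[v]$; the exponent $7$ in $\log^7\Delta$ is exactly what is needed for the concentration steps --- slack generation, the synchronized-trial survival count, and the multi-trial iteration --- to beat $\Delta^{-c}$ simultaneously, absorbing the polylogarithmic losses that accumulate along the way.
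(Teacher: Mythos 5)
Your overall three-phase template (ACD, slack generation, slack-driven multi-trial) matches the paper's architecture, and your treatment of the dense/almost-clique case is in the right spirit (synchronized trials plus put-aside sets). But there is a genuine gap in your phase (ii), and it is precisely the central difficulty the paper is about: your claim that a single random color trial gives every sparse node $v$ expected slack $\Omega(d_v)$ because each of the $\Omega(d_v^2)$ non-edges in $N(v)$ has probability $\Theta(1/d_v)$ of being monochromatic is the $(\Delta+1)$-coloring argument, and it is \emph{false} for \degoLC. The paper exhibits a sparse node $v$ whose neighborhood consists of two cliques with essentially disjoint palettes: all the non-edges are there, but their endpoints can never collide on a color, and no coloring of $N(v)$ can raise $v$'s slack above $2$. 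Your parenthetical ``(with overlapping palettes)'' quietly assumes away exactly the hard case; nothing in your construction of $\Vsp$ guarantees overlap, and no amount of concentration machinery can recover slack that does not exist in expectation. The paper's fix is structural, not probabilistic: it partitions the sparse nodes into \emph{easy} (balanced/discrepant/uneven, where your argument does work), \emph{heavy}, \emph{start} ($\Vst$: nodes that get only \emph{temporary} slack by being colored before their many easy neighbors), and \emph{tough} nodes (handled by a separate counting argument on oriented unbalanced edges). Your proposal has no analogue of the start set or of the ordering that gives it temporary slack, so nodes like the two-clique example are simply not colored by your algorithm with the required probability.

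A second, smaller gap: even where slack exists in expectation, your appeal to ``a Talagrand-type inequality'' does not go through for the heavy-color case the paper isolates, where a single color is tried by $\Theta(\sqrt{d_v})$ neighbors in expectation; the slack function is then not $O(1)$-Lipschitz in the individual color choices, and the paper needs a separate two-stage bucketing argument to get concentration there. Neither issue is a matter of tuning thresholds or the exponent $7$; both require new ideas beyond the $(\Delta+1)$ toolkit, which is why the theorem was open.
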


Our main contribution follows from \cref{thm:main_tech} using standard techniques. By applying methods originally used by Beck in the context of algorithmic versions of the Lov\'{a}sz Local Lemma~\cite{beck1991algorithmic} and first adapted to the distributed context in \cite{BEPSv3}, the probabilistic guarantees of the above theorem imply that after running the $O(\log^* n)$-round randomized distributed algorithm, w.h.p., the uncolored nodes form components of size $\poly(\Delta\log n)$. This phenomenon is nowadays known as graph shattering. One can now go through $O(\log^* n)$ degree classes. If we set $\Delta=n$, \cref{thm:main_tech} implies that all nodes $v$ of degree $d_v\in[\log^7 n,n]$ can be colored in $O(\log^* n)$ rounds w.h.p. For lower degree classes, the $O(\log^* n)$-round algorithm colors all nodes, except for components of $\poly\log n$ size. To color those components, one can then apply the best deterministic \degoLC algorithm of \cite{GK21}, which has a round complexity of $O(\log^2\Delta \log N)$ on $N$-node graphs of maximum degree $\Delta$ and thus a round complexity of $O(\log^3\log n)$ on graphs of size $\poly\log n$. Overall, we obtain the following main theorem.

\begin{restatable}{theorem}{maintheorem}\label{thm:main}
    There is a randomized distributed algorithm to solve the \degoLC problem on $n$-node graphs in $O(\log^3\log n)$ rounds, w.h.p.
\end{restatable}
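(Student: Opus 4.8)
The plan is to derive \cref{thm:main} from \cref{thm:main_tech} through the \emph{graph shattering} framework, as sketched above. The key first step is a single application of \cref{thm:main_tech} with its maximum-degree parameter set to $\Delta := n$: since \cref{thm:main_tech} holds for every constant $c$, fixing $c$ large enough and taking a union bound over the at most $n$ vertices---each uncolored with probability at most $n^{-c}$---shows that with high probability \emph{every} vertex of degree at least $\log^7 n$ gets colored, in $O(\log^* n)$ rounds. (This in passing establishes the additional claim that all nodes of degree $\Omega(\log^7 n)$ are colored in $O(\log^* n)$ rounds.) Because \degoLC is self-reducible, the still-uncolored vertices, equipped with the colors not used by their colored neighbors, form a \degoLC instance $G'$ of maximum degree $< \log^7 n = \poly\log n$.

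It remains to finish $G'$. The cleanest option is to invoke the prior randomized \degoLC algorithm of~\cite{BEPSv3}, whose round complexity $O(\log\Delta + \log^3\log n)$ is $O(\log^3\log n)$ once $\Delta = \poly\log n$; this already proves the theorem. To stay entirely within the framework developed here, one instead iterates \cref{thm:main_tech}: fix a geometrically decaying sequence of thresholds $\Delta_0 := n$, $\Delta_{i+1} := \lceil \log^7 \Delta_i \rceil$, which reaches $\Delta_k = O(1)$ after $k = O(\log^* n)$ steps, and for $i = 1,\dots,k-1$ apply \cref{thm:main_tech} to the current residual instance with parameter $\Delta_i$, maintaining the invariant that this instance is a valid \degoLC instance of maximum degree at most $\Delta_i$ (the case $i=0$ being the step above). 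Class $i$ attempts to color every residual vertex of degree at least $\log^7\Delta_i = \Delta_{i+1}$, each failing with probability at most $\Delta_i^{-c}$ and depending only on the random bits within its $2$-hop neighborhood. Since now $\Delta_i = \poly\log n$, a union bound is too weak, so we invoke the standard shattering lemma of Beck~\cite{beck1991algorithmic} in its distributed form~\cite{BEPSv3}; its hypotheses hold precisely because \cref{thm:main_tech} bounds each vertex's failure probability by $\Delta_i^{-c}$ \emph{even conditioned on an arbitrary outcome of the distant random bits}, with locality radius $2$ and $c$ chosen at least the constant the lemma requires. The lemma then yields that, with high probability, the vertices left uncolored by classes $0,\dots,k-1$ induce connected components of $G$ of size $\poly(\Delta_1\log n) = \poly\log n$; any constant-degree remainder surviving the bottom classes is folded in. In either case, with high probability and after $o(\log^3\log n)$ further rounds, all that remains uncolored is a disjoint union of \degoLC instances, each on $N' = \poly\log n$ vertices and hence of maximum degree $\Delta' \le N'$.

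On these small instances we run the deterministic \degoLC algorithm of Ghaffari and Kuhn~\cite{GK21}---directly, or as the clean-up subroutine already invoked inside~\cite{BEPSv3}---which terminates in $O(\log^2\Delta'\log N') = O(\log^2\log n\cdot\log\log n) = O(\log^3\log n)$ rounds and, having bounded dependence on the graph size, processes all instances in parallel within that bound. Summing: the randomized work is $O(\log^* n)$ rounds for the first application plus, along the iteration route, $O((\log^* n)^2) = o(\log^3\log n)$ rounds over the remaining classes; the deterministic clean-up is $O(\log^3\log n)$ rounds; so the total is $O(\log^3\log n)$ rounds, with high probability. Should the shattering step fail---an event of probability $1/\poly(n)$---we instead run~\cite{GK21} on all of $G$ in $O(\log^3 n)$ rounds, which leaves the high-probability running time unchanged.

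Since all the genuinely new content is already contained in \cref{thm:main_tech}, this deduction is essentially routine. The two points that do need attention are (i) checking that the shattering lemma applies---which is exactly why \cref{thm:main_tech} is phrased with the ``$2$-hop-local / adversarial distant randomness'' guarantee rather than a weaker in-expectation bound---and (ii) the bookkeeping of how degrees evolve under the chained applications: that each residual instance remains a valid \degoLC instance of maximum degree at most $\Delta_i$ and that the constant-degree tail is disposed of cleanly. I expect neither to pose a real difficulty.
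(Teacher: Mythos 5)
Your proposal is correct, and its ``option 2'' is essentially the paper's own proof: iterate \cref{thm:main_tech} over the degree classes $\Delta^{(0)}=n$, $\Delta^{(i+1)}=\log^7\Delta^{(i)}$, use the shattering lemma (the paper cites Lemma~4.1 of \cite{CLP20}) on the failed nodes of each class, and finish the $\poly\log n$-size components with the deterministic algorithm of \cite{GK21}. Your ``option 1''---a single application of \cref{thm:main_tech} with $\Delta=n$ followed by running the $O(\log\Delta+\log^3\log n)$-round \degoLC algorithm of \cite{BEPSv3} on the residual graph of maximum degree $<\log^7 n$---is a genuinely shorter route to \cref{thm:main} that the paper does not take; it buys brevity at the cost of importing the prior algorithm as a black box, whereas the paper's iteration keeps everything inside its own framework (which also yields \cref{cor:largeDelta} and the refined complexity $\hat{T}(n)=\sum_i T(\log n,\log^{(i)}n)$). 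One bookkeeping point on your option 2: the paper runs the deterministic clean-up on $\BAD$ \emph{inside} each iteration of the loop, precisely so that when class $i+1$ begins no vertex of degree $\ge\Delta^{(i+1)}$ remains uncolored and the residual instance genuinely has maximum degree at most $\Delta^{(i+1)}$; if you defer all clean-up to the end as your phrasing suggests, high-degree $\BAD$ vertices from earlier classes survive into later ones and the invariant you state needs a separate argument. The paper also spells out the related subtlety (which you flag but do not resolve) that the subgraph induced by a degree class may have degrees below $\Delta^{(i+1)}$, handled there by discarding excess colors or adding virtual neighbors.
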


The fact that our randomized algorithm directly colors all nodes of degree at least $\log^7 n$ has another interesting consequence. The following is a direct corollary of \cref{thm:main_tech}.

\begin{restatable}{corollary}{logstarcorollary}\label{cor:largeDelta}
When all nodes have degree at least $\log^7 n$, the \degoLC problem can be solved w.h.p.\ in $O(\log^* n)$ rounds in the \LOCAL model. 
\end{restatable}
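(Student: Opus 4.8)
The plan is to derive \Cref{cor:largeDelta} from \Cref{thm:main_tech} by a standard two-phase argument: first run the $O(\log^* n)$-round algorithm of \Cref{thm:main_tech} (with $\Delta$ taken to be a polynomial upper bound on the maximum degree, say $\Delta \le n$, so that all nodes lie in $V_H$ since they all have degree at least $\log^7 n \ge \log^7 \Delta$ for $n$ large enough), obtaining a partial proper coloring in which each node fails to be colored with probability at most $1/\Delta^c$ even against an adversary controlling the randomness outside its $2$-hop ball. Then I would argue that with this kind of locally-determined, polynomially small failure probability, the set of uncolored nodes can be fully colored in $O(\log^* n)$ additional rounds, giving the claimed bound.

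The second phase is where the real content is, and the key step is the \emph{shattering} argument. First I would observe that whether a node $v$ is colored is a function only of the random bits in its $2$-hop neighborhood; grouping nodes that are within distance $O(1)$ in $G$ and appealing to the standard Beck-style / Lov\'asz-Local-Lemma-type bound used in \cite{BEPSv3,beck1991algorithmic}, the polynomially small per-node failure probability (after choosing the constant $c$ large enough relative to the degree) ensures that, w.h.p., every connected component induced by the uncolored nodes has size at most $\poly(\Delta \log n) = \poly(n)$ — in fact one typically gets $O(\Delta^{O(1)} \log_\Delta n)$, but any polynomial in $n$ bound suffices for the conclusion. Since the remaining instance is still a \degoLC instance (the problem is self-reducible: each uncolored node restricts its palette to colors not taken by colored neighbors, and its new palette size is at least its new degree plus one), each such component $C$ is a \degoLC instance on $|C| = N \le \poly(n)$ nodes.

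The final step is to color each residual component using the best known \degoLC algorithm. Here I would be careful that we cannot simply invoke the full recursion of \Cref{thm:main} inside the corollary, since a $\poly\log n$-size component would give an $O(\log^3\log n)$ bound rather than $O(\log^* n)$; instead, within each component of size $N \le \poly(n)$ we still have all degrees $\ge \log^7 n \ge \log^7 N$ (for the original $n$), so the residual components again satisfy the hypothesis of \Cref{thm:main_tech}. One can therefore recurse: re-run the $O(\log^* n)$-round algorithm on all components in parallel, shattering them into components of size $\poly(\log n)$-th root smaller, and after $O(\log^* n)$ iterations of this halving-of-the-exponent process the components have constant size and can be colored trivially (or, more simply, once the component size has dropped below, say, $\log^7 n$ the remaining instance is colored by a single application of any $O(\log N + \log^* N)$-round \degoLC algorithm such as \cite{BEPSv3}, which on components of poly-logarithmic size runs in $O(\log\log n + \log^* n) = O(\log^* n)$ rounds). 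The main obstacle is making the shattering bookkeeping precise: one has to verify that at every level of the recursion the degree hypothesis of \Cref{thm:main_tech} is preserved (the degrees in a residual \degoLC instance can only decrease, which is fine), that the independence radius of $2$ hops is enough to apply the shattering lemma, and that the number of recursive levels needed to bring the component size below the $\log^7 n$ threshold is only $O(\log^* n)$; all of these are standard but deserve a careful statement, and I would state them as a self-contained lemma about iterated shattering before assembling the corollary.
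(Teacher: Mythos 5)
Your proposal misses the one observation that makes this a genuinely \emph{direct} corollary, and the elaborate second phase you build to compensate does not actually work.

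The paper's argument is a single step: take $\Delta = n$ in \Cref{thm:main_tech} (which is always a legitimate upper bound on the maximum degree). Then $V_H$ is exactly the set of nodes of degree at least $\log^7 n$, i.e.\ all of $V$ by hypothesis, and the per-node failure probability $1/\Delta^c$ becomes $1/n^c$. A union bound over the $n$ nodes shows that w.h.p.\ \emph{every} node is colored by the single $O(\log^* n)$-round run; there is no residual instance to handle. (This is also how the paper uses the theorem in its shattering discussion: when $\Delta$ is polynomially related to $n$, the set $\BAD$ is empty w.h.p.\ and the algorithm stops there.) No shattering, no recursion, no deterministic finisher.

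Your phase two has concrete problems beyond being unnecessary. First, with $\Delta = n$ the shattering lemma only bounds residual components by $\poly(\Delta\log n) = \poly(n)$, which is vacuous; your remark that ``any polynomial in $n$ bound suffices'' is not true for the argument you then run. Second, your recursive re-application of \Cref{thm:main_tech} to residual components assumes ``the residual components again satisfy the hypothesis,'' but the relevant degrees in the residual \degoLC instance are the numbers of \emph{uncolored} neighbors, which can drop far below $\log^7 n$ even though the original degrees were large; the degree hypothesis is therefore not preserved, and the claimed geometric shrinkage of component sizes across levels is unsubstantiated. Third, your fallback via \cite{BEPSv3} on $\poly\log n$-size components costs $O(\log\log n)$ rounds, and $O(\log\log n)$ is \emph{not} $O(\log^* n)$, so even if the shattering went through you would not obtain the claimed round complexity. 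All of these issues evaporate once you notice that the theorem, instantiated with $\Delta = n$, already gives an inverse-polynomial-in-$n$ failure probability per node.
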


Note that \cref{cor:largeDelta} is a significant improvement over prior work. Prior to this paper, also for large $\Delta$, the best known $(\Delta+1)$-coloring had a round complexity of $O(\log^3\log n)$ (even for the standard non-list version of the problem). Also note that the statement of \cref{cor:largeDelta} can be obtained by a somewhat simpler algorithm and by a much simpler analysis than the full statement of \cref{thm:main_tech}.

We show in the appendix that the lower bound on the degrees can be reduced in the case of the $\DeloLC$ problem.

\begin{restatable}{corollary}{logstarcorollarydelta}\label{cor:largeDeltaDelo}
When $\Delta \ge \log^{2+\delta} n$ for $\delta > 0$, the $\Delta+1$-(list)-coloring problem can be solved w.h.p.\ in $O(\log^* n)$ rounds in the \LOCAL model. 
\end{restatable}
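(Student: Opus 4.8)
The plan is to re-run the \emph{proof} of \Cref{thm:main_tech} rather than invoke its statement, retuning the parameters to the one structural feature of $(\Delta+1)$-list-coloring: every palette has size exactly $\Delta+1\ge\log^{2+\delta}n$. A uniform palette that is this large, relative to $n$, gives two things. First, every node $v$ with $d_v\le(1-\gamma)\Delta$ (for a suitably small constant $\gamma$) already has multiplicative slack, its palette exceeding its (residual) degree by at least $\gamma\Delta=\Omega(d_v)$; so no slack ever needs to be generated for low-degree nodes, the degree threshold ``$d_v\ge\log^{7}\Delta$'' of \Cref{thm:main_tech} disappears, and all nodes are colored. Second, and crucially, whenever a node has multiplicative slack its residual palette still contains $\Omega(\log^{2+\delta}n)=\omega(\log n)$ colors; this lets every random step -- the sparse slack-generation round, the synchronized color trials inside almost-cliques, and the final slack-coloring -- be pushed, via a Chernoff bound over $\Omega(\log n)$ independent choices, to failure probability $1/n^{c}$ instead of the $1/\Delta^{c}$ one settles for in the general \degoLC case. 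A union bound over the $n$ nodes then leaves \emph{no} node uncolored w.h.p., so the shattering-plus-deterministic-cleanup stage -- the only part of \Cref{thm:main} costing $\Theta(\log^{3}\log n)$ -- becomes superfluous, and the total remains $O(\log^{*}n)$.

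Concretely: (1) compute an almost-clique decomposition with \computeacd, yielding the sparse nodes $\Vsp$ and the almost-cliques $\mathcal K$; (2) run \slackgeneration (one \tryrandomcolor round): every sparse node of degree $>(1-\gamma)\Delta=\Omega(\log n)$ is $\Omega(1)$-sparse by the decomposition, hence permanently gains $\Omega(d_v)=\omega(\log n)$ slack, failing with probability only $1/n^{c}$ by a Chernoff bound over its $\omega(\log n)$ neighbours; (3) for each $K\in\mathcal K$ run \putaside, holding back a constant fraction of $K$, so that every remaining node of $K$ has $\Theta(|K|)$ of its neighbours held back or outside $K$ and thus a residual palette exceeding its \emph{active} residual degree by $\Omega(|K|)$; (4) run \slackcolor on every node that is not held back: each has multiplicative slack and a residual palette of size $\omega(\log n)$, so by the \cite{SW10}/\cite{HSS18}-style analysis underlying $O(\Delta)$-coloring in $O(\log^{*}n)$ when $\Delta\ge\log^{1+\eps}n$, now run at confidence $1-1/n^{c}$, it is colored in $O(\log^{*}n)$ rounds except with probability $1/n^{c}$, and a union bound finishes this set w.h.p.; (5) the still-uncolored nodes -- the low-degree nodes and the held-back vertices of each $K$ -- again have multiplicative slack and $\omega(\log n)$-size palettes, and one more \slackcolor pass (using, for the held-back vertices of a near-literal clique, the same small-diameter clique-coloring primitives that the proof of \Cref{thm:main_tech} already employs for high-degree dense nodes) colors them w.h.p. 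Every step runs in $O(\log^{*}n)$ rounds.

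The main obstacle is re-establishing every probabilistic guarantee in the analysis of \Cref{thm:main_tech} at confidence $1/n^{c}$ rather than $1/\Delta^{c}$, and checking that at each stage the node, or almost-clique, in question truly has a spare budget of $\Omega(\log n)$ mutually independent random choices to sustain it. The delicate case is a dense node in an almost-clique $K$ that is close to a \emph{literal} clique: its slack then comes neither from antidegree nor from outside-degree but only from the vertices set aside by \putaside, and one must verify that \putaside can hold back $\Theta(\log^{1+\Omega(1)}n)$ vertices of $K$ without ever leaving a palette-tight clique behind, and that \synchronizedcolortrial together with \multitrial then colors the remainder of $K$ within $O(\log^{*}n)$ rounds with a residue that vanishes under the union bound over all $n$ nodes. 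It is exactly this composition -- generating $\log^{1+\Omega(1)}n$ worth of slack robustly \emph{and then} spending $\Theta(\log n)$ color trials per round at confidence $1-1/n^{c}$, both against a palette of size only $\Delta+1$ -- that forces the exponent up from the $1+\eps$ sufficient for $O(\Delta)$ colors \cite{SW10} to the $2+\delta$ needed here for the tight count $\Delta+1$.
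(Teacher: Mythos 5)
You correctly locate where the degree threshold comes from, and your observations that (i) in the $\Delta+1$ setting all nodes of degree below $(1-\gamma)\Delta$ have automatic slack, (ii) all colors are light so the heavy-color machinery and its $\log^2\Delta$ requirement disappear, and (iii) palettes of size $\omega(\log n)$ push every concentration bound to $1/n^{c}$ so that shattering is unnecessary, all match the paper's reasoning. But the proof has a genuine gap exactly at the point you yourself flag as ``delicate'' and then leave unresolved: the put-aside sets for near-literal cliques. Your step (3) runs \putaside as defined in the paper, but that procedure samples each inlier with probability $\ell^{2}/(48\Delta_C)$ and its analysis (\cref{lem:lds}) crucially needs this probability to be at most $1/(48\ell)$ so that a node with $O(\ell)$ external neighbors keeps them all unsampled with constant probability; this forces $\Delta_C=\Omega(\ell^{3})$. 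With the target slack $\ell=\Theta(\log^{1+\delta}n)$ this only gives $\Delta=\Omega(\log^{3+3\delta}n)$, not $\log^{2+\delta}n$. Saying ``one must verify that \putaside can hold back $\Theta(\log^{1+\Omega(1)}n)$ vertices'' does not verify it — the single-shot sample-and-delete provably cannot be pushed below $\Delta=\Omega(\ell^{3})$ by retuning constants, because lowering the sampling rate shrinks $|P_C|$ below $\ell$ while raising it destroys the independence of $P_C$ across cliques. (A smaller slip: \putaside holds back only $\Theta(\ell^{2})$ nodes, not ``a constant fraction of $K$''; the constant fraction is the outlier set $O_C$.)

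The paper's proof of this corollary consists almost entirely of the construction you omit: it recasts the put-aside problem as finding a $k$-independent transversal and replaces the one-shot sampling by a multi-stage procedure (\transversal{} iterating \lowdegreesample) that repeatedly subsamples and keeps only nodes of \emph{low} sampled degree rather than zero sampled degree, using the read-$k$ bound at each stage with a gradually shrinking degree parameter $B_j$. This gets the requirement down to $|I_i|=\Omega(k\Delta_H)$, i.e.\ $\Delta=\Omega(\ell^{2})=\Omega(\log^{2+2\delta'}n)$, and the paper complements it with a probabilistic-method lower bound showing this tradeoff is near-optimal — which explains why the exponent is $2+\delta$ and why no amount of parameter tuning in the original argument reaches it. To complete your proof you would need to supply this (or an equivalent) construction; everything else in your outline is consistent with the paper.
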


\paragraph{Palette sparsification}

One key technical lemma is a method to generate slack.
One corollary of that result is the following result.

\begin{theorem}[Informal]
\label{thm:palettesparsify}
For any graph $G(V,E)$, sampling $O(\log^2 n)$ colors for each vertex with degree $d_v$ from a set of $d_v+1$ arbitrary colors, allows for a proper coloring of $G$ from the sampled colors, w.h.p.
\end{theorem}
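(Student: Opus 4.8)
I would work inside the palette sparsification framework of Assadi--Chen--Khanna~\cite{ACK19} and Alon--Assadi~\cite{AA20}. Formally: for each vertex $v$ let $L_v$ be a set of $\Theta(\log^2 n)$ colors chosen uniformly at random without replacement from the input palette $\Psi(v)$ of size $d_v+1$; the goal is to show that, with probability $1-1/\poly(n)$, the list-coloring instance with lists $(L_v)_{v\in V}$ still admits a proper coloring. A vertex with $d_v+1\le\Theta(\log^2 n)$ keeps its whole palette, so I may assume every relevant vertex has degree $\Omega(\log^2 n)$, which is exactly the regime of \cref{thm:main_tech}. The plan is to design an adaptive coloring process that, on every vertex $v$, only ever inspects colors lying in $L_v$, and which succeeds w.h.p.; since conditioning on all the sets $(L_v)$ freezes the process, exhibiting one such process proves the theorem. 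I would split the $\Theta(\log^2 n)$ budget of each vertex into one color reserved for a slack-generation round (a single round of random color trials) and the remaining $\Theta(\log^2 n)$ colors used one-at-a-time for a greedy completion.

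\paragraph{Step 1: generating slack.} Decompose $V$ by the almost-clique decomposition into a set $\Vsp$ of sparse and uneven vertices together with a family of almost-cliques $C_1,\dots,C_k$. I would then invoke the paper's slack-generation lemma (its key technical lemma) for the single-trial round: using the one color each vertex reserves for it, w.h.p.\ every uncolored $v\in\Vsp$ ends up with slack $\Omega(d_v/\log n)$ --- i.e.\ the number of colors of $\Psi(v)$ not blocked by already-colored neighbors exceeds the number of still-uncolored neighbors of $v$ by $\Omega(d_v/\log n)$ --- while inside each $C_j$ the residual instance is clean: the uncolored vertices of $C_j$ have near-equal residual palettes of size at least the number of uncolored vertices of $C_j$, and only $o(|C_j|)$ edges leave $C_j$. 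For uneven vertices the slack is immediate from $|\Psi(v)|\ge(1+\Omega(1))d_v$; for sparse vertices it comes from the many non-edges in the neighborhood, which is precisely what the slack-generation analysis exploits. The exact slack figure is the content of that lemma, and it is what matches the $\Theta(\log^2 n)$ sample size: one $\log n$ factor for the slack deficit and one for concentration.

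\paragraph{Step 2: completing the coloring.} I would color the almost-cliques first and $\Vsp$ afterwards. Inside a single $C_j$: since $C_j$ is almost a clique, a proper coloring of its uncolored vertices is essentially an injective color assignment, i.e.\ a matching saturating those vertices in the bipartite graph that joins each of them to its sampled list $L_v$; I would prove via Hall's theorem that such a matching exists w.h.p., using that the admissible color pool has size comparable to $|C_j|$, that the $L_v$ are uniform random subsets of size $\Theta(\log^2 n)=\Omega(\log|C_j|)$, and a union bound over all subsets $S\subseteq C_j$ (the dominant bad event is $|S|\approx|C_j|$), reserving a small put-aside subset of $C_j$ to be colored last as in~\cite{CLP20} so that the few colors lost to external neighbors and the mild palette imbalance are absorbed. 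For $v\in\Vsp$, process the remaining vertices in the reverse order of an arbitrary sequential greedy ordering: when $v$ is reached, the number of colors of $\Psi(v)$ still available is at least the slack $\Omega(d_v/\log n)$, so a uniform sample of $\Theta(\log^2 n)$ colors from $\Psi(v)$ misses all of them with probability $\exp\!\big(-\Omega(\log^2 n/\log n)\big)\le 1/\poly(n)$, and $v$ receives a color. A union bound over all vertices and all almost-cliques finishes the argument.

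\paragraph{The main difficulty.} The crux is Step 1, the slack-generation step for \degoLC. Unlike the $(\Delta+1)$ case of~\cite{ACK19,AA20}, palettes inside an almost-clique need not be equal and the decomposition ranges over an unbounded band of degrees, so a neighbor of much larger degree contributes almost no color reuse. Showing that a single global random trial nevertheless produces the right amount of slack for every sparse or uneven vertex regardless of its degree --- and leaves every almost-clique in the clean state above --- is exactly what the paper's slack-generation lemma delivers, and it is the step that forces the $\Theta(\log^2 n)$ sample size rather than the $\Theta(\log n)$ of~\cite{ACK19}. A secondary, more routine difficulty is making the intra-clique Hall/matching argument robust to the palette imbalance and to colors lost to external neighbors, which I expect the put-aside reservation to resolve without further loss.
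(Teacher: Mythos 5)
Your overall route --- plug a new \degoLC slack-generation lemma into the Assadi--Chen--Khanna / Alon--Assadi framework, handle low-degree vertices by sampling their whole palette, and finish dense components with a matching argument plus a put-aside set --- is the same as the paper's. But Step 1 contains a genuine gap: you assert that after one round of random color trials \emph{every} uncolored sparse or uneven vertex has slack $\Omega(d_v/\log n)$. This is provably false, and the paper's own motivating example (\cref{subfig:hard-sparse}) exhibits a sparse node adjacent to two cliques with disjoint palettes whose slack cannot exceed $2$ under \emph{any} coloring of its neighbors. The correct statement (\cref{L:slackgen-sparse}) only guarantees permanent slack for $\Vun\cup\Vsp\setminus\Vst$; the exceptional set $\Vst$ gets no permanent slack at all and instead must be placed \emph{first} in the coloring order, so that $\Omega(d_v)$ of its neighbors are still uncolored when it is processed and it survives on temporary slack (this is exactly \cref{obs:ordering-sparse} in the paper). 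Your Step 2 processes the sparse vertices in "the reverse order of an arbitrary sequential greedy ordering," which gives no such guarantee: for a $\Vst$-type vertex the number of available colors when it is reached can be $O(1)$, and a sample of $\Theta(\log^2 n)$ colors out of $d_v+1$ then misses all of them with probability close to $1$. Identifying $\Vst$ and enforcing the order $\Vst$, then $\Vsp\setminus\Vst$, then the low-degree vertices is not a detail but the main new ingredient the sparsification theorem needs.

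Two secondary corrections. First, the slack actually produced is $\Omega(d_v)$, not $\Omega(d_v/\log n)$, and the reason the sample size is $\Theta(\log^2 n)$ rather than the $\Theta(\log n)$ of \cite{ACK19,AA20} is not a "slack deficit times concentration" tradeoff: it is that the heavy-color concentration argument (\cref{lem:slack-heavy}) only works for vertices of degree $\Omega(\log^2 n)$, so the low-degree cutoff --- below which the whole palette must be sampled --- sits at $C\log^2 n$. Second, an uneven vertex in \degoLC still has palette size exactly $d_v+1$; unevenness means many \emph{neighbors} have substantially larger degree and therefore tend to pick colors outside $v$'s palette (chromatic slack via the discrepancy argument of \cref{L:disc-slack}), not that $|\Psi(v)|\ge(1+\Omega(1))d_v$.
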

This was previously shown for $\Delta+1$-coloring \cite{ACK19}, $\deg+1$-coloring \cite{AA20}, and $(1+\epsilon)\deg$-list-coloring \cite{AA20}, but in all cases requiring only $O(\log n)$-sized samples (which are necessary).
Our result follows almost immediately from the frameworks of \cite{ACK19,AA20} when given the slack generation result for sparse nodes (\cref{L:slackgen-sparse}).

This has the following implication for the \degoLC problem in several other models.
\begin{corollary}
For finding a $\degoLC$ in a general graph, w.h.p., there exists
\begin{compactenum}
\item a single-pass dynamic streaming algorithm using $\tilde{O}(n)$ space;
\item a non-adaptive $\tilde{O}(n^{3/2})$-time algorithm; and
\item an \MPC algorithm in $O(1)$-rounds on machines with memory $\tilde{O}(n)$.
\end{compactenum}
\end{corollary}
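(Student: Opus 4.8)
The plan is to derive all three algorithms from the palette sparsification theorem (\cref{thm:palettesparsify})---which rests on the slack generation lemma \cref{L:slackgen-sparse}---by plugging it into the algorithmic templates of Assadi, Chen and Khanna~\cite{ACK19} and Alon and Assadi~\cite{AA20}. First I would set up the reduction common to all three models. Every node $v$ of degree $d_v$ samples a set $S_v$ of $\Theta(\log^2 n)$ uniformly random colors from its palette, and one forms the \emph{conflict graph} $G'=(V,E')$ with $E'=\{\{u,v\}\in E : S_u\cap S_v\neq\emptyset\}$. A coloring of $G$ that assigns each $v$ a color of $S_v$ is automatically conflict-free on $E\setminus E'$, so it suffices to find \emph{some} proper coloring of $G'$ that uses a color of $S_v$ at each $v$; by \cref{thm:palettesparsify} one exists w.h.p., and, as in~\cite{ACK19,AA20}, it can then be found in $\tilde{O}(n)$ centralized time once $G'$ and the $S_v$ are known (the sparsification proof exhibits a near-linear-time, essentially greedy, procedure that succeeds w.h.p.). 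The second ingredient is the bound $|E'|=\tilde{O}(n)$ w.h.p.\ (and, sharper, $\sum_c |E(G[V_c])|=\tilde{O}(n)$ with $V_c=\{v : c\in S_v\}$), which follows from the routine union-bound-over-colors calculation of~\cite{ACK19,AA20}: it carries over with $\log n$ replaced by $\log^2 n$ in the sample sizes and with palettes of the varying sizes $d_v+1$, each probability estimate losing only a $\poly\log n$ factor that $\tilde{O}(\cdot)$ absorbs. The one nontrivial input here is \cref{thm:palettesparsify}; given it and this sparsity bound, the three implementations below are model-specific bookkeeping.

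For the dynamic streaming algorithm I would fix the samples with an $O(\log n)$-wise independent hash seed of $\tilde{O}(1)$ bits, and for each of the at most $\tilde{O}(n)$ colors that some node samples, maintain a linear sketch supporting $\ell_0$-sampling and sparse recovery for the edge set of $G[V_c]$. Linearity lets these sketches be updated under edge insertions and deletions, and a sketch of size $\tilde{O}(s)$ recovers any subgraph with at most $s$ edges; since $\sum_c |E(G[V_c])|=\tilde{O}(n)$ w.h.p., a budget of $\tilde{O}(1)$ space per color suffices, the union of the recovered subgraphs contains $E'$, and the total space is $\tilde{O}(n)$. After the single pass one reconstructs $G'$ and runs the centralized coloring step.

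For the non-adaptive query algorithm I would split the work by degree. For every node $v$ and every index $i\le\lceil\sqrt{n}\rceil$ I issue the neighbor query $(v,i)$---a batch of $n\lceil\sqrt{n}\rceil$ queries that reveals the full neighborhood of every node of degree at most $\sqrt{n}$. For the nodes whose palette has size above $\sqrt{n}$ (a set fixed by the instance) I draw the samples $S_v$ and issue a pair query for every pair of them with intersecting sampled palettes; a standard collision-counting estimate as in~\cite{ACK19}, carried out within this class, bounds the number of such pairs by $\tilde{O}(n^{3/2})$ w.h.p. Every query is determined by the instance together with the drawn samples, so the algorithm is non-adaptive; its answers contain all of $E'$ (every edge of $E'$ either has an endpoint of degree at most $\sqrt{n}$, whose neighborhood was read, or has both endpoints with palette size above $\sqrt{n}$, hence was pair-queried), and the centralized coloring step fits within the $\tilde{O}(n^{3/2})$ budget.

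For the \MPC algorithm with $\tilde{O}(n)$ memory per machine (hence $O(n)$ machines) I would, in $O(1)$ rounds of standard sorting- and aggregation-based routing, deliver to the machine holding each edge $\{u,v\}$ the two $\tilde{O}(1)$-size sets $S_u,S_v$---each $S_v$ only has to reach the machines storing $v$'s incident edges, so the total communication is $\tilde{O}(m)$, which fits in the per-machine budget---then discard edges with disjoint sampled palettes to obtain $G'$ with $\tilde{O}(n)$ edges, gather $G'$ and the $S_v$ onto one machine, and color there in $O(1)$ further rounds. Across all three parts, the only things needing genuine care are the non-uniform palette sizes (handled by the degree split in the query model and by working per color in the other two) and the extra logarithmic factor in the sample size, which is harmless everywhere; the real obstacle, as flagged, is the \degoLC sparsification guarantee \cref{thm:palettesparsify} itself.
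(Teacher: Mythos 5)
There is a genuine gap: you treat the step ``find a proper coloring of $G'$ from the sampled lists'' as a black box that needs only the conflict graph $G'$ and the samples $S_v$, asserting that the sparsification proof ``exhibits a near-linear-time, essentially greedy, procedure.'' It does not. List coloring from arbitrary lists is intractable in general; the efficient coloring step of \cite{ACK19,AA20} works only because it exploits structural information about $G$ beyond $G'$ --- the almost-clique decomposition and, crucially for the $\degoLC$ extension in this paper, the classification of the sparse nodes. The new obstacle here, which your proposal never touches, is that the sparse nodes must be colored in a specific order: the nodes of $\Vst$ get no permanent slack from \slackgeneration and are only colorable because they are processed \emph{before} their (easy, still-uncolored) neighbors. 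So each implementation must \emph{identify} $\Vst$. That is nontrivial because $\Vst$ is defined through $\Veasy$, which in the main development contains the discrepant nodes $\Vdisc$ --- a set whose membership depends on the \emph{contents} of neighbors' palettes, something none of the three models can afford to examine. The paper resolves this by redefining $\Veasy$ with the uneven set $\Vue$ in place of $\Vdisc$ (unevenness depends only on degrees, which can be estimated by sampling $O(\log n)$ incident edges per node) and re-verifying that the tough-node slack argument (\cref{lem:tough-slack}) survives the redefinition. Without this step --- or some substitute for it --- your centralized coloring phase cannot be carried out in any of the three models, so the reduction is incomplete.

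Your model-specific constructions of $E'$ are otherwise reasonable and broadly match what \cite{ACK19,AA20} do (the paper simply cites them for this part), and your query-model degree split at $\sqrt{n}$ with the collision bound $\tilde{O}(n^{3/2})$ is a workable variant. But even there you would still need extra queries/samples to recover degrees, the ACD, and the $\Vst$ classification (the paper adds explicit palette queries and $\Theta(\log n)$ sampled neighbors per node for exactly this), and in streaming you should state the assumption that palettes appear in the stream before their incident edges. The headline fix, though, is the missing identification and ordering of $\Vst$.
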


We discuss these implications in \cref{sec:palette}.

\section{Technical Overview}
\label{sec:techoverview}

In the following, we first discuss the most important technical insights that lead to the current fast randomized distributed $(\Delta+1)$-coloring algorithms. We next highlight why the existing techniques are not sufficient to also solve the $(\deg+1)$-coloring ($\degoLC$) problem similarly efficiently, and where the main challenges are. We then give a high-level overview on how we overcome those challenges and at the same time also simplify the existing randomized distributed $(\Delta+1)$-coloring algorithms.

\paragraph{Graph shattering.} The graph shattering technique was pioneered by Beck~\cite{beck1991algorithmic} in the context of constructive solutions for the Lov\'{a}sz Local Lemma, and it was brought to the distributed setting by Barenboim, Elkin, Pettie, and Schneider~\cite{BEPSv3}. The high-level idea is the following: One first runs a fast randomized algorithm that computes a partial solution for a given graph problem such that the unsolved parts only form small components (i.e., the randomized algorithm shatters the graph into small unsolved components). The remaining small components are then typically solved by a deterministic algorithm. More formally, let $G=(V,E)$ be an $n$-node graph of maximum degree $\Delta$ and assume that a randomized distributed algorithm computes an output for a (random) subset $S\subseteq V$ of the nodes. If for every node $v\in V$, independently of the private randomness of nodes outside some constant neighborhood of $v$, $\Pr(v\not\in S)\leq 1/\Delta^k$ for a sufficiently large constant $k$, then, w.h.p., the induced subgraph $G[V\setminus S]$ of the nodes with no output consists of connected components of size at most $\poly(\Delta \cdot \log n)$. A formal statement of this appears, e.g.,\  in \cite{FOCS18-derand,CLP20}. With some additional tricks (or in the case of graph coloring, often even directly), the size of the remaining components can be reduced to $\poly\log n$, so that the randomized complexity of a problem becomes the time to shatter the graph plus the time to solve the remaining problem deterministically on graphs of size $\poly\log n$. Interestingly, it was shown by Chang, Kopelowitz, and Pettie~\cite{CKP19} that the randomized distributed complexity of all locally checkable labeling problems (to which all the typical  coloring problems belong) on graphs of size $n$ is at least the deterministic complexity of the same problem on instances of size $\sqrt{\log n}$. The graph shattering method is therefore essentially necessary for solving such problems.

\paragraph{The role of slack.} At the core of all sublogarithmic-time randomized distributed (list) coloring algorithms is the notion of slack. A node $v$ of degree $d(v)$ is said to have slack $s(v)$ if it has an available color palette (or list) $\Psi(v)$ of size $|\Psi(v)|\geq d(v)+s(v)$. If we are given a coloring problem in which all nodes $v$ have slack $s(v)=\Omega(d(v))$, one can use an idea of Schneider and Wattenhofer~\cite{SW10} to color (most of) the graph in only $O(\log^* n)$ rounds as follows. Assume that for each node $v$, $s(v)/d(v)\geq \alpha$, for some $\alpha>0$. Each node $v$ chooses $\Theta(\alpha)$ random color from its list $\Psi(v)$ of colors, and $v$ gets permanently colored with one of those colors if no neighbor tries the same color. For each node $v$, each of the $\Theta(\alpha)$ colors has a constant probability of being successful and therefore each node gets permanently colored with probability $1-e^{-\Theta(\alpha)}$. In the remaining coloring problem on the uncolored nodes, the degree of most nodes drops by a factor $e^{\Theta(\alpha)}$, while the slack $s(v)$ of a node cannot decrease. The slack to degree ratio of most nodes therefore increases from $\alpha$ to $e^{\Theta(\alpha)}$. If we start with slack $s(v)=\Omega(d(v))$ and thus $\alpha=\Omega(1)$, after only $O(\log^* n)$ rounds, most nodes are permanently colored with a color from their list. In the $(\Delta+1)$-coloring problem, high-degree nodes however do not initially start with sufficient slack. In the $\degoLC$ problem, all nodes start with a color palette of size $d(v)+1$ and thus with slack $s(v)=1$. If we want to apply the above fast coloring algorithm in those cases, we first have to create slack for nodes.

\paragraph{Basic slack generation for $(\Delta+1)$-coloring.} In principle, there are three ways of generating slack for a node and we use all three ways in our algorithm. The slack $s(v)$ of a node increases if some neighbor $u$ permanently chooses a color $x\in\Psi(u)\setminus\Psi(v)$ that is not in $v$'s palette (we will refer to this as \emph{chromatic slack}) and it also increases if there are two (non-adjacent) neighbors $u$ and $w$ of $v$ that both permanently choose the same color. In addition, the slack $s(v)$ of a node $v$ can be temporarily increased if the nodes are colored in different phases of an algorithm and some neighbors of $v$ are colored in a later phase than $v$. In the $(\Delta+1)$-list-coloring problem, slack can be generated for many nodes by applying the following simple one-round distributed algorithm. Each node $v$ tries a uniformly random color of its palette $\Psi(v)$ and $v$ is permanently colored with this color if no neighbor of $v$ tries the same color. Because all nodes choose from $\Delta+1$ different colors, it is not hard to see that every node has a constant probability of keeping the color it tries. In expectation (and with sufficiently high probability), node $v$ therefore gets slack $s(v)=p\cdot d(v)$ if either the average probability for neighbors to pick a color outside $\Psi(v)$ is at least $p$ or if there are $\Theta(p\cdot d_v)$ non-connected pairs that try the same color (note that each color is only tried a constant number of times by nodes in $N(v)$ in expectation).

\paragraph{Almost-clique decomposition.}
All known sublogarithmic-time distributed $(\Delta+1)$-(list)-coloring algorithms are based on the following high-level idea. As a first step, the nodes are partitioned into a set $\Vsp$ of nodes that are \emph{locally sparse} and into so-called \emph{almost-cliques}. A node $v$ is said to have sparsity $\zeta$ if the subgraph induced by its neighborhood $N(v)$ contains at most $\binom{\Delta}{2}-\zeta\Delta$ edges. In an almost-clique decomposition, for some parameter $\eps>0$, the nodes in $\Vsp$ have sparsity $\Omega(\eps^2\Delta)$ and each almost-clique $C$ is a set of nodes for which $|C|\leq (1+\eps)\Delta$ and each node in $C$ has at least $(1-\eps)\Delta$ neighbors in $C$. A similar decomposition was first used by Reed~\cite{Reed98} and it was first used in the context of distributed coloring by Harris, Schneider, and Su~\cite{HSS18}. Since then, most fast randomized coloring algorithms in the distributed setting or related computational models are based on almost-clique decompositions~\cite{CLP20,HKMN20,HKMT21,ParterSu,CDP20,AA20,ACK19,CFGUZ19}. 

For locally sparse nodes, the required condition for slack generation described in the paragraph above is satisfied. One can therefore first let every node try a random color and let nodes keep their colors if no neighbor chooses the same color. The uncolored locally sparse nodes in this way get some slack and we can then delay coloring them to the end of the algorithm. The almost-cliques can in principle be handled efficiently because any two nodes within a single almost-clique are within distance $2$ in the graph. At least in the \LOCAL model, computations within a single almost-clique can therefore be done in a centralized fashion. Note however that implementing this high-level idea is not trivial. If $\eps$ is chosen large (e.g., as a small constant), the locally sparse nodes get a lot of slack and can be colored very fast, but this also creates a lot of dependencies between the different almost-cliques. If $\eps$ is small, the dependencies between almost-cliques become easier to handle, while now the locally sparse nodes also obtain less slack. In \cite{HSS18}, the authors set $\eps$ to balance the time for coloring the almost-cliques and for afterwards coloring the locally sparse nodes. The algorithm was then improved by Chang, Li, and Pettie in a technical tour de force~\cite{CLP20}. The authors of \cite{CLP20} define (and construct) a hierarchy of almost-cliques with different $\eps$ and they show that this hierarchy can be used to shatter the graph in only $O(\log^* n)$ rounds, which in combination with the deterministic algorithm of \cite{GK21} leads to the current fastest $O(\log^3\log n)$-round distributed $(\Delta+1)$-coloring algorithm. The approach of \cite{CLP20} was simplified and adapted to the \CONGEST model in \cite{HKMT21}. In \cite{HKMN20}, it was in particular shown (in the context of the more constrained \emph{distance-2 coloring problem} in \CONGEST) that one can compute a single almost-clique decomposition for a constant $\eps>0$ and that after running one round in which every node tries to get colored with a random color of its list, each node $v$ in an almost-clique $C$ obtains slack proportional to the number of neighbors $v$ has outside $C$ with large probability. This was used in \cite{HKMT21} to color the almost-cliques in $O(\log\log\Delta)$ rounds of the \CONGEST model.

\paragraph{Extending the setup to \boldmath$(\deg+1)$-list coloring (\degoLC).} When extending existing randomized $(\Delta+1)$-coloring algorithms to the more restrictive \degoLC problem, one faces a number of challenges. First, the notion of local sparsity and the almost-clique decomposition have mostly been defined for the $(\Delta+1)$-coloring problem~\cite{Reed98,EPS15,CLP20,ACK19,HKMT21,HNT21}: a node is locally sparse if the number of edges among neighbors is small compared to a complete neighborhood of size $\Delta$ and almost-cliques have to be of size close to $\Delta$. Luckily, Alon and Assadi~\cite{AA20} gave a generalization of the almost-clique decomposition that can be used for the $\degoLC$ problem. The decomposition is mostly defined in a natural way. The definitions of local sparsity and almost-cliques are now w.r.t.\ to the actual node degrees instead of w.r.t.\ $\Delta$ and the authors in addition define a node $v$ to be uneven if a constant fraction of the neighbors of $v$ have a sufficiently higher degree. They then show that the nodes of the graph can be partitioned into a set of locally sparse nodes, a set of uneven nodes, and several almost-cliques. As the more standard almost-clique decompositions, this decomposition can be computed in constant time in the \LOCAL model.

Based on the generalized almost-clique decomposition for the $\degoLC$ problem, we would like to proceed in a similar way as for the $(\Delta+1)$-coloring problem. As a first step, we would like to create slack for all nodes that are not in almost-cliques, i.e., for all nodes that are locally sparse and for all nodes that are uneven. The major obstacle that we have to overcome to achieve this is the problem of generating slack. This was already pointed out by Chang, Li, and Pettie~\cite{CLP20} as a major obstacle to the generalization of their result to the \degoLC problem. In fact, \cite{CLP20} suggests to first look at the simpler \emph{$(\deg+1)$-coloring} problem, where a node $v$ of degree $d(v)$ is to be assigned a color from $\set{1,\dots,d(v)+1)}$.

\paragraph{Slack generation for \boldmath$\deg+1$-list-coloring.}
The \degoLC problem brings a number of challenges for slack generation that are not present in the $(\Delta+1)$-list coloring problem. In the $(\Delta+1)$-(list)-coloring problem, nodes of degree $(1-\eps)\Delta$ have slack more than $\eps\Delta$ from the start because their palettes are of size $\Delta+1$. It is further well-established that a node $v$ of high degree and sufficiently large local sparsity obtains slack by a single round of trying a random color. Intuitively, this follows because the palettes of non-adjacent neighbors of $v$ either have a large overlap, leading to slack via sparsity, or they contain many colors that are not in $v$'s palette, leading to chromatic slack. In the $\degoLC$ problem, neither low-degree nodes nor locally sparse high-degree nodes get automatic slack. To illustrate the problems that can arise, we examine a few motivating examples.

\begin{figure}[th]
    \centering
    \subfloat[\label{subfig:hard-sparse}A sparse node adjacent to two cliques with non-overlapping palettes.]{\includegraphics[width=.45\textwidth,page=1]{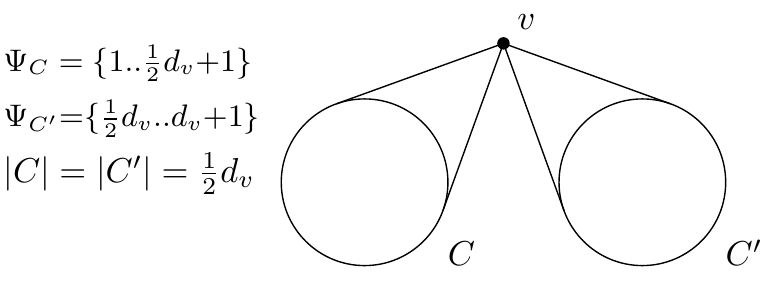}}
    \qquad
    \subfloat[\label{subfig:hard-dense}A dense node adjacent to two cliques with non-overlapping palettes.]{\includegraphics[width=.45\textwidth,page=2]{slack_issues.pdf}}
    \caption{Examples where sparsity no longer implies slack. $C$, $C'$ and $C_v$ are cliques.}
    \label{fig:hard-slack}
\end{figure}

The first and second examples in \cref{fig:hard-slack} illustrate that sparsity no longer guarantees slack in the $\degoLC$ setting. In the first example (\cref{subfig:hard-sparse}, which is from \cite{CLP20}), a sparse node $v$ is connected to two cliques with essentially non-overlapping palettes. Therefore no slack can arise from the endpoints of a non-edge in the sparse node's neighborhood picking the same color. In fact, no matter how the neighbors of $v$ get colored, it is impossible to increase the slack of $v$ from $1$ to more than $2$. The example hence shows that it can be impossible to derive hardly any slack even for sparse nodes. Thus, at least some sparse nodes need to be treated differently. We will do this by giving them temporary slack. In the example in \cref{subfig:hard-sparse}, the temporary slack is provided by coloring sparse nodes before coloring dense nodes. All the neighbors of $v$ are therefore colored after coloring $v$, giving $v$ a large amount of temporary slack. 

The second example (\cref{subfig:hard-dense}) shows that the same can also hold for dense nodes. In the $(\Delta+1)$-list coloring setting, dense nodes receive slack proportional to their external degree due to the local sparsity implied by external neighbors. This is not the case in the $(\deg+1)$-list coloring setting. The example consists of a node $v$ of high degree in a large almost-clique (making $v$ dense) and such that $v$ is adjacent to another small almost-clique. The two cliques have non-overlapping palettes as in the first example. However, here $v$ is colored as part of the dense nodes and it therefore does not automatically get temporary slack from all its dense neighbors. We handle this case by selecting a set of outliers in each almost-clique, which are handled earlier, before the remaining nodes of the clique (which we call the inliers). The inliers of a clique are nodes for which similar arguments as in the $(\Delta+1)$-case hold and we will show that a constant fraction of each almost-clique are inliers. Hence, the outliers of an almost-clique get sufficient temporary slack from the inliers, which are colored later.

\begin{figure}[th]
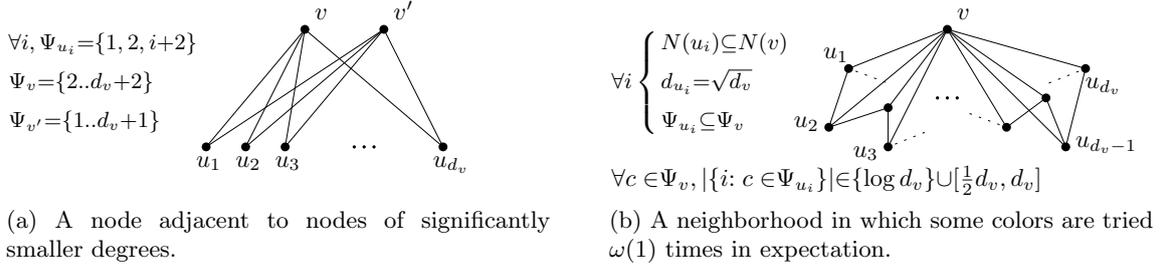

    \centering
    \subfloat[\label{subfig:small-deg}A node adjacent to nodes of significantly smaller degrees.]{\includegraphics[width=.45\textwidth,page=5]{slack_issues.pdf}}
    \qquad
    \subfloat[\label{subfig:heavy-colors}A neighborhood in which some colors are tried $\omega(1)$ times in expectation.]{\includegraphics[width=.45\textwidth,page=4]{slack_issues.pdf}}
    \caption{Examples for which achieving concentration is hard or impossible in slack generation.}
    \label{fig:hard-concentration}
\end{figure}

The next examples in \cref{fig:hard-concentration} illustrate that even when slack exists in expectation, the usual concentration arguments might still break. The third example (\cref{subfig:small-deg}) is a case where slack exists in expectation but is impossible to achieve with concentration. Here $v$ is adjacent to nodes of much lower degree that also have another common (high-degree) neighbor.
This only happens when a node is adjacent to nodes of significantly smaller degrees, so this case disappears if we focus on coloring the nodes whose degree fall in the range $[\log^7 \Delta, \Delta]$, which we do in our main subroutine.

Finally, the fourth example (\cref{subfig:heavy-colors}) is a case where slack exists in expectation and with the probability we need, but cannot be achieved solely with same-colored pairs (as is standard for $(\Delta+1)$-(list)-coloring and is necessary for the use of martingale or Talagrand inequalities). In the example shown, neighbors of $v$ have degrees and palette sizes of order $\Theta(\sqrt{d_v})$, and $\Theta(\sqrt{d_v})$ \emph{heavy colors} appear in $\Theta(d_v)$ palettes, causing them to be tried by $\Theta(\sqrt{d_v})$ neighbors of $v$ in expectation. Other colors are unlikely to provide slack, so slack generation must rely on those heavy colors. This case is captured by our analysis for heavy colors (in \cref{sss:heavy}).

\paragraph{Additional challenges.}
The disparity in degrees and palettes in \degoLC brings numerous additional challenges that go beyond slack generation. It affects the almost-clique decomposition properties, since outside high-degree nodes can now be adjacent to even all nodes of a given almost-clique. Colors are selected with widely different probabilities and success/failure probabilities similarly vary. This impacts shattering, which is a property that depends on the maximum degree. Just the lack of knowledge of global maximum degree makes synchronization harder.

The previous state-of-the-art algorithm of \cite{CLP20} depends heavily on the global bound $\Delta$. 
The intricacy of that algorithm and its analysis is such that it is unlikely to be an effective building block for a $\degoLC$ algorithm.
The algorithm features a hierarchy of $\log\log \Delta$ decompositions that are partitioned into ``blocks'', split by size, and combined into six different sets. These are whittled down in distinct ways, resulting in three final subgraphs that are finished off by two different deterministic algorithms.
The analysis of just one of these sets runs a full 10 pages in the journal version~\cite{CLP20}.

\subsection{Algorithm Outline}

At the beginning of the algorithm, we compute an almost-clique decomposition (ACD). The ACD computation returns a partition of the nodes $V$ of the graph $G=(V,E)$ into sets $\Vsp$, $\Vun$, and into almost-cliques. Each node $v$ in $\Vsp$ is $\Omega(d_v)$-sparse (i.e., $G[N(v)]$ has $\binom{d_v}{2}-\Omega(d_v^2)$ edges), each node in $\Vun$ is $\Omega(d_v)$-uneven (i.e., $v$ has $\Omega(d_v)$ neighbors of degree $\geq(1+\Omega(1))d_v$), and in each almost-clique $C$, every $v\in C$ has at least $(1-\eacd)|C|$ neighbors in $C$ and at most $\eacd|C|$ neighbors outside $C$, for some constant $\eacd>0$. Note that the precise definitions of the ACD, sparsity, unevenness, and other related notions appear in \cref{S:prelims}. After computing the ACD, the algorithm has two main phases. We first color all the sparse and uneven nodes (i.e., all nodes in $\Vsp\cup \Vun$) and we afterwards color all the dense nodes (i.e., all the nodes $V\setminus (\Vsp\cup \Vun)$ that are in almost-cliques). In each phase, we further iterate through $O(\log^*\Delta)$ degree classes. We do this in order to be able to apply the standard shattering technique. For shattering to work, each node should succeed (in getting colored) with probability $1-1/\poly(\Delta)$. Our concentration arguments typically allow to show that each node $v$ succeeds with some probability $1-\exp(-d_v^{\Theta(1)})$ and we therefore need to make sure that when dealing with nodes of degree up to $\Delta$, the minimum node degree is at least $\log^c\Delta$ for some sufficiently large constant $c$.

\paragraph{Coloring the sparse and uneven nodes.} As observed above, unlike in the $(\Delta+1)$-list coloring problem, it is no longer true that a single round of random color trial creates sufficient slack for all sparse nodes. The high-level idea of algorithm for coloring the sparse and uneven nodes is therefore as follows. We first select a certain subset $\Vstart$ of the sparse nodes 
for which a single random coloring round might not create sufficient slack. 
Each node in $\Vstart$ has at a constant fraction of its neighbors still uncolored and outside $\Vstart$.
We then run one round of random color trial to give slack $\Omega(d_v)$ to each node $v\in \Vsp\cup\Vun\setminus \Vstart$.  With those things in place we can then color $\Vsp\cup \Vun$ as follows. In a first step, we color the nodes in $\Vstart$. Because nodes in $\Vstart$ have many neighbors outside $\Vstart$, they have temporary slack and can therefore be colored in $O(\log^* n)$ rounds by using the algorithm of \cite{SW10}. Next, we can color the remaining nodes in $\Vsp\cup \Vun$. For those we have generated enough slack in the initial random color trial step and we can therefore also color those nodes in $O(\log^* n)$ rounds by using the algorithm of \cite{SW10}.

To understand the above algorithm in more detail, we first define a set of nodes $\Veasy$ for which it is relatively easy to show that one round of random color trial creates sufficient slack. First note that this is definitely the case for all nodes in $\Vun$ and we thus have $\Vun\subseteq \Veasy$. Each node $v\in \Vun$ has $\Omega(d_v)$ neighbors $u$ of degree $d_u\geq (1+\eps)d_v$ and each such neighbor $u$ has a constant probability of choosing a color that is not in $v$'s palette. A similar argument also works more generally if $v$ has discrepancy $\Omega(d_v)$, i.e., if the average probability for $v$'s neighbors for trying a color outside $v$'s palette is constant. In this case, it is straightforward to see that the created chromatic slack is $\Omega(d_v)$ in expectation, it is however more tricky to guarantee it with sufficiently high probability (details appear in \cref{ss:disc}). All such nodes are therefore also added to $\Veasy$. Further, we call a node $v$ balanced if a large fraction of its neighbors has degree $\Omega(d_v)$. For sparse balanced nodes, essentially the same arguments as for sparse nodes in the $(\Delta+1)$-list coloring case work and the sparse balanced nodes are therefore added to $\Veasy$. Finally, nodes with a constant fraction of their neighbors that are dense also get automatic temporary slack from the fact that the dense nodes are colored after all the sparse nodes. So, these are also included in $\Veasy$.

An additional class of nodes $v$ that we can prove 
obtain slack $\Omega(d_v)$ are nodes for which a constant fraction of the neighbors is expected to try a color that is 'heavy'. Here, a color is called heavy if the expected number of neighbors of $v$ trying this color is at least a sufficiently large constant. We call those nodes $\Vheavy$. For nodes in $\Vheavy$, it is straightforward to see that the expected slack from neighbors picking the same color is $\Omega(d_v)$. However, in this case we have to invest some additional work to prove that this slack is also created with a sufficiently large probability (see \cref{sss:heavy}). We can now define the set $\Vstart$ as follows. $\Vstart$ contains all nodes $v\in \Vsp\setminus (\Veasy\cup\Vheavy)$ such that a constant fraction of the neighbors of $v$ are 
in $\Veasy$. The final set of nodes that are not classified are the nodes in $\Vsp\setminus (\Veasy\cup\Vheavy\cup\Vstart)$. We call those nodes tough and in \cref{ss:tough} we show that tough nodes also obtain sufficient slack in the initial round of random color trial.

\paragraph{Coloring the dense nodes.} As a first step, each almost-clique $C$ defines a leader node $x_C\in C$ and a set of outlier nodes $O_C\subseteq C$. The leader of $C$ is the node $x_C\in C$ of minimum \emph{slackability}, where slackability is defined as the sum of discrepancy and sparsity. The slackability of $x_C$ will also be referred to as the slackability of the almost-clique $C$. The set of outliers $O_C$ consists of the (approximately) third of the nodes in $C$ with the fewest common neighbors with $x_C$, of the sixth of the nodes in $C$ of maximum degree, and of the antineighbors of $x_C$ in $C$ (i.e., the nodes that are not adjacent to $x_C$). The remaining nodes of $C$ (which is still close to at least half of $C$) is called the inliers of $C$. We show that all the inliers $I_C=C\setminus O_C$ of a clique $C$ have similar properties (and in particular neighborhoods and palettes that are near-identical, with differences on the order of the slackability of $x_C$).

After defining the leader and outliers of each almost-clique, we run one round of random color trial to create slack. For each almost-clique $C$ with slackability at least $\log^c\Delta$ for a sufficiently large constant $c$, we show that all the inliers obtain slack that is proportional to their slackability (and thus in particular at least proportional to the slackability of the almost-clique). The arguments for slack generation are similar to the corresponding arguments for sparse nodes (details in \cref{S:slackgen}). 

After slack generation, we select one more set in each almost-clique $C$. For each almost-clique $C$, we compute a 'put-aside' set $P_C$ as follows. We first choose a random subset $S_C$ of $I_C$ of size $\poly\log\Delta$, inducing a global set $S = \cup_C S_C$. To obtain $P_C$, we then remove any node from $S_C$ with a neighbor in $S$. 
Note that the sets $P_C$ of different almost-cliques are independent and they can therefore be colored trivially even if all other nodes are already colored. We can therefore delay coloring those sets to the very end of the algorithm. With sufficiently high probability, the set $P_C$ of each almost-clique $C$ is of sufficiently large polylogarithmic size. 
We need the sets $P_C$ to create temporary slack for the other nodes in ultradense almost-cliques in which the slack generation is a low-probability event.

We can now proceed to color most of the nodes of the almost-cliques. In a first step, we color all the outliers. Because the outliers are only roughly at most half of each almost-clique, they have sufficient slack from the inliers so that they can be colored in $O(\log^* n)$ rounds by using the algorithm of \cite{SW10}. After coloring the outliers, we color most of the inliers of each clique. Here, we use the fact that the leader of each clique is connected to all the inliers of the clique and that the leader's color palette is not too different from the color palettes of the other inliers. The leader $x_C$ therefore just randomly proposes one of its own available colors to each of the nodes in $I_C$, so that no color is proposed more than once. It is remarkable that this simple primitive suffices to color nearly all the inliers, leaving only a portion proportional to the slackability of $C$. 
The remaining inliers then have slack proportional to their remaining degree (where the slack in  ultradense almost-cliques comes from the put-aside set $P_C$). We can therefore fully color them with the algorithm of \cite{SW10}. At the very end, we finally color the nodes in the put-aside sets $P_C$.

\paragraph{Putting everything together.}
The combination of our algorithm for sparse and uneven nodes and our algorithm for dense nodes gives us an algorithm to color all nodes of degree $\Omega(\log^c n)$ in $O(\log^* n)$ rounds, w.h.p. Applied to nodes in lower degree range, the combined algorithm shatters the subgraph associated to the degree range in $O(\log^*n)$ rounds. We apply the combined algorithm to the subgraphs induced by $O(\log^*\Delta)$ degree classes, starting from the higher degrees. Each time, we color the shattered graph with a deterministic algorithm whose running time decreases as the maximum degree of the graph goes down. This decreasing cost of the deterministic algorithm means that the running time is dominated by the cost of the deterministic algorithm applied to the second degree range, consisting of nodes of degree $\Omega(\poly(\log\log n)) \cap O(\poly(\log n))$. In combination with the $O(\log^2 \Delta\cdot\log n)$-round deterministic $(\deg+1)$-list coloring algorithm of \cite{GK21}, this leads to an overall round complexity of $O(\log^3\log n)$.

\section{Preliminaries and Definitions}
\label{S:prelims}

\paragraph*{Constants and evolving quantities.} Throughout the paper, we use subscripts for constant numerical quantities and parentheses for evolving ones, e.g., $d_v$ and $\pal_v$ are the original degree and palette of node $v$, while $d(v)$ and $\pal(v)$ are the current degree and palette, i.e., taking into account that parts of the graph have been colored or turned off.

Let us consider $\Delta$ as an upper-bound on the maximum degree rather than the maximum degree itself. 
Let $\ell = \log^{2.1} \Delta$.

\subsection{Slack, Sparsity, \& Almost-Cliques}

\begin{definition}[Slack]
\label{def:slack}
The \emph{slack} $s(v)$ of a node $v$ in a given round is the difference $\card{\pal(v)} - d(v)$ between the number of colors it has then available and its degree in that round.
\end{definition}

For any subset of the vertices $S \subseteq V$, we denote by $E[S] = E\cap \binom{S}{2}$ the set of edges between nodes of $S$, and by $m(S) = \card{E[S]}$ the number of edges between nodes of $S$. The next quantity (sparsity) measures \emph{the number of missing edges} in a node's neighborhood. Note that the definition used here is different from the one used when dealing with \DeloC or \DeloLC, to address the variability of the palette sizes.

\begin{definition}[Sparsity]
\label{def:sparsity}
The \emph{(local) sparsity} $\zeta_v$ of node $v$ is defined as $\frac{1}{d_v}\cdot\left[\binom{d_v}{2}-m(N(v))\right]$. Node $v$ is \emph{$\zeta$-sparse} if $\zeta_v\ge \zeta$, and \emph{$\zeta$-dense} if $\zeta_v\le \zeta$.
\end{definition}
To address the variety in size and content of the palettes that are inherent to \degoLC, we use several quantities that measure how much a node's palette differs from its neighbors'.

\begin{definition}[Disparity, Discrepancy \& Unevenness]
\label{def:unevennes}
The \emph{disparity} of $u$ towards $v$ is defined as $\disc_{u,v} = \card{\pal_u \setminus \pal_v} / \card{\pal_u}$. The \emph{discrepancy} of node $v$ is defined as $\disc_v=\sum_{u \in N(v)} \disc_{u,v}$, and
its \emph{unevenness} is defined as $\unev_v=\sum_{u \in N(v)} \frac{\max(0,d_u - d_v)}{d_u+1}$. Node $v$ is \emph{$\disc$-discrepant} if $\disc_v\ge \disc$, \emph{$\unev$-uneven} if $\unev_v\ge \unev$.
\end{definition}
It always holds that $\disc_v \ge \unev_v$, and the two are equivalent in the non-list $\deg+1$ setting. In addition to the fixed quantities defined here, we also make use of the evolving variant $\disc(u,v) = \card{\pal(u) \setminus \pal(v)} / \card{\pal(u)}$ later in the paper. Intuitively, discrepancy is how many neighbors of a node are expected to try a color outside its palette, and disparity is the contribution of individual nodes to that quantity. Unevenness focuses on how much the palettes differ in size, ignoring their content.

Sparsity and (more recently) unevenness have been key in the definition of graph decompositions known as \emph{almost-clique decompositions}. Intuitively, such decompositions partition the graph into small-diameter connected components of dense and even nodes on the one hand and possibly big sets of comparatively sparse or uneven nodes on the other hand. We use an almost-clique decomposition of \cite{AA20}, tailored to the $\deg+1$ setting. See also earlier $\Delta+1$-oriented ACD definitions of \cite{HSS18,ACK19}.
\begin{definition}[($\deg+1$) ACD \cite{AA20}] \label{def:acd}
Let  $G=(V,E)$ be a graph and $\eacd,\espa\in (0,1)$ be parameters. A partition $V=\Vsp \sqcup \Vun \sqcup \Vdense$ of $V$, with $\Vdense$ further partitioned into $\Vdense = \bigsqcup_{C \in \acset} C$,
is an \emph{almost-clique decomposition (ACD)} for $G$ if: \begin{compactenum}
    \item Every $v \in \Vsp$ is $\espa d_v$-sparse\ ,
    \item Every $v \in \Vun$ is $\espa d_v$-uneven\ ,
    \item For every $C \in \acset$ and $v\in C$, $d_v \leq (1+\eacd)\card{C}$\ ,
    \item For every $C \in \acset$ and $v\in C$, $(1+\eacd)|N_{C}(v)|\ge \card{C}$\ .
\end{compactenum}
\end{definition}
As is shown in \cite{AA20}, 
An ACD  can be found in a constant number of rounds in \LOCAL \cite{AA20}, for any $\eacd > 0$ and $\espa=\Omega(\eacd^2)$.
 We refer to the $C$'s as \emph{almost-cliques}. For each $C \in \acset$ let $\Delta_C = \max_{v\in C}d_v$, and for each $v \in \Vdense$ let $C_v$ be the almost-clique containing $v$. Properties 3 and 4 of \cref{def:acd} directly imply that for every $C \in \acset$, $(1-\eacd)\Delta_C\le\card{C}\le(1+\eacd)\Delta_C$, 
 and that for every $v \in \Vdense$, $d_v\geq(1-2\eacd)\Delta_{C_v}$. 
 It also follows that the diameter of each $G[C]$ is at most 2.

Almost-clique decompositions anterior to \cite{AA20} were tailored to solve $\Delta+1$-coloring problems. As such, they used a definition of sparsity involving the maximum degree $\Delta$ of $G$, had no notion of unevenness, and did not consider almost-cliques of size $o(\Delta)$. Such ACDs could be found for any graph in a constant number of rounds of {\LOCAL} \cite{HSS18} or {\CONGEST} \cite{HKMT21}. The type of decomposition presented here, tailored to $\deg+1$-coloring problems and due to Alon and Assadi~\cite{AA20}, can similarly be computed in constant rounds of \LOCAL.

In the $\Delta+1$ setting, a simple link exists between sparsity and slack: a simple randomized procedure gives slack to nodes that have sparsity. In this setting sparsity is also useful in analyzing the structural properties of almost-cliques. The situation is very different in the $\deg+1$ setting, as will be evident from our analysis of slack generation in this paper. Notably, sparsity alone is no longer sufficient as a quantity for slack generation and the structural analysis of almost-cliques, leading to our introducing \emph{slackability}.
\begin{definition}[Slackability]
\label{def:slackability}
The \emph{slackability} $\barsigma_v$ of node $v$ is defined as $\barsigma_v = \disc_v+\zeta_v$.
We also define the \emph{strong slackability} as $\sigma_v = \unev_v+\zeta_v$.
\end{definition}

Schneider and Wattenhofer \cite{SW10} showed that coloring can be achieved ultrafast if all nodes have slack at least proportional to their degree (and the degree is large enough). This is achieved by each node trying up to $\log n$ colors in a round, using the high bandwidth of the \LOCAL model. We use the following variant that is very similar but still slightly different from some previous results. For instance, the case where $\kappa=1$ is a direct consequence of Lemma 2.1 in~\cite{CLP20}.

\begin{restatable}{lemma}{slackcolorlemma}
\label{lem:slackcolor}
Consider the $\deg+1$-list coloring problem where each node $v$ has slack $s(v)=\Omega(d(v))$.
Let $1<\smin \leq \min_v s(v)$ be globally known.
For every $\kappa\in (1/\smin,1]$, there is a randomized \LOCAL algorithm {\slackcolor[$(\smin)$]} that in $O(\log^* \smin+1/\kappa)$ rounds properly colors each node $v$ w.p.\ $1 - \exp(-\Omega(\smin^{1/(1+\kappa)})) - \Delta e^{-\Omega(\smin)}$, even conditioned on arbitrary random choices of nodes at distance $\geq 2$ from~$v$.
\end{restatable}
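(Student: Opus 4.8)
The plan is to follow the by-now-standard ``multi-trial slack coloring'' approach of Schneider--Wattenhofer, as refined in \cite{SW10,CLP20}, but tracking the slack-to-degree ratio carefully through the \degoLC setting where palettes and degrees vary. First I would set up the algorithm: the coloring proceeds in $T=O(\log^*\smin)$ phases, indexed so that at the start of phase $i$ every still-uncolored node $v$ has a current slack-to-degree ratio $\alpha_i(v) := s(v)/d(v)$ that (for the nodes that have ``behaved'') has grown like a tower of exponentials — roughly $\alpha_{i+1}\approx e^{\Theta(\alpha_i)}$. In each phase, a node $v$ that is still active samples $\Theta(\alpha_i(v))$ colors uniformly and independently from its current palette $\pal(v)$, tries them (say via a fixed priority order among neighbors, or via a single synchronized trial repeated $\Theta(\alpha_i(v))$ times as in \synchronizedcolortrial), and permanently keeps the first tried color that no neighbor simultaneously keeps. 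Since $|\pal(v)|\ge d(v)+s(v) = (1+\alpha_i(v))d(v)$, a fixed sampled color of $v$ is ``blocked'' by a given neighbor with probability at most $1/|\pal(v)|$, so by a union bound over the at most $d(v)$ neighbors each single trial succeeds with probability $\ge \alpha_i(v)/(1+\alpha_i(v)) = \Omega(1)$ (using $\alpha_i(v)=\Omega(1)$, which holds from $i\ge 1$ on). Hence $v$ fails \emph{all} $\Theta(\alpha_i(v))$ trials in phase $i$ with probability at most $\exp(-\Theta(\alpha_i(v)))$.

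The core of the argument is the ratio-growth invariant. Condition on the event that $v$ survives phase $i$. I would argue that, with probability $1-\exp(-\Theta(\alpha_i(v)))$, a $1-e^{-\Theta(\alpha_i(v))}$ fraction of $v$'s currently-active neighbors get colored in phase $i$ — this is a concentration statement about a sum of (nearly) independent indicator variables, one per neighbor, and I would get it from a Chernoff/Azuma-type bound, being careful that the indicators for distinct neighbors $u,u'$ of $v$ are determined by the random choices in $N[u]$ and $N[u']$ respectively, so the dependency graph has bounded degree and a bounded-difference (Talagrand or read-$k$) inequality applies. On this good event, $d(v)$ drops by a factor $e^{\Theta(\alpha_i(v))}$ while $s(v)$ cannot decrease (lost neighbors only help; colors removed from $\pal(v)$ are matched by the degree drop), so $\alpha_{i+1}(v)\ge e^{\Theta(\alpha_i(v))}$, as desired. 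Iterating, after $O(\log^*\smin)$ phases we reach $\alpha(v)=\Omega(d(v))$ with $d(v)$ already small, more precisely at the point where $s(v)\ge d(v)$ we have $|\pal(v)|\ge 2d(v)$ and one can finish $v$ (and all remaining active nodes) by $O(1/\kappa)$ additional rounds of the same trial mechanism — or, for $\kappa$ bounded away from $0$, directly — leaving only the stated $\exp(-\Omega(\smin^{1/(1+\kappa)}))$ failure probability. The role of $\kappa$ is exactly the classical trade-off: running fewer ``finishing'' rounds (larger $\kappa$, fewer than $\log^*$ of them replaced by $1/\kappa$ geometric-decay rounds) weakens the per-node success exponent from $\smin^{\Omega(1)}$ down to $\smin^{1/(1+\kappa)}$; I would reproduce the computation that at the $j$-th finishing round the surviving degree is $\smin^{1 - (1+\kappa)^{-j}\cdot\text{const}}$-ish, so that after $O(1/\kappa)$ rounds the residual slack is $\smin^{1/(1+\kappa)}$ and the failure probability is $\exp(-\Omega(\smin^{1/(1+\kappa)}))$. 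The additive $\Delta e^{-\Omega(\smin)}$ term comes from a union bound over the (at most $\Delta$) neighbors of $v$ for the rare event that some neighbor's palette/degree estimate is corrupted, or equivalently from bounding the probability that the ``a color is blocked $\le$ constant times in expectation'' regularity fails somewhere in $N(v)$.

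The main obstacle I expect is the \emph{concentration of the neighbor-death count under the \degoLC heterogeneity}, i.e.\ establishing the $1-\exp(-\Theta(\alpha_i(v)))$ high-probability fraction-colored statement robustly. Unlike the $(\Delta+1)$ case, $v$'s neighbors have wildly different degrees, palette sizes, and numbers of trials in phase $i$, so the per-neighbor success indicators are far from i.i.d.; moreover a single color may be ``heavy'' (tried by many of $v$'s neighbors), which both correlates the indicators and means a bounded-difference argument must bound the effect of one color-choice on the overall count. I would handle this by (i) restricting attention to the regime $d(v)\ge\smin\ge\polylog$, so Chernoff-type tails kick in; (ii) using the read-$k$-families / Talagrand inequality with the bounded-dependency structure, where the Lipschitz constant is controlled because each color is retained by $O(1)$ neighbors of $v$ in expectation and one can condition on the rare high-multiplicity event; and (iii) crucially invoking the hypothesis $s(v)=\Omega(d(v))$ throughout (not just at the start) so that ``enough'' neighbors are guaranteed to succeed in expectation — this is what makes the $\Theta(\alpha)$ sampled colors each individually likely to stick. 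The ``conditioned on arbitrary choices of distance-$\ge 2$ nodes'' clause is then immediate from the locality of the argument: everything determining whether $v$ is colored and how $\alpha_i(v)$ evolves is a function of random choices within $N[v]\cup\bigcup_{u\in N(v)}N[u]$, i.e.\ the radius-2 ball, so conditioning outside it is harmless.
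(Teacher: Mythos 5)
Your algorithmic skeleton matches the paper's: an $O(1)$-round boosting phase to push the slack-to-degree ratio above a constant (the paper pushes it to $2$ via \cref{lem:slackcolor-init}), a tower-growth phase where each node tries $x_i = 2\knuthupuparrow i$ colors at once, and $O(1/\kappa)$ finishing rounds that trade the number of rounds against the exponent $\smin^{1/(1+\kappa)}$ in the failure probability. The place where you diverge — and where your plan has a genuine gap — is precisely the step you flag as the main obstacle: concentration of the number of neighbors of $v$ that remain uncolored in a phase. You propose Talagrand/read-$k$/bounded-differences with conditioning on heavy-color events, but it is not clear this goes through: in \multitrial a node $w$ fails only if \emph{all} $x$ of its sampled colors are blocked, so changing a single node $u$'s sampled set can in principle flip the outcome of many of $u$'s neighbors simultaneously (each of which had $u$ supplying the unique blocker of its last unblocked color), and in the \degoLC setting a color can sit in arbitrarily many heterogeneous palettes. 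Your proposed fix ("each color is retained by $O(1)$ neighbors in expectation, condition on the rare high-multiplicity event") is exactly the kind of heavy-color argument that the paper needs ten pages for in the slack-\emph{generation} analysis; you have not shown it here, and you do not need it.

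The paper's route is both simpler and what actually delivers the lemma's conditioning clause. The key observation (\cref{lem:multitrial-success}) is deterministic counting: the union $Y_v$ of colors tried by $v$'s neighbors has size at most $x\card{N(v)} \le \card{\pal(v)}/2$ \emph{whatever} those neighbors do, so $v$ fails with probability at most $2^{-x}$ \emph{conditioned on arbitrary (even adversarial) choices of all other nodes}. This per-neighbor failure bound, being valid under any conditioning, is exactly the hypothesis of the martingale Chernoff bound (\cref{lem:chernoff}), so the uncolored-neighbor count concentrates with no Lipschitz or dependency-degree analysis at all — and the same robustness is what makes the "conditioned on random choices at distance $\ge 2$" clause and the $\Delta e^{-\Omega(\smin)}$ term (a union bound over neighbors failing the initial boost or an earlier termination test) immediate. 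Relatedly, your per-trial blocking estimate "a fixed sampled color of $v$ is blocked by a given neighbor w.p.\ at most $1/\card{\pal(v)}$" is not right (a neighbor $u$ blocks a fixed color w.p.\ up to $x/\card{\pal(u)}$, and $\card{\pal(u)}$ can be unrelated to $\card{\pal(v)}$); the deterministic bound $\card{Y_v}\le x\card{N(v)}$ is the correct replacement. I recommend restructuring your proof around that counting lemma; the rest of your outline then goes through essentially as the paper's does.
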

 We give a proof of \cref{lem:slackcolor} and a description of \slackcolor in \cref{app:slackcolor} for completeness.
 
\subsection{Basic Primitive}

The basic primitive in randomized coloring algorithms, which we call {\tryrandomcolor}, is for nodes to \emph{try} a random eligible color: propose it to its neighbors and keep it if it does not conflict with them. More formally, we run {\trycolor} (\cref{alg:trycolor}), with an independently and uniformly sampled color $\col_v\in \pal(v)$.
A more refined version gives priority to some nodes over others: for each node $v$, we partition its neighborhood $N(v)$ into $N^+(v)$ -- the nodes whose colors conflict with $v$'s -- and $N^-(v) = N(v)\setminus N^+(v)$. For correctness of \trycolor, $u \in N^-(v) \rightarrow v\in N^+(u)$ should hold for each edge $uv$. The standard algorithm, where all nodes conflict with each other, corresponds to setting $N^+(v)=N(v)$, for all $v$.
Repeating it leads to a simple $O(\log n)$-round algorithm \cite{johansson99}.
\begin{algorithm}[H]\caption{\trycolor (vertex $v$, color $\col_v$)}\label{alg:trycolor}
\begin{algorithmic}[1]
\STATE Send $\col_v$ to $N(v)$, receive the set $T^+=\{\col_u : u\in N^+(v)\}$.
\STATE{\textbf{if}} $\col_v\notin T^+$ \textbf{then} permanently color $v$ with $\col_v$.
\STATE Send/receive permanent colors, and remove the received ones from $\pal(v)$.
\end{algorithmic}
\end{algorithm}

\begin{algorithm}[H]\caption{\tryrandomcolor (vertex $v$)}\label{alg:tryrandomcolor}
\begin{algorithmic}[1]
\STATE Pick $\col_v$ u.a.r.\ from $\pal_v$. 
\STATE $\trycolor(v,\col_v)$
\end{algorithmic}
\end{algorithm}

\section{Coloring Sparse and Uneven Nodes}
\label{S:sparse}

It is well established \cite{SW10,EPS15} that if nodes have slack proportional to their degree, then they can be colored ultrafast ($O(\log^* n)$ time for high-degree nodes) by \slackcolor. Sparse nodes have sparsity linear in their degree. This leads to linear slack in the $\Delta+1$-coloring problem, using the following simple algorithm \slackgeneration.
 
 \begin{algorithm}[H]\caption{\slackgeneration[(probability $\pgen$)]}\label{alg:slackgeneration}
\begin{algorithmic}[1]
\STATE $S\gets $ sample each $v\in G$ into $S$ independently w.p.\ $\pgen=1/10$.
\STATE \algorithmicforall\ $v\in S$ in parallel \algorithmicdo\ {\tryrandomcolor}$(v)$.
\end{algorithmic}
\end{algorithm}

We also use \slackgeneration for \degoLC, but as we have seen, this is not sufficient to generate slack for all nodes.
Our solution is to identify a particular subset of sparse nodes, $\Vstart$ (to be detailed shortly) that don't get slack in the classical way. We then show these nodes can still be colored fast \emph{if} they are colored before the other sparse nodes, $\Vsp\setminus \Vstart$. This is formalized in the following lemma.

\begin{restatable}{proposition}{slacklemmasparse}\label{L:slackgen-sparse}
Assume all nodes have degree at least $s \geq C\cdot \ln^2 \Delta$ for some universal constant $C$. There is a $O(1)$-round procedure that identifies a subset $\Vst\subseteq \Vsp$ such that after running {\slackgeneration} in the subgraph induced by $\Vsp \cup \Vun$:
\begin{compactenum}
    \item Each node $v$ in $\Vst$ has $\Omega(d_v)$ uncolored neighbors in $V \setminus \Vst$ w.p.\ $1 - \exp(-\Omega(d_v))$, and
    \item Each node $v$ in $\Vun \cup \Vsp\setminus \Vst$ has slack $\Omega(d_v)$, 
    w.p.\ $1 - \exp(-\Omega(\sqrt{s}))$.
\end{compactenum}
For each node, the probability bounds hold even when conditioned on arbitrary random choices outside its 2-hop neighborhood.
\end{restatable}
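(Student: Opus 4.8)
The plan is to first set up the taxonomy of sparse/uneven nodes sketched in the algorithm outline, and then handle each class with a tailored slack-generation argument, all hanging off the single execution of \slackgeneration on $G[\Vsp\cup\Vun]$. Concretely, I would first define $\Veasy$ as the union of (i) all nodes in $\Vun$, (ii) all $\Omega(d_v)$-discrepant sparse nodes (``balanced'' in the sense that $\disc_v=\Omega(d_v)$ arises from neighbors of comparable degree, so we have genuine chromatic slack in expectation), and (iii) sparse nodes with a constant fraction of dense neighbors (which get temporary slack simply because dense nodes are colored in the later phase). Then define $\Vheavy$ as the sparse nodes not in $\Veasy$ for which a constant fraction of neighbors are expected to try a ``heavy'' color (a color whose expected multiplicity in $N(v)$ is at least a large constant), and finally set $\Vst := \{v\in\Vsp\setminus(\Veasy\cup\Vheavy): \text{a constant fraction of } N(v) \text{ lies in }\Veasy\}$, with the leftover nodes $\Vtough := \Vsp\setminus(\Veasy\cup\Vheavy\cup\Vst)$. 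All of these membership tests depend only on degrees, palettes, and the dense/sparse labels within a constant-radius ball, so the classification is a genuine $O(1)$-round procedure; I would remark explicitly that $\Veasy,\Vheavy,\Vst,\Vtough$ are computed deterministically before any randomness is drawn.

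For part~1, the argument is short: each $v\in\Vst$ has, by construction, $\Omega(d_v)$ neighbors in $\Veasy\subseteq V\setminus\Vst$. Each such neighbor $u$ is sampled into the \slackgeneration\ set with probability $1-\pgen=9/10$ of \emph{not} being sampled, and even a sampled $u$ only gets permanently colored with some probability bounded away from $1$; moreover these events are determined by the random choices in the $2$-hop ball of $v$. So the number of neighbors of $v$ in $V\setminus\Vst$ that remain uncolored stochastically dominates a sum of $\Omega(d_v)$ independent Bernoulli's each with constant success probability, and a Chernoff bound gives $1-\exp(-\Omega(d_v))$. This is the easy part.

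For part~2 I would go class by class, in each case showing slack $\Omega(d_v)$ with failure probability $\exp(-\Omega(\sqrt{s}))$, conditioned on the randomness outside the $2$-hop ball of $v$ (which is legitimate because every slack source I use — a neighbor picking an out-of-palette color, two non-adjacent neighbors colliding on a color, or a neighbor in a later phase being uncolored — is decided inside that ball). \emph{Nodes in $\Vun$ and discrepant nodes:} the expected chromatic slack is $\Omega(d_v)$; the subtlety is concentration, since a single high-degree neighbor trying an out-of-palette color can contribute a lot to the count — I would invoke the discrepancy analysis referenced as \Cref{ss:disc}, which splits the contribution into ``bounded'' pieces amenable to a Talagrand/martingale bound and rare ``heavy'' pieces handled separately, yielding $\exp(-\Omega(\sqrt{s}))$. \emph{Nodes with many dense neighbors, and nodes in $\Vst$:} these get $\Omega(d_v)$ \emph{temporary} slack for free, because those neighbors are simply not yet colored when $v$ is colored — no concentration needed (for $\Vst$ this is exactly part~1). \emph{Sparse balanced nodes:} essentially the $(\Delta+1)$-argument — $\Omega(d_v^2)$ non-edges in $N(v)$ among neighbors of comparable degree, so $\Omega(d_v)$ colliding same-colored pairs in expectation, concentrated via the standard Talagrand inequality for coloring since each color is tried $O(1)$ times in $N(v)$. \emph{Nodes in $\Vheavy$:} here same-colored pairs again give $\Omega(d_v)$ expected slack, but the presence of heavy colors (tried $\omega(1)$ times) breaks the ``$O(1)$ Lipschitz'' structure needed for Talagrand; I would defer to the dedicated heavy-color analysis of \Cref{sss:heavy}, which treats heavy colors via a separate second-moment / read-$k$ argument. \emph{Tough nodes:} by definition a tough sparse node is not in $\Veasy$ (so low discrepancy, few dense neighbors, mostly comparable-degree neighbors), not heavy, and does not have a constant fraction of $\Veasy$-neighbors; I would show such a node is in fact ``$(\Delta+1)$-like'' — its sparsity must be realized by non-edges among light-ish, low-discrepancy, non-heavy neighbors, which forces many same-colored collision pairs with bounded multiplicity — and then apply Talagrand as in the balanced case; this is the content cited to \Cref{ss:tough}.

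The main obstacle, and the place where the work really is, is the concentration for the discrepant case and the $\Vheavy$ case: in both the \emph{expectation} of the slack is easy, but the naive martingale exposing one neighbor's color at a time has unbounded Lipschitz constant (a high-degree or heavy-color neighbor can swing the count by $\omega(1)$), so one cannot apply Talagrand's inequality off the shelf. The resolution — separating the ``well-behaved'' part of the slack-generating random variable (bounded per-neighbor influence, handled by Talagrand) from an exceptional part driven by a few high-degree or heavy-color neighbors (handled by a direct Chernoff/second-moment estimate), and showing at least one of the two parts already delivers $\Omega(d_v)$ slack with the claimed probability — is exactly why the hypothesis $s\ge C\ln^2\Delta$ appears: it guarantees $\sqrt{s}=\Omega(\log\Delta)$, making $\exp(-\Omega(\sqrt s))$ small enough to survive a union bound over the later shattering argument. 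I would present the four reusable concentration lemmas (discrepancy, sparse-balanced, heavy, tough) as separate statements in Sections~\ref{ss:disc}, \ref{sss:heavy}, \ref{ss:tough} and here merely assemble them, checking in each case that the conditioning is only on randomness outside the $2$-hop ball and that the thresholds line up so that the weakest bound across all classes is still $1-\exp(-\Omega(\sqrt s))$.
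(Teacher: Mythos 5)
Your overall architecture matches the paper's: the same taxonomy $\Veasy$, $\Vheavy$, $\Vst$, $\Vtough$, the same deterministic $O(1)$-round classification, and the same assembly of per-class slack lemmas (discrepancy via a heavy/light color split, balanced sparsity via non-edges and Talagrand, heavy colors via a separate bucketed/independence argument, temporary slack for $\Vst$ and for dense neighbors, Chernoff for part~1). Those parts are sound and line up with \cref{L:disc-slack}, \cref{lem:balanced-spar}, \cref{lem:slack-heavy}, and the dense/temporary-slack observations.

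The gap is in the tough case, and it is the case where the actual work of this proposition lives. First, a definitional slip: you omit the balanced nodes $\Vbal$ from $\Veasy$, and you describe a tough node as having ``mostly comparable-degree neighbors.'' Both are backwards. In the paper $\Vbal\subseteq\Veasy$, so a tough node is by definition \emph{unbalanced} (a constant fraction of its neighbors have degree below $2d_v/3$) and, because it is not in $\Vst$, most of its \emph{neighbors} are also unbalanced and non-discrepant. The low-degree neighbors are exactly what makes the case hard: \cref{subfig:hard-sparse} shows a sparse node whose $\Omega(d_v^2)$ non-edges have endpoints with pairwise (near-)disjoint palettes, so ``sparsity forces many same-colored collision pairs'' is false as stated — that implication is precisely what fails in \degoLC and what this proposition must repair. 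Low disparity of each neighbor toward $v$ does not fix this either: two neighbors with palettes of size $\sqrt{d_v}$ both contained in $\pal_v$ can still be disjoint from each other, giving collision probability zero. The paper's resolution is a global double count: write the expected number of colliding non-adjacent pairs as $Z_v=\UP/2-\UE$, lower-bound $\UP=\sum_{c}H(c)^2$ by Cauchy--Schwarz using that $v$ is non-discrepant and light (so almost all color weight sits on light colors of $\pal_v$), and upper-bound $\UE$ by showing that each unbalanced neighbor $u$ must send a constant fraction of its edges \emph{outside} $N(v)$ — because the lower-degree endpoints of $u$'s unbalanced edges cannot mostly lie in $N(v)$, as $v$'s neighbors are non-uneven and so admit few higher-degree neighbors inside $N(v)$. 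The gap between the two bounds yields $Z_v=\Omega(d_v)$, after which Talagrand applies as you describe. Your proposal asserts the conclusion of this step without any mechanism for it, so as written the tough case does not go through.
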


The proof of \cref{L:slackgen-sparse} appears in \cref{ss:tough}.
Assuming \cref{L:slackgen-sparse}, we have the following simple procedure for coloring sparse nodes.

\begin{algorithm}[H]
\caption{Main Procedure for Coloring Sparse Nodes}
\label{alg:logstar-sparse}
    \begin{algorithmic}[1]
    \STATE Identify the set $\Vst \subset \Vsp$
    \STATE {\slackgeneration} in $G[\Vsp\cup \Vun]$.
    \STATE {\slackcolor} $\Vst$. \label{st:sp-multitrial}
    \STATE {\slackcolor} $\Vsp \setminus \Vst$ and $\Vun$. \label{st:o-multitrial2}
\end{algorithmic}
\end{algorithm}
 
We now describe the set $\Vstart$, along with informal versions of all the relevant definitions. We then sketch the arguments used in proving the slack generation result, including the distinct cases treated.
We defer proof details to \cref{S:slackgen}.
We define and use a number of small epsilon constants in the formal definitions. For reference, here are their order of magnitude in relation to $\eacd$: $\espa,\eub \in \Theta(\eacd^2)$; $\ehat \in \Theta(\eacd^4)$; $\ehc \in \Theta(\eacd^8)$.

A sparse node $v$ is said to be \emph{balanced} if most of its neighbors are of degree at least $2d_v/3$: $|\{u\in N(v): d_u \ge 2d_v/3\}| \ge (1-\eub) d_v$.
A node is \emph{discrepant} if its discrepancy is at least a constant fraction of its degree: $\disc_v \ge \ehat d_v$.
This case subsumes the uneven case,
in which a node has a constant fraction of its neighbors with a non-trivially larger degree. 
The \emph{easy} nodes $\Veasy$ are the uneven nodes and the sparse nodes that are either balanced, discrepant, or with dense nodes making up a constant fraction of their neighborhood. These obtain slack with standard arguments.

Another class of nodes that receives permanent slack from \slackgeneration are the heavy nodes, defined informally as follows. 
  The weight of a color $c$ equals the expected number of neighbors of $v$ that pick that color in {\slackgeneration}: $H(c)=H_v(c)=\sum_{u\in N(v), c\in \pal_u}\frac{1}{|\pal_u|}$. Let $\hcol = \hcol_v = \{c \in \pal_v : H(c) \ge \ehc\}$ be the set of \emph{heavy colors} for $v$. 
A node is \emph{heavy} if the total weight of its heavy colors is a constant fraction $\ehat$ of its degree: $\sum_{c \in \hcol_v} H(c) \ge \ehat d_v$

We can now define $\Vstart$, the nodes that should be colored first. Those are the sparse nodes that are not heavy nor easy, but have a constant fraction $\ehat$ of their neighbors that are easy. These easy neighbors therefore provide temporary slack for the node, if it is colored before them.

Formally, we define the following sets of nodes:
\[ \begin{array}{lll}
    \Vbal & = \{v \in \Vsp: |\{u\in N(v): d_u \ge 2d_v/3\}| \ge (1-\eub) d_v \}  & \text{(balanced nodes)} \\
    \Vdisc  & = \{v \in \Vsp: \disc_v \ge \ehat d_v \}  &  \text{(discrepant nodes)} \\
    \Veasy & = \Vbal \cup \Vdisc \cup \Vun \cup 
    \{ v \in \Vsp : |N(v)\cap \Vdense| \ge \ehat d_v \} &  \text{(easy nodes)} \\
    \Vheavy & = \{v \in \Vsp\setminus \Veasy : \sum_{c \in \hcol_v} H(c) \ge \ehat d_v \}  &  \text{(heavy nodes)} \\
    \Vst & = \{ v \in \Vsp\setminus (\Veasy\cup\Vheavy) : |N(v)\cap \Veasy| \ge \ehat d_v \}  & \text{(nodes colored first)} \\
    \Vtough & = \Vsp \setminus (\Veasy \cup \Vheavy \cup \Vst) & \text{(tough nodes)}
\end{array} \]

\paragraph{Proof intuition.}
As mentioned, standard arguments suffice to show that easy nodes ($\Veasy$) get slack. Also, it is immediate that the nodes of $\Vst$ get temporary reprieve from their waiting neighbors.
The remaining sparse nodes fall into two types.

There are the heavy nodes (specifically those that are not easy), which have many ``heavy colors'' in their neighborhood.
Each heavy color can contribute a large amount of slack in expectation, and a change in the color of a single node can decrease the expected total contribution of other nodes significantly. Thus, the usual concentration bounds do not apply. 

We tackle this by a two-stage analysis. We show that there exists a partition of the colorspace into buckets with some nice properties and fix one such partition (only for the sake of the analysis). We view the random color choice as consisting of two steps: picking a bucket, and picking a color within that bucket. We can derive tight bounds on the number of nodes and the number of their neighbors that select a given bucket. We can then analyze each bucket in isolation, for which it suffices to obtain bounds on the expected number of nodes colored with each heavy color.
We can then use Hoeffding bound to get a concentration lower bound on the total number of nodes colored with heavy colors. This bound is significantly larger than the number of heavy colors, which implies that w.h.p.\ many colors are reused, i.e., linear slack is generated.

The remaining sparse nodes that fall into none of the types above (i.e., they are light and neither in $\Veasy$ nor $\Vst$) are said to be \emph{tough}. 
One of the main result is that the tough nodes do get permanent slack from \slackgeneration (\cref{alg:slackgeneration}).
At a high level, we orient the edges from high to low degree and sum the in- and out-degrees of the neighbors of a tough node. A gap exists between the sums due to the large number of unbalanced neighbors, which implies the presence of slack-providing non-edges.
The finer details for this are not very easily intuitive, and we defer the discussion to the detailed presentation in \cref{S:slackgen}.

\section{Coloring Dense Nodes}
\label{S:dense}

We give now an algorithm for graph $H$ containing only dense nodes.
Once the sparse (and uneven) nodes have been colored, we are indeed left with a graph consisting only of dense nodes, so we can view $H$ as the subgraph induced by $\Vdense$. In the original graph $G$, at most an $\eacd$-fraction of each dense node's neighborhood is non-dense, so their degrees in $H$ are all at least their original degree times $(1-\eacd)$ and fall into essentially the same degree range. Observe that an almost-clique decomposition of $G$ is still a valid decomposition of $H = G[\Vdense]$, as conditions 3 and 4 of \cref{def:acd} remain satisfied. (The opposite is not true: after coloring the dense nodes, the sparse nodes may no longer be sparse.) We are in a sense using the self-reducibility property of the \degoLC. 

The algorithm (\cref{alg:logstar-dense}) builds on previous frameworks for randomized coloring (\cite{HSS18,CLP20}), but with several notable changes. Some of the notable differences from some or most previous approaches include:
\begin{compactenum}
    \item Management of palette discrepancy (both in size and color composition), by separately treating those with the largest variance; 
    \item A procedure that generates slack to each dense node proportional to its sparsity;
    \item A procedure to give temporary slack to nodes within very isolated almost-cliques, for which the previous argument provides little slack or with insufficient probability; and
    \item A single-round procedure to color most nodes in an almost-clique by synchronizing the colors they try.
\end{compactenum}

Recall that $\ell = \log^{2.1} \Delta$.
We say that $C$ is a \emph{low-slack} almost-clique if $\barsigma_C \le \ell$.
Let $\barsigma_C$ be the minimum $\barsigma_v$ over nodes $v$ in $C$.
Please note that definitions of dense nodes, such as slackability, are in terms of $H = G[\Vdense]$, i.e., the subgraph induced by $\Vdense$.

\begin{algorithm}[H]
\caption{Main Procedure for Coloring Dense Nodes}
 \label{alg:logstar-dense}
  \begin{algorithmic}[1]
  \STATE Compute the leader $x_C$ and outliers $O_C$ of each almost-clique $C$. Let $O = \cup_C O_C$. \label{st:outliers-hi}
  \STATE {\slackgeneration}.
  \STATE $P_C \gets \putaside[(C)]$ in each low-slack almost-clique $C$. Let $P = \cup_C P_C$.      \label{st:putaside-hi}
  \STATE {\slackcolor} $O$. \label{st:o-multitrial}
    \STATE {\synchronizedcolortrial} $\Vdense \setminus P$.\label{st:synchtrial}
    \STATE {\slackcolor} $\Vdense \setminus P$. \label{st:lastmultitrial-hi}
    \STATE For each low-slack $C$, let $x_C$ collect the palettes in $P_C$ and color the nodes locally.\label{st:collect-hi}
\end{algorithmic}
\end{algorithm}

We first derive structural bounds on dense nodes in  \cref{ss:slackabilitybounds}.
We then treat the steps 1, 2, 3 and 5 of the algorithm in individual subsections.

\subsection{Slackability Bounds External and Anti-Degree}
\label{ss:slackabilitybounds}

\begin{definition}[External/anti-degree] \label{def:ext-anti-degree}
For a node $v \in V \setminus (\Vsp \cup \Vun)$, let $C_v$ denote its almost-clique,
$E_v = N(v) \setminus (C_v \cup \Vsp \cup \Vun)$ its set of \emph{external neighbors}, and $e_v = |E_v|$ its \emph{external degree}.
Similarly, let $A_v = C_v \setminus N(v)$ denote its set of \emph{anti-neighbors} and $a_v = |A_v|$ its \emph{anti-degree}.
\end{definition}

In the $\Delta+1$ setting, it was recently observed \cite{HKMT21} that the sparsity of a node bounds its external and anti-degrees. As sparsity implies that a proportional amount of slack can be (probabilistically) obtained in this setting, this meant that nodes could be guaranteed to have external and anti-degree bounded by their slack.
We show an analogous result here where strong slackability $\sigma_v = \unev_v + \zeta_v$ replaces sparsity.

\begin{lemma}
\label{lem:low-ext-degree}
There is a constant $\cext=\cext(\eacd)$
such that $e_v\le \cext\cdot \sigma_v$ holds for every node $v$ in an almost-clique $C$. 
\end{lemma}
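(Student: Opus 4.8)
The plan is to fix a node $v$ in an almost-clique $C = C_v$ and bound its external degree $e_v = |E_v|$ by counting missing edges and degree excesses in $N(v)$. The key observation is that external neighbors of $v$ force many ``defects'' in $v$'s neighborhood: if $v$ has many neighbors outside $C$, then either those external neighbors are not adjacent to the bulk of $C$ (creating non-edges, hence sparsity), or they have much higher degree than $v$ (creating unevenness). Recall from the remarks after \cref{def:acd} that $(1-\eacd)\Delta_C \le |C| \le (1+\eacd)\Delta_C$ and $d_v \ge (1-2\eacd)\Delta_C$, and that property 4 guarantees every $w \in C$ has at least $|N_C(w)| \ge |C|/(1+\eacd) \ge (1-\eacd)|C|$ neighbors inside $C$.

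First I would show that a typical node $w \in C$ is adjacent to almost all of $C$, so $w$ has at most $O(\eacd |C|)$ anti-neighbors in $C$. Now take $u \in E_v \subseteq N(v) \setminus C$. I want to argue $u$ contributes to either sparsity or unevenness of $v$. Consider the common neighbors of $u$ and $v$ inside $C$: since $d_v \ge (1-2\eacd)\Delta_C$, the node $v$ misses at most $O(\eacd)\Delta_C$ nodes of $C$. If $u$ were adjacent to nearly all of $N(v) \cap C$, then $d_u$ would be at least $|N(v)\cap C| - (\text{a few}) \ge (1 - O(\eacd))\Delta_C$ just from inside $C$, plus $u$ has $v$ itself plus possibly other neighbors; more importantly, I'd compare $d_u$ with $d_v$: if $d_u \ge (1 + c)\,d_v$ for a suitable constant $c = c(\eacd)$ then $u$ contributes $\ge \frac{\max(0, d_u - d_v)}{d_u+1} = \Omega(1)$ to $\unev_v$. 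Otherwise $d_u < (1+c)d_v = O(\Delta_C)$, and since $u$ is adjacent to $v$ and $v$ has $\ge (1-2\eacd)\Delta_C$ neighbors in $C$, a counting argument shows $u$ must \emph{miss} a constant fraction $\Omega(\Delta_C)$ of $N(v) \cap C$ — otherwise $u$'s degree would exceed the $(1+c)\Delta_C$ bound. Each such missed node $w \in N(v) \cap C$ with $wu \notin E$ is a non-edge in $N(v)$ (both $u, w \in N(v)$), so it contributes to $m(N(v))$ being small, i.e., to $\zeta_v$.

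The counting then goes: partition $E_v$ into $E_v^{\mathrm{un}}$ (those $u$ with $d_u \ge (1+c)d_v$) and $E_v^{\mathrm{sp}} = E_v \setminus E_v^{\mathrm{un}}$. We have $\unev_v \ge \sum_{u \in E_v^{\mathrm{un}}} \Omega(1) = \Omega(|E_v^{\mathrm{un}}|)$. For the sparse part, each $u \in E_v^{\mathrm{sp}}$ produces $\Omega(\Delta_C) = \Omega(d_v)$ non-edges $\{u, w\}$ with $w \in N(v) \cap C$; since each node $w \in C$ can be "missed" in this way by at most $d_v$ external nodes (trivially) — actually I need to double-count carefully: the number of non-edges inside $N(v)$ with one endpoint in $E_v^{\mathrm{sp}}$ and the other in $N(v)\cap C$ is at least $|E_v^{\mathrm{sp}}| \cdot \Omega(\Delta_C)$, and this is at most $\binom{d_v}{2} - m(N(v)) = d_v \zeta_v$. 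Hence $|E_v^{\mathrm{sp}}| = O(\zeta_v)$. Combining, $e_v = |E_v^{\mathrm{un}}| + |E_v^{\mathrm{sp}}| = O(\unev_v) + O(\zeta_v) = O(\sigma_v)$, giving the constant $\cext = \cext(\eacd)$.

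The main obstacle I anticipate is the dichotomy step — cleanly pinning down the constant $c = c(\eacd)$ and verifying that when $d_u < (1+c)d_v$ the external neighbor $u$ is genuinely forced to miss a \emph{linear} (in $\Delta_C$) number of nodes in $N(v) \cap C$ rather than merely a few. This requires carefully tracking that $u \notin C$, so $u$'s adjacencies to $C$ are constrained only by $u$'s total degree budget, and reconciling the $\eacd$-slack in all the ACD inequalities (the various $(1\pm\eacd)$ factors) so that the "gap" is a genuine positive constant and not something that degenerates as $\eacd \to 0$. One must also handle the degenerate regime where $\Delta_C$ is small, but since we are in the dense setting with $\sigma_v$ possibly small this is consistent. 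I'd also want to be careful that the per-$w$ double-counting bound for non-edges is valid, i.e., that I'm not overcounting a non-edge $\{u,w\}$; here each non-edge is counted once (it has a unique endpoint in $E_v^{\mathrm{sp}}$ and a unique one in $N(v)\cap C$, as these sets are disjoint), so the bound $|E_v^{\mathrm{sp}}| \cdot \Omega(\Delta_C) \le d_v \zeta_v$ is sound.
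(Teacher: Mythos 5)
Your overall decomposition of $E_v$ into a high-degree part (charged to $\unev_v$) and a comparable-degree part (charged to $\zeta_v$ via non-edges) matches the structure of the paper's proof, and both the unevenness half and your double-counting of non-edges are sound. But the step you yourself flagged as the main obstacle is a genuine gap, and it cannot be closed by the degree-budget argument you propose. When $d_u < (1+c)d_v$, you claim $u$ must miss $\Omega(\Delta_C)$ nodes of $N(v)\cap C$, ``otherwise $u$'s degree would exceed the $(1+c)\Delta_C$ bound.'' This is false: being adjacent to all of $N(v)\cap C$ only forces $d_u \ge \card{N(v)\cap C}$, which is at most $\card{C} \le (1+\eacd)\Delta_C$ and hence comfortably below $(1+c)d_v \ge (1+c)(1-2\eacd)\Delta_C$ for any constant $c$ somewhat larger than $\eacd$. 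Concretely, a node $u \notin C$ with $N(u) = (N(v)\cap C) \cup \{v\}$ has $d_u \approx d_v$, misses essentially nothing of $N(v)\cap C$, and contributes nothing to either $\zeta_v$ or $\unev_v$ under your accounting --- so the dichotomy collapses.

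The ingredient you are missing is that $u \in E_v$ is by definition a \emph{dense} node outside $C_v$ (recall $E_v = N(v)\setminus(C_v \cup \Vsp \cup \Vun)$), i.e., it lies in a different almost-clique $C'$ disjoint from $C$. Property 4 of the ACD then forces $\card{N_{C'}(u)} \ge (1-2\eacd)d_u$, so $u$ has already spent almost its entire degree budget inside $C'$; combined with $\card{N_{C}(v)} \ge (1-2\eacd)d_v$ and $C \cap C' = \emptyset$, this yields $\card{N(u)\cap N(v)} \le 2\eacd(d_u + d_v)$. Now the dichotomy does work: if $d_u \le 2d_v$, then $u$ is a non-neighbor of at least $(1-6\eacd)d_v$ nodes of $N(v)$ and contributes $\Omega(1)$ to $\zeta_v$; otherwise it contributes $\Omega(1)$ to $\unev_v$. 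This is exactly the paper's (very short) proof. It also explains why the lemma restricts attention to $E_v$ rather than all neighbors outside $C$: for an arbitrary such neighbor the statement would fail, as your own degree-budget analysis inadvertently illustrates.
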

\begin{proof}
Let $u\in E_v$ be an external neighbor of $v$, i.e., $u$ is a neighbor of $v$ in an almost-clique $C' \neq C$. Nodes $u$ and $v$ are mostly adjacent to other nodes of their almost-cliques: $\card{N_{C}(v)} \geq (1-2\eacd)d_v$ and $\card{N_{C'}(u)} \geq (1-2\eacd)d_u$, and therefore, $\card{N(v) \cap N(u)} \leq 2 \eacd (d_v + d_u)$.

This immediately implies that each such $u$ contributes $\Omega(1)$ to $v$'s strong slackability: if $d_u \leq 2d_v$, then $u$ is part of at least $(1-6\eacd)d_v$ non-edges in $v$'s neighborhood, and thus contributes $\Omega(1)$ to $\zeta_v$; otherwise, $d_u > 2d_v$ has $u$ contribute $\Omega(1)$ to $\unev_v$.
\end{proof}

\begin{lemma}
\label{lem:low-anti-degree}
There is a constant $\cant=\cant(\eacd)$
such that $a_v\le \cant\cdot \sigma_v$ holds for any dense node $v$.
\end{lemma}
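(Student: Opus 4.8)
The goal is to show that a dense node $v$'s anti-degree $a_v = |A_v|$ (where $A_v = C_v \setminus N(v)$) is bounded by a constant times its strong slackability $\sigma_v = \unev_v + \zeta_v$. The natural approach mirrors the proof of the external-degree lemma: each anti-neighbor should be shown to contribute $\Omega(1)$ either to the sparsity $\zeta_v$ or to the unevenness $\unev_v$ of $v$.

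First I would fix an anti-neighbor $u \in A_v$, i.e., $u \in C_v$ but $uv \notin E$. Since $u$ and $v$ are both in the same almost-clique $C = C_v$, properties 3 and 4 of the ACD give that both $u$ and $v$ have most of their neighbors inside $C$: $|N_C(v)| \ge (1-2\eacd)d_v$ and $|N_C(u)| \ge (1-2\eacd)d_u$, and moreover $|C| \le (1+\eacd)\Delta_C$ with $d_v, d_u \ge (1-2\eacd)\Delta_C$. The key point is that $u$ and $v$ share many common neighbors inside $C$ — roughly a $(1-O(\eacd))$ fraction of $C$ lies in $N(u) \cap N(v)$. Now split into two cases by the degree of $u$. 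If $d_u > 2 d_v$ (say), then $u$ contributes $\frac{\max(0, d_u - d_v)}{d_u + 1} \ge \frac{d_u/2}{d_u+1} = \Omega(1)$ to $\unev_v$, so at most $O(\unev_v)$ anti-neighbors can be of this type. If instead $d_u \le 2 d_v$, I want to argue $u$ contributes $\Omega(1)$ to the sparsity $\zeta_v = \frac{1}{d_v}[\binom{d_v}{2} - m(N(v))]$. Since $u$ is an anti-neighbor, $u \notin N(v)$, so we cannot count non-edges incident to $u$; instead we must find non-edges among $N(v)$ itself. The idea is that $u$'s neighbors inside $C$ — of which there are at least $(1-2\eacd)d_u$, hence at least $\Omega(\Delta_C) = \Omega(d_v)$ many — mostly lie in $N(v)$ as well (since $v$ too sees almost all of $C$), yet these common neighbors of $v$ and $u$ are "pulled toward" $u$: roughly, the neighborhood of $v$ contains $\Omega(d_v)$ vertices that are adjacent to $u$, and $u$'s own degree budget being only $\le 2d_v$ while it must reach $(1-2\eacd)\Delta_C \approx d_v$ vertices of $C$ limits how these can interconnect. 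More carefully, I would count: $N_C(u) \subseteq N(v)$ up to $O(\eacd \Delta_C)$ exceptions, so $v$ has a set $W$ of $\Omega(d_v)$ neighbors all adjacent to $u$; but a pair $w, w' \in W$ being an edge, together with $w,w' \in N(u)$, means $u,w,w'$ form a triangle — this doesn't directly give non-edges. The cleaner route: count the edges from $A_v$ into $N(v)$. Each $u \in A_v$ with $d_u \le 2d_v$ sends at most $d_u \le 2d_v$ edges total, but wait — we instead bound missing edges.

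The actual mechanism, which I'd follow, is a double-counting/averaging argument over $A_v$ as a whole rather than per-anti-neighbor: the anti-neighbors $u \in A_v$ of bounded degree each have $\ge (1-O(\eacd))|C|$ neighbors in $C$, almost all in $N(v)$. If $a_v$ is large, then summing over such $u$ we get $\Omega(a_v \cdot d_v)$ edges from $A_v$ to $N(v) \cap C$, so some vertex $w \in N(v)$ receives $\Omega(a_v)$ edges from $A_v$, and since $w$'s degree is $O(\Delta_C) = O(d_v)$ and it already has $\Omega(d_v)$ neighbors in $C \cap N(v)$... I need $w$ to be non-adjacent to $\Omega(d_v)$ vertices of $N(v)$. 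Hmm — the cleanest honest version: each low-degree $u \in A_v$ is non-adjacent to $v$, so for the purpose of $m(N(v))$ we look at $N(v)$'s internal edges; the vertices of $N_C(u) \cap N(v)$ that $u$ reaches use up $u$'s degree, but the vertices of $N_C(v) \setminus N(u)$ are non-neighbors of $u$ — and if $d_u \le 2d_v$ while $|N_C(v)| \approx |C| \approx d_v$, then... I think the intended argument counts, for each such $u$, that $u$ "misses" $\ge |C| - d_u - O(\eacd|C|)$ vertices of $C$, and then relates total missed pairs to $\zeta_v$; I'd work this out so that each bounded-degree anti-neighbor forces $\Omega(1)$ missing edges in $N(v)$ on average. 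The main obstacle I anticipate is precisely this second case — unlike the external-degree lemma where the external neighbor $u$ itself sits in $N(v)$ and its non-edges are non-edges of $N(v)$, here $u \notin N(v)$, so turning "$u$ has bounded degree inside a large clique" into "$N(v)$ has $\Omega(1)$ extra missing edges attributable to $u$" requires an extra combinatorial step (an averaging over which neighbor of $v$ absorbs $u$'s edges, or a careful accounting of the bipartite edge count between $A_v$ and $N_C(v)$). I expect the constants to come out as explicit functions of $\eacd$ just as with $\cext$, giving $\cant = \cant(\eacd)$ as claimed.
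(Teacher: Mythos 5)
You have correctly diagnosed the central obstacle --- an anti-neighbor $u\in A_v$ lies outside $N(v)$, so its own non-edges cannot be charged to $\zeta_v$ --- and you have correctly guessed that the fix involves the bipartite edge count between $A_v$ and $N_C(v)=N(v)\cap C_v$, which is indeed $\Omega(a_v|C_v|)$. But the proposal does not close the argument, and both of the concrete mechanisms it floats are flawed. First, the ``easy'' case is wrong as stated: an anti-neighbor $u$ with $d_u>2d_v$ contributes \emph{nothing} to $\unev_v$, because unevenness is by definition a sum over $N(v)$ and $u\notin N(v)$; this is exactly the point where the anti-degree lemma differs from the external-degree lemma, where the external neighbor does sit in $N(v)$. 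Second, the pigeonhole step in the hard case (some single $w\in N(v)$ receives $\Omega(a_v)$ edges from $A_v$) is quantitatively insufficient: such a $w$ yields only $O(a_v)$ missing edges inside $N(v)$, hence a contribution of $O(a_v/d_v)$ to $\zeta_v$, whereas you need $\Omega(a_v d_v)$ missing edges to conclude $\zeta_v=\Omega(a_v)$. The variant of charging the pairs of $C$ ``missed'' by each low-degree $u\in A_v$ fails for the reason you yourself named: those pairs have an endpoint outside $N(v)$.

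The missing step is to keep the full degree sum over $N_C(v)$ rather than average, and to let the sparsity/unevenness dichotomy play out on the \emph{receiving} side $N_C(v)$ rather than case-splitting on the anti-neighbors. For every $w\in N_C(v)$, $d_w\ge |N(w)\cap N(v)|+|N(w)\cap A_v|$ since $N(v)$ and $A_v$ are disjoint. Summing over $w\in N_C(v)$: the second terms equal $\sum_{u\in A_v}|N(u)\cap N_C(v)|\ge a_v\left((1-\eacd)|C|-a_v\right)\ge(1-2\eacd)a_v|C|$ by ACD property 4; the first terms are at least $|N_C(v)|(d_v-1)-2\zeta_v d_v$, since each of the at most $\zeta_v d_v$ missing edges inside $N(v)$ removes two endpoint contributions. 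Hence $\sum_{w\in N_C(v)}(d_w-d_v)\ge(1-2\eacd)a_v|C|-O(\zeta_v|C|)$, and dividing each summand by $d_w+1\le(1+\eacd)|C|$ turns the left-hand side into a lower bound for $\unev_v$, giving $\unev_v\ge(1-3\eacd)a_v-O(\zeta_v)$ and thus $a_v=O(\sigma_v)$. In words: the $\Omega(a_v|C|)$ edge endpoints that $A_v$ forces into $N_C(v)$ must be paid for either by degrees in $N_C(v)$ exceeding $d_v$ (unevenness) or by those vertices dropping neighbors inside $N(v)$ (sparsity), and only the aggregate over all of $N_C(v)$ makes this accounting come out at the right order.
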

\begin{proof}
Let $C=C_v$. We bound the unevenness via the degree sum of the nodes in $N_C(v) = N(v)\cap C$:
\begin{equation}    \sum_{u \in N_C(v)} d_u \ge \sum_{u\in N_C(v)} d_u[N(v)] + \sum_{u\in N_C(v)} d_u[A_v]\ ,
\label{eq:degsumbound}
\end{equation}
where, for a set $S$, we let $d_u[S]=|N(u)\cap S|$.
There are only $\zeta_v d_v$ edges missing within $N(v)$,   thus the first degree sum on the right-hand side above ``misses'' only the corresponding at most $2\zeta_v d_v\le 2(1+\eacd)\zeta_v|C|$ ``half-edges'', that is,
\[ \sum_{u\in N_C(v)} d_u[ N(v)] \ge  \sum_{u\in N_C(v)} d_u - 2(1+\eacd)\zeta_v |C|\ . \]
To bound the second sum, let us rearrange it as a sum over $A_v$, and recall that each node in $A_v$ has at least $(1-\eacd)|C|$ neighbors in $C_v$, and $|A_v|\leq \eacd |C|$  
(by the ACD property):
\[ \sum_{u\in N_C(v)} d_u[A_v] =\sum_{w\in A_v} d_w[N_C(v)]\ge \sum_{w\in A_v} d_w[C] - d_w[A_v] \ge |A_v| ((1-\eacd)|C| - \card{A_v}) \ge (1-2\eacd) a_v |C|\ . \]
Plugging these bounds back into (\ref{eq:degsumbound}), rearranging, and dividing by $|C|$, we get:
\[ \frac{\sum_{u \in N_C(v)} (d_u - d_v)}{|C|} \ge (1-2\eacd)a_v - 2(1+\eacd)\zeta_v \ . \]
Thus, since $d_u\le (1+\eacd)|C|$ holds for $u\in C$,
\[ \unev_v \ge \sum_{u \in N_C(v)} \frac{d_u - d_v}{d_u} \ge \sum_{u \in N_C(v)} \frac{d_u - d_v}{(1+\eacd)|C|}\ge (1-3\eacd)a_v - 2\zeta_v \ . \]
Hence, $a_v \le 2/(1-3\eacd) \cdot \sigma_v$.
\end{proof}

\subsection{Selecting the Leaders and Inliers}
\label{S:inliers}

An important property of almost-cliques used in recent randomized algorithms for coloring (\cite{HKM20,HKMN20}) is their relative uniformity when it comes to certain quantities (notably sparsity and external degree). In a natural continuation of previous works, we show that any almost-clique contains a constant fraction of nodes (the \emph{inliers}) with the same slackability $\barsigma_C$ up to a multiplicative constant, and the same degrees up to an additive $O(\barsigma_C)$ term. By taking these inliers w.r.t.\ a well chosen \emph{leader} we can additionally ensure that inliers' palettes significantly overlap with that of the leader. These properties are key to the success of {\synchronizedcolortrial} in Step~\ref{st:synchtrial} of the algorithm.

We choose as leader in $C$ the node $x = x_C$ of minimum slackability $\barsigma_x = \barsigma_C$. We eliminate:
\begin{compactenum}
    \item the $\max(d_x,\card{C})/3$ nodes in $C$ with the fewest common neighbors with $x$,
    \item the $\card{C}/6$ nodes of largest (original) degree, and 
    \item the anti-neighbors $A_x$ of $x$.    
\end{compactenum}

The eliminated nodes in $C$ form its set $O_C$ of \emph{outliers}. The remaining set $\core_C = C \setminus O_C$ of \emph{inliers} is of size at least $|C|(1/2 - 2\eacd) = \Omega(|C|)$. This size implies that coloring the outliers $O=\bigcup_{C\in \acset} O_C$ before the inliers gives them a large amount of temporary slack. Observe that the leader $x$ might become an outlier, but it continues to fulfil its role.

The following lemma shows that the nodes in $\core$ share most of their neighborhoods and palettes and have close to uniform degrees, even when part of the graph is colored. Recall that $\triangle$ denotes the symmetric difference of two sets (not to be confused with the Greek letter $\Delta$), and that $\pal_u$ ($\pal(u)$) refers to the original (evolving) palette of $u$. 

\begin{lemma}
For all $u \in \core_C$, it always holds that $e_u\le 12\sigma_x$, 
$\card{N(u) \triangle N(x)} \leq 12 \sigma_x$ and $\card{\pal(u) \triangle \pal(x)} \leq \card{\pal_u \triangle \pal_x} + 12 \sigma_x$.
\label{lem:degreesinac}
\end{lemma}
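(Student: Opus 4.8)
The plan is to establish the three bounds in sequence, leaning on the structural machinery already in place: Lemma~\ref{lem:low-ext-degree} and Lemma~\ref{lem:low-anti-degree} relate external and anti-degrees to strong slackability, and $\sigma_x \le \barsigma_x = \barsigma_C$ by definition, so every bound phrased in terms of $\barsigma_C$ can first be sought in terms of $\sigma_x$. The key observation driving all three parts is that the outlier set $O_C$ is designed precisely so that each inlier $u \in \core_C$ agrees with the leader $x$ on all of the relevant quantities up to the slackability scale: $u$ has many common neighbors with $x$ (it survived elimination rule~1), it has degree not much larger than $|C|$ nor much larger than $d_x$ in the relevant direction (rule~2), and it is adjacent to $x$ (rule~3, removing $A_x$).

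First I would prove $e_u \le 12\sigma_x$. Since $u$ is an inlier, it is a dense node in an almost-clique, so Lemma~\ref{lem:low-ext-degree} already gives $e_u \le \cext \sigma_u$; the work is to show $\sigma_u = O(\sigma_x)$. This follows from the inlier property: $u$ and $x$ share all but $O(\sigma_x)$ of their neighborhoods (this is essentially the second bound, so I would actually prove the neighborhood bound first and derive the external-degree bound from it). Concretely, if $\card{N(u) \triangle N(x)} \le 12\sigma_x$, then $u$'s external neighbors that are not also near-neighbors of $x$ number at most $12\sigma_x$, while the rest are external neighbors of $x$ too, hence bounded by $e_x \le \cext \sigma_x$; combining and absorbing constants gives the claim (after adjusting the constant $12$ appropriately, or tracking it carefully through rule~1's one-third threshold).

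The heart of the argument — and the main obstacle — is the neighborhood bound $\card{N(u) \triangle N(x)} \le 12\sigma_x$. Here one counts: $N(x) \setminus N(u)$ consists of neighbors of $x$ that $u$ misses, and $N(u) \setminus N(x)$ consists of neighbors of $u$ outside $N(x)$. Since $u$ survived elimination rule~1, it is among the $2|C|/3$ (roughly) nodes with the \emph{most} common neighbors with $x$, so $\card{N(u) \cap N(x) \cap C}$ is large — at least $|C| - O(\sigma_x)$ after using that $x$ itself has anti-degree $a_x \le \cant \sigma_x$ (Lemma~\ref{lem:low-anti-degree}) and external degree $e_x \le \cext\sigma_x$, so $x$ has $\ge |C| - O(\sigma_x)$ neighbors inside $C$, and the median common-neighbor count among surviving nodes must be within $O(\sigma_x)$ of this. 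The nodes outside $C$ on either side contribute at most $e_u + e_x$, which we again bound via $\sigma_u, \sigma_x$ — so there is a mild circularity that must be broken by first bounding everything in terms of $|C|$ and the ACD parameters, then tightening. I expect the bookkeeping of which constant ($1/3$, $1/6$, $\eacd$, $\cext$, $\cant$) feeds into the final $12$ to be the fiddly part, but no genuinely new idea is needed beyond the ACD properties and the two preceding lemmas.

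Finally, the palette bound $\card{\pal(u) \triangle \pal(x)} \le \card{\pal_u \triangle \pal_x} + 12\sigma_x$ should follow from the neighborhood bound together with the fact that the evolving palettes change only by removing colors taken by colored neighbors: $\pal(u)$ differs from $\pal_u$ only in colors used by $N(u) \setminus N(u)$'s uncolored part, i.e.\ by colored neighbors, and symmetrically for $x$. A color in $\pal(u) \triangle \pal(x)$ is either in $\pal_u \triangle \pal_x$ to begin with, or was removed from exactly one of the two palettes; the latter can only happen because of a colored neighbor in the symmetric difference $N(u) \triangle N(x)$ (a colored \emph{common} neighbor removes its color from both, or it was already absent from both for other reasons). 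Hence the number of newly-introduced discrepant colors is at most $\card{N(u) \triangle N(x)} \le 12\sigma_x$, giving the bound. Since each step is monotone in the sense that colored nodes stay colored, the ``always holds'' (over all rounds) claim is immediate once the one-round bound is in hand, using that $N(u) \triangle N(x)$ only shrinks as nodes get colored off. I do not anticipate a serious obstacle here beyond carefully matching the ``$+12\sigma_x$'' additive term to the neighborhood bound.
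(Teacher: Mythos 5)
Your overall skeleton matches the paper's proof (neighborhood bound first, external degree and palette bound derived from it, monotonicity for the ``always holds'' claim), and your treatment of the palette part and of $e_u$ is essentially the paper's. But the central step --- $\card{N(u)\triangle N(x)}\le 12\sigma_x$ --- is not actually established by your plan, and the quantities you propose to use for it are the wrong ones. You try to lower-bound $\card{N(u)\cap N(x)}$ from $a_x\le\cant\sigma_x$ and $e_x\le\cext\sigma_x$ together with the assertion that ``the median common-neighbor count among surviving nodes must be within $O(\sigma_x)$'' of $\card{N_C(x)}$. Knowing that $x$ has almost all of $C$ as neighbors says nothing about how many of \emph{those} neighbors are adjacent to a given $u$: even with $a_x=e_x=0$, a node $u\in N(x)$ can miss up to $\Theta(\eacd\card{C})\gg\sigma_x$ nodes of $N(x)$. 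What controls this is the \emph{sparsity} $\zeta_x$, i.e.\ the number of missing edges inside $N(x)$: since $\sum_{u\in N(x)}\card{N(x)\setminus N(u)}\le 2\zeta_x d_x$, Markov's inequality shows at most $d_x/3$ nodes miss more than $6\zeta_x$ common neighbors, and elimination rule~1 discards exactly that third. Your proposal never invokes $\zeta_x$ here, so the ``median'' claim is the conclusion, not an argument.

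A second, independent omission: the other half of the symmetric difference, $\card{N(u)\setminus N(x)}\le d_u-\card{N(u)\cap N(x)}$, requires the degree upper bound $d_u\le d_x+5\unev_x$ for inliers. This comes from its own averaging argument --- the set $Q$ of nodes in $N_C(x)$ of degree $\ge d_x+5\unev_x$ satisfies $\unev_x\ge 5\unev_x\card{Q}/((1+\eacd)\card{C})$, so $\card{Q}\le\card{C}/6$ and rule~2 removes all of $Q$. You mention rule~2 in passing but never use it, and without this bound the term $N(u)\setminus N(x)$ is uncontrolled. Finally, the ``mild circularity'' you worry about (needing $\sigma_u$ to bound $e_u$) does not arise in the correct argument: once $\card{N(u)\triangle N(x)}\le 12\sigma_x$ is in hand, $e_u$ is bounded directly via $E_u\subseteq(N(u)\setminus N(x))\cup E_x$ and \cref{lem:low-ext-degree} applied to $x$ only, with no appeal to $\sigma_u$.
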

\begin{proof}
By definition of $\zeta_x$, there are at most $\zeta_x d_x$ missing edges in $N(x)$, therefore at most $d_x/3$ neighbors of $x$ are non-neighbors of more than $6\zeta_x$ other neighbors of $x$.
Since such nodes were eliminated (due to the first criteria for inclusion in $O_C$), the nodes in $\core_C$ all have at least $d_x - 6\zeta_x \ge d_x - 6\sigma_x$ common neighbors with $x$.

Let $Q$ be the set of nodes of $N_C(x)$ of degree at least $d_x + 5 \unev_x$. 
By definition of unevenness,
\[ \unev_x \ge \sum_{w\in Q} \frac {d_w-d_x}{d_w+1} 
\ge \frac{5 \unev_x |Q|}{(1+\eacd)\card{C}}\ . \]
Thus, $|Q| \le (1+\eacd)|C|/5 \le |C|/6$. Since the $|C|/6$ nodes of highest degree in $C$ were eliminated (due to the second criteria for $O_C$), all nodes in $\core$ have (initial) degree at most $d_x + 5\unev_x \le d_x + 5\sigma_x$.

Given the large common neighborhood within $C$ and the degree bounds, the size of the symmetric difference $N(u) \triangle N(x)$ is at most $2\cdot 6\zeta_x + 5\unev_x \le 12\sigma_x$, and the same holds for
the external degree of $u$. This bound holds as nodes get colored since those quantities can only get smaller. As corollary, $d(u)$ and $d(x)$ never differ by more than $12 \sigma_x$. The symmetric difference $\pal(u) \triangle \pal(x)$ only evolves either when a node in $N(u) \triangle N(x)$ gets colored, or when a color in $\pal(u) \triangle \pal(x)$ gets picked by a common neighbor of $x$ and $u$. The first can only happen $12 \sigma_x$ times by the bound on $N(u)\triangle N(x)$, and the second only decreases the symmetric difference, hence the claim.
\end{proof}

\subsection{Slack Generation for Dense Nodes}
\label{ss:slackgeneration-dense}

Our result on slack generation for dense nodes can be summarized by the following proposition.

\begin{restatable}{proposition}{slacklemmadense}\label{L:slackgen-dense}
There is a $O(1)$-round \CONGEST algorithm {\slackgeneration} that colors a subset of the dense nodes such that afterwards:
\begin{compactenum}
    \item Each node $v$ in $\Vdense \setminus O$ of slackability $\barsigma_v\geq \ell$ has slack $\Omega(\barsigma_v)$, w.p.\ $1 - \exp(-\Omega(\ell))$. \label{I:dense-slack}
    \item Each node $v$ in $V$ has at most $d_v/4$ colored neighbors, w.p.\ $1 - \exp(-\Omega(d_v))$.
\end{compactenum}
For each node, the probability bounds hold even conditioned on arbitrary random choices outside its 2-hop neighborhood.
\end{restatable}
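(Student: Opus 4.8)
The procedure in question is \slackgeneration\ itself (\cref{alg:slackgeneration}), which runs in $O(1)$ rounds and is trivially implementable in \CONGEST, so I would focus entirely on the two probabilistic guarantees. Throughout, I would use the decomposition: the slack $v$ gains in the round equals the number of neighbors of $v$ that get permanently colored minus the number of distinct colors of $\pal_v$ used by them, which splits into \emph{chromatic slack} (colored neighbors whose color lies outside $\pal_v$) and \emph{collision slack} ($\sum_{c\in\pal_v}\max(0,(\#\text{colored neighbors of }v\text{ using }c)-1)$). I would also note that $v$'s degree, palette, and slack after the round are determined by the randomness inside $v$'s $2$-hop ball, so the stated conditioning costs nothing. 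Claim~2 is then a one-liner: the colored neighbors of $v$ are a subset of the sampled ones, whose number is $\mathrm{Bin}(d_v,1/10)$, so a Chernoff bound yields at most $d_v/4$ colored neighbors with probability $1-\exp(-\Omega(d_v))$, using only the $1$-hop sampling.

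For Claim~1 I would fix a dense inlier $v$ with $\barsigma_v=\disc_v+\zeta_v\ge\ell$ and split on which term dominates. If $\disc_v\ge\barsigma_v/2$, I would go for chromatic slack: for a neighbor $u$ of $v$, the probability that it is sampled, tries a color of $\pal_u\setminus\pal_v$, and keeps it is at least $\tfrac{1}{10}\disc_{u,v}\kappa_u$, where $\kappa_u$ is the keep-probability of $u$, which is $\Omega(1)$ as soon as $\tfrac{1}{10}\sum_{x\in N(u)}1/\card{\pal_x}=O(1)$. The structural fact I would lean on is that a dense node has only $O(\zeta_u)$ neighbors of degree $\le d_u/2$ --- otherwise the $\Omega(d_u^2)$ non-edges they induce in $N(u)$ would contradict $u$ being dense --- which, together with \cref{lem:low-ext-degree,lem:low-anti-degree,lem:degreesinac} and the inlier selection, gives $\kappa_u=\Omega(1)$ for all but a $\disc_{u,v}$-negligible fraction of $v$'s disparity-carrying neighbors; summing gives expected chromatic slack $\Omega(\disc_v)$. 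Since this is a sum of bounded, bounded-dependency indicators read off the $2$-hop ball, a Chernoff/Azuma bound would give chromatic slack $\Omega(\disc_v)=\Omega(\ell)$ with probability $1-\exp(-\Omega(\ell))$.

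If instead $\zeta_v\ge\barsigma_v/2$, then $\barsigma_v\le2\zeta_v<2\espa d_v$ is only a small constant fraction of $d_v$ and $\disc_v\le\zeta_v$, so by Markov all but $O(\zeta_v)$ neighbors $u$ of $v$ satisfy $\disc_{u,v}<\tfrac{1}{100}$, i.e.\ $\card{\pal_u\cap\pal_v}\ge0.99\card{\pal_u}$. Here I would go for collision slack. If some color is \emph{heavy} ($H_v(c)=\Omega(\ell)$), the number of neighbors of $v$ that end up colored $c$ is a concentrated sum of near-independent indicators, $\Omega(\ell)$ with the required probability, and this alone supplies the slack (unlike for general sparse nodes this heavy case is painless, since $\card{\pal_u}=\Theta(d_v)$ for the bulk of $N(v)$, so no color has super-constant per-neighbor weight). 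Otherwise every color is tried $O(1)$ times in expectation over $N(v)$, and I would extract slack from non-edges $\{u,w\}$ of $N(v)$ with $\card{\pal_u\cap\pal_w}=\Omega(d_v)$ and $\card{\pal_u},\card{\pal_w}=\Theta(d_v)$: using that most of $N(v)$ lies in $C_v$ with near-uniform palettes (\cref{lem:degreesinac}), the low-disparity bound above, and --- where external neighbors threaten the count --- an orientation/degree-sum argument on $N(v)$ in the spirit of the tough-node analysis (\cref{ss:tough}), I would secure $\Omega(\zeta_v d_v)$ such ``good'' non-edges. Each good pair produces a same-colored-and-kept pair, hence $\ge1$ unit of collision slack, with probability $\Omega(1/d_v)$, so expected collision slack is $\Omega(\zeta_v)$.

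The hard part will be the concentration in this last (sparsity-dominant) case: the natural estimator of collision slack is a sum over $\Omega(\zeta_v d_v)$ pair-events that share vertices, so it is not a sum of independent variables. I would handle it by noting that a single node's color choice affects $O(d_v)$ of these events but moves the slack by only $O(1)$ --- which is exactly where ``no super-heavy color'' and ``a dense node has few low-degree neighbors'' are used --- and then invoking Talagrand's inequality (or a bounded-difference martingale after grouping the pairs into $O(1)$ low-dependency classes) to get collision slack $\Omega(\zeta_v)=\Omega(\ell)$ with probability $1-\exp(-\Omega(\ell))$. A secondary but pervasive difficulty is lower-bounding the keep-probabilities $\kappa_u$ uniformly despite the palette-size disparity intrinsic to \degoLC; the ACD structural lemmas and the inlier definition are what make that work, and this is exactly why the statement is restricted to inliers $v\in\Vdense\setminus O$ of slackability at least $\ell$, rather than to all dense nodes.
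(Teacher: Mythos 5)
Your overall architecture (chromatic vs.\ collision slack, case split on which of $\disc_v,\zeta_v$ dominates, Talagrand for the pair-collision concentration) is broadly in the spirit of the paper, and your treatment of Claim~2 matches it. But there are two genuine gaps, both located exactly at the points the paper identifies as the hard ones. First, heavy colors. You assert that in the sparsity-dominant branch the heavy case is ``painless, since $\card{\pal_u}=\Theta(d_v)$ for the bulk of $N(v)$, so no color has super-constant per-neighbor weight.'' That inference is false in \degoLC: $\disc_{u,v}<1/100$ only says $\pal_u$ is mostly contained in $\pal_v$, not that $\pal_u$ is large, and the unbalanced part of $\barsigma_v$ is carried precisely by external neighbors whose degrees (and palettes) can be as small as $\operatorname{poly}\log\Delta$. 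A color lying in $\Theta(d_v)$ such small palettes has weight $\omega(1)$, a single node's choice can then shift the conditional expectation of the collision count by $\omega(1)$, and your proposed ``Chernoff/Azuma on bounded-dependency indicators'' does not apply --- this is exactly the failure mode the paper's two-stage bucketing argument (\cref{lem:slack-heavy}, applied with $S_v=E_v$) is built to circumvent. The same objection applies to your discrepancy-dominant branch when the disparity is carried by heavy colors.

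Second, the sparsity-dominant case with light colors (the paper's ``gritty'' nodes) is the crux of the proposition, and your plan to harvest collision slack from non-edges $\{u,w\}$ with $\card{\pal_u},\card{\pal_w}=\Theta(d_v)$ misses the hard configuration: when $\sparbal_v$ is small, the $\Omega(\zeta_v d_v)$ slack-relevant non-edges necessarily have one endpoint of degree $<2d_v/3$, i.e.\ a low-degree external neighbor whose palette need not intersect $\pal_v$ or the clique palettes at all (cf.\ \cref{subfig:hard-dense}). Your fallback --- ``an orientation/degree-sum argument in the spirit of the tough-node analysis'' --- is not a substitute here: the paper's dense-case argument is structurally different from the tough-node one. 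It first extracts a large subset $C'_v\subseteq C_v\cap N(v)$ of higher-degree inliers sharing almost all of $\pal_v$ (this uses the bound on $\discbal_v$ and the outlier elimination, not a degree orientation), then shows that $\Omega(e_v)$ external neighbors are ``good'' in the sense that their light, frequent colors overlap the palettes of their $\Omega(\card{C_v})$ non-neighbors in $C'_v$, and only then runs the successful-non-edge Talagrand argument on pairs $E_v\times C'_v$. Without establishing those two structural facts, the expected number of same-light-colored non-adjacent pairs is not bounded below, and the concentration step has nothing to concentrate.
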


The proof of \cref{L:slackgen-dense} appears in \cref{ss:slackgen-dense}.
Part 2 follows from a straightforward application of Chernoff bounds. Part 1 is achieved by treating different types of nodes and
analyzing the slack from sparsity and the slack from discrepancy separately. 

The slack (both sparsity- and discrepancy-based) obtained from neighbors of comparable or higher degree can be bounded via standard approaches (as in the sparse case). The impact of heavy colors, as well as the discrepancy from light colors, both follow from the same arguments as used in the sparse case.
The hardest part is in bounding the slack from sparsity involving light neighbors of lower degree. In particular, the main effort is spent on the \emph{gritty} nodes that are light, non-discrepant, unbalanced, and not among the outliers. 
What makes the dense case though easier than the sparse case is that 
it suffices to consider only non-edges with one endpoint in $v$'s almost-clique $C_v$ and the other in an external neighbor $w$ of $v$.
Each neighbor $w$ of $v$ of lesser degree has at least $\eacd |C_v|$ non-neighbors in $v$. If those pairs of nodes have very different palettes, then we argue that we get slack via discrepancy; otherwise, they lead to slack via the standard sparsity argument.

\subsection{Put-aside Sets for Low-Slack Almost-Cliques}
\label{ss:putaside}

While low-slack almost-cliques in expectation obtain the slack and degree reduction needed to apply {\slackcolor}, they do not obtain them with a sufficiently high probability. Even if it were the case, {\slackcolor} itself would not work with a sufficiently high probability. Fortunately, their low slackability implies that they are poorly connected to the rest of the graph, which allows us to put aside a subset of their nodes such that: 1/ it gives enough temporary slack for {\slackcolor} to color the rest of the almost-clique efficiently; 2/ the put-aside sets are easily colored when the rest of the graph has been dealt with.

Consider a low-slack almost-clique $C$. Recall that this means that $\barsigma_C \leq \ell$. By definition, $\ell$ is s.t.\ $\Delta_C \in \Omega(\ell^3)$ 
since $\Delta_C = \Omega(\log^7 \Delta)$. 
Put-aside sets are computed according to \cref{alg:pas}.

\begin{algorithm}[H]\caption{{\putaside}$(C)$} 
\label{alg:pas}
  \begin{algorithmic}[1]
  \STATE $S_C\gets$ each node $v\in \core_C$ is sampled independently w.p.\ $\pdisj = \ell^2/(48\Delta_C)$.
  \RETURN $P_C \gets \{v\in S_C : E_v\cap S = \emptyset\}$, where $S = \cup_{C'} S_{C'}$
 \end{algorithmic}
\end{algorithm}

Intuitively, we use that each inlier has $O(\ell)$ external neighbors due to the manner external degree and slackability are related (\cref{lem:low-ext-degree}), so expected at most $O(\ell\cdot \ell^2/\Delta_C) = O(1)$ of them get sampled.

\begin{lemma}\label{lem:lds}
Suppose {\putaside}$(C)$ is run in all low-slack high-degree almost-cliques $C$, returning a set $P_C$. 
Then, for each such $C$, $|P_C| = \Omega(\ell^2)$, w.p.~$1-\exp(-\Omega(\ell))$.
\end{lemma}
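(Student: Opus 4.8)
The plan is to prove $|P_C|=\Omega(\ell^2)$ w.h.p.\ in two steps: first show $|S_C|=\Omega(\ell^2)$ with the required probability, then show that only a small constant fraction of $S_C$ is removed because of external neighbors that also landed in $S$. For the first step, note that $|\core_C|=\Omega(|C|)=\Omega(\Delta_C)$ by the discussion in \cref{S:inliers}, and each node of $\core_C$ enters $S_C$ independently with probability $\pdisj=\ell^2/(48\Delta_C)$, so $\Exp[|S_C|]=\pdisj\cdot|\core_C|=\Omega(\ell^2)$. Since the node degrees are $\Omega(\log^7\Delta)$ and $\ell=\log^{2.1}\Delta$, we have $\Exp[|S_C|]=\Omega(\log^{4.2}\Delta)$, which is $\Omega(\ell)$ in the exponent of a Chernoff bound; a lower-tail Chernoff bound on the sum of independent indicators then gives $|S_C|\ge \tfrac12\Exp[|S_C|]=\Omega(\ell^2)$ with probability $1-\exp(-\Omega(\ell^2))\ge 1-\exp(-\Omega(\ell))$.

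For the second step, condition on the random choices made in $S_C$ (or expose them afterwards) and bound the number of nodes of $S_C$ that have an external neighbor in $S$. By \cref{lem:degreesinac} each inlier $u\in\core_C$ has external degree $e_u\le 12\sigma_x\le 12\barsigma_C\le 12\ell$ (using $\sigma_x\le\barsigma_x=\barsigma_C\le\ell$ since $C$ is low-slack). Each external neighbor of $u$ lies in some almost-clique $C'\ne C$ and is sampled into $S_{C'}$ with probability at most $\pdisj'=\ell^2/(48\Delta_{C'})$; since every external neighbor $w$ satisfies $d_w\le(1+\eacd)\Delta_{C'}$ and also $d_w=\Omega(\log^7\Delta)$ with $\Delta_{C'}=O(d_w)$, we can crudely bound each such sampling probability by, say, $O(\ell^2/\Delta_{\min})$ where $\Delta_{\min}=\Omega(\log^7\Delta)$, giving $\ell^2/\Delta_{\min}=O(\log^{4.2}\Delta/\log^7\Delta)=o(1)$, in fact $O(\log^{-2}\Delta)$. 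Hence for a fixed $u$, $\Pr[E_u\cap S\ne\emptyset]\le e_u\cdot O(\log^{-2}\Delta)\le 12\ell\cdot O(\log^{-2}\Delta)=O(\log^{0.1}\Delta/\log^2\Delta)=o(1)$, and by choosing constants this is at most $1/2$ (indeed far smaller). Therefore $\Exp[|S_C\setminus P_C|\mid S_C]\le |S_C|/2$, so $\Exp[|P_C|\mid S_C]\ge |S_C|/2$.

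To turn this into a high-probability statement I would avoid claiming independence across the removals (the events ``$w\in S$'' for external neighbors $w$ of different inliers are independent only if the $w$'s are distinct, but a node of $S$ can knock out several inliers of $C$ at once). The clean route is a union bound combined with the first step: with probability $1-\exp(-\Omega(\ell))$ we have $|S_C|\ge c\ell^2$ for a constant $c$; then, since each of the at most $12\ell$ external neighbors of each sampled node is in $S$ with probability $o(1)$, the expected number of "bad" sampled nodes is $o(|S_C|)$, and a Chernoff/Markov-type bound on this count — using that the relevant indicators, after conditioning on $S_C$, are functions of the independent choices $\{w\in S_{C'}\}$ — gives that at most $|S_C|/2$ sampled nodes are bad with probability $1-\exp(-\Omega(\ell))$. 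Combining the two bad events by a union bound yields $|P_C|\ge|S_C|/2\ge (c/2)\ell^2=\Omega(\ell^2)$ with probability $1-\exp(-\Omega(\ell))$. The main obstacle is the second concentration step: the "bad inlier" indicators are not mutually independent, so one must either argue via a bounded-differences / martingale inequality on the independent sampling choices of the whole graph, or — more simply — observe that the number of bad inliers of $C$ is at most (the number of nodes of $S$ in the external neighborhoods of $\core_C$) times a constant, and bound that count directly by a Chernoff bound over the independent indicators $\{w\in S\}$ for $w$ ranging over $\bigcup_{u\in\core_C}E_u$, whose expectation is $O(\ell\cdot|\core_C|\cdot\ell^2/\Delta_{\min})=o(\ell^2)$.
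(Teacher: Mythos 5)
Your first step (Chernoff on $|S_C|$) and your per-node expectation bound ($\Pr[E_v\cap S\neq\emptyset]\le 1/4$ via $e_v\le 12\ell$ and per-neighbor sampling probability $O(1/\ell)$) match the paper's proof. The gap is in the concentration step, which is the crux of the lemma. You correctly flag that the indicators $X_v=\bbI[E_v\cap S\neq\emptyset]$, $v\in S_C$, are dependent, but neither of your proposed fixes works as stated. Route (b) fails quantitatively: the set $\bigcup_{u\in \core_C}E_u$ has size up to $12\ell\,|\core_C|=\Theta(\ell\Delta_C)$, and each element is sampled with probability up to $\Theta(1/\ell)$ (namely $\ell^2/(48\Delta_{C'})$ with $\Delta_{C'}=\Omega(\ell^3)$), so the expected number of sampled external neighbors is $O(\Delta_C)$, not $o(\ell^2)$ — your bound $O(\ell\cdot|\core_C|\cdot\ell^2/\Delta_{\min})$ is only $o(\ell^2)$ when $\Delta_C$ is itself polylogarithmic. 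Moreover, a single sampled $w$ (an inlier of a low-slack $C'$) can be adjacent to up to $12\ell$ nodes of $C$, so "times a constant" should be "times $O(\ell)$". Route (a), if read as generic bounded differences/Azuma over all the independent sampling bits, also fails: there are $\Theta(\ell\Delta_C)$ relevant bits each with Lipschitz effect $\Theta(\ell)$ on the number of bad inliers, so McDiarmid gives a tail of the form $\exp(-\Omega(\ell^4/(\ell\Delta_C\cdot\ell^2)))=\exp(-\Omega(\ell/\Delta_C))$, which is vacuous for large $\Delta_C$.

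The paper closes this exact gap with the read-$k$ concentration bound (\cref{lem:kread}): each $X_v$ is a function of only the $|E_v|\le 12\ell$ independent indicators $\{Y_w\}_{w\in E_v}$, and — symmetrically — each sampled $w$ (being an inlier of a low-slack clique) influences at most $12\ell$ of the $X_v$'s, so $\{X_v\}_{v\in S_C}$ is a read-$O(\ell)$ family. \Cref{lem:kread} then gives $\Pr[\sum_{v\in S_C}X_v>|S_C|/2]\le\exp(-\Omega(|S_C|/\ell))=\exp(-\Omega(\ell))$ once $|S_C|=\Omega(\ell^2)$. The essential point your write-up is missing is this two-sided locality (bounded read \emph{and} bounded influence), which is what lets the exponent scale as $|S_C|/\ell$ rather than degrade with $\Delta_C$. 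With that tool substituted for your concentration step, the rest of your argument goes through.
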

\begin{proof}
By a standard Chernoff bound,  $|S_C|=\Omega( |C|/\ell)= c'\ell^2$, w.p.\ $1-\exp(-\Omega(\ell^2))$, for a constant $c'>0$, where we used the fact that $|C|=\Theta(\Delta_C)=\Omega(\ell^3)$.
    For a node $v\in S_C$, let $X_v$ be the indicator random variable that is 1 when an external neighbor of $v$ in a low-slack almost-clique $C'$ is sampled. Note that $P_C=\{v\in S_C : X_v=0\}$. 
    Since each node $w\in C'$ is sampled w.p.\ at most $\ell^2/(48\Delta_{C'})\le 1/(48\ell)$, and $v$ has external degree $e_v\le 12\barsigma_C\le 12\ell$ (by \cref{lem:degreesinac}), we have $\Pr[X_v=1]\le 12\ell/(48\ell)=1/4$. Note that each variable $X_v$ is a function of the independent indicator variables $\{Y_w\}_{w\in E_v}$,  of the events that an external neighbor $w$  is sampled. 
    Since every node $w\in C'$, with $C'\neq C$ and $\barsigma_{C'}\le \ell$, has at most $12\ell$ neighbors in $C$ (as external neighbors, as argued above),  
    we see that for a given $S_C$, 
    $\{X_v\}_{v\in S_C}$ is a read-$\ell$ family of random variables, and \cref{lem:kread} applies (with $q\leq 1/4$, $k=\ell$, $\delta=1/4$), showing that $|S_C|-|P_C|=\sum_{v\in S_C}X_v>  |S_C|/2$ holds w.p.\ less than $\exp(-\Omega(|S_C|/\ell))$. Thus, the probability that either $|S_C|<c'\ell^2$ or $|P_C|<|S_C|/2$ is $\exp(-\Omega(\ell))$. The proof follows.
\end{proof}

\subsection{Internal Degree Reduction}
\label{ss:sct}

Synchronizing color trials in dense components is fundamental to all known sublogarithmic-time $(\Delta+1)$-coloring algorithms. In \cite{HSS18}, such a primitive was applied $\sqrt{\log n}$ times; in \cite{HKMT21}, $\log^2 \log \Delta$ times; while 
in \cite{CLP20}, two such primitives were defined and applied $O(1)$ times in different ways on  different subgraphs. 
Here we apply only once a particularly na\"ive such primitive that avoids any communication about the topology or the node palettes.

In {\synchronizedcolortrial}, the leader sends a random unused candidate color from \emph{its own palette} to each inlier in $\core_C$ (to try).
Every node receives a distinct color, since by the definition of $O_C$, $\core_C\subseteq N(x_C)$.
It is \emph{a priori} far from obvious that such a simple primitive for coloring dense nodes has a hope of succeeding.

\begin{algorithm}[H]\caption{\synchronizedcolortrial, for almost-clique $C$}
\label{alg:synchtrial}
  \begin{algorithmic}[1]
    \STATE $x_C$ randomly permutes its palette $\pal(x_C)$, sends each neighbor $u \in \core_C$ a distinct color $\col_u$. \label{st:randomorder}
    \STATE Each $u \in \core_C$ calls {\trycolor}($u$, $\col_u$) if $\col_u \in \pal(u)$
    \end{algorithmic}
\end{algorithm}

We bound how many nodes in $C$ are \emph{decolored}, i.e.\ fail to get colored by {\synchronizedcolortrial} (Step~\ref{st:synchtrial} of \cref{alg:logstar-dense}).

The following lemma bounds the expected number of nodes that are decolored because they received from $x$ a color outside their palette.
Recall that $\disc(a,b)=|\Psi(a)\setminus \Psi(b)|/|\Psi(a)|$ denotes the current disparity of node $a$ toward $b$, which equals the probability that a random color from the palette $\Psi(a)$ falls outside $b$'s palette, $\Psi(b)$.

\begin{lemma} After slack generation, $\sum_{u \in \core} \disc(x,u) \le 44 \barsigma_C$ holds w.p.\ $1-\exp(-\Omega(d_x))$.
\label{L:inliers}
\end{lemma}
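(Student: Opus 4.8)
We want to show that after slack generation, $\sum_{u\in \core}\disc(x,u)\le 44\barsigma_C$ with probability $1-\exp(-\Omega(d_x))$. The natural approach is to split the symmetric-difference budget of \cref{lem:degreesinac} into its ``static'' part and the part created by slack generation, and to bound each separately. By \cref{lem:degreesinac}, before any coloring we have $\card{\pal(u)\triangle\pal(x)}\le\card{\pal_u\triangle\pal_x}+12\sigma_x$, and the term $\card{\pal_u\triangle\pal_x}$ is exactly $\disc_{x,u}\card{\pal_x}+\disc_{u,x}\card{\pal_u}$-ish, so summing the original disparities $\sum_{u\in\core}\disc_{x,u}$ over inliers is bounded in terms of the original discrepancy $\disc_x$, hence in terms of $\barsigma_x=\barsigma_C$ (since $\disc_x\le\barsigma_x$). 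So the ``static'' contribution is $O(\barsigma_C)$ deterministically. The work is all in the part generated by \slackgeneration.

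First I would make precise how $\disc(x,u)$ can grow during \slackgeneration\ (a single round of \tryrandomcolor\ on a sampled set). The disparity $\disc(x,u)=\card{\pal(x)\setminus\pal(u)}/\card{\pal(x)}$ changes only when (i) a neighbor of $x$ that is not a neighbor of $u$ gets permanently colored with a color in $\pal(x)$ — this can add one element to $\pal(x)\setminus\pal(u)$ — or when a neighbor of $u$ that is not a neighbor of $x$ gets colored, which only shrinks the difference, or when a common neighbor picks a color, which also does not enlarge it. So the growth of $\card{\pal(x)\setminus\pal(u)}$ is at most the number of nodes in $N(x)\setminus N(u)\subseteq N(x)\triangle N(u)$ that get permanently colored during \slackgeneration; and summing over $u\in\core$, $\sum_{u\in\core}\bigl(\card{\pal(x)\setminus\pal(u)}-\card{\pal_x\setminus\pal_u}\bigr)$ is at most $\sum_{u\in\core}\card{N(x)\triangle N(u)\text{-part colored}}$. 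Rather than union-bound this per inlier (which would be $\Theta(|C|)$ events), I would aggregate: by a double-counting / interchange of summation, this total is $\sum_{w}\, (\text{number of }u\in\core\text{ with }w\in N(x)\setminus N(u))\cdot\mathbf 1[w\text{ colored}]$, where $w$ ranges over $N(x)$. For a fixed $w\in N(x)$, the count of inliers not adjacent to $w$ is tied to how many non-edges $w$ participates in inside $N(x)$; summing those counts over all $w$ gives essentially $2\zeta_x d_x=O(\barsigma_C d_x)$. Then I only need that the fraction of these (weighted) neighbors that get colored is small — a Chernoff bound, since each node is sampled into $S$ with constant probability and succeeds with at most constant probability, so the expected weighted-colored mass is a small constant fraction of $O(\barsigma_C d_x)$, and concentration around it fails with probability $\exp(-\Omega(\barsigma_C d_x))=\exp(-\Omega(d_x))$ (using $\barsigma_C\ge\ell\ge 1$, or that $d_x$ itself is large). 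Dividing by $\card{\pal(x)}\ge d_x\cdot(1-o(1))$ — since $x$ has slack at least $1$ and during \slackgeneration\ at most $d_x/4$ of its neighbors get colored by part 2 of \cref{L:slackgen-dense}, so $\card{\pal(x)}=\Omega(d_x)$ — converts the bound on $\sum_u\card{\pal(x)\setminus\pal(u)}$ into a bound on $\sum_u\disc(x,u)$ of the form $O(\barsigma_C)$; tracking constants carefully yields the stated $44\barsigma_C$.

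The main obstacle I anticipate is the concentration step done at the right granularity. A per-inlier union bound is too lossy (it would give $|C|\exp(-\dots)$, and more importantly the individual quantities $\card{N(x)\triangle N(u)}$ are only $O(\sigma_x)$, so ``half of them colored'' is not a low-probability event on its own). The fix is to phrase the whole thing as a single scalar random variable — the $w$-weighted count of colored neighbors of $x$ — whose expectation is an $\Theta(1)$ fraction of its natural upper bound $O(\barsigma_C d_x)$, and apply one Chernoff/Hoeffding bound there; the read-$k$ machinery (\cref{lem:kread}) may be needed if the ``$w$ colored'' events have bounded dependency (each depends only on $w$'s and its neighbors' random color choices), but since we only need a one-sided bound and the weights $\mathrm{wt}(w)\le d_x$ are bounded, a weighted Chernoff/Azuma argument along the exposure of $x$'s neighbors' random choices should suffice. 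The second, easier obstacle is the clean accounting that $\sum_{w\in N(x)}\mathrm{wt}(w)=O(\zeta_x d_x)$ where $\mathrm{wt}(w)$ is the number of inliers not adjacent to $w$: this follows from $\sum_w \card{\text{non-neighbors of }w\text{ in }N(x)}=2(\binom{d_x}{2}-m(N(x)))=2\zeta_x d_x$, and $\core\subseteq N(x)$, so the inlier-restricted weights are no larger. Everything else is bookkeeping of the additive $\sigma_x$ and $\zeta_x$ terms against $\barsigma_C$, using $\barsigma_C=\barsigma_{x}\ge\disc_x\ge\unev_x$ and $\barsigma_x\ge\zeta_x$.
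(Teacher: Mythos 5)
Your high-level skeleton matches the paper's: bound the numerators $\card{\pal(x)\setminus\pal(u)}$ by their original values plus the growth during \slackgeneration, and divide by $\card{\pal(x)}$, which stays $\Omega(d_x)$ because (by Chernoff) fewer than half of $x$'s neighbors get colored --- that Chernoff bound is indeed the \emph{only} probabilistic ingredient and is where the $1-\exp(-\Omega(d_x))$ comes from. However, the dynamic part of your argument has a genuine error: you have reversed which coloring events make $\card{\pal(x)\setminus\pal(u)}$ grow. Since palettes only shrink, a color $c$ can enter $\pal(x)\setminus\pal(u)$ only by leaving $\pal(u)$ while remaining in $\pal(x)$, i.e., when some $w\in N(u)\setminus N(x)$ adopts a color of $\pal(x)\cap\pal(u)$. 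A node $w\in N(x)\setminus N(u)$ getting colored removes a color from $\pal(x)$ only, which can only shrink (or preserve) $\pal(x)\setminus\pal(u)$ --- exactly the opposite of what you assert in cases (i) and (ii). Consequently your double-counting identity, in which $w$ ranges over $N(x)$ and the weights count non-edges inside $N(x)$ (hence the $2\zeta_x d_x$ bound), does not control the quantity you need: the correct aggregate is $\sum_{u\in\core}\card{\{w\in N(u)\setminus N(x)\colon w\text{ colored}\}}$, where $w$ ranges over nodes \emph{outside} $N(x)$ (external and anti-neighbors of $u$ relative to $x$), and is bounded via $\card{N(u)\triangle N(x)}\le 12\sigma_x$ from \cref{lem:degreesinac}, not via $\zeta_x$ alone.

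Once the growth events are identified correctly, you should also notice that the entire concentration apparatus you build for the growth term is unnecessary --- this is the second place where your proposal diverges from what actually works. The worst case in which \emph{every} node of $N(u)\triangle N(x)$ gets colored still contributes only $12\sigma_x$ per inlier (this is precisely the ``always holds'' clause of \cref{lem:degreesinac}); summing over the at most $(1+\eacd)\card{C}$ inliers and dividing by $\card{\pal(x)}=\Omega(\card{\core})$ already yields an $O(\sigma_x)=O(\barsigma_C)$ contribution deterministically. Your worry that a per-inlier bound is ``too lossy'' and needs to be replaced by a single scalar Chernoff/read-$k$ argument is therefore misplaced: no concentration on the growth is needed at all, only the one Chernoff bound keeping $\card{\pal(x)}$ large. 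Finally, in the static part be careful that $\disc_x$ is defined as $\sum_u\disc_{u,x}$ (disparities \emph{toward} $x$) whereas the lemma needs $\sum_u\disc_{x,u}$; converting one into the other costs an additive $O(\sigma_x)$ per inlier from the degree difference $\card{\pal_x}-\card{\pal_u}$, which is part of the bookkeeping you wave at but must be done to reach the constant $44$.
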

\begin{proof}
By a standard Chernoff bound, slack generation colors less than half of $x$'s neighbors, w.p.\ $1-\exp(-\Omega(d_x))$.  This allows us to bound the sum of the disparities of the leader $x$ towards nodes in $\core$ after slack generation in terms of original quantities:
\begin{align*} \sum_{u\in \core} \disc(x,u) 
  &= \sum_{u \in \core} \frac{|\Psi(x) \setminus \Psi(u)|}{|\Psi(x)|} 
  \le \sum_{u \in \core} \frac{|\Psi(x) \triangle \Psi(u)|}{|\Psi(x)|} 
  \le \sum_{u \in \core} \frac{|\Psi_x \triangle \Psi_u|+12\sigma_x}{|\Psi(x)|}\\
  & \le 2 \sum_{u \in \core} \frac{2|\Psi_u \setminus \Psi_x|+18\sigma_x}{(1-\eacd)|I|}
   \le \frac{4\disc_x}{1-2\eacd} + \frac{36 \sigma_x}{1-\eacd} \le 44 \barsigma_x
  \ , \end{align*}
where we used, respectively, the definition of disparity, the definition of symmetric difference, \cref{lem:degreesinac}, \cref{L:slackgen-dense} and \cref{lem:degreesinac}, the definition of discrepancy, and finally that $\eacd \le 1/13$.
\end{proof}

The following key lemma shows that a single {\synchronizedcolortrial} suffices to reduce the size of an almost-clique to its sparsity, paving the way for the application of {\slackcolor}.

\begin{lemma}\label{lem:degred}
Let $C$ be an almost-clique, and let $t=\Omega(\barsigma_C)$. W.p.\ $1-\exp(-t)$, the number of decolored nodes of $C$ in step~\ref{st:synchtrial} in \cref{alg:logstar-dense} is $O(t)$. 
\end{lemma}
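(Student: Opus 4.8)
\textbf{Proof plan for \cref{lem:degred}.}

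The plan is to analyze {\synchronizedcolortrial} as a random process on the inliers $\core_C$ and count the sources of decoloring. A node $u \in \core_C$ fails to get colored in Step~\ref{st:synchtrial} for one of a small number of reasons: (i) the color $\col_u$ the leader sent lies outside $u$'s current palette $\pal(u)$; (ii) $\col_u$ collides with the color $\col_w$ tried by some neighbor $w$ of $u$ inside $C$ (another inlier); or (iii) $\col_u$ collides with a color permanently taken (or tried) by an external neighbor of $u$ or by an outlier. I would handle these three contributions separately and show each is $O(t)$ with the desired probability, then take a union bound.

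First, case (i): the expected total is exactly $\sum_{u\in\core}\disc(x,u)$, which by \cref{L:inliers} is at most $44\barsigma_C = O(t)$ with probability $1-\exp(-\Omega(d_x))$; a concentration bound (the indicator for $u$ depends only on the leader's random permutation, so one can use a bounded-differences / Azuma argument on the permutation, or simply bound the number of such $u$ deterministically once the permutation is fixed) pushes the actual count to $O(t)$ except with probability $\exp(-\Omega(t))$, using $t=\Omega(\barsigma_C)$ and $\barsigma_C \ge \ell = \log^{2.1}\Delta$ in the high-slack regime, or absorbing the whole clique into the bound for low-slack cliques. Second, case (iii): by \cref{lem:degreesinac} each inlier has external degree $e_u \le 12\sigma_x = O(\barsigma_C)$, and the outliers colored before this step contribute at most $|O_C \cap N(u)|$ further forbidden colors; summing over $u\in\core$, the total number of inliers that could be blocked this way is $O(\sum_u e_u) $ plus a term from outliers — but more directly, the number of \emph{distinct} external/outlier colors that can block inliers is $O(\barsigma_C)=O(t)$ per inlier is too weak, so instead I would observe that each external or outlier color blocks at most one inlier (since the leader assigns distinct colors), hence the count is at most the number of such colors appearing, which is $O(e_{\max}\cdot(\text{something}))$ — cleaner: the total number of (inlier, blocking external/outlier color) incidences is at most $\sum_{u\in\core} e_u + |\{(u,w): w\in O_C\cap N(u)\}|$, and each blocks at most the node whose assigned color matches, giving $O(\barsigma_C)$ total because distinct inliers get distinct colors and only $O(\barsigma_C)$ colors are "wrong" relative to $\pal(x)$.

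The main work, and the main obstacle, is case (ii): internal collisions among the inliers. Here the leader sends a uniformly random injection from $\core_C$ into a size-$|\pal(x)|$ subset of colors, and two adjacent inliers $u,w$ collide only if one of them received a color that the other also (coincidentally) tried — but since the leader's assignment is injective, $\col_u \ne \col_w$, so a genuine collision of type (ii) is impossible among nodes that keep their assigned colors! The real subtlety is that the relevant collisions are between $u$ and nodes \emph{not} receiving colors injectively relative to $u$'s neighborhood, i.e.\ the argument must account for the fact that $u$ may be missing up to $|N(u)\triangle N(x)| = O(\sigma_x)$ of $x$'s neighbors, or have up to that many neighbors outside $\core_C$. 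I would therefore argue that the only internal conflicts come from the $O(\sigma_x) = O(\barsigma_C)$ neighbors of $u$ that are anti-neighbors of $x$ or otherwise irregular, and again each such pair kills at most one inlier, so the total is $O(\barsigma_C) = O(t)$ deterministically once \cref{lem:degreesinac} and the outlier definition are in hand. The genuinely probabilistic part that needs a Chernoff/bounded-differences step is then only the concentration in case (i) (and possibly verifying that slack generation colored at most half of $N(x)$, already given by \cref{L:slackgen-dense}); I would phrase this as a martingale exposed by revealing the leader's random permutation one position at a time, where swapping two positions changes the decolored count by $O(1)$, so Azuma gives deviation $\exp(-\Omega(t^2/|\core_C|))$ — which is too weak when $t \ll \sqrt{|\core_C|}$, so instead I would use the sharper observation that the number of inliers receiving an out-of-palette color is itself a sum of negatively-associated indicators (sampling without replacement), apply a Chernoff bound for negatively associated variables, and get $\exp(-\Omega(t))$ as claimed. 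Summing the three $O(t)$ bounds and union-bounding the failure probabilities $\exp(-\Omega(t))$ and $\exp(-\Omega(d_x)) \le \exp(-\Omega(t))$ (since $d_x \ge (1-2\eacd)\Delta_C \ge \barsigma_C$) completes the proof.
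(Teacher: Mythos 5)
Your decomposition has the right skeleton, and you correctly identify the key structural fact that makes \synchronizedcolortrial work: the leader's assignment is injective on $\core_C$, so two inliers can never collide with each other, and the only failure modes are (i) receiving a color outside one's current palette and (iii) colliding with a concurrently-trying \emph{external} neighbor. (Outlier neighbors were colored earlier, so their colors are already deleted from palettes and fold into case (i).) This matches the paper's view of the process.

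However, your treatment of case (iii) has a genuine gap. The incidence count $\sum_{u\in\core} e_u$ is not $O(\barsigma_C)$; each inlier has $e_u = O(\barsigma_C)$ external neighbors, so the total is $\Theta(\barsigma_C\,|C|)$. Your attempted rescue --- ``only $O(\barsigma_C)$ colors are wrong relative to $\pal(x)$'' --- does not apply: an external neighbor is free to try a color that lies squarely in $\pal(x)$ and happens to be the one assigned to $u$. No deterministic argument bounds these collisions by $O(t)$; you must use the randomness of the leader's assignment. The paper's proof does exactly this: it fixes arbitrary candidate colors outside $C$, splits $\core_C$ into two halves so that, conditioned on the other candidates in the same half, $\col_u$ is still uniform over at least $|\core_C|/2$ colors, and bounds $\Pr[X_u=1\mid \{\col_w\}_{w\neq u}]\le q_u = \disc(x_C,u)+O(\barsigma_C/|\core_C|)$, covering cases (i) and (iii) in one stroke. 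Then $\sum_u q_u = O(\barsigma_C)$ by \cref{L:inliers}, and the martingale Chernoff bound (\cref{lem:chernoff}) gives the $\exp(-t)$ failure probability directly --- no negative-association machinery and no separate concentration step for case (i) is needed. Your proposed negative-association route for case (i) alone is plausible, but without the probabilistic per-node bound $e_u/|\core_C|$ for external collisions the proof does not close; I recommend folding both failure causes into a single conditional probability $q_u$ as above.
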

\begin{proof}
Fix arbitrary candidate colors for nodes outside $C$ -- we prove the success of the algorithm within $C$ for arbitrary behaviors outside $C$.
Let $\core'$ be an arbitrary subset of $\core_C$ of size $\ceil{\card{\core_C}/2}$. 
    For each $u \in \core'$, recall that $\col_u \in \pal(x_C)$ is its candidate color, and let $X_u$ be the binary r.v.\ that is 1 iff $u$ is decolored. 
    Consider a node $u\in \core'$. Conditioning on an arbitrary set $S$ of $|\core'|-1$ candidate colors assigned to the nodes in $\core'\setminus u$, $\col_u$ is uniformly distributed in $\pal(x_C)\setminus S$, which has size $|\pal(x_C)|-|\core'|+1\ge  |\core_C|/2$ (since $\core_C\subseteq N(x_C)$). The node $u$ is decolored only when its candidate color is also tried by one of its external neighbors, of which it has $e_u = |E_{u}|=O(\barsigma_C)$ (\cref{lem:low-ext-degree}), or when it is not in its palette, i.e., when it belongs to $\pal({x_C})\setminus \pal(u)$.
    Thus, $\Pr\event{X_u=1\mid \{\col_w\}_{w\ne u}}\le q_u$, for $q_u = |\pal(x_C) \setminus \pal(u)|/|\pal(x_C)| + e_u/|\core_C| =  \disc(x_C,u)+O(\barsigma_C/|\core_C|)$. 
    Having fixed the candidate colors of nodes outside $C$, each $X_u$ is determined by $\col_u$, so we also have $\Pr\event{X_u=1\mid \{X_w\}_{w \ne u}}\le q_u$.
    Note that $\sum_{u \in \core'} q_u = \sum_{u \in \core'} \disc(x_C,u) + O(\barsigma_C) = O(\barsigma_C)$, by \cref{L:inliers}.
    Applying \cref{lem:chernoff}, we get that
    $\Pr\event{\sum_{u \in \core'} X_u\le 4t}\ge 1-\exp(-t)$ for any $t \ge \sum_{u \in \core'} q_u =  \Theta(\barsigma_C)$. By symmetry, the same holds for $\core_C \setminus \core'$, and the lemma follows by the union bound.
\end{proof}

\section{Combined Algorithm}

By first coloring the sparse nodes, then the dense nodes, using the results from the previous sections, we get an algorithm (\cref{alg:logstar-both}) that we can apply to a whole graph with degrees in a range $[\log^7 \Delta, \Delta]$. 

\begin{algorithm}[H]
\caption{Randomized $\deg+1$-Coloring Algorithm ($\forall v, d_v \in[\log^7 \Delta,\Delta]$)}
\label{alg:logstar-both}
  \begin{algorithmic}[1]
    \STATE \computeacd.
    \STATE Apply \cref{alg:logstar-sparse} to sparse nodes.
    \STATE Apply \cref{alg:logstar-dense} to dense nodes.
  \end{algorithmic}
\end{algorithm}

\maintechtheorem*

\begin{proof}
Observe that for color trial-based algorithms as ours, it can only help the algorithm to have a subset of nodes not participate in the algorithm; 
therefore, 
we may w.l.o.g.\ focus on the case when $d_v\ge \log^7\Delta$ holds for all $v\in V$, i.e., $V_H=V$. The ACD is computed deterministically. Given the lower bound on  degrees, \cref{L:slackgen-sparse} implies that each sparse node $v$ gets $\Omega(d_v)$ slack w.p.\ $1-\Delta^{-c}$, either via slack generation or by being  treated while $\Omega(d_v)$ of its uncolored neighbors stay out. The theorem then follows for sparse nodes by \cref{lem:slackcolor}. Let us turn to dense nodes now.

Recall that an almost-clique $C$ is \emph{high-slack} if $\barsigma_C = \Omega(\ell)$ and \emph{low-slack} otherwise.
Consider first a high-slack almost-clique $C$.
Each node in $C$ gets slack $\Omega(\barsigma_v)$ by \cref{L:slackgen-dense} (w.p.\ $1-\exp(-\Omega(\ell))$), which is $\Omega(\barsigma_C)$ by definition of $\barsigma_C$.
After \synchronizedcolortrial, $C$ contains $O(\barsigma_C)$ uncolored nodes, w.p.\ $1-\exp(-\Omega(\barsigma_C)) = 1 - \exp(-\Omega(\ell))$ (\cref{lem:degred}). Thus, $v$ has internal degree $O(\barsigma_C)$ and by \cref{lem:degreesinac}, it has also external degree $O(\barsigma_C)$. Thus, assuming the probabilistic statements hold, the slack of $v$ is at least proportional to its degree, and hence it gets colored by $\slackcolor(C\setminus P)$, w.p.\ $1-\exp(-\Omega(\ell))$.

Consider now a low-slack almost-clique $C$.
Each node $v$ in $C\setminus P$ gets temporary slack $|P|-a_v = \Omega(\ell^2) = \Omega(\ell)$ from $P\cap C$, by \cref{lem:degred}. After \synchronizedcolortrial, $C$ contains $O(\ell)$ nodes, w.p.\ $1-\exp(-\Omega(\ell))$. Thus, $v$ has internal degree $O(\ell)$, and it also has external degree $O(\barsigma_C) = O(\ell)$ (\cref{lem:degreesinac}). Thus, $C\setminus P$ gets colored by $\slackcolor(C \setminus P)$, w.p.\ $1-\exp(-\Omega(\ell))$.
All of the above probabilistic events hold even if random bits of nodes outside the 2-hop neighborhoods of $v$ are chosen adversarially.

The remaining nodes in $P$ are dense clusters with no interconnecting edges. They can be colored locally in two rounds. 
\end{proof}

\paragraph{Algorithm for general graphs.}

To turn the result above into an algorithm for coloring all graphs, we need on one hand to apply it to the whole range of possible degrees, and on the other hand, to deal with nodes that fail to get colored by \cref{alg:logstar-both}.

To color the whole graph, we treat the graph in degree ranges. We define $\Delta^{(i)}$ by $\Delta^{(0)} = n$ and $\Delta^{(i+1)} = \log^7 \Delta^{(i)}$ -- note that the nodes do not need to know the maximum degree of the graph. Initially, $\Delta^{(0)}=n$ is an upper-bound on the maximum degree of the (uncolored part of the) graph. We apply \cref{thm:main_tech} $\log^* n$ times, lowering this upper-bound on the maximum degree of the graph from $\Delta^{(i)}$ to $\Delta^{(i+1)}$ in iteration $i$. This is achieved by having all the nodes with a degree in the range $[\Delta^{(i+1)}, \Delta^{(i)}]$ run the algorithm and get colored w.h.p., paving the way to the next iteration as no node of degree at least $\Delta^{(i+1)}$ remains uncolored. Note that $\Delta^{(\log^* \Delta)} = O(1)$, so after the $\log^* n$ iterations, the remaining nodes are of constant degree.

The resulting algorithm is \cref{alg:whole}. In every iteration, each node that fails the probabilistic guarantee of \cref{thm:main_tech}, joins a set $\BAD$. To handle nodes that fail, we run a deterministic algorithm on $\BAD$ within each degree range. Each such instance consists of poly-logarithmic-sized components, as we argue in the following subsection, allowing for a fast solution.

\begin{algorithm}[H]
\caption{Randomized $\deg+1$-Coloring Algorithm (general graphs)}
\label{alg:whole}
  \begin{algorithmic}[1]
    \FOR{$i=1$ to $\log^* n$ in sequence}
        \STATE Apply \cref{alg:logstar-both} to nodes with current uncolored degree at least $\Delta^{(i)}$.
        \STATE Call a deterministic algorithm on $\BAD$.
    \ENDFOR
    \STATE The remaining instance has constant degree and can be colored in $O(\log^* n)$ rounds \cite{linial92}.
  \end{algorithmic}
\end{algorithm}

\paragraph{Shattering.}

Suppose we are in some iteration $i$ and let $\Delta = \Delta^{(i)}$.
Whenever a node $v$ fails a probabilistic guarantee, it is placed in $\BAD$. This can occur when failing to generate the promised slack in {\slackgeneration} or {\putaside}; {\slackcolor} failing to color all the respective nodes, or {\synchronizedcolortrial} leaving more than $O(\barsigma_v)$ nodes uncolored (\cref{lem:degred}).

When $\log \Delta \geq \cshat \cdot \log n$, for a large enough constant $\cshat$, $\BAD$ is empty w.h.p., due to \cref{thm:main_tech}. We stop the algorithm there in this case, otherwise we solve the \degoLC subproblem induced by $\BAD$.

\begin{lemma}
The probability that a node $v$ is added to $\BAD$ is $\Delta^{-\omega(1)}$, even if the random bits outside the $2$-hop neighborhood are determined adversarially.
\label{claim:bad}
\end{lemma}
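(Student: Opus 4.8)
The plan is to enumerate all the probabilistic guarantees invoked by \cref{alg:logstar-both} (equivalently \cref{alg:logstar-sparse} and \cref{alg:logstar-dense}) that can cause a node $v$ to be placed in $\BAD$, and to show that each of them fails with probability $\Delta^{-\omega(1)}$ for $v$, conditioned on arbitrary random choices outside the $2$-hop neighborhood of $v$. A union bound over the $O(1)$ such events then gives the claim. Recall that throughout this iteration every remaining node has degree in $[\log^7\Delta, \Delta]$, so in particular $d_v \geq \log^7\Delta = \ell^{10/2.1}$, and $\ell = \log^{2.1}\Delta$. The crucial observation is that $\log^7 \Delta = \omega(\poly\log\log\Delta \cdot \text{anything polylog})$ fails — rather, the point is the reverse: $\sqrt{\log^7\Delta} = \log^{3.5}\Delta = \omega(\log\Delta)$ and $\ell = \log^{2.1}\Delta = \omega(\log\Delta)$, so every ``$\exp(-\Omega(\sqrt{s}))$'', ``$\exp(-\Omega(\ell))$'', ``$\exp(-\Omega(d_v))$'', and ``$\Delta e^{-\Omega(s)}$''-type bound appearing in the cited lemmas is indeed $\Delta^{-\omega(1)}$.

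First I would handle the sparse/uneven case. If $v\in\Vsp\cup\Vun$, then by \cref{L:slackgen-sparse} (applied with $s \geq \log^7\Delta \geq C\ln^2\Delta$), either $v$ ends up in $\Vst$ and retains $\Omega(d_v)$ uncolored neighbors outside $\Vst$ with probability $1-\exp(-\Omega(d_v))$, or $v$ gets slack $\Omega(d_v)$ with probability $1-\exp(-\Omega(\sqrt{s})) = 1-\exp(-\Omega(\log^{3.5}\Delta))$. Conditioned on the good slack event, \cref{lem:slackcolor} (invoked with $\smin = \Omega(d_v) = \Omega(\log^7\Delta)$ and a constant $\kappa$) colors $v$ with probability $1-\exp(-\Omega(\smin^{1/(1+\kappa)})) - \Delta e^{-\Omega(\smin)}$; with $\smin \geq \log^7\Delta$ both error terms are $\Delta^{-\omega(1)}$. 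Here one must be slightly careful that the temporary-slack nodes in $\Vst$ are colored in Step~\ref{st:sp-multitrial} before $\Vsp\setminus\Vst$ in Step~\ref{st:o-multitrial2}, so that the temporary slack is indeed present when \slackcolor\ runs on $\Vst$; this is exactly how \cref{alg:logstar-sparse} is structured.

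Next I would handle the dense case, going through the events in \cref{alg:logstar-dense} in order. (i) \slackgeneration\ (\cref{L:slackgen-dense}): for high-slack $C$, $v$ gets slack $\Omega(\barsigma_v) = \Omega(\barsigma_C) = \Omega(\ell)$ except with probability $\exp(-\Omega(\ell))$; and every node has at most $d_v/4$ colored neighbors except with probability $\exp(-\Omega(d_v))$. (ii) \putaside\ (\cref{lem:lds}): if $C$ is low-slack, $|P_C| = \Omega(\ell^2)$ except with probability $\exp(-\Omega(\ell))$. (iii) \slackcolor\ on $O$ (Step~\ref{st:o-multitrial}): outliers have temporary slack $\Omega(|I_C|) = \Omega(\Delta_C) = \Omega(\log^7\Delta)$ from the later-colored inliers, so \cref{lem:slackcolor} colors $v$ except with probability $\Delta^{-\omega(1)}$. (iv) \synchronizedcolortrial\ (\cref{lem:degred} with $t = \Theta(\barsigma_C)$ if $C$ is high-slack, or $t = \Theta(\ell)$ if low-slack): at most $O(t)$ nodes of $C$ are decolored except with probability $\exp(-\Omega(t)) = \exp(-\Omega(\ell))$. (v) final \slackcolor\ on $\Vdense\setminus P$ (Step~\ref{st:lastmultitrial-hi}): conditioned on all the preceding events, $v$ has internal degree $O(\barsigma_C)$ or $O(\ell)$ and external degree $O(\barsigma_C)$ by \cref{lem:degreesinac}, hence slack proportional to its remaining degree (in the low-slack case the slack comes from $P\cap C$, using $|P_C| - a_v = \Omega(\ell^2) - O(\ell) = \Omega(\ell^2)$ and $a_v = O(\barsigma_v) = O(\ell)$ via \cref{lem:low-anti-degree}), so \cref{lem:slackcolor} succeeds except with probability $\exp(-\Omega(\ell))$; Step~\ref{st:collect-hi} colors $P$ deterministically in $O(1)$ rounds and never fails. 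Each error probability above is $\exp(-\Omega(\ell)) = \exp(-\Omega(\log^{2.1}\Delta)) = \Delta^{-\omega(1)}$ or $\exp(-\Omega(d_v)) = \exp(-\Omega(\log^7\Delta)) = \Delta^{-\omega(1)}$.

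Finally, since each of the cited lemmas explicitly guarantees its bound conditioned on arbitrary random choices of nodes at distance $\geq 2$ from $v$, the locality property is preserved through the union bound — each event in the list depends only on randomness within $v$'s $2$-hop ball, conditioned appropriately. Summing the $O(1)$ failure probabilities, the probability that $v\in\BAD$ is $\Delta^{-\omega(1)}$, even adversarially fixing randomness outside the $2$-hop neighborhood. The main obstacle I anticipate is not any single estimate but the bookkeeping of conditionings: one must verify that conditioning on the good events of earlier steps (e.g.\ that \slackgeneration\ left few colored neighbors, or that \synchronizedcolortrial\ left few decolored nodes) does not destroy the distance-$2$ independence needed for the later \slackcolor\ invocations, and that the ``temporary slack'' arguments (outliers from inliers, $\Vst$ from its waiting neighbors, low-slack cliques from $P_C$) are logically consistent with the order in which \cref{alg:logstar-sparse} and \cref{alg:logstar-dense} color these sets. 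Once the event list and its ordering are pinned down, each individual bound is a direct quotation of a lemma already proved.
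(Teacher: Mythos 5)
Your proposal is correct and follows essentially the same route as the paper's proof: enumerate the $O(1)$ probabilistic guarantees ({\slackgeneration}, {\putaside}, {\synchronizedcolortrial}, {\slackcolor}), bound each failure event by $\exp(-\Omega(\ell))$, $\exp(-\Omega(\sqrt{s}))$, or $\exp(-\Omega(d_v))$ — all $\Delta^{-\omega(1)}$ since $\ell = \log^{2.1}\Delta$ and $d_v \geq \log^7 \Delta$ — and note that each subroutine's guarantee already holds conditioned on arbitrary randomness outside the $2$-hop neighborhood. The one small imprecision is that the final {\slackcolor} invocation with slack $\Omega(\ell)$ fails with probability $\exp(-\Omega(\ell^{1/(1+\kappa)}))$ rather than $\exp(-\Omega(\ell))$, but this is still $\Delta^{-\omega(1)}$ for a suitably small constant $\kappa$, exactly as the paper observes.
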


\begin{proof}
\slackgeneration fails with probability at most $\exp(-\Omega(\ell)) = \Delta^{-\omega(1)}$ by \cref{L:slackgen-sparse,L:slackgen-dense},
and the same holds for \putaside (\cref{lem:lds}) and \synchronizedcolortrial (\cref{lem:degred}). These algorithms run in at most 2 rounds, so cannot depend on anything beyond the 2-hop neighborhood.

\slackcolor fails with probability $\exp(-\Omega(\smin^{1/(1+\kappa)}) )$ by \cref{lem:slackcolor}, where $\smin$ is the minimum slack, and $\kappa>1/\smin$. In our case, \slackcolor is applied only when the slack is $\Omega(\ell)$, resulting in failure probability $\exp(-\omega(\log \Delta)) = \Delta^{-\omega(1)}$, choosing a constant $\kappa < \delta$.
By \cref{lem:slackcolor}, the bound holds even if random bits outside the 2-hop neighborhood are determined adversarially.
\end{proof}

We use the following shattering lemma from \cite{CLP20}.

\begin{proposition}[Lemma 4.1 of \cite{CLP20}]
Consider a randomized procedure that generates a subset $\BAD \subseteq  V$ of vertices. Suppose that for each $v \in  V$, we have $\Pr[v \in  \BAD] \leq \Delta^{-3c}$, and this holds even if the random bits outside of the $c$-hop neighborhood of $v$
are determined adversarially.
W.p.\ $1-n^{-\Omega(c')}$, each connected component in $G[\BAD]$ has size at most $(c' /c)\Delta^{2c} \log_\Delta  n$.
\label{prop:shattering}
\end{proposition}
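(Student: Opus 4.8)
The plan is to run the standard graph-shattering argument of Beck, in the form used in \cite{BEPSv3,CLP20}: a large connected component of $G[\BAD]$ is forced to contain a \emph{witness} vertex set that is simultaneously spread out in $G$ (so that the corresponding $\BAD$-events are independent) and of low combinatorial complexity (so that few candidate witnesses have to be union-bounded over). Concretely, suppose for contradiction that some connected component $U$ of $G[\BAD]$ has more than $k := (c'/c)\Delta^{2c}\log_\Delta n$ vertices; passing to a connected subtree, we may assume $|U| = \lceil k\rceil + 1$.

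The first step is to extract the witness $T \subseteq U$. Greedily add a vertex of $U$ to $T$ and then delete from $U$ every vertex within $G$-distance $2c$ of it; since a ball of radius $2c$ in $G$ contains at most $1+\Delta+\dots+\Delta^{2c}\le 2\Delta^{2c}$ vertices, the process lasts at least $|U|/(2\Delta^{2c})$ rounds, so $|T|\ge t := \lfloor k/(2\Delta^{2c})\rfloor = \Theta\!\big((c'/c)\log_\Delta n\big)$. Two properties of $T$ matter. (i) Its vertices are pairwise at $G$-distance $>2c$, hence their $c$-hop neighborhoods are pairwise disjoint; since conditioning on the random bits outside $v$'s $c$-hop neighborhood never raises $\Pr[v\in\BAD]$ above $\Delta^{-3c}$, peeling the vertices of $T$ off one at a time gives $\Pr[T\subseteq \BAD]\le \Delta^{-3c|T|}$ for each fixed candidate $T$. (ii) Since $U$ is connected in $G$ and, by construction, every vertex of $U$ lies within $G$-distance $2c$ of some vertex of $T$, the $T$-representatives of two $G$-adjacent vertices of $U$ are at $G$-distance at most $4c+1$; hence $T$ is connected in the power graph $G^{r}$ with $r=4c+1$.

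The second step is the union bound over candidate witnesses. Using the standard estimate that the number of size-$t$ vertex sets that are connected in a graph of maximum degree $D$ and contain a fixed vertex is at most $(eD)^{t}$, together with $\Delta(G^{r})\le 2\Delta^{r}$ and $n$ choices for the fixed vertex, there are at most $n\,(2e\Delta^{r})^{t}$ candidate $T$'s of size $t$. Hence
\[
  \Pr\big[\,G[\BAD]\text{ has a component with}>k\text{ vertices}\,\big]
  \;\le\; \sum_{t'\ge t} n\,(2e\Delta^{r})^{t'}\,\Delta^{-3ct'},
\]
and one then sets the constant $c'$ (hence $t$) large enough that the summand has ratio well below $1$, so that $t=\Omega((c'/c)\log_\Delta n)$ makes the tail at most $n^{-\Omega(c')}$; tracking the constants recovers the stated component-size bound $k=(c'/c)\Delta^{2c}\log_\Delta n$. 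This is exactly Lemma~4.1 of \cite{CLP20}, and its accounting can be imported verbatim.

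The step I expect to be the main obstacle is precisely the parameter bookkeeping in the last display: the per-witness-vertex probability gain $\Delta^{-3c}$ has to dominate the per-vertex branching $\Delta^{r}$ that comes from counting $G^{r}$-connected witnesses, so the exponent in the hypothesis must be made sufficiently large relative to the radius $r=O(c)$ needed to keep the witness low-complexity; reconciling these exponents (and exploiting, if needed, that $T$ is not merely $G^{r}$-connected but also $G^{2c}$-independent to tighten the count) is the delicate part, and it is the content of \cite[Lemma~4.1]{CLP20}. It is reassuring that in our application (\cref{claim:bad}) the much stronger bound $\Pr[v\in\BAD]\le\Delta^{-\omega(1)}$ leaves enormous room, so the balancing there is trivial. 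Everything else — the ball-size estimate, the connected-subgraph count, and the independence peeling — is routine.
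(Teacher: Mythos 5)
First, a point of reference: the paper does not prove this proposition at all --- it is imported verbatim as Lemma~4.1 of \cite{CLP20} and used as a black box. So the only question is whether your reconstruction of that lemma's proof is sound.

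It is not, and the failure is exactly at the point you flagged as ``delicate.'' Your witness $T$ is built by greedily picking an arbitrary remaining vertex of $U$ and deleting its $2c$-ball, which only guarantees that $T$ is connected in $G^{r}$ with $r=4c+1$. The enumeration of candidate witnesses then costs roughly $(\Delta^{4c+1})^{t}$ per size-$t$ witness, while the probability gain is only $\Delta^{-3ct}$; the per-vertex ratio is $2e\Delta^{(4c+1)-3c}=2e\Delta^{c+1}>1$, so the series in your final display diverges. Neither of your proposed remedies repairs this: the constant $c'$ enters only through the starting index $t$ of the sum and cannot change the sign of the exponent, and restricting the count to $2c$-separated sets removes only $O(t\,\Delta^{2c})$ of the $\sim\Delta^{4c+1}$ candidates at each step of the tree enumeration, which is negligible. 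The correct (standard) fix is to build the witness more carefully so that it is connected already in $G^{2c+1}$: maintain the invariant that $W$ is the set of vertices of $U$ within distance $2c$ of the current $T$; while $W\neq U$, connectivity of $U$ yields a vertex $u\in U\setminus W$ adjacent to $W$, hence with $\operatorname{dist}_G(u,T)$ exactly $2c+1$; add $u$ to $T$. This $T$ is still pairwise $(>2c)$-separated (so the independence peeling works) and has $|T|\ge |U|/(2\Delta^{2c})$, but now each new witness vertex lies in the $(2c+1)$-ball of an earlier one, so the enumeration costs only $(2e\Delta^{2c+1})^{t}$ and the per-vertex ratio becomes $2e\Delta^{1-c}<1$ (this is where the exponent $3c$ in the hypothesis, rather than $2c+1$, is actually spent). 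With that substitution the rest of your argument --- the ball-size estimate, the conditional-probability peeling over disjoint $c$-hop neighborhoods, and the final union bound giving $n\cdot\Delta^{-\Omega(c)t}=n^{-\Omega(c')}$ --- goes through as written.
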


The next lemma follows from \cref{claim:bad} and \cref{prop:shattering}.

\begin{lemma}
$\BAD$ induces a subgraph whose connected components are of size $O(\ell^4 \cdot \log n)$. 
\label{L:bad}
\end{lemma}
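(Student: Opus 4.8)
The plan is to invoke Proposition \ref{prop:shattering} (the shattering lemma of \cite{CLP20}) with the concrete parameters arising from our situation. First I would set $c = 2$: by Lemma \ref{claim:bad}, a node $v$ enters $\BAD$ with probability $\Delta^{-\omega(1)}$, and crucially this bound is robust to adversarial choices of random bits outside the $2$-hop neighborhood of $v$, which is exactly the $c=2$ hypothesis of Proposition \ref{prop:shattering}. Since $\Delta^{-\omega(1)}$ beats $\Delta^{-3c} = \Delta^{-6}$ for all sufficiently large $\Delta$ — and we are in the regime where $\log\Delta < \cshat\log n$, so the subproblem is nontrivial, but $\Delta$ is still large enough (at least $\log^7 n$ at the relevant iteration, or rather $\Delta^{(i+1)} = \log^7\Delta^{(i)}$ for the ranges where $\BAD$ can be nonempty) for the asymptotic bound to kick in — the precondition of the proposition is satisfied.

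Next I would read off the conclusion: with probability $1 - n^{-\Omega(c')}$ for the free parameter $c'$, every connected component of $G[\BAD]$ has size at most $(c'/c)\Delta^{2c}\log_\Delta n = (c'/2)\Delta^4\log_\Delta n$. Choosing $c'$ to be a sufficiently large constant makes this a with-high-probability statement. Then it remains to simplify $\Delta^4\log_\Delta n$ to the claimed $O(\ell^4\log n)$. Here I would use the relationship, established in the surrounding text, that within an iteration the relevant maximum-degree bound $\Delta = \Delta^{(i)}$ satisfies $\ell = \log^{2.1}\Delta$ and, more importantly, that $\BAD$ is only nonempty when $\log\Delta = O(\log\log n)$ (equivalently $\Delta = \poly\log n$), since for $\log\Delta \ge \cshat\log n$ the set $\BAD$ is already empty w.h.p.\ by Lemma \ref{claim:bad}. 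In that regime $\log_\Delta n = \log n / \log\Delta = O(\log n)$, and $\Delta^4 = (\log^{2.1\cdot(1/2.1)}\dots)$ — more directly, $\Delta^4 = (\ell^{1/2.1})^4 \le \ell^2 = O(\ell^4)$ once one observes $\Delta \le 2^{\sqrt{\ell}}$ is far too weak; rather $\ell = \log^{2.1}\Delta$ gives $\Delta = 2^{\ell^{1/2.1}}$, so $\Delta^4 = 2^{4\ell^{1/2.1}}$, which is $\poly(\ell)$ and in particular $O(\ell^4)$ — wait, $2^{4\ell^{1/2.1}}$ is superpolynomial in $\ell$. Let me instead bound it via $n$: since $\Delta \le \poly\log n$ in this regime, $\Delta^4 = \poly\log n$; combined with $\log_\Delta n = O(\log n)$ this already gives component size $\poly\log n$, and a slightly more careful accounting using $\Delta^{(i)} = \log^7\Delta^{(i-1)}$ (so $\Delta^{(i)} \le \log^7 n$ at the first relevant iteration and smaller thereafter) pins it down: $\Delta^4 \le \log^{28}\Delta^{(i-1)}$, and since $\ell$ is defined relative to $\Delta^{(i-1)}$ as $\log^{2.1}\Delta^{(i-1)}$, we get $\Delta^4 = O(\ell^{28/2.1}) = O(\ell^4)$ only if $28/2.1 \le 4$, which fails; the honest statement is $\Delta^4 = \ell^{O(1)}$, and the constant in the exponent can be absorbed by redefining which power of $\log$ appears — the paper writes $O(\ell^4\log n)$ and I would verify the exponent $4$ against the precise value of $\Delta^{(i+1)}$ versus $\ell$.

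I expect the main (and only real) obstacle to be this last bookkeeping step: getting the exponent in $O(\ell^4\log n)$ exactly right by carefully tracking how $\Delta$, $\ell$, and the degree-range recursion $\Delta^{(i+1)} = \log^7\Delta^{(i)}$ interact, and confirming that $\Delta^{2c} = \Delta^4$ with $c=2$ is the right choice (rather than needing a larger $c$, which would worsen the exponent but is not forced since our failure probability is $\Delta^{-\omega(1)}$, beating any fixed polynomial). Everything else — checking the hypotheses of Proposition \ref{prop:shattering}, choosing $c'$ large enough for the $1 - n^{-\Omega(c')}$ bound to be w.h.p., and noting $\BAD$ is trivially empty outside the stated regime — is routine. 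The proof would conclude by combining: hypotheses verified $\Rightarrow$ Proposition \ref{prop:shattering} applies with $c=2$ $\Rightarrow$ components have size $O(\Delta^4\log_\Delta n)$ w.h.p.\ $\Rightarrow$ substituting the regime bounds on $\Delta$ yields $O(\ell^4\log n)$.
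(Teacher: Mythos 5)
Your approach is exactly the paper's: the paper's entire proof of \cref{L:bad} is the one-line assertion that it follows from \cref{claim:bad} and \cref{prop:shattering}, and your instantiation with $c=2$ (matching the $2$-hop locality in \cref{claim:bad}) and $c'$ a large constant is the intended application. Your intermediate conclusion --- components of size $O(\Delta^4\log_\Delta n)$ w.h.p. --- is correct, and so is the observation that this step is routine.

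The difficulty you ran into at the end is not a defect of your reasoning but a real discrepancy with the lemma's stated bound. Since $\ell=\log^{2.1}\Delta$, we have $\Delta^4=2^{4\ell^{1/2.1}}$, which is superpolynomial in $\ell$, so $\Delta^4\log_\Delta n$ is \emph{not} $O(\ell^4\log n)$; and restricting to the iterations where $\BAD$ can be nonempty does not rescue the exponent either. Concretely, in iteration $1$ (degree bound $\Delta^{(0)}=n$) every failure probability in \cref{claim:bad} is $\exp(-\Omega(\log^2 n))=n^{-\omega(1)}$, so $\BAD=\emptyset$ w.h.p.; in every later iteration $\Delta\le\log^7 n$, whence \cref{prop:shattering} gives components of size $O(\Delta^4\log_\Delta n)=O(\log^{29} n)$, which is $\poly\log n$ but not $O(\ell^4\log n)=O((\log\log n)^{8.4}\log n)$. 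So the correct conclusion of this argument is a $\poly\log n$ bound on component sizes (with exponent governed by $\Delta^{2c}$, i.e.\ by the locality of the failure events, not by $\ell$), and that is all the surrounding text uses: the deterministic algorithm is invoked on instances of $\poly\log n$ size, and its $O(\log^2 d\cdot\log N)$ complexity is insensitive to the constant in the exponent of $N$. Your proof should therefore stop at, and the lemma should be read as asserting, the $\poly\log n$ bound; the specific exponent $\ell^4$ does not follow from \cref{prop:shattering} as cited.
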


A subtlety has to be addressed, which is that the graph induced by the subset of nodes of degree in the range $[\Delta^{(i+1)}, \Delta^{(i)}]$ can have much smaller degrees than $\Delta^{(i+1)}$, so we no longer are in a D1LC instance with the hypotheses asked by our algorithm. Nevertheless, the palettes of these nodes are still of size at least $\Delta^{(i+1)}$. Nodes whose degree decreased but stayed above $\Delta^{(i+1)}$ can simply throw away some colors to have palettes of size degree+1. Nodes whose degrees dropped below $\Delta^{(i+1)}$ in the induced subgraph can pretend to have more neighbors and 2-hop neighbors than they actually have so that the induced graph augmented with these virtual neighbors satisfies the degree requirements. These virtual neighbors are introduced such that their only connection to the real subgraph is the node that invented them. The algorithm can be run on this augmented graph by having each node that invented virtual neighbors simulate the algorithm for the nodes it invented. This causes no added communication and the impact on the size of the graph is minimal.

\paragraph{Analysis of round complexity.}
Let $T(n,d)$ be the optimal round complexity of deterministic distributed $\deg+1$-coloring algorithms on graphs with $n$ nodes and maximum degree $d$. Let $T(n) = \max_d T(n,d)$.
Currently, the best bounds known for $T(n,d)$ are $O(\log^2 d \log n)$ \cite{GK21} and $\tilde{O}(\sqrt{d}) + O(\log^* n)$ \cite{Barenboim16}. 

The complexity of our algorithm, outside the induced subproblems, is $O((\log^* \Delta)^2)$, or $O(\log^* \Delta)$ on each degree group.
The total round complexity of our algorithm is
\[\hat{T}(n) = \sum_{i = 1}^{\log^* n} T(\log n, \log^{(i)} n) \ . \]
This is $O(T(\log n))$ whenever $T(n) = \Omega(\log^{(k)} n)$, for some constant $k$, since $T(\log n,\log^{(i+1)} n) = O(\log^{(i+1)} n)$ for $i > k$. 
In particular, given the deterministic algorithm of \cite{GK21}, the complexity is $\hat{T}(n) = O(\log^3\log n)$, proving our main result:

\maintheorem*

\section{Slack Generation: Technical Details}
\label{S:slackgen}

In this section, we prove our statements about \slackgeneration.
\Crefrange{ss:bal}{ss:disc}
involve properties that are relevant to both sparse and dense nodes, while \cref{ss:tough} and \cref{ss:slackgen-dense} contain the full arguments for the sparse and dense nodes, respectively.

We define and use a number of small epsilon constants in the upcoming arguments. For reference, here are their order of magnitude in relation to $\eacd$: $\espa,\eub \in \Theta(\eacd^2)$; $\ehat \in \Theta(\eacd^4)$; $\ehc \in \Theta(\eacd^8)$.
Constraints are $\espa - \eub \in \Omega(1)$, $\eub > cte \cdot \sqrt{\ehat}$, $\ehc \leq \ehat^2/540$. $\eacd \le 1/125$. $\ehat \le 1/36$.

\subsection{Slack from Balanced Sources}
\label{ss:bal}

\begin{definition}[Balanced/Unbalanced]
Let $\eub \in (0,1)$. A node $v$ is
    \emph{$\eub$-balanced} if it has at least $(1-\eub) d_v$ neighbors $w \in N(v)$ with $d_w \ge 2 d_v/3$,
    and otherwise it is \emph{$\eub$-unbalanced}. 
    \label{def:top-heavy-balanced}
\end{definition}
We set two constants $\ehat, \eub \in (0,1)$ and consider
a node to be \emph{balanced} (\emph{unbalanced}, 
\emph{discrepant}) if it is $\eub$-balanced ($\eub$-unbalanced, 
$\ehat$-discrepant, respectively).

Let $\Nbal(v) = \set{u \in N(v): d_u \geq 2d_v/3}$ and $\dbal_v = \card{\Nbal(v)}$. Let us decompose each node's discrepancy and sparsity into the part that is contributed by $\Nbal(v)$ and the rest:  $\discbal_v = \sum_{u \in \Nbal(v)} \frac{\card{\pal_u \setminus \pal_v}}{\card{\pal_u}}$ and $\discunb_v = \disc_v - \discbal_v$; $\sparbal_v = \frac 1 {d_v} \parens*{\binom{\dbal_v}{2}-m(\Nbal(v))}$ and $\sparunb_v = \spar_v - \sparbal_v$.
We first show that discrepancy and sparsity coming from $\Nbal(v)$ easily give slack.

\begin{lemma}
\label{lem:balanced-disc}
A node $v$ of balanced discrepancy $\discbal_v$ receives $\Omega(\discbal_v)$ slack w.p.\ $1-\exp(-\Omega(\discbal_v))$ during \slackgeneration.
\end{lemma}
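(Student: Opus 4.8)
The goal is to show that if $v$ has balanced discrepancy $\discbal_v = \sum_{u \in \Nbal(v)} \disc_{u,v}$, then running \slackgeneration\ gives $v$ slack $\Omega(\discbal_v)$ with probability $1 - \exp(-\Omega(\discbal_v))$. The source of slack here is \emph{chromatic slack}: a neighbor $u$ that gets permanently colored with a color $x \in \pal_u \setminus \pal_v$ reduces $v$'s degree without touching its palette, hence increases $s(v)$ by one. So the plan is to lower-bound the number of balanced neighbors of $v$ that get permanently colored with a color outside $\pal_v$.

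The approach is a standard two-part argument. \textbf{First, the expectation.} Each node $u \in \Nbal(v)$ is sampled into $S$ with probability $\pgen = 1/10$; conditioned on being sampled, it picks a uniform color $\col_u \in \pal_u$, which lies outside $\pal_v$ with probability exactly $\disc_{u,v}$. Moreover $u$ keeps this color permanently with constant probability: because $d_u \geq 2d_v/3$ is large (at least $\Omega(\log^2 \Delta)$ by the standing degree hypothesis) and $u$ chooses uniformly from a palette of size $|\pal_u| \geq d_u + 1$, each neighbor of $u$ that is also sampled collides with $\col_u$ with probability at most $\pgen/|\pal_u| \leq 1/(10 d_u)$; a union bound over $u$'s $\leq d_u$ neighbors gives a collision probability at most $1/10$, so $u$ keeps its color with probability at least $\pgen(1 - 1/10) = 9/100$. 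Thus the expected number of balanced neighbors of $v$ colored outside $\pal_v$ is at least $\frac{9}{100}\sum_{u \in \Nbal(v)} \disc_{u,v} = \Omega(\discbal_v)$. Note each such color can be ``claimed'' by $v$ as chromatic slack, and even if several neighbors pick the \emph{same} color outside $\pal_v$ that only helps (it still removes that many from the degree while the palette is untouched), so the count of colored-outside neighbors is a valid lower bound on the generated slack up to the number of distinct colors — actually cleaner: the slack $v$ gains is exactly $d(v)_{\text{before}} - d(v)_{\text{after}} + (|\pal(v)_{\text{after}}| - |\pal(v)_{\text{before}}|) \geq (\#\text{colored neighbors with color} \notin \pal_v)$, since those neighbors drop the degree without shrinking the palette. \textbf{Second, concentration.} The quantity $X = \#\{u \in \Nbal(v) : u \text{ permanently colored with } \col_u \notin \pal_v\}$ is determined by the random choices of $v$'s neighbors and their neighbors, but crucially the indicator for $u$ depends only on the choices within $N(u) \subseteq N(v) \cup N^2(v)$. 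These indicators are not independent, but each is a function of a bounded-degree neighborhood, so one can either (i) invoke a read-$k$ Chernoff bound (as done elsewhere in the paper, cf.\ \cref{lem:lds}) with $k = \Delta$ — too weak — or better (ii) expose the randomness in a way that yields a martingale/Chernoff-type bound: condition on the sampled set $S$ and the colors, and observe that $X \geq \sum_{u \in \Nbal(v) \cap S} \mathbf{1}[\col_u \notin \pal_v]\cdot \mathbf{1}[u \text{ keeps color}]$. The cleanest route is to use the paper's \cref{lem:chernoff}-style bound: the events $\mathbf{1}[u \text{ colored outside } \pal_v]$ satisfy $\Pr[\text{event}_u \mid \text{events of all other } w] \geq q_u$ for $q_u = \Omega(\disc_{u,v})$ — wait, we need a \emph{lower} tail, so we want the dual statement $\Pr[\text{event}_u \mid \text{others}] \geq q_u$ giving $\Pr[X < \frac12 \sum q_u] \leq \exp(-\Omega(\sum q_u))$. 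Since $\sum q_u = \Omega(\discbal_v)$, this gives the claimed $1 - \exp(-\Omega(\discbal_v))$ bound, and since every indicator depends only on the $2$-hop neighborhood of $v$, the bound holds against adversarial randomness further away.

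\textbf{Main obstacle.} The delicate point is the concentration step: the ``colored outside $\pal_v$'' indicators for the $u$'s are genuinely dependent (two balanced neighbors $u, u'$ sharing a common neighbor $w$ are correlated through $w$'s choice), and one must set up the right conditioning so that a one-sided Chernoff/read-$k$ bound applies with the read parameter small enough that $\exp(-\Omega(\discbal_v / k))$ is still $\exp(-\Omega(\discbal_v))$ — i.e., we need $k = O(1)$, which is false in general since a common neighbor $w$ can be adjacent to up to $\Delta$ of the $u$'s. The resolution (the standard trick, which the paper surely uses) is to \emph{not} condition on the full color vector of $N^2(v)$ but instead to note that each $u$ keeping its color outside $\pal_v$ is a \emph{downward-closed / self-correcting} event with respect to the colors of $u$'s neighbors: once we condition on $u \in S$ and $\col_u \notin \pal_v$, the only way $u$ fails is a collision, and collisions across different $u$'s, while correlated, each individually have probability $\leq 1/10$ conditioned on \emph{everything else}, so the Janson/Chernoff-with-dependency machinery of \cref{lem:chernoff} (with $q_u = \pgen \cdot \disc_{u,v} \cdot (1 - 1/10)$ as the \emph{lower} bound on the conditional probability) goes through directly. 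The remaining routine work is just verifying that $|\pal_u| \geq d_u + 1 \geq \frac{2}{3}d_v + 1$ makes the collision probability genuinely $\leq 1/10$ and that $\discbal_v$ can be assumed $\Omega(\log \Delta)$ or else the statement is vacuous (if $\discbal_v = O(\log\Delta)$, the failure probability $\exp(-\Omega(\discbal_v))$ need not be $1/\poly(\Delta)$, but the lemma only claims $\exp(-\Omega(\discbal_v))$ which is then still correct, just not useful for shattering — so no case analysis is even needed here).
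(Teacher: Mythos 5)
Your expectation computation matches the paper's, but the concentration step has a genuine gap, and it is exactly where the difficulty of this lemma lives. You propose to apply the martingale bound (\cref{lem:chernoff}) to the indicators $X_u = \mathbf{1}[u \text{ colored with } \col_u \notin \pal_v]$, which requires a uniform lower bound $\Pr[X_u = 1 \mid X_{u_1}, \ldots, X_{u_{i-1}}] \geq q_u$ over \emph{every} realization of the earlier indicators. You assert this ``goes through directly'' but give no argument, and it is not clear it is true: the $X_u$'s are functions of overlapping sets of color choices (two balanced neighbors sharing a common neighbor $w\in N^2(v)$ are both functions of $\col_w$), so conditioning on the success/failure pattern of earlier neighbors skews the color distribution of $u$'s neighbors, and no union bound yields a uniform conditional constant. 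Note also that your random variable -- the raw count of neighbors colored outside $\pal_v$ -- is neither $1$-Lipschitz nor $O(1)$-certifiable in the independent color choices: a single node of $N^2(v)$ switching its color can simultaneously decolor many pairwise non-adjacent balanced neighbors of $v$ that all tried that color, and certifying that one neighbor $u$ kept its color requires revealing the choices of all of $N(u)$. So the Talagrand route is also blocked for the quantity as you define it.

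The paper resolves this by changing what is counted: it counts \emph{distinct colors} $c \notin \pal_v$ that are \emph{successfully} tried, meaning tried by exactly one node $u \in \Nbal(v)$ and by no node of $N^{+}(u)$. Writing $f = \sum_c \mathbb{I}_{A_c}$ (color $c$ is tried by some node of $\Nbal(v)$) and $g = f - h$ where $h = \sum_c \mathbb{I}_{A_c\cap B_c}$ counts the successful colors, both $f$ and $g$ are $1$-Lipschitz and $2$-certifiable, so \cref{lem:talagrand-difference} (Talagrand applied to a difference of such functions) concentrates $h$ around its mean $\Omega(\discbal_v)$, and $h$ lower-bounds the slack. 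Restricting to uniquely-tried colors is what restores the Lipschitz and certifiability properties; this is the idea your proposal is missing. Your observation that repeated external colors ``only help'' the slack accounting is correct, but it is precisely the repeated colors that destroy Lipschitzness, which is why the paper's proof discards them rather than exploiting them.
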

\begin{proof}
Let $\overline{\pal}=\bigcup_{u\in \Nbal(v)}\pal_u\setminus \pal_v$. For each color $c\in \overline{\pal}$, let us consider $A_c$, the event that some node in $\Nbal(v)$ tries $c$. Let $B_c$ be the event that $c$ is successfully tried by a node $u$ in $\Nbal(v)$, meaning that no node in $N^{+}(u)\cup \Nbal(v)$ tries the same color.
Note that $\discbal_v$ is the expected number of nodes in $\Nbal(v)$ that try a color in $\overline{\pal}$. Let $h=\sum_c \mathbb{I}_{A_c\cap B_c}$ be the number of successfully tried colors, which is a lower bound on the slack that $v$ gets during \slackgeneration.
Each color tried by a node $u$ is successful w.p.\ \[\left(1-\frac{1}{d(u)}\right)^{|N^+(u)|}\cdot \left(1-\frac{3}{2d(v)}\right)^{|\Nbal(v)|}=\Omega(1)\ ,
\]
which implies that the expected number of nodes successfully trying a color from $\overline{\pal}$ is $\Omega(\discbal_v)$, which in turn implies $\Exp[h]=\Omega(\discbal_v)$ (recall that successful trial means that the color is not tried by any other node in $\Nbal_v$).
Let $f=\sum_c \mathbb{I}_{A_c}$ and $g=f-h$. We have $\Exp[f]\ge \Exp[h]=\Omega(\discbal_v)$, and observe that $f$ and $g$ are $1$-Lipschitz and $2$-certifiable functions of the random color choices of all nodes, hence \cref{lem:talagrand-difference} applies to this setup, implying that $v$ gets $\Omega(\discbal_v)$ slack w.p.\ $1-\exp(-\Omega(\discbal_v))$.
\end{proof}

Note that \cref{lem:balanced-disc} immediately implies that uneven nodes get $\Omega(d_v)$ slack w.p.\ $1-\exp(\Omega(d_v))$ since $\discbal_v \geq \unev_v$.

\begin{lemma}
\label{lem:balanced-spar}
A node $v$ of balanced sparsity $\sparbal_v$ receives $\Omega(\sparbal_v)$ slack w.p.\ $1-\exp(-\Omega(\sparbal_v))$ during \slackgeneration.
\end{lemma}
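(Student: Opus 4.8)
The plan is to mirror the proof of \cref{lem:balanced-disc}, replacing the chromatic slack from colors outside $\pal_v$ with ``same‑color'' slack coming from non‑adjacent pairs of balanced neighbors that {\slackgeneration} colors identically. Call a color $c$ \emph{reused} if at least two nodes of $\Nbal(v)$ end up permanently colored $c$; any two such nodes are automatically non‑adjacent, and the elementary accounting that the gain in $\card{\pal(v)}-d(v)$ equals (number of permanently colored neighbors) $-$ (number of distinct colors they use inside $\pal_v$) shows this gain is at least the number $h$ of reused colors. So it suffices to prove $h=\Omega(\sparbal_v)$ with probability $1-\exp(-\Omega(\sparbal_v))$.

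For the expectation, recall that $\binom{\dbal_v}{2}-m(\Nbal(v))=\sparbal_v d_v$ counts the non‑edges inside $\Nbal(v)$. For such a non‑edge $\{u,w\}$, both palettes have size at least $2d_v/3$; conditioned on $u$ and $w$ both trying a common color $c$, each keeps it with probability $\Omega(1)$ by the same success‑probability estimate used in \cref{lem:balanced-disc} (the two keep‑events being essentially independent since $u\not\sim w$), once we discard the few colors that are ``heavy'' for $u$ or for $w$, which by a Markov argument occupy at most a quarter of each palette. Summing over non‑edges gives $\Exp[h']=\Omega(\sparbal_v)$ for the count $h'$ of colliding non‑adjacent pairs, \emph{provided} a constant fraction of the non‑edges have endpoints with palettes of size $\Theta(d_v)$ and mutual overlap $\Omega(d_v)$; the other non‑edges must have an endpoint $u$ with $\card{\pal_u}\gg d_v$ or with $\pal_u$ largely outside $\pal_v$, so $\disc_{u,v}=\Omega(1)$ and the associated slack is instead recovered through \cref{lem:balanced-disc}. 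Finally, balancedness turns $h'$ back into $h$ up to constants: since every balanced neighbor has palette at least $2d_v/3$, the expected number of them trying any fixed color is below a constant $<1$, so for each colour $c$ the colliding pairs pile up only mildly and $\Pr[c\text{ reused}]=\Theta(\Exp[\binom{X_c}{2}])$ where $X_c$ counts the keepers of $c$; summing yields $\Exp[h]=\Theta(\Exp[h'])=\Omega(\sparbal_v)$.

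For concentration I would, as in \cref{lem:balanced-disc}, write $h$ as a difference of two tame functions of the nodes' sampling‑and‑colour random choices: let $f$ count the colours tried by at least two non‑adjacent nodes of $\Nbal(v)$ and $g=f-h$ count those tried by at least two such nodes but reused by fewer than two. A single node's randomness affects the trial counts of only two colours, so $f$ and $g$ are $O(1)$‑Lipschitz, and each counted colour is certified by $O(1)$ trying nodes, so they are $O(1)$‑certifiable; \cref{lem:talagrand-difference} then gives $h\ge\Exp[h]/2=\Omega(\sparbal_v)$ with probability $1-\exp(-\Omega(\sparbal_v))$, and adding the slack obtained via \cref{lem:balanced-disc} in the ``bad'' case finishes the bound. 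The main obstacle is the expectation step: it has to absorb the heterogeneity of palette sizes and contents (split into ``good'' non‑edges versus disparity‑contributing ``bad'' ones), prevent the degenerate collapse of all collisions onto a few colours (handled by balancedness capping per‑colour trial expectations), and excise colours that are heavy for an endpoint so that keep‑probabilities remain $\Omega(1)$. The concentration step, by contrast, is a routine reuse of the Talagrand‑difference machinery already deployed for \cref{lem:balanced-disc}.
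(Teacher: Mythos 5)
Your proposal follows essentially the same route as the paper: the same dichotomy (either the balanced discrepancy $\discbal_v$ is an $\Omega(1)$-fraction of $\sparbal_v$, in which case \cref{lem:balanced-disc} already finishes, or almost all non-edges in $\Nbal(v)$ join pairs with comparable palettes and $\Omega(d_v)$ overlap), the same $\Omega(1/d_v)$-per-non-edge collision count giving expectation $\Theta(\sparbal_v)$, and the same $f,g,h=f-g$ Talagrand-difference machinery. Your per-non-edge classification of ``bad'' endpoints ($\card{\pal_u}\gg d_v$ or $\pal_u$ mostly outside $\pal_v$, hence $\disc_{u,v}=\Omega(1)$) is exactly the paper's observation that at most $36\discbal_v$ balanced neighbors have $\disc_{u,v}\ge 1/36$; since each such neighbor touches at most $d_v$ non-edges, a constant fraction of bad non-edges forces $\discbal_v=\Omega(\sparbal_v)$, which is the paper's opening case split. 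The heavy-color excision you invoke is actually unnecessary here: balancedness alone caps the expected number of triers of any fixed color in $\Nbal(v)$ at $3/2$, which is all that is needed for the $\Omega(1)$ survival probability.

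There is one step that fails as written: the certifiability of $g$. You define $h$ as the number of \emph{reused} colors (kept by at least two nodes of $\Nbal(v)$) and $g=f-h$ as the colors tried by two non-adjacent balanced neighbors but reused by fewer than two. Certifying membership in $g$ requires witnessing that \emph{all but at most one} of the triers of $c$ failed to keep it; if $c$ is tried by $k$ nodes, this needs roughly $k$ witnesses, so $g$ is not $O(1)$-certifiable and \cref{lem:talagrand-difference} does not apply to it. The paper sidesteps this by defining a non-edge to be \emph{successful} only if, additionally, no other node of $\Nbal(v)$ tries the same color: then $\overline{B_c}$ (given $A_c$) is certified by a single extra node — either a third trier in $\Nbal(v)$ or a conflicting neighbor of one of the two endpoints — making $g$ $3$-certifiable. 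The extra condition costs only a constant factor in $\Exp[h]$ (again by the balancedness cap on per-color trial counts), so adopting the paper's notion of successful non-edge repairs your concentration step without changing anything else.
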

\begin{proof}
If $\discbal_v > \sparbal_v/72$, \cref{lem:balanced-disc} already implies the result. Suppose now that $\discbal_v\leq \sparbal_v/72$. Since $\discbal_v = \sum_{u\in \Nbal(v)} \disc_{u,v}$, $v$ has at most $36\discbal_v$ neighbors $u\in\Nbal(v)$ for which $\disc_{u,v}=\card{\pal_u\setminus\pal_v}/ \card{\pal_u} \geq 1/36$. Let us ignore these nodes, and let $N'\subseteq \Nbal(v)$ be the remaining set of at least $\dbal_v-36\discbal_v \geq \dbal_v - \sparbal_v/2$ balanced neighbors of $v$. Such neighbors $u\in N'$ have degree between $2d_v/3$ (since they belong to $\Nbal(v)$) and $36d_v/35$ (because of the condition $|\pal_u\setminus\pal_v|<|\pal_u|/36$), and $\card{\pal_u\cap\pal_v} \geq 23\card{\pal_v}/36$ shared colors with $v$ (using $d_u\ge 2d_v/3$ and the same condition on palettes). This last result means that each pair of such neighbors shares at least $5\card{\pal_v}/18$ colors from $\pal_v$, therefore has a chance $\Omega(1/d_v)$ of trying the same color in \slackgeneration. Since $\Nbal(v)$ contains $\sparbal_v\dbal_v$ non-edges, and $\card{\Nbal(v)\setminus N'}\leq \sparbal_v / 2$, there are at least $\sparbal_v \dbal_v / 2$ non-edges between nodes in $N'$. Hence, the expected number of non-edges in that try the same color on both endpoints in $\Nbal(v)$ is $\Theta(\sparbal_v)$.

Let us consider such a non-edge to be successful if it gets to keep its color on both ends and no other non-edge in $\Nbal(v)$ tries the same color. Each successful non-edge contributes $1$ to the slack of $v$. A non-edge whose endpoints try the same color is successful w.p.\ $\Omega(1)$ (by an analogous argument as in \cref{lem:balanced-disc}), so the expected number of successful non-edges is $\Theta(\sparbal_v)$. For each color $c$ we define $A_{c}$, the event that the endpoints of some non-edge in $\Nbal(v)$ tried $c$, and $B_{c}$, the event that $c$ was  tried and successful. Let $f$ ($g$) be the sum of the indicator functions of the $A_{c}$'s ($A_c\cap \overline{B}_c$'s respectively), and $h=f-g$ be the number of colors that were successfully tried by a non-edge in $\Nbal(v)$. We have that $\Exp[f]\ge \Exp[h]=\Omega(\sparbal_v)$, by the argument above (see also \cref{lem:balanced-disc}). 
Moreover, $f$ and $g$ are both $1$-Lipschitz and $3$-certifiable, so by \cref{lem:talagrand-difference}, $v$ gets $\Omega(\Exp[h])=\Omega(\sparbal_v)$ slack w.p.\ $1-\exp(-\Omega(\sparbal_v))$.
\end{proof}

The two previous lemmas immediately imply the next one.

\begin{lemma}
If $v$ is sparse and balanced, then after slack generation, $v$ gets slack $\Omega(d_v)$, w.p.\ $1-\exp(-\Omega(d_v))$.
\label{L:balanced}
\end{lemma}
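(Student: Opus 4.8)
The plan is to obtain \cref{L:balanced} as an immediate consequence of \cref{lem:balanced-spar}; the only thing to check is that a sparse balanced node already has \emph{balanced} sparsity $\sparbal_v=\Omega(d_v)$, so that the $\Omega(\sparbal_v)$ slack guaranteed there is already $\Omega(d_v)$. First I would use balancedness to control the number of low-degree neighbors: since $v$ is $\eub$-balanced, $\card{N(v)\setminus\Nbal(v)}\le \eub d_v$.

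Next I would split the non-edges inside $N(v)$ according to whether both endpoints lie in $\Nbal(v)$. Those with both endpoints in $\Nbal(v)$ number exactly $d_v\sparbal_v$ by definition of $\sparbal_v$; every remaining non-edge has an endpoint in $N(v)\setminus\Nbal(v)$, and charging it there — at most $\eub d_v$ possible such endpoints, each lying in fewer than $d_v$ pairs — bounds their count by $\eub d_v^2$. Hence
\[ d_v\zeta_v=\binom{d_v}{2}-m(N(v))\le d_v\sparbal_v+\eub d_v^2\ , \]
and since $v\in\Vsp$ gives $\zeta_v\ge\espa d_v$, we get $\sparbal_v\ge(\espa-\eub)d_v$, which is $\Omega(d_v)$ because $\espa-\eub=\Omega(1)$ by the constant constraints stated earlier.

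Finally I would feed this bound into \cref{lem:balanced-spar}: after \slackgeneration, node $v$ receives slack $\Omega(\sparbal_v)=\Omega(d_v)$ with probability $1-\exp(-\Omega(\sparbal_v))=1-\exp(-\Omega(d_v))$, which is exactly the claim (the locality of the guarantee and the Talagrand-type concentration through \cref{lem:talagrand-difference} are inherited unchanged, and the case where much of the balanced sparsity is actually discrepancy is already folded into \cref{lem:balanced-spar} via \cref{lem:balanced-disc}). I do not anticipate any real obstacle: the whole content is the elementary non-edge count in the middle paragraph, and the quantitative point is simply that the constant gap $\espa-\eub=\Omega(1)$ leaves $\Omega(d_v)$ of sparsity in the balanced bucket; everything else is verbatim \cref{lem:balanced-spar}.
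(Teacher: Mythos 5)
Your proposal is correct and is essentially the paper's own proof: the paper likewise deduces $\sparbal_v \ge (\espa-\eub)d_v$ for a sparse balanced node and then invokes \cref{lem:balanced-spar}. The only difference is that you spell out the elementary non-edge count (charging non-edges with an endpoint outside $\Nbal(v)$ to the at most $\eub d_v$ such endpoints), which the paper leaves implicit.
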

\begin{proof}
A sparse and balanced node has balanced sparsity $\sparbal_v \geq (\espa-\eub) d_v$, hence the result by \cref{lem:balanced-spar}.
\end{proof}

\subsection{Heavy Colors}
\label{sss:heavy}

The challenge with heavy colors is that each of them can contribute a large amount of slack in expectation, and a change in the color of a single node can decrease the expected total contribution of other nodes significantly. Thus, the usual concentration bounds do not apply. 
We tackle this by a two-stage analysis, grouping colors into buckets, and considering the contribution to slack of each bucket. Conditioned appropriately, those contributions are independent from each other, and we can argue concentration for their sum.

Here we consider the general case where we would like to argue slack of nodes w.r.t.\ a subset $S_v$ of their neighbors.
We apply it in the sparse case with $S_v = N(v)$ and in the dense case with $S_v = E_v$.
To that end, we restate the definitions of several key concepts related to slack generation in terms of $S_v$:
\begin{definition}[Color weight, heavy/light colors and nodes]
    Let $\ehc,\eps \in (0,1)$. A color $c \in \colspace$ is $\ehc$-\emph{heavy} for $v$ w.r.t. $S=S_v$ if its \emph{weight} $H(c)=H^{(S)}_v(c)=\sum_{u\in S, c\in \pal_u}\frac{1}{|\pal_u|}$ satisfies $H(c)\ge 1/\ehc$. Otherwise, it is $\ehc$-\emph{light}. We denote by $\hcol=\hcolS{(S)}_v$ ($\lcol=\lcolS{(S)}_v$) the sets of $\ehc$-heavy ($\ehc$-light) colors for $v$ in $S$. 
    Node $v$ is $\eps$-\emph{heavy} iff $\sum_{c \in \hcolS{(S)}_v} H^{(S)}(c)\ge \eps \card{S}$, and is otherwise $\eps$-\emph{light}.
    \label{def:heavy-colors}
\end{definition}

We set an additional constant $\ehc \in (0,1)$, $\ehc \leq \ehat^2/540$ and consider
a color to be \emph{heavy} (\emph{light}) if it is $\ehc$-heavy ($\ehc$-light).
Let $\hcol=\hcolS{(S_v)}_v$ be the set of heavy colors for $v$ in $S_v$, and $H(c)=H^{(S_v)}(c)$ the weight of a color $c$ for $v$ within $S_v$.

\begin{observation}\label{obs:colorweightsum}
$\sum_{c\in \colspace} H(c) = \card{S_v}$ and $|\hcol|\le \ehc \card{S_v}$.
\end{observation}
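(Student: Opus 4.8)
\textbf{Proof proposal for \Cref{obs:colorweightsum}.} Both claims follow from a double-counting argument on the definition $H(c)=H^{(S_v)}_v(c)=\sum_{u\in S_v,\, c\in\pal_u}\frac{1}{|\pal_u|}$. For the first identity, I would sum over all colors and exchange the order of summation: $\sum_{c\in\colspace} H(c)=\sum_{c\in\colspace}\sum_{u\in S_v,\,c\in\pal_u}\frac{1}{|\pal_u|}=\sum_{u\in S_v}\sum_{c\in\pal_u}\frac{1}{|\pal_u|}=\sum_{u\in S_v}\frac{|\pal_u|}{|\pal_u|}=|S_v|$, where the interchange is valid since all terms are nonnegative (and the sums are finite, as each $\pal_u$ is finite and $S_v$ is finite). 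Intuitively, $\sum_u\frac{1}{|\pal_u|}$ counts the total probability mass placed on color $c$ when each $u\in S_v$ picks a uniform color from $\pal_u$, so summing over $c$ recovers exactly $|S_v|$ trials.

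For the second claim, I would simply observe that every heavy color $c\in\hcol$ has $H(c)\ge 1/\ehc$ by \Cref{def:heavy-colors}, so $\frac{|\hcol|}{\ehc}\le\sum_{c\in\hcol}H(c)\le\sum_{c\in\colspace}H(c)=|S_v|$ (using nonnegativity of $H$ for the middle inequality), which rearranges to $|\hcol|\le\ehc|S_v|$. There is no real obstacle here — the only thing to be careful about is making the summation interchange and the bound on $H$ rigorous, but both are routine; the statement is essentially a restatement of linearity of expectation plus Markov-type counting.
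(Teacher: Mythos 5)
Your proof is correct and matches the paper's argument exactly: the first identity is the same sum rearrangement over colors versus over $S_v$, and the second claim follows from the first because each heavy color has weight $H(c)\ge 1/\ehc$ by \Cref{def:heavy-colors}. Nothing further is needed.
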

\begin{proof}
The first claim follows by a sum rearrangement (over colors vs.\ over $S_v$). The second follows from the first since each heavy color has weight $H(c)\ge 1/\ehc$.
\end{proof}

\begin{lemma}
Suppose $v$ is $\eps$-heavy in $S_v \subseteq N(v)$, with
$\ehc \le \eps^2/240$.
Let $C_{\eps,\ehc} = \frac {10^3(6 \cdot 240)^2}{9 \cdot \eps^{3}\ehc^{2}}$. If a value $s\ge C_{\eps,\ehc} \ln^2 \Delta$ exists such that $\min(\set{d_u, u\in S_v}) \geq s$ and $|S_v| \ge s$,
then after slack generation, $v$ gets (permanent) slack $\Omega(\card{S_v})$, w.p.\ $1- \exp(-\Omega(\sqrt{s}))$.
\label{lem:slack-heavy}
\end{lemma}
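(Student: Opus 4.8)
\emph{Reduction to a counting bound.} I would first reduce the lemma to a lower bound on the number $Y$ of neighbors of $v$ in $S_v$ that end up \emph{permanently} colored with a heavy color after \slackgeneration. Applying the $Y$ relevant colorings alone, $v$'s degree drops by $Y$ while its palette loses at most one color per heavy color actually used, i.e.\ at most $\abs{\hcol}$ colors; since no coloring can decrease slack, $v$'s slack after \slackgeneration is at least $Y-\abs{\hcol}$. By \cref{obs:colorweightsum}, $\abs{\hcol}\le\ehc\abs{S_v}$, and since $\ehc\le\eps^2/240$, it suffices to show $Y=\Omega(\eps\abs{S_v})$ with probability $1-\exp(-\Omega(\sqrt s))$; then $Y-\abs{\hcol}=\Omega(\abs{S_v})$ as claimed. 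In expectation this is on track: $\sum_{u\in S_v}\Pr[u \text{ tries a heavy color}] = \sum_{c\in\hcol}H(c)\ge\eps\abs{S_v}$ because $v$ is $\eps$-heavy, and a constant fraction of these trials should survive. We may assume $v$ is still uncolored, and we may enlarge $s$ to $\min(\min_{u\in S_v}d_u,\abs{S_v})$, which only strengthens the statement; as $\hcol\ne\emptyset$ this minimum equals $\min_u d_u$, so in particular every color has weight $H(c)\le\abs{S_v}/s$.

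\emph{Why naive concentration fails; the two-stage view.} The variable $Y$ does not concentrate by Chernoff or Talagrand: a heavy color can be tried by $\omega(1)$ --- up to $\Theta(\sqrt{d_v})$ --- neighbors of $v$ (precisely the situation of \cref{sss:heavy}), so flipping one node's color can change the survival of many neighbors and move the conditional mean of $Y$ significantly, destroying the Lipschitz/certifiability hypotheses. I would instead follow the bucketing scheme of \cref{sss:heavy}. For the analysis only, partition $\hcol$ into buckets $B_1,\dots,B_K$ with weights $W_j=\sum_{c\in B_j}H(c)$ all of order $\Theta(\theta)$ for a threshold $\theta=\Theta(\sqrt s)$ (with constants depending on $\eps,\ehc$), placing any lone color of weight $>\theta$ in its own bucket. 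View each node's uniform choice from $\pal_u$ as: first pick a bucket with probability $\abs{\pal_u\cap B_j}/\abs{\pal_u}$ (and ``no bucket'' otherwise), then pick a uniform color inside. Condition on the full bucket assignment of all nodes within $2$ hops of $v$. Given this, colors chosen in distinct buckets never collide, so the process splits into $K$ mutually independent sub-processes; writing $Y=\sum_{j=1}^{K} Y_j$, where $Y_j$ counts the $S_v$-neighbors of $v$ that keep a color of $B_j$, the $Y_j$ are conditionally independent and each is at most the (controlled) number of $S_v$-neighbors that landed in $B_j$.

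\emph{Good event and Hoeffding.} I would then fix the typical bucket assignment: by Chernoff, each bucket is chosen by $\Theta(\pgen W_j)=\Theta(\theta)$ neighbors in $S_v$ (with similar control on the number of same-color-eligible blockers); a union bound over the $K\le O(\abs{S_v}/\theta)\le O(\Delta)$ buckets costs $O(\Delta)\exp(-\Omega(\theta))=\exp(-\Omega(\sqrt s))$, using that $\sqrt s\ge\sqrt{C_{\eps,\ehc}}\,\ln\Delta$ absorbs the $\ln\Delta$. On this good event each $Y_j\in[0,O(\theta)]$, and because a weight-$\Theta(\theta)$ bucket has small per-node overlap, each node that landed in $B_j$ keeps its color with probability $\Omega(1)$, so $\sum_j\Exp[Y_j\mid\text{good event}]=\Omega(\eps\abs{S_v})$. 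Applying Hoeffding to the conditionally independent $Y_j\in[0,O(\theta)]$: since $\sum_j\Exp[Y_j]=\Omega(\abs{S_v})$ and $\sum_j O(\theta)^2\le K\cdot O(\theta^2)=O(\abs{S_v}\,\theta)$, the event $Y<\tfrac12\sum_j\Exp[Y_j]$ has probability at most $\exp(-\Omega(\abs{S_v}/\theta))=\exp(-\Omega(\sqrt s))$, using $\abs{S_v}\ge s$ and $\theta=\Theta(\sqrt s)$. Combining the two failure events gives $Y=\Omega(\abs{S_v})$, hence slack $\Omega(\abs{S_v})$, with probability $1-\exp(-\Omega(\sqrt s))$. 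Every event above depends only on the random colors of nodes within $2$ hops of $v$ and the resulting conflicts, so it is unaffected by conditioning on random choices farther away.

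\emph{The main obstacle.} The heart of the argument is the bucket construction: the buckets must be fine enough (weight $\Theta(\sqrt s)$) that Chernoff yields exactly the target $\exp(-\Omega(\sqrt s))$, coarse enough ($K=O(\abs{S_v}/\sqrt s)$) that the Hoeffding variance $\sum_j R_j^2$ stays $O(\abs{S_v}\sqrt s)$, and such that each node's overlap $\abs{\pal_u\cap B_j}$ is small enough that conditional survival probabilities remain $\Omega(1)$, so the per-bucket yields sum back to $\Omega(\eps\abs{S_v})$; it must also cope with single heavy colors of weight exceeding $\sqrt s$ (possible when $\abs{S_v}\gg s^{3/2}$), for which a lone such color must be shown to still yield slack proportional to its weight. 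Carrying out this construction while tracking all the $\eps,\ehc$ dependencies so that they fit inside $C_{\eps,\ehc}$ is the delicate part; the remaining Chernoff/Hoeffding estimates are routine.
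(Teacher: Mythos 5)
Your overall architecture matches the paper's: reduce slack to (number of nodes keeping a heavy color) minus $\abs{\hcol}\le\ehc\abs{S_v}$, use a two-stage ``pick a bucket, then pick a color inside'' view, exploit conditional independence across buckets, and finish with Hoeffding on per-bucket contributions of size $O(\sqrt s)$ to get failure probability $\exp(-\Omega(\abs{S_v}/\sqrt s))=\exp(-\Omega(\sqrt s))$. That skeleton is sound. The gap is in the bucket construction itself, which you correctly identify as the heart of the argument but then get wrong. You partition only $\hcol$ into groups of total $S_v$-weight $\Theta(\sqrt s)$. This controls how many nodes of $S_v$ land in each bucket, but it gives no control over the quantity that actually determines survival: for a node $u$ that picked $c\in B_j$, the blockers are \emph{all} of $N^+(u)$, and conditioned on a neighbor $w$ landing in $B_j$, $w$ picks $c$ with probability $1/\abs{\pal_w\cap B_j}$. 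For your claim ``each node that landed in $B_j$ keeps its color with probability $\Omega(1)$'' you therefore need $\abs{\pal_w\cap B_j}$ to be \emph{large} (so each in-bucket blocker is diluted over many colors) for every $w\in N^2(v)$ — the opposite of the ``small per-node overlap'' you invoke. Your buckets can easily violate this: a singleton bucket $\{c\}$ (which you explicitly allow for very heavy colors) forces every in-bucket neighbor to pick $c$ with conditional probability $1$, so the conditional survival probability of $u$ is $0$ whenever any $w\in N^+(u)$ with $c\in\pal_w$ lands in that bucket, an event of probability $1-e^{-\Omega(H^{N(u)}(c))}$, which is close to $1$ precisely for heavy colors. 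Your fallback — that a lone super-heavy color ``must be shown to still yield slack proportional to its weight'' — is false in general: if all $S_v$-nodes holding $c$ share one low-degree common neighbor whose palette contains $c$, that neighbor tries $c$ with constant probability and kills every trial, so the contribution of $c$ concentrates at $O(1)$, not $\Omega(H(c))$.

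The paper's Claim~\ref{claim:buckets} resolves exactly this: it randomly partitions the \emph{entire colorspace} $\colspace$ into $W=\Theta(\sqrt s)$ buckets and shows (by Chernoff plus a union bound over $N^2(v)$, which is where the hypothesis $s\ge C_{\eps,\ehc}\ln^2\Delta$ is consumed) that every node $u\in N^2(v)$ has $\abs{B_i\cap\pal_u}=(1\pm\gamma)\abs{\pal_u}/W$ and, for heavy-rich nodes, a proportional share of heavy colors in every bucket. With that property, each in-bucket blocker collides with probability only $O(W/\abs{\pal_u})$ and the number of in-bucket blockers is $O(d_u/W)$, giving conditional survival probability $\Omega(1)$; no color, however heavy, ever occupies more than a $O(1/W)$ fraction of anyone's palette-mass within its bucket. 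Your construction, by contrast, cannot simultaneously satisfy your three desiderata, because grouping heavy colors by $S_v$-weight says nothing about how those colors sit inside the palettes of the two-hop neighbors that do the blocking. To repair the proof you should replace your partition of $\hcol$ by the random partition of $\colspace$ with the per-palette equidistribution property; the rest of your argument (conditioning on the bucket assignment, per-bucket $\Omega(1)$ yield, Hoeffding over $W$ independent bucket totals each bounded by $O(\abs{S_v}/W)$) then goes through essentially as you wrote it.
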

\begin{proof}
Let $\gamma =1/10$.
Notice that $C_{\eps,\ehc} = \frac {(6 \cdot 240)^2}{\eps^3\ehc^2\gamma^2(1-\gamma)}$.
Let $k = (2\eps/\ehc)\sqrt{s/C_{\eps,\ehc}} \geq (2\eps/\ehc)\ln \Delta$.
Let $W = \eps\gamma^2(1-\gamma) s/(6k)$,
and w.l.o.g., assume for simplicity of exposure that $W$ is an integer.
Observe that $s/W \geq  (6/(\eps\gamma^{2}))\cdot k \geq (12/(\ehc\gamma^{2}))\cdot\ln\Delta$, that $W = (6\cdot 120^2 /\eps^{4})\cdot k$,
and that both $k,W \in \Theta(\sqrt{s})$.

We will focus on the "heavy-color-rich" subset $\Vhe$ of $S_v$, i.e., those 
neighbors in $S_v$ with at least $\eps/2$-fraction of their colors heavy: $\Vhe = \{w \in S_v: |\pal_w\cap \hcol| \ge \eps |\pal_w|/2 \}$. 
We claim that $|\Vhe| \ge \eps \card{S_v}/2$.
Since $v$ is heavy, we have 
\[
\sum_{c\in \hcol} H(c)=\sum_{c\in \hcol}\sum_{w\in S_v, c\in \pal_w} \frac{1}{|\pal_w|}=\sum_{w\in S_v}\frac{|\hcol\cap \pal_w|}{|\pal_w|}\ge \eps \card{S_v}\ .
\]
Since each summand in the last sum is in $[0,1]$, at least $\eps/2$-fraction of nodes $w\in S_v$ must satisfy $|\pal_w\cap \hcol_v|/|\pal_w|\ge \eps/2$, establishing the claim.

We now  fix a partitioning of the colorspace $\colspace$ into buckets with the properties described in the following claim.

\begin{claim}\label{claim:buckets}
There exists a partitioning of colors $\colspace$ into $W$ buckets $B_1,\dots,B_{W}$ such that for  every node $u\in N^2(v)$ and every bucket $B_i$, it holds that 
\begin{compactenum}
    \item $|B_i \cap \pal_u| = (1 \pm \gamma)|\pal_u|/W$,
    \item for $u \in \Vhe$, $|B_i\cap \pal_u \cap \hcol_v|/|\pal_u\cap B_i| = (1 \pm \gamma) |\pal_u \cap \hcol_v|/|\pal_u|$. (The fraction of $u$'s color in bucket $i$ that are heavy is about the overall proportion.)
\end{compactenum}
\end{claim}
\begin{proof}
Consider a random partitioning of $\colspace$ into $W$ buckets  where each color is assigned a uniformly random bucket.
By Chernoff, for every node $u \in V$ and every bucket $B_i$,
the number of colors in $u$'s palette that fall in bucket $i$ is within $(1+\gamma)$-factor of the mean: 
\[ \Pr[ |B_i \cap \pal_u| = (1 \pm \gamma)|\pal_u|/W  ] \ge 1 - 2 \exp\parens*{-\gamma^2 \frac{|\pal_u|}{3W}} \geq 1-2\Delta^{-4} \ , \]
For every $u \in \Vhe$, the proportion of heavy colors in its palette that fall in each bucket $i$ is approximately the same as the overall proportion: 
\[
\Pr\left[\frac{|\pal_u \cap \hcol_v \cap B_i|}{|\pal_u\cap B_i|} = (1 \pm \gamma) \frac{|\pal_u \cap \hcol_v|}{|\pal_u|} \right] \ge 1-2\exp\parens*{-\gamma^2 \ehc \frac{|\pal_u|}{6W}} 
\ge 1-2\Delta^{-2}\ .
\] 
using that $|\pal_u|/W \ge s/W \ge 12 \ln\Delta/(\gamma^2 \ehc)$.
By a union bound over nodes in $N^2(v)$ and the $W$ buckets, the probability that the two properties above hold for all nodes $u\in N^2(v)$ is at least $1 - \Delta^2 W \cdot 2\Delta^{-4} + \Delta W \cdot 2\Delta^{-2} >0$. Thus, a partitioning of the colorspace with claimed properties exists. 
\end{proof}

We next show that each node has about the expected number of neighbors within each bucket,
using standard Chernoff bounds.
For a node $u$, let $B(u)\in \{B_i\}_{i=1}^W$ denote the bucket where the color tried by $u$ belongs. 
Let $D_u^i = |\{w\in N(u): B(w) = B_i\}|$ be the number of neighbors of a node $u\in S_v$ whose color is in bucket $B_i$. 
Observe that for any node $u$, index $i\in [W]$ and $w\in N(u)$, $B(w)=B_i$ holds w.p.\ $|\pal_w\cap B_i|/|\pal_w|=(1\pm\gamma)/W$, by \cref{claim:buckets}; hence, we have $\Exp[D_u^i] \le (1 + \gamma)d_u/W$.
Thus, by Chernoff, for each fixed $u$ and $i$,
\begin{equation}
\Pr[D_u^i \le (1 + \gamma)^2 d_u/W] \ge 1-2\exp(-\gamma^2(1-\gamma)d_u/W) \ge 1-\exp(-3k)\ ,
\label{eq:Dui}
\end{equation}
and this holds for all
$u\in S$ and $i\in [W]$ simultaneously, w.p.\ $1-\Delta W\exp(-3k) > 1-\exp(-k)$.
Also, let $\Vhe_i = \{u\in \Vhe: B(u) = B_i\}$ be the nodes $u\in \Vhe$ whose color is in bucket $B_i$.
With an identical argument we have that for all $i$,
$\card{\Vhe_i} \le (1 +\gamma)^2\card{\Vhe}/W$, w.p.\ $1-\exp(-k)$.

What follows is conditioned on the high-probability events 
that each node in $S_v$ has about the same number of neighbors in each bucket, and that there are about the same number of $\Vhe$ nodes in each bucket.
We assume that for every node $u$, $B(u)$ is given, and its color $\col_u$ is random in $B(u)$. For a node $u$, let $N^+_B(u)$ denote the set of neighbors $w$ such that $d_w\ge d_u$ and $B(w)=B(u)$.

We now bound the expected number of nodes (in $\Vhe$) that get colored with heavy colors.
For a node $u\in \Vhe$, let $Z_u$ be the indicator random variable of the event that
$u$ picks and keeps a heavy color, i.e., that $u$ picks $\psi_u\in \hcol$ and all nodes in $N^+_B(u)$ pick a color different from $\psi_u$. 
Let $Z = \sum_{u \in \Vhe} Z_u$. 
Nodes choose their colors independently.
The probability that a given node $w\in N^+_B(u)$ picks a color other than $\psi_u$ 
is at least $p = 1-W/((1-\gamma)|\pal_u|)$, since, by \cref{claim:buckets} (and since $d_w\ge d_u$), it has at least $(1-\gamma)|\pal_w|/W \ge (1-\gamma)|\pal_u|/W$ colors in bucket $B(u)$. By the conditioning above, we also have $|N^+_B(u)|\le D_u^{B(u)}\le  (1+\gamma)^2 d_u/W$. 
Since $u\in \Vhe$, the probability that it picks a heavy color, i.e., $\psi_u\in \hcol_v$, is at least $\eps/2$.
Putting together,
\[ \Exp_\sigma[Z_u] \ge p^{|N^+_B(u)|}
\ge \frac{\eps}{2} \parens*{1-\frac{W}{(1-\gamma)(d_u+1)}}^{(1+\gamma)^2d_u/W} \\
 \ge \frac{\eps}{2} \cdot \exp \parens*{-\frac{2(1+\gamma)^2}{1-\gamma}} \ge \frac{\eps}{30} \ .  \]
Thus, 
 \[ \Exp[Z] = \sum_{u \in \Vhe} Z_u \ge |\Vhe|\cdot \eps/30 \ge \eps^2 \card{S_v}/60 \ . \]

We now derive concentration bounds on $Z$, the number of nodes colored with heavy colors.
The variables $Z_u$ of nodes in the same bucket are highly dependent, so we instead consider $Z_i$, the number of heavy-colored nodes of $\Vhe$ in bucket $B_i$. Importantly, the $Z_i$ are independent and $Z=\sum_{i\in [W]} Z_i$.
From the conditioning above, each $Z_i$ is bounded by $0 \le Z_i \le |\Vhe_i| \le (1+\gamma)^2|\Vhe|/W \le 2 \card{S_v}/W]$.
Applying 
Hoeffding's inequality (\cref{lem:chernoff-various-ranges}) with $a_i = 0$, $b_i = 2\card{S_v}/W$, $t = \eps^2 \card{S_v}/120$, and $\sum_i (b_i - a_i)^2
= 4\card{S_v}^2/W$, we obtain that 
\[ \Pr\event*{Z \ge \eps^2 \card{S_v}/120}
\ge 1 - 2\exp\parens*{\frac{\eps^{4}}{2\cdot 120^2} W} = 1 - 2\exp(-3k)\ , \]  
using the definition of $W$.

The number of nodes colored heavy minus the number of heavy colors used gives us slack.
The number of heavy colors is at most $\ehc \card{S_v}$, by \cref{obs:colorweightsum}. 
Thus, as long as $\eps^2/120 \ge 2 \ehc$, the slack obtained is at least $\card{S_v} \eps^2/240  = \Omega(\card{S_v})$, w.p.\ $1 - \exp(-\Omega(k))$.
We had conditioned on the bounds on the $D_u^i$ and $\Vhe_i$ holding. Taking this into account, the probability that $v$ fails to attains slack of at least $\card{S_v} \eps^2/120$ is at most $1-\exp(-\Omega(k)) = 1-\exp(-\Omega(\sqrt{s}))$.
\end{proof}

The light colors have the useful property that other neighbors are not too likely to pick them and with that destroy a successful edge.
\begin{lemma}
Let $S_v \subseteq N(v)$. Given that a pair $u,w\in S_v$ of nodes (a single node $u\in S_v)$) picked a color $c$ of weight $H^{(S_v)}(c)$ in $S_v$, the probability that no other node $w'\in N^+(u)\cup N^+(w)\cup S_v$ ($w'\in N^+(u)\cup N(v)$) picks $c$ is $\exp(-O(H^{(S_v)}(c)))$. If $c$ is light in $S_v$, this probability is $\Omega(1)$.
\label{lem:light}
\end{lemma}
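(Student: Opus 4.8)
The plan is to condition on the stated events and exploit the independence of the candidate colors drawn in \slackgeneration. Fix the candidate colors of $u$ and $w$ (resp.\ of the single node $u$) to be $c$; since every node draws its candidate color independently and its membership in the sampled set is likewise independent, the behaviour of all other nodes is unaffected by this conditioning. Write $T$ for the target set of ``other'' nodes -- $T=(N^+(u)\cup N^+(w)\cup S_v)\setminus\{u,w\}$ in the pair case and $T=(N^+(u)\cup N(v))\setminus\{u\}$ in the single-node case. Only nodes $w'\in T$ with $c\in\pal_{w'}$ can possibly propose $c$, and each such $w'$ does so with probability at most $1/|\pal_{w'}|\le 1/2$ (using $|\pal_{w'}|\ge d_{w'}+1\ge 2$), independently across $w'$.

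By independence and the inequality $1-x\ge e^{-2x}$ for $x\in[0,1/2]$,
\[
\Pr\bigl[\text{no node of }T\text{ proposes }c\bigr]
\;=\;\prod_{\substack{w'\in T\\ c\in\pal_{w'}}}\Bigl(1-\tfrac{1}{|\pal_{w'}|}\Bigr)
\;\ge\;\exp\Bigl(-2\sum_{\substack{w'\in T\\ c\in\pal_{w'}}}\tfrac{1}{|\pal_{w'}|}\Bigr),
\]
so it suffices to bound $\Sigma:=\sum_{w'\in T,\,c\in\pal_{w'}}1/|\pal_{w'}|$ by $O(H^{(S_v)}(c))$. The contribution of the nodes $w'\in S_v$ to $\Sigma$ is exactly $H^{(S_v)}(c)$ by \cref{def:heavy-colors}. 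The remaining contribution, from $N^+(u)\cup N^+(w)\subseteq N(u)\cup N(w)$, is the part that needs care: it is bounded by $O(H^{(S_v)}(c))$ using that $c$ was proposed by $u,w\in S_v$ (hence $1/|\pal_u|,1/|\pal_w|\le H^{(S_v)}(c)$) together with the structural fact, present in every configuration where this lemma is invoked, that $N(u)$ and $N(w)$ differ from $S_v$ only on comparatively few nodes whose palettes lie within a constant factor of those counted in $H^{(S_v)}(c)$, so that $\sum_{w'\in N(u),\,c\in\pal_{w'}}1/|\pal_{w'}|$ and $\sum_{w'\in N(w),\,c\in\pal_{w'}}1/|\pal_{w'}|$ are each $O(H^{(S_v)}(c))$. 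This yields $\Sigma=O(H^{(S_v)}(c))$ and hence the first claim, $\Pr[\cdots]\ge\exp(-O(H^{(S_v)}(c)))$. For the second claim, an $\ehc$-light color satisfies $H^{(S_v)}(c)<1/\ehc$ by \cref{def:heavy-colors}, so the bound reads $\exp(-O(1/\ehc))=\Omega(1)$ since $\ehc$ is a fixed constant.

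The step I expect to be the main obstacle is precisely this last reduction: controlling $\sum_{w'\in N^+(u)\cup N^+(w),\,c\in\pal_{w'}}1/|\pal_{w'}|$ by $H^{(S_v)}(c)$. A priori $N(u)$ or $N(w)$ could hold $c$ in many small palettes (low-degree neighbors), each contributing a large collision term, so the bound does not follow from lightness in $S_v$ alone; it relies on how $u$ and $w$ sit relative to $v$ in the settings of interest -- balanced neighbors in the sparse case (\cref{lem:balanced-disc,lem:balanced-spar}) and almost-clique structure in the dense case -- where $N(u)$ and $N(w)$ essentially coincide with $S_v$ up to lower-order terms.
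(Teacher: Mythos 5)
Your skeleton (independence of the color choices, lower-bounding the product $\prod(1-1/|\pal_{w'}|)$ by $\exp(-O(\Sigma))$, the $S_v$-part of $\Sigma$ being exactly $H^{(S_v)}(c)$, and lightness giving $\Omega(1)$) matches the paper's proof. But the step you yourself flag as the main obstacle --- bounding the contribution of $N^+(u)\cup N^+(w)$ --- is a genuine gap, and the resolution you sketch is the wrong one. You try to argue $\sum_{w'\in N(u),\,c\in\pal_{w'}}1/|\pal_{w'}|=O(H^{(S_v)}(c))$ via a ``structural fact'' that $N(u)$ and $N(w)$ essentially coincide with $S_v$. That fact is neither proved nor true in the settings where the lemma is used: e.g.\ in the sparse case $S_v=N(v)$, and a neighbor $u$ of $v$ can have a neighborhood entirely disjoint from $N(v)\setminus\{v\}$, containing many low-degree nodes whose palettes all contain $c$. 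Nothing ties $N(u)$ to $S_v$.

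The point you are missing is that the lemma only asks about $N^+(u)$, not $N(u)$, and by the priority convention of \trycolor{} (as used throughout the paper, e.g.\ in the proof of \cref{lem:slack-heavy}) $N^+(u)$ consists of neighbors $w'$ with $d_{w'}\ge d_u$, hence $|\pal_{w'}|\ge|\pal_u|$. Since also $|N^+(u)|\le d_u<|\pal_u|$, one gets
\[
\prod_{w'\in N^+(u)}\Bigl(1-\tfrac{1}{|\pal_{w'}|}\Bigr)\;\ge\;\Bigl(1-\tfrac{1}{|\pal_u|}\Bigr)^{d_u}\;\ge\;\tfrac1e,
\]
a constant, with no reference to $H^{(S_v)}(c)$ and no structural assumption on how $N(u)$ relates to $S_v$; the same holds for $N^+(w)$. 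This is exactly how the paper argues. The dangerous low-degree neighbors of $u$ that worry you sit in $N^-(u)$, which the statement deliberately excludes. With that fix the rest of your argument goes through: $\Sigma\le H^{(S_v)}(c)+O(1)$, giving $\exp(-O(1+H^{(S_v)}(c)))$, which is $\Omega(1)$ for light colors since then $H^{(S_v)}(c)\le 1/\ehc$.
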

\begin{proof}
The probability that no node in $N^+(u)$ picks $c$ is at least $\prod_{u'\in N^+(u)}(1-1/|\pal_{u'}|)\ge (1-1/|\pal_u|)^{d_u}\ge 1/e$. Same holds for $N^+(w)$ for the pair of nodes case.
For a color $c$, let $S(c)=\{u\in S_v : \pal_u\ni c\}$
The probability that no node in $S_v$ picks $c$ is at least $\prod_{u'\in S(c)}(1-1/|\pal_{u'}|)\ge 4^{-\sum_{u'\in S(c)}1/|\pal_{u'}|} = 4^{-H^{(S_v)}(c)}$. If $c$ is light in $S_v$, $H^{(S_v)}(c) \leq 1/\ehc$ so this probability is $\Omega(1)$.
\end{proof}

\subsection{Discrepancy}
\label{ss:disc}

We show that nodes get slack proportional to their discrepancy. Similar to above, we state the result in terms of a subset $S_v\subseteq N(v)$ of neighbors, also modifying the definition of discrepancy.
The discrepancy of node $v$ w.r.t.\ a set $S_v$ is
$\disc_v(S_v)=\sum_{u \in S_v} \disc_{u,v}$.
A node $v$ is $\eps$-\emph{discrepant} w.r.t.\ $S_v$ if $\disc_v(S_v) \geq \eps |S_v|$.

\begin{lemma}
Let $\eps \in (0,1)$ be a constant and $S_v \subseteq N(v)$. There is a constant $C_{\eps}$ such that, if a value $s\ge C_{\eps} \ln^2 \Delta$ exists such that $\min(\set{d_u, u\in S_v}) \geq s$ and $|S_v| \ge s$,
then a $\eps$-discrepant node w.r.t.\ $S_v$ gets slack $\Omega(|S_v|)$, w.p.\ $1-\exp(-\Omega(\sqrt{s}))$.
\label{L:disc-slack}
\end{lemma}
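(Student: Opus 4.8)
The plan is to reduce the statement to the two slack-generation mechanisms already developed in this section, according to where in color space the discrepancy is concentrated. First I would rewrite the discrepancy as a sum over colors: interchanging the order of summation gives $\disc_v(S_v)=\sum_{u\in S_v}\disc_{u,v}=\sum_{c\notin\pal_v}H^{(S_v)}(c)$, where $H^{(S_v)}(c)=\sum_{u\in S_v,\,c\in\pal_u}1/\card{\pal_u}$. Split this as $\disc_v(S_v)=\disc^{\mathrm h}_v+\disc^{\mathrm l}_v$, with $\disc^{\mathrm h}_v=\sum_{c\in\hcol_v\setminus\pal_v}H^{(S_v)}(c)$ the contribution of colors heavy for $v$ w.r.t.\ $S_v$ and $\disc^{\mathrm l}_v$ the contribution of the light ones. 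Fix any constant $\eps'$ with $\sqrt{240\,\ehc}<\eps'<\eps$, which exists since $\ehc<\eps^2/240$ by the global choice of constants. If $\disc^{\mathrm h}_v\ge\eps'\card{S_v}$, then $\sum_{c\in\hcol_v}H^{(S_v)}(c)\ge\disc^{\mathrm h}_v\ge\eps'\card{S_v}$, so $v$ is $\eps'$-heavy w.r.t.\ $S_v$, and since $\ehc\le(\eps')^2/240$, \cref{lem:slack-heavy} applies with the given $s$ and yields slack $\Omega(\card{S_v})$ w.p.\ $1-\exp(-\Omega(\sqrt s))$. (This uses that \cref{lem:slack-heavy} extracts slack from reused heavy colors, which is valid whether or not those colors lie in $\pal_v$.) This fixes $C_\eps$ as (essentially) $C_{\eps',\ehc}$.

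It remains to treat the complementary case $\disc^{\mathrm l}_v\ge(\eps-\eps')\card{S_v}=\Omega(\card{S_v})$, which is the crux and mirrors the proof of \cref{lem:balanced-disc}. For each light color $c\notin\pal_v$ let $A_c$ be the event that some node of $S_v$ tries $c$ during \slackgeneration, and $B_c$ the event that $c$ is kept by a node $u\in S_v$ that is the unique trier of $c$ among $N^+(u)\cup N(v)$; set $h=\sum_{c\in\lcol_v\setminus\pal_v}\mathbb{I}_{A_c\cap B_c}$. Every color counted by $h$ witnesses a distinct neighbor of $v$ permanently colored outside $\pal_v$ (its degree drops without its palette shrinking), so $h$ lower-bounds the slack that $v$ obtains. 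By \cref{lem:light}, conditioned on a node of $S_v$ trying a fixed light color $c$, the probability that it keeps $c$ and no other node of $N^+(u)\cup N(v)$ picks $c$ is $\Omega(1)$; for distinct nodes these events are disjoint, so summing over nodes and colors gives $\Exp[h]=\Omega(\pgen\cdot\disc^{\mathrm l}_v)=\Omega(\card{S_v})$. Finally set $f=\sum_c\mathbb{I}_{A_c}$ and $g=f-h$; exactly as in \cref{lem:balanced-disc}, $f$ and $g$ are $1$-Lipschitz and $O(1)$-certifiable functions of the independent random color choices with $\Exp[f]\ge\Exp[h]$, so \cref{lem:talagrand-difference} gives $h=\Omega(\Exp[h])=\Omega(\card{S_v})$ w.p.\ $1-\exp(-\Omega(\Exp[h]))\ge1-\exp(-\Omega(s))\ge1-\exp(-\Omega(\sqrt s))$. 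In both cases only the random choices of nodes in $N^2(v)$ enter, so all bounds hold conditioned on arbitrary choices made elsewhere.

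I expect the heavy regime to be the crux of why the proof has this shape. Heavy colors both break concentration — a single node changing its color can flip the success of the many nodes that all tried the same heavy color, so ``number of neighbors colored outside $\pal_v$'' is not tractable by a naive bounded-differences or Talagrand argument — and, being reused by possibly many nodes, contribute only $O(\card{\hcol_v})=O(\ehc\card{S_v})$ worth of discrepancy-based slack in the worst case. The resolution, as above, is to route the heavy regime entirely through \cref{lem:slack-heavy} (whose own proof handles the difficulty by bucketing the color space and applying a Hoeffding bound over the independent per-bucket contributions) and to run the certifiable-function argument only on light colors, where \cref{lem:light} supplies the needed $\Omega(1)$ survival probability. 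The remaining care is purely bookkeeping of the small constants so that the split threshold $\eps'$ sits in the admissible window; this is routine given the global constraints.
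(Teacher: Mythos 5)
Your proposal is correct and follows essentially the same route as the paper's proof: split the discrepancy into the contributions of heavy and light colors, dispatch the heavy case to \cref{lem:slack-heavy}, and handle the light case with \cref{lem:light} plus the certifiable-function concentration bound of \cref{lem:talagrand-difference}. The only cosmetic difference is that the paper phrases the case split as ``$v$ is $2\eps/3$-heavy or not'' rather than thresholding the heavy part of the discrepancy directly, which is equivalent up to the choice of constants.
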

\begin{proof}
We show that $v$ gets slack both when its discrepancy is due to heavy colors, and when its discrepancy is due to light colors, where the weight of colors is defined w.r.t.\ $S_v$ and a color is heavy if $H(c) \geq 540/\eps^2$.

If $v$ is $2\eps/3$-heavy within $S_v$, then it gets $\Omega(|S_v|)$ slack with the announced probability by \cref{lem:slack-heavy}, applied with $\eps \leftarrow 2\eps/3$ and $\ehc \leftarrow (2\eps/3)^2/240$. Otherwise, a constant fraction of the discrepancy comes from colors light within $S_v$, since:
\[ \sum_{u\in S_v} \frac{\card{\lcol \cap \pal_u \setminus \pal_v}}{\card{\pal_u}}
    =  \sum_{u\in S_v} \frac{\card{\pal_u \setminus \pal_v}}{\card{\pal_u}} - 
       \sum_{u\in S_v} \frac{\card{\hcol \cap \pal_u \setminus \pal_v}}{\card{\pal_u}} 
    \geq \disc_v(S_v) - \frac{2\eps}{3} |S_v| \geq \frac{\eps}{3}|S_v|.\]

Let $\col_u$ be the color tried by $u$ in \slackgeneration. The bound above expresses that $\Exp[\card{\set{u \in S_v: \col_u \in \lcol\setminus\pal_v}}] \geq \eps |S_v|/3$.
For a light color outside of $v$'s palette $c \in \lcol \setminus \pal_v$,  we say it is \emph{successfully} tried if a node $u$ in $S_v$ tries it while no other node in $S_v\cup N^+(u)$ does. Let $h$ be the number of nodes in $S_v$ successfully trying a color in $\lcol \setminus \pal_v$. By \cref{lem:light}, any trial of a light color succeeds w.p.\ $\Omega(1)$, so $\Exp[h]\in\Omega(|S_v|)$. Note that $h$ equivalently counts the number of colors in $\lcol \setminus \pal_v$ that are successfully tried by some (necessarily unique) node in $S_v$.

Let $A_c$ be the event that some neighbor $u\in S_v$ tries it, and let $f = \sum_{c \in \lcol \setminus \pal_v} \bbI_{A_c}$.
Let $B_c$ be the event that $c$ is \emph{successfully} tried. By our previous definition of $h$,  $h = \sum_{c \in \lcol \setminus \pal_v} \bbI_{A_c \cap B_c}$. We showed above that $\Exp[h] \in \Omega(|S_v|)$. $h$ is a lower bound on the slack received by $v$ during \slackgeneration, since it counts a subset of the neighbors of $v$ that color themselves with a color outside $\pal_v$.

 Consider the number $g = f - h$ of light colors that are unsuccessfully tried in $S_v$. $f$ and $g$ are $1$-Lipschitz and $2$-certifiable, as we only need to reveal the random color choices of $2$ nodes to show that some light color outside of $v$'s palette was unsuccessfully tried by one of its neighbors.
 $\Exp[f] \geq \Exp[h]$ and $\Exp[f] \leq \eps|S_v|/3$ so $\Exp[f] \in \Theta(|S_v|)$.  By \cref{lem:talagrand-difference}, $\card{h - \Exp[h]} \in \Omega(\Exp[h])$ w.p.\ $1-\exp(-\Omega(\Exp[h]))$.  Hence in that case, $v$ gets $\Omega(|S_v|)$ slack w.p.\ $1-\exp(-\Omega(|S_v|)) \geq 1-\exp(-\Omega(\sqrt{s}))$.
\end{proof}

\subsection{Sparse and Uneven Nodes}
\label{ss:tough}
We prove here our claim about slack generated for sparse and uneven nodes.
Recall that $\Veasy$ consists of the sparse nodes that are either balanced or discrepant, along with the uneven nodes, and nodes with at least $\ehat$-fraction of their neighbors being dense.
$\Vst$ are the sparse nodes that are neither heavy nor in $\Veasy$ but have at least $\ehat$-fraction of their neighbors in $\Veasy$.

\slacklemmasparse*

\begin{proof}
Each sparse or uneven node that is adjacent to $\ehat^2 d_v/10^4$ dense nodes gets slack from them. For simplicity of exposition we ignore dense neighbors in the coming arguments, which is w.l.o.g.\ since they immediately give slack and considering $q\leq \ehat^2 d_v/10^4$ neighbors can only modify our quantities (discrepancy, sparsity, balance, heaviness) by $q$. 
An uneven node $v$ receives slack $\Omega(d_v)$,  
w.p.\ $1 - \exp(-\Omega(\sqrt{s}))$, by \cref{L:disc-slack}, and the same holds for a sparse discrepant node.
A balanced sparse node $v$ obtains slack $\Omega(d_v)$, w.p.\ $1-\exp(-\Omega(d_v))$, by \cref{L:balanced}.
A heavy sparse node gets slack $\Omega(d_v)$,  
w.p.\ $1 - \exp(-\Omega(\sqrt{s}))$, by \cref{lem:slack-heavy}.
Any node $v$ that is adjacent to $\ehat d_v$ nodes that are either balanced, discrepant, or uneven, is added to $\Vst$. 
The remaining \emph{tough} nodes 
are light, unbalanced, non-discrepant, and not in $\Vst$:
$\Vtough = \Vsp \setminus (\Veasy \cup \Vheavy \cup \Vst)$, 
As we show in the remainder of this subsection, culminating in \cref{lem:tough-slack}, tough nodes get slack $\Omega(d_v)$ w.p.\ $1-\exp(-\Omega(d_v))$.
\end{proof}

\paragraph{Tough nodes}
Intuitively, the 'tough nodes' are sparse nodes for which none of the other arguments work, which gives them particular properties. 
We show that these properties imply that many non-edges in their neighborhoods, in expectation, have their endpoints try the same (light) color, which then leads to slack by successful non-edges.

We bound this number of non-edges by arguing that a set of unbalanced and non-discrepant nodes needs to have edges with the outside of their set. The argument goes as follows: an unbalanced node needs to be adjacent to nodes of lower degree. A non-discrepant node is also non-uneven, it is therefore connected to few nodes of higher degree. Each unbalanced edge connects a node of lower degree with one of higher degree. The nodes in the neighborhood of a tough node simultaneously need to be adjacent to many unbalanced edges while mostly only being able to act as the endpoints of higher degree. This implies that they need edges to outside the neighborhood of $v$. This creates a number of non-edges in a tough node's neighborhood that, due to the tough nodes' low discrepancy and low heavy colors, each contribute a constant amount of slack in expectation.

 Our goal is to bound the expected number $Z_v$ of non-adjacent node pairs in $N(v)$ that pick equal light colors. Then, the generated slack can be lower-bounded along the lines of a standard argument.
 
 Consider the sums
 $\UP=\sum_{u\neq w\in N(u)}\frac{|\pal_u\cap \pal_w \cap L|}{|\pal_u|\cdot|\pal_w|}$ and $\UE=\sum_{u,w\in N(u), uv\in E}\frac{|\pal_u\cap \pal_w\cap L|}{|\pal_u|\cdot|\pal_w|}$ that measure the expected number of (unordered) pairs of nodes in $N(v)$ that pick equal light colors (i.e., from $L$), and the expected number of \emph{adjacent} pairs of nodes in $N(v)$ that pick equal light colors. Note that $Z_v=\UP/2-\UE$. We bound $Z_v$ by obtaining bounds on $\UP$ and $\UE$.

 \begin{lemma}
 If $v$ is tough, then the expected number $Z_v$ of non-adjacent node pairs in $N(v)$ that pick equal light colors is at least 
   $Z_v\ge [(1-\ehat)\eub/2 - 5 \ehat -2 \sqrt{\ehat}]d_v$.
 \label{lem:tough-expectancy}
 \end{lemma}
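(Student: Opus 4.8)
The plan is to read $Z_v=\UP/2-\UE$ as a weighted count of good (non‑edge, light‑collision) pairs in $N(v)$, clean up the neighbourhood, establish cheap bounds on $\UP$ and $\UE$, and then extract the missing $\Omega(\eub)\cdot d_v$ surplus from the unbalancedness of $v$ by an orientation‑by‑degree argument, using the sparsity of $v$ to keep that surplus from being swallowed by the known bad configurations.

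\textbf{Reformulation and cleaning.} Write $S_c=\{u\in N(v):c\in\pal_u\}$ and $H(c)=\sum_{u\in S_c}|\pal_u|^{-1}$. Rearranging sums, $\UP=\sum_{c\in L}\bigl(H(c)^2-\sum_{u\in S_c}|\pal_u|^{-2}\bigr)$ and $\UE=\sum_{c\in L}\sum_{uw\in E[S_c]}|\pal_u|^{-1}|\pal_w|^{-1}$, so $Z_v=\sum_{c\in L}\nu_c=\sum_{\{u,w\}:\,uw\notin E,\,u,w\in N(v)}\frac{|\pal_u\cap\pal_w\cap L|}{|\pal_u||\pal_w|}$, where $\nu_c$ is the $|\pal_u|^{-1}|\pal_w|^{-1}$‑weighted number of non‑edges of $G[S_c]$. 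Since every neighbour of a tough node has degree at least $s\ge C\ln^2\Delta$, all stray sums $\sum_{u\in N(v)}|\pal_u|^{-1}$ are $o(d_v)$ and are absorbed silently; as before, we also assume $|\pal_u|=d_u+1$ for all $u$ (extra colours only help) and ignore the at most $\ehat^2 d_v/10^4$ dense neighbours of $v$. Since $v$ is non‑discrepant, $\disc_v=\sum_{u\in N(v)}\disc_{u,v}<\ehat d_v$, so by Markov at most $\sqrt{\ehat}\,d_v$ neighbours $u$ have $\disc_{u,v}\ge\sqrt{\ehat}$; keep the remaining \emph{aligned} set $N'$, each $u\in N'$ satisfying $|\pal_u\setminus\pal_v|<\sqrt{\ehat}|\pal_u|$. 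Finally $\unev_v\le\disc_v<\ehat d_v$ forces all but $O(\ehat)\,d_v$ neighbours to have degree below $2d_v$.

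\textbf{Cheap bounds on $\UP$ and $\UE$.} The useful colour weight sits inside $\pal_v$: since $\sum_{c\in\pal_v}H(c)=\sum_{u\in N(v)}(1-\disc_{u,v})=d_v-\disc_v>(1-\ehat)d_v$, while $\sum_{c\notin\pal_v}H(c)=\disc_v<\ehat d_v$ and (as $v$ is light) $\sum_{c\in\hcol_v}H(c)<\ehat d_v$, we get $\sum_{c\in L\cap\pal_v}H(c)>(1-2\ehat)d_v$; this weight is carried by at most $|\pal_v|=d_v+1$ colours, so Cauchy–Schwarz gives $\sum_{c\in L}H(c)^2\ge(1-2\ehat)^2 d_v(1-o(1))$, i.e.\ $\UP/2\ge\tfrac12(1-2\ehat)^2 d_v-o(d_v)$. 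For $\UE$, bounding $\sum_{w\in N(u)\cap S_c}|\pal_w|^{-1}\le\sum_{w\in N(u)\cap N(v)}|\pal_w|^{-1}$ inside each $\nu_c$ and swapping sums yields $\UE\le\tfrac12\sum_{w\in N(v)}\frac{|N(w)\cap N(v)|}{|\pal_w|}<\tfrac12 d_v$, using $|N(w)\cap N(v)|\le d_w-1<|\pal_w|$. These two bounds alone give only $Z_v>-2\ehat d_v$, so a surplus of order $\eub d_v$ still has to be found.

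\textbf{The surplus from unbalancedness (the hard step).} Let $B=\{u\in N(v):d_u<2d_v/3\}$, so $|B|\ge\eub d_v$ by unbalancedness, and each $u\in B$ has $|\pal_u|=d_u+1<2d_v/3$ and at least $d_v-d_u>d_v/3$ non‑neighbours inside $N(v)$. Orient every edge of $G$ from its higher‑ to its lower‑degree endpoint; then the $\ge\eub d_v$ edges $vu$ with $u\in B$ are all oriented $v\to u$, and combined with the scarcity of high‑degree neighbours of $v$ (non‑unevenness) this forces a gap between $\sum_{u\in N(v)}d^-(u)$ and $\sum_{u\in N(v)}d^+(u)$; pushed into $G[N(v)]$, the gap forces the $B$‑vertices collectively to be non‑adjacent to $\Omega(\eub d_v^2)$ vertex‑slots of $N(v)\setminus B$, which refines the trivial estimate to $\UE\le(\tfrac12-\tfrac{\eub}{2})d_v+O(\ehat+\sqrt{\ehat})\,d_v+o(d_v)$. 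Sparsity is exactly what makes these non‑edges useful: $\zeta_v\ge\espa d_v$ with $\espa-\eub\in\Omega(1)$ prevents $G[B]$ and $G[N(v)\setminus B]$ from both being near‑cliques (which would place all the forced non‑edges between $B$ and $\overline B$, where palettes could be disjoint), and together with $|\pal_u\setminus\pal_v|<\sqrt{\ehat}|\pal_u|$ on the aligned vertices it ensures that the counted non‑edges genuinely carry light‑colour collision mass. Substituting the refined $\UE$ bound into $Z_v=\UP/2-\UE$ and collecting constants — using $\tfrac12(1-2\ehat)^2\ge\tfrac12-2\ehat$ and that $\eub$ dominates $\sqrt{\ehat}$ — gives $Z_v\ge\bigl[(1-\ehat)\eub/2-5\ehat-2\sqrt{\ehat}\bigr]d_v$, the $5\ehat$ and $2\sqrt{\ehat}$ collecting the dense, misaligned, heavy‑colour, and small‑discrepancy exceptions set aside above.

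I expect essentially all of the difficulty to be concentrated in the surplus step, namely in making the orientation gap quantitative: one must juggle degrees, the exact palette sizes $|\pal_u|=d_u+1$, the disparities $\disc_{u,v}$, and the edge/non‑edge split of $G[N(v)]$ simultaneously, and show that the $\ge\eub d_v$ low‑degree neighbours, against the sparsity constraint $\espa>\eub$, buy a full $\tfrac{\eub}{2}d_v$ improvement over the trivial $\UE<d_v/2$ bound rather than only a $\Theta(\eub^2)d_v$ improvement (which would not survive the $-2\ehat d_v$ slack). That bookkeeping is what the detailed argument in this subsection carries out.
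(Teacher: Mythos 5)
Your skeleton matches the paper's: write $Z_v=\UP/2-\UE$, lower-bound $\UP$ by Cauchy--Schwarz over the light colors of $\pal_v$ (using that $v$ is non-discrepant and light), upper-bound $\UE$, and find an $\Omega(\eub d_v)$ improvement over the trivial $\UE<d_v/2$. The first two steps are correct and are exactly what the paper does. (One side remark: your worry that sparsity is needed to make the non-edges ``carry light-colour collision mass'' is misplaced --- in the $\UP/2-\UE$ accounting the non-edge collision mass \emph{is} $\UP/2-\UE$ by definition, the aggregate overlap is already encoded in the Cauchy--Schwarz bound on $\sum_c H(c)^2$, and the paper's proof of this lemma never invokes $\zeta_v$.)

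The genuine gap is in your surplus step, and it is not just bookkeeping. You derive the surplus from $v$'s \emph{own} unbalancedness: $B=\{u\in N(v):d_u<2d_v/3\}$ with $|B|\ge\eub d_v$, each $u\in B$ having $\ge d_v/3$ non-neighbors in $N(v)$, and you claim this refines the bound to $\UE\le(\tfrac12-\tfrac{\eub}{2})d_v+\cdots$. It cannot: the trivial bound is $2\UE\le\sum_{u\in N(v)}|N(u)\cap N(v)|/|\pal_u|$, and for $u\in B$ the summand is at most $d_u/(d_u+1)<1$ \emph{whether or not} $u$ has many non-neighbors in $N(v)$ --- a low-degree $u$ with all of $N(u)$ inside $N(v)$ still contributes essentially $1$, because the normalization $|\pal_u|=d_u+1$ shrinks with $d_u$. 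So the $\Omega(\eub d_v^2)$ non-edge slots incident to $B$ buy you nothing in this estimate, and they cannot be converted directly into $Z_v$ either, since a pair $u\in B$, $w\in N(v)$ can have disjoint palettes even when both are aligned with $\pal_v$ (e.g.\ $|\pal_u|+|\pal_w|\le|\pal_v|$). The paper's surplus comes from the unbalancedness of the \emph{neighbors}, which is what $v\notin\Vst$ supplies: at least $(1-\ehat)d_v$ neighbors $u$ are themselves $\eub$-unbalanced, so each has a set $Q_u$ of $\ge\eub d_u\ge(\eub/2)|\pal_u|$ neighbors of degree $\le 2d_u/3$; and since at least $(1-\ehat)d_v$ neighbors $w$ of $v$ are non-discrepant (hence non-uneven), each such $w$ can be the low-degree endpoint of only $O(\sqrt{\ehat})$ weight of these unbalanced edges inside $N(v)$. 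A double count then shows $\sum_u|Q_u\cap N(v)|/|\pal_u|\le(\ehat+2\sqrt{\ehat})d_v$ while $\sum_u|Q_u|/|\pal_u|\ge(1-\ehat)\eub d_v/2$, so $\sum_u|N(u)\setminus N(v)|/|\pal_u|\ge[(1-\ehat)\eub/2-\ehat-2\sqrt{\ehat}]d_v$, which is precisely the refinement of $\UE$ you need. Your orientation-by-degree picture is the right intuition (it is the one in the paper's overview), but it must be instantiated on the neighbors' degrees, not on $v$'s; as written, your argument does not close.
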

 \begin{proof}
 
 Let $\mathcal{C}_v=\cup_{u\in N(v)}\pal_u$. For a color $c$, let $N(c)=\{u\in N(v) : \pal_u\ni c\}$.  Recall that $H(c)=\sum_{u\in N(c)}1/|\pal_u|$. Let $L$ be the set of light colors. We have:
\begin{claim}
 $\sum_{c\in L} H(c)\ge (1-\ehat)d_v$, and $\sum_{c\in \pal_v} H(c) \ge (1-\ehat) d_v$.
\end{claim}
\begin{proof}
 The first claim follows from the definition of $L$ and \cref{obs:colorweightsum}, and the second from the definition of discrepancy, since $\sum_{c\in \mathcal{C}_v\setminus \pal_v} H(c) = \sum_{u\in N(v)}\frac{|\pal_u\setminus \pal_v|}{|\pal_u|}\le \ehat d_v$.
\end{proof}

\begin{claim}
$\UP\ge (1-4\ehat)d_v-\sum_{u\in N(v)}\frac{1}{|\pal_u|}$.
\end{claim}
\begin{proof}
We bound the augmented sum $\bUP = \UP + \sum_{u\in N(v)}\frac{1}{|\pal_u|}$, which adds a correction term for neighbors of $v$ with small palettes.
By a sum rearrangement, an application of Cauchy-Schwartz inequality and the observation above, we have: 
\begin{align*}
\bUP &=\sum_{u,w\in N(v)}\sum_{c\in \pal_u\cap \pal_w\cap L}\frac{1}{|\pal_u||\pal_w|}=\sum_{c\in L}\sum_{u,w \in N(c)} \frac{1}{|\pal_u||\pal_w|} = \sum_{c\in L} H(c)^2\\ 
&\ge \sum_{c\in L\cap \pal_v} H(c)^2\ge \frac{\left(\sum_{c\in L\cap \pal_v} H(c)\right)^2}{d_v^2}\ge (1-2\ehat)^2 d_v\ ,
\end{align*}
which implies the claim.
\end{proof}

\begin{claim}
$2\UE \le [1-((1-\ehat)\eub/2 - \ehat - 2\sqrt{\ehat})]d_v - \sum_{u\in N(v)}\frac{1}{|\pal_u|}$.
\end{claim}
\begin{proof}
Note that
\begin{align}
2\UE &\le \sum_{u\in N(v)}\sum_{w\in N(u)\cap N(v)} \frac{|\pal_u\cap \pal_w|}{|\pal_u||\pal_w|}\le \sum_{u\in N(v)}\sum_{w\in N(u)\cap N(v)} \frac{1}{|\pal_u|}
= \sum_{u\in N(v)} \frac{|N(u)\cap N(v)|}{|\pal_u|} \nonumber \\
& = \sum_{u\in N(v)}\frac{|N(u)|}{|\pal_u|} -\sum_{u\in N(v)}\frac{|N(u)\setminus N(v)|}{|\pal_u|}=\sum_{u\in N(v)}\frac{|\pal_u|-1}{|\pal_u|}-\sum_{u\in N(v)}\frac{|N(u)\setminus N(v)|}{|\pal_u|}\ . 
\label{eq:ue}
\end{align}
We need to lower-bound $\sum_{u\in N(v)}|N(u)\setminus N(v)|/|\pal_u|$. We do this in two steps. First, we show that for each node $u\in N(v)$, there is a subset $Q_u\subseteq N(u)$ such that $\sum_{u\in N(v)}|Q_u|/|\pal_u|=\Omega(d_v)$, then we show that $\sum_{u\in N(v)}|Q_u\cap N(v)|/|\pal_u|=O(d_v)$, with appropriate constants, so that their difference gives the lower bound.

We let $Q_u\subseteq N(u)$ be the set of neighbors $w\in N(u)$ with $d_w\le 2d_u/3$. By assumption, there is a set $P_v\subseteq N(v)$ of at least
$(1-\ehat)d_v$
nodes that are $\eub$-unbalanced, that is, $|Q_u|\ge \eub d_u\ge (\eub/2) |\pal_u|$.
Then, we have 
$\sum_{u\in P_v}|Q_u|/|\pal_u|\ge (1-\ehat)\eub d_v/2$. 

Next, because $v$ has few discrepant neighbors and higher degree neighbors give discrepancy,
there is a set $R_v \subseteq N(v)$ of size at least 
$(1-\ehat) d_v$
such that
for all $w \in R_v$, the set $T_w \subseteq N(w)$ of neighbors $u$ of $w$ with $d_w\le (1-2\sqrt{\ehat})d_u$ has size $|T_w|\le 2\sqrt{\ehat} d_w$. Since $2\sqrt{\ehat} \le 1/3$, for nodes $u,w\in N(v)$, $w\in Q_u$ ($d_w\le 2d_u/3$) implies $u\in T_w$ ($d_w\le (1-2\sqrt{\ehat})d_u$). Hence, for every $w\in R_v$, $\sum_{u: w\in Q_u} 1/|\pal_u|\le \sum_{u \in T_w} 1/|\pal_u|<2\sqrt{\ehat}$, and  $\sum_{u\in N(v)} |Q_u\cap N(v)|/|\pal_u|\le (d_v - \card{R_v}) +  \sum_{w\in R_v}\sum_{u \in T_w} 1/|\pal_u|\le \ehat + 2\sqrt{\ehat} d_v$, using sum rearrangement.

Putting together, we get $\sum_{u\in N(v)}|N(u)\setminus N(v)|/|\pal_u|\ge [(1-2\sqrt{\ehat})\eub/2 - \ehat - 2\sqrt{\ehat}]d_v$, which in light of (\ref{eq:ue}) implies the claim.
\end{proof}
The proof of the lemma now follows from the last two claims.
\end{proof}

As in~\cite{EPS15}, let us call \emph{successful non-edges} the non-edges in $N(v)$ whose endpoints picked the same color during \slackgeneration such that: no neighbors of the endpoints picked this color, and no other nodes in $N(v)$ picked the same color.

\begin{lemma}
A tough node $v$ gets slack $\Omega(d_v)$ w.p.\ $1-\exp(-\Omega(d_v))$ during \slackgeneration.
\label{lem:tough-slack}
\end{lemma}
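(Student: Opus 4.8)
The plan is to reuse the template of \cref{lem:balanced-spar}, turning the expected count of light-monochromatic non-edges from \cref{lem:tough-expectancy} into permanent, concentrated slack. Recall that \cref{lem:tough-expectancy} gives $Z_v \ge [(1-\ehat)\eub/2 - 5\ehat - 2\sqrt{\ehat}]\, d_v$, which is $\Omega(d_v)$ precisely because the standing choice of constants has $\eub$ dominate $\sqrt{\ehat}$ by a large enough factor; this is the only place in the whole argument for tough nodes where that relation is used.

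First I would reduce to a count of \emph{successful non-edges} in the sense of the discussion preceding the lemma: non-adjacent pairs $uw \subseteq N(v)$ that in \slackgeneration both get permanently colored with a common color $c$, such that no node in $N(u)\cup N(w)\cup N(v)$ other than $u,w$ tries $c$. Since each node tries a single color, this defining condition forces the successful non-edges to be vertex-disjoint and to use pairwise distinct colors; weighing the at most one color each of them removes from $\pal(v)$ against the two neighbors each of them removes from $d(v)$ (and noting that colors outside $\pal_v$ only help), one sees that after \slackgeneration the slack of $v$ is at least the number of successful non-edges. So it is enough to show this number is $\Omega(d_v)$ with probability $1-\exp(-\Omega(d_v))$.

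For the expectation, I would condition on a fixed non-edge $uw\subseteq N(v)$ having both endpoints sampled into $S$ and trying a common light color $c$; the expected number of such pairs is $\pgen^2 Z_v = \Omega(d_v)$, absorbing the constant sampling probability of \slackgeneration. As $c$ is light, $H(c) = O(1)$, so \cref{lem:light} applied with $S_v = N(v)$ (where $N^+ = N$) shows that, conditioned on this, with probability $\Omega(1)$ no other node of $N(u)\cup N(w)\cup N(v)$ tries $c$ -- i.e.\ $uw$ is successful -- whence $\Exp[\#\text{successful non-edges}] = \Omega(d_v)$. For concentration I would then copy \cref{lem:balanced-spar} essentially verbatim: for each light color $c$ let $A_c$ be the event that both endpoints of some non-edge of $N(v)$ try $c$ and $B_c$ the event that $c$ is, in addition, successfully tried; set $f = \sum_c \bbI_{A_c}$, $g = \sum_c \bbI_{A_c \cap \overline{B_c}}$, and $h = f - g = \sum_c \bbI_{A_c \cap B_c}$, which counts exactly the successful non-edges. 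Then $\Exp[h] = \Omega(d_v)$ by the previous step and $\Exp[f] \le \sum_{c\in L} H(c)^2 \le (1/\ehc)\sum_{c\in L} H(c) \le (1/\ehc)\sum_c H(c) = O(d_v)$ by \cref{obs:colorweightsum}, so $\Exp[f], \Exp[h] \in \Theta(d_v)$; and $f, g$ are both $1$-Lipschitz and $O(1)$-certifiable functions of the independent random bits (sampled-or-not, and which color) of the nodes within distance $2$ of $v$ -- two exposed choices certify $A_c$, one more the failure event $\overline{B_c}$. Invoking \cref{lem:talagrand-difference} then gives $h = \Omega(\Exp[h]) = \Omega(d_v)$ with probability $1 - \exp(-\Omega(d_v))$, and since every quantity above depends only on the $2$-hop neighborhood of $v$, this holds even conditioned on adversarial randomness outside it.

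I expect the concentration step to be the only genuine obstacle. One cannot apply a Talagrand-type inequality to $h$ (or to the slack) directly: re-sampling a single node's color can simultaneously destroy one successful non-edge and create another, so $h$ is neither monotone nor conveniently certifiable. The $f - g$ split is exactly what repairs this, and the delicate bookkeeping is to verify the certifiability of $g$ -- one must point to the specific spoiler node -- and that the Lipschitz constants remain $O(1)$. Everything else is routine given \cref{lem:tough-expectancy} and \cref{lem:light}.
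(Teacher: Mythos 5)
Your proposal is correct and follows the same route as the paper: reduce to successful non-edges via \cref{lem:tough-expectancy} and \cref{lem:light}, then apply the $f$, $g$, $h$ decomposition with \cref{lem:talagrand-difference} exactly as in \cref{lem:balanced-spar}. The paper's proof is precisely this (it even states it is "the same as that of \cref{lem:balanced-spar}"), with your explicit handling of the sampling probability and the bound $\Exp[f]=O(d_v)$ being minor details the paper leaves implicit.
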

\begin{proof}
In expectation, a tough node $v$ has $\Omega(d_v)$ non-edges trying the same light color in its neighborhood by \cref{lem:tough-expectancy}. By \cref{lem:light}, the expected number of successful non-edges in $N(v)$ is also $\Omega(d_v)$. From there the proof is a classical result, the same as that of \cref{lem:balanced-spar}: for each light color, introduce event $A_c$ and $B_c$ indicating (respectively) whether $c$ was tried by a non-edge in the neighborhood of the tough node, and whether said try was successful; introduce $f = \sum_{c \in \lcol} \bbI_{A_c}$, $h = \sum_{c \in \lcol} \bbI_{A_c \cap B_c}$, and $g=f-h$; argue that $f$ and $g$ are $1$-Lipschitz and $3$-certifiable together with the fact that $\Exp[f],\Exp[h]\in \Theta(d_v)$ to apply \cref{lem:talagrand-difference} and get the result.
\end{proof}

\subsection{Dense Nodes}
\label{ss:slackgen-dense}

We now derive the claim about slack for dense nodes:

\slacklemmadense*

\begin{proof}
To prove this claim, let us decompose the slackability of a dense node as such: $\barsigma_v = \spar_v + \disc_v = \sparbal_v + \sparunb_v + \discbal_v + \discunb_v$. Using previous results, we show that $v$ gets slack $\Omega(\barsigma_v)$ when one of $\sparbal_v, \discbal_v, \discunb_v$ is of order $\Omega(\barsigma_v)$. We then analyze the remaining case, which we call the \emph{gritty} nodes, which like the tough nodes in the sparse case occurs when slackability is mostly due to unbalanced sparsity.

As immediate property, because $C_v \subseteq \Nbal(v)$, $\Nunb(v) \subseteq E_v$ and we have $\sparunb_v \leq e_v$ and $\discunb_v \leq e_v$. Let $\eps \in (0,1/6)$ be a constant in what follows.

\Cref{lem:balanced-disc,lem:balanced-spar} imply that a dense node gets $\Omega(\barsigma_v)$ slack w.p.\ $1-\exp(\Omega(\barsigma_v))$ if $\discbal_v \geq \eps \barsigma_v$ or $\sparbal_v \geq \eps \barsigma_v$. When this is not the case, $\discunb_v + \sparunb_v \geq (1-2\eps)\barsigma_v$, so $\barsigma_v \leq 3 e_v$. We consider this to hold in what follows.

If $\discunb_v \geq \eps \barsigma_v \geq \eps e_v / 3$, then $v$ is $\eps/3$-discrepant within $E_v$, of size $e_v \geq \barsigma_v/3 \geq \log^3\Delta$. By \cref{L:disc-slack}, $v$ gets $\Omega(e_v)=\Omega(\barsigma_v)$ slack w.p.\ $1-\exp(-\Omega(\log^{4/3}\Delta))$.

This proves the proposition when one of the quantities $\sparbal_v, \discbal_v, \discunb_v$ is of order $\Omega(\barsigma_v)$. By \cref{lem:slack-heavy}, $v$ also gets slack if those quantities are small (implying $\barsigma_v \in \Theta(e_v)$) and it has heavy colors within $E_v$. The remaining case -- when none of these arguments applies -- is covered by \cref{lem:gritty}, whose proof constitutes the rest of this section.
\end{proof}

\paragraph{Gritty nodes.}
The last case to consider is when the slackability is due to the sparsity created by the external neighbors.

Consider two constants $\ehc,\ehat \in (0, 1/6)$ with $\ehc \leq \ehat^2/288$. 
A node $v$ is \emph{gritty} if $\discbal_v \leq \ehc\barsigma_v/720$, $\sparbal_v\leq\ehat\barsigma_v$ and has $\ehat$-few $\ehc$-heavy colors within $E_v$, i.e., $\sum_{c\in \hcol} H^{(E_v)}(c)\le \ehat e_v$.
Note that the node would get $\Omega(\barsigma_v)$ if one of those conditions was not satisfied by previous arguments. We have $\sparunb_v\in \Omega(\barsigma_v)$, and since $e_v \geq \sparunb_v$, $e_v \in \Omega(\barsigma_v)$.

Intuitively, each lower-degree neighbor in $E_v$ has only a few neighbors in $C_v$ and has therefore are plenty of incident non-edges in $N(v)$. The issue is to ensure that the intersection of the palettes of the nodes of such a pair be large enough.

Since $v$ is not an outlier, there exists a set $U$ of $\card{C_v}/6$ nodes in $C_v$ with degree at least $d_v$. By the definition of almost-cliques (\cref{def:acd}), $v$ is adjacent to at least $(1/6 - \eacd)\card{C_v}$ of them. In the argument that follows, we identify a subset $C'_v \subseteq U$ that share a large part of $v$'s palette, and then show that an expected $\Omega(e_v)$ non-edges between $E_v$ and $C'_v$ try the same light color. This results in \cref{lem:gritty}.

\begin{lemma}
\label{lem:gritty}
Assume $\eacd \le 1/125$. A gritty node $v$ 
gets slack $\Omega(\barsigma_v)$ in slack generation, w.p.\ $1-\exp(-\Omega(\barsigma_v))$.
\end{lemma}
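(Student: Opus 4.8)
The plan is to mimic the argument for tough nodes (\cref{lem:tough-slack}): produce a family of $\Omega(\barsigma_v)$ non-edges inside $N(v)$ whose two endpoints try the same light colour during \slackgeneration, with total expectation $\Omega(\barsigma_v)$, and then pass to permanent slack via \cref{lem:light} and the Talagrand-type bound \cref{lem:talagrand-difference}. Recall that for a gritty $v$ the reductions preceding the statement already give $\sparunb_v=\Omega(\barsigma_v)$ and $e_v=\Theta(\barsigma_v)$, that $\discunb_v\le\eps\barsigma_v$ (otherwise \cref{L:disc-slack} applies), and that $\discbal_v$, $\sparbal_v$ and the heavy-colour weight $\sum_{c\in\hcol}H^{(E_v)}(c)$ are all $O(\ehat\barsigma_v)$ or smaller.

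The non-edges will run between the low-degree external neighbours $E_v^- := \Nunb(v)$ of $v$ and a palette-aligned, high-degree part $C'_v$ of $C_v$. Since $C_v\subseteq\Nbal(v)$ we have $E_v^-\subseteq E_v$, and since $\sparunb_v d_v$ is the number of non-edges of $N(v)$ with an endpoint in $E_v^-$ while each $w\in E_v^-$ lies in at most $d_v$ and (using the common-neighbourhood estimate $|N(v)\cap N(w)|\le 2\eacd(d_v+d_w)$ from the proof of \cref{lem:low-ext-degree}) at least $(1-O(\eacd))d_v$ such non-edges, we get $|E_v^-|=\Theta(\sparunb_v)=\Theta(\barsigma_v)$. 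On the other side, as $v$ is not an outlier there is a set $U\subseteq C_v$ of $|C_v|/6$ nodes of degree $\ge d_v$; by the ACD properties $|U\cap N(v)|\ge(1/6-\eacd)|C_v|$, and since $\discbal_v$ is tiny (and $\barsigma_v\le 3e_v=O(\eacd|C_v|)$) Markov's inequality shows $C'_v := \{u\in U\cap N(v):\disc_{u,v}\le c_0\}$ still has size $\Omega(|C_v|)=\Omega(d_v)$, with every $u\in C'_v$ satisfying $|\pal_u|=\Theta(d_v)$, $|\pal_u|\ge|\pal_v|$ and $|\pal_u\triangle\pal_v|\le 2c_0|\pal_u|$.

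The crux is to show that
\[ \Exp[Z_v] = \sum_{w\in E_v^-}\ \sum_{\substack{u\in C'_v\\ u\not\sim w}}\frac{|\pal_u\cap\pal_w\cap\lcol|}{|\pal_u|\,|\pal_w|} = \Omega(\barsigma_v)\ , \]
where $Z_v$ counts the non-edges between $E_v^-$ and $C'_v$ whose endpoints try the same light colour. Because $|\pal_u|=\Theta(d_v)$ and $w$ has only $|N(w)\cap C'_v|=O(\eacd d_v)$ neighbours in $C'_v$, it suffices to show that for all but an $o(1)$-fraction of $w\in E_v^-$ the set $\lcol\cap\pal_w$ contains $\Omega(|\pal_w|)$ colours each lying in the palettes of $\Omega(d_v)$ nodes of $C'_v$. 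This is where all the hypotheses are used: (i) $\discbal_v$ tiny gives $\sum_{u\in C'_v}|\pal_v\triangle\pal_u|=O(\ehc\barsigma_v d_v)$, so all but $O(\ehc\barsigma_v)$ colours of $\pal_v$ lie in $\ge|C'_v|/2$ palettes of $C'_v$ --- call these the good colours $G_v$, with $|G_v|\ge(1-o(1))|\pal_v|$; (ii) $\discunb_v\le\eps\barsigma_v$ gives $|\pal_w\setminus\pal_v|\le|\pal_w|/10$ for all but an $O(\eps)$-fraction of $w$; (iii) $\sum_{c\in\hcol}H^{(E_v)}(c)\le\ehat e_v$ gives, after one Markov step over $E_v$, that $|\hcol\cap\pal_w|\le|\pal_w|/10$ for all but an $O(\ehat)$-fraction of $w$, and bounding $\sum_{c\in\pal_v\setminus G_v}H^{(E_v)}(c)$ by its light/heavy split gives $|\pal_w\cap(\pal_v\setminus G_v)|\le|\pal_w|/10$ for all but an $o(1)$-fraction of $w$. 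Taking $\ehc,\ehat,\eps,c_0$ small enough (consistently with $\ehc\le\ehat^2/288$ and $\eacd\le 1/125$) leaves, for $(1-o(1))$ of the $w\in E_v^-$, at least $|\pal_w|/2$ colours in $\lcol\cap\pal_w\cap G_v$, so $\Exp[Z_v]=\Omega(|E_v^-|)=\Omega(\barsigma_v)$. Finally \cref{lem:light} (with $S_v=N(v)$) converts each expected monochromatic light non-edge into an expected successful one up to a constant factor, and the standard decomposition $f=h+g$ into $1$-Lipschitz, $3$-certifiable functions with $\Exp[f],\Exp[h]=\Theta(\barsigma_v)$, fed into \cref{lem:talagrand-difference}, concentrates the number of distinct light colours realized by successful non-edges --- a lower bound on $v$'s slack --- yielding slack $\Omega(\barsigma_v)$ with probability $1-\exp(-\Omega(\barsigma_v))$.

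The main obstacle is precisely the colour bookkeeping of (i)--(iii). Unlike in the $\Delta+1$ setting, a low-degree external neighbour $w$ may have $|\pal_w|\ll|\pal_v|$, so one cannot afford to lose more than a small constant fraction of $\pal_w$ to any single error source --- colours outside $\pal_v$, heavy colours, or colours thin in $C'_v$ --- and making all three losses simultaneously small is what forces the careful definition of the good colours $G_v$ and a precise tracking of which gritty hypothesis (small $\discbal_v$, small $\sparbal_v$, few heavy colours in $E_v$), together with the earlier-established smallness of $\discunb_v$, controls each one with the correct constant.
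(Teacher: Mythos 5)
Your proposal is correct and follows essentially the same route as the paper's proof: both identify a large palette-aligned, high-degree subset $C'_v$ of $C_v\cap N(v)$ via the smallness of $\discbal_v$, use double counting to show that most light colors of most low-degree external neighbors are shared with $\Omega(|C_v|)$ non-neighbors in $C'_v$ (the paper's ``good neighbors'' and set $Q$ correspond to your good colors $G_v$), derive $\Omega(\barsigma_v)$ expected same-light-colored non-edges, and conclude with \cref{lem:light} and \cref{lem:talagrand-difference}. The only differences are bookkeeping choices (you work with $\Nunb(v)$ rather than all of $E_v$, and you explicitly invoke the smallness of $\discunb_v$ to control $\pal_w\setminus\pal_v$, a point the paper's write-up passes over more quickly), none of which changes the argument.
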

\begin{proof}
The conditions on $\sparbal_v$ and $\discbal_v$ imply that $\barsigma_v \leq 3e_v$, so we show the equivalent result that $v$ gets $\Omega(e_v)$ slack with $1-\exp(-\Omega(e_v))$.

Let $C'_v = \{w \in C_v \cap N(v) : d_w \ge d_v \text{~and~} |\pal_w\cap \pal_v| \ge d_v - \ehc e_v/20\}$ be the set of neighbors of $v$ in $C_v$ of at least as high degree and sharing at least $d_v - \ehc e_v/20$ colors with $v$.
\begin{claim}
It holds that $|C'_v| \ge |C_v|/15$.
\label{claim:cprime-size}
\end{claim}
\begin{proof}
Note that $C'_v \subseteq \Nbal(v)$, so the bound on $\discbal_v$ gives us information about the palettes of nodes in $C'_v$.
Let $U'\subseteq U$ denote the set of nodes in $U$ that have fewer than  $d_v -\ehc e_v/20$ colors in common with $v$; note that $C'_v=U\setminus U'$.
Each node in $U'$
has at least $d_u+1-d_v+\ehc e_v/20$ colors outside $v$'s palette, hence contributes $(d_u+1-d_v+\ehc e_v/20)/(d_u+1)\ge \ehc e_v/(20d_v)$ (observe that for numbers $a<b$ and $c>0$, it holds that $(a+c)/(b+c)>a/b$) to $\discbal_v$. The total contribution of $U'$ to $\discbal_v$ is therefore at least $\ehc|U'|e_v/(20 d_v)$. Since $\discbal_v \leq \ehc \barsigma_v/720 \leq \ehc e_v/240 $, we have $\ehc|U'|e_v/(20d_v)\le \ehc e_v/240$, hence $|U'|\le d_v/12\le (1+\eacd)|C_v|/12$, and $|C'_v| \ge (1/6-\eacd - (1+\eacd)/12)|C_v| \ge |C_v|/15$, using $\eacd<1/75$.
\end{proof}

We say that a neighbor $w \in E_v$ of $v$ is \emph{good} if $\sum_{u \in C'_v \cap \overline{N}(w)} |\pal_u \cap \pal_w \setminus \hcol_v| \ge |\pal_w|\cdot |C_v|/20$.
Intuitively, a good node has large potential for creating a same-colored pair with a non-neighbor in $C'_v$ using a light color.
In fact, the probability that this happens is $\Omega(1)$.

\begin{claim}
Node $v$ has $\Omega(e_v)$ good neighbors.
\label{claim:goodneighborsdense}
\end{claim}

\begin{proof}
First, we eliminate nodes in $E_v$ with too many heavy colors and then those with too many colors that occur infrequently in $C'_v$. The rest have mostly frequent light colors and are shown to be good.

Let $H_v(c) = H^{(E_v)}_v(c)$, $\hcol_v = \hcolS{(E_v)}_v$, $\lcol_v = \lcolS{(E_v)}_v$. Since $v$ does not have heavy colors, 
\[
\sum_{w\in E_v}\frac{|\hcol_v\cap \pal_w|}{|\pal_w|}
= \sum_{c \in \hcol_v} \sum_{w\in E_v: c\in \pal_u} \frac{1}{|\pal_w|} 
= \sum_{c\in \hcol_v} H_v(c)
\le \ehat e_v,
\]
and hence at most $4\ehat e_v$ nodes $w\in E_v$ have at least a quarter of their colors among $v$'s heavy colors: $|\pal_w\cap \hcol_v|/|\pal_w|\ge 1/4$.
We eliminate such nodes to obtain $E'_v$, where $|E'_v| \ge (1-4\ehat)e_v$.

Next, consider the set $Q$ of light colors of $\pal_v$ that appear infrequently in $C'_v$, or in at most half of the palettes of $C'_v$: $Q = \{c \in \pal_v\setminus \hcol_v: |\Psi^{-1}(c) \cap C'_v| \le |C'_v|/2 \}$, where $\pal^{-1}(c)=\{u : c\in \pal_u\}$. 
Let $M$ denote the number of times a color of $\pal_v$ is missed by a node in $C'_v$: $M = \sum_{c \in \pal_v} |C'_v \setminus \Psi^{-1}(c)|$. 
By the definition of $C'_v$, $M \le |C'_v| \cdot \ehc e_v/20$, but by definition of $Q$, $M \ge |Q| \cdot |C'_v|/2$.
Thus, $|Q| \le \ehc e_v/10$. 
Since the colors in $Q$ are light, \[
\sum_{w\in E_v}\frac{|\pal_w\cap Q|}{|\pal_w|}=\sum_{c\in Q}\sum_{w\in E_v: c\in \pal_w}\frac{1}{|\pal_w|}= \sum_{c \in Q} H(c) \le |Q|/\ehc \le e_v/10\ .
\]
Hence, at most half of the nodes of $E_v$ have more than one fifth of their colors from $Q$. Removing this set from $E'_v$ results in $E''_v$ of size at least $|E''_v| \ge |E'_v| - |E_v|/2 \ge (1-4\ehat - 1/2)e_v$.

We claim that the nodes in $E''_v$ are all good. Let $w \in E''_v$. 
Since at most one-fourth of $w'$ palette is in $\hcol_v$ (since $w\notin E'_v$) and at most one-fifth is in $Q$, at least a half is light and outside $Q$: 
$|\pal_w\setminus (\hcol_v \cup Q)| \ge |\pal_w|/2$.
By the definition of $Q$, these colors are contained in at least half of the palettes of $C'_v$, hence:
\[ \sum_{u \in C'_v} |\pal_w \cap \pal_u \setminus \hcol_v| \ge \frac{|\pal_w|}{2} \cdot \frac{|C'_v|}{2}\ . \]
The node $w$ has at most $\eacd (1+\eacd) |C_v|$ neighbors in $C'_v$, since it is of lower degree than $v$ and belongs to a different almost-clique.
Subtracting these neighbors we get that
\begin{align*}
\sum_{u \in C'_v \setminus N(w)} |\pal_u \cap \pal_w \setminus \hcol_v| 
& \ge \sum_{u \in C'_v} |\pal_u \cap \pal_w \setminus \hcol_v| - |C'_v \setminus N(w)| \cdot |\pal_w| \\
& \ge |\pal_w| (|C'_v|/4 - \eacd (1+\eacd) |C_v|) \\
& \ge |\pal_w| |C_v|/120\ ,
\end{align*}
using that $|C'_v| \ge |C_v|/15$ and $\eacd \le 1/125$.
\end{proof}

Recall that an unconnected node pair $u,w \in N(v)$ forms a \emph{same-colored light pair} if they are assigned the same light color and no other node in $N(v)$ is also assigned that color.

\begin{claim}\label{claim:goodnodelightcolor}
Let $w$ be a good node. The expected number of nodes in $C'_v$ that are assigned the same light color as $w$ is $\Omega(1)$.
\end{claim}
\begin{proof}
The probability that $w$ and a given node $u\in C'_v\cap \overline{N}(w)$ try the same light color is $|\pal_u \cap \pal_w \setminus \hcol_v|/(|\pal_w||\pal_u|) \ge |\pal_u \cap \pal_w \setminus \hcol_v|/(2|\pal_w||C_v|)$. 
Hence, the expected number of nodes $u$ with which $w$ tries the same light color is $\sum_{u\in C'_v\cap \overline{N}(w)}|\pal_u \cap \pal_w \setminus \hcol_v|/(2|\pal_w||C_v|) \ge 1/240$, where the last equality uses the definition of a good node. 

\end{proof}

\begin{claim}
The set $S_v$ of good neighbors generate $\Omega(e_v)$ slack for $v$ w.p.\ $1-\exp(-\Omega(e_v))$.
\label{claim:good-slack}
\end{claim}
\begin{proof}
For each light color $c$ let $A_c$ be the event that some non-edge $uw$ with $w\in S_v$ and $u\in C'_v$ tries the color $c$. Given that the endpoints of a non-edge $uw$ try some light color $c$, the probability that no other node in $N^+(w)\cup N^+(u)\cup N(v)$ tries $c$ is $\Omega(1)$, due to \cref{lem:light}. Letting $B_c$ be the event that $c$ is successfully tried, we get two families of events that satisfy \cref{lem:talagrand-difference}: $f=\sum_{\lcol} \mathbb{I}_{A_c}$ and $g=\sum_{\lcol} \mathbb{I}_{A_c\cap \overline{B}_c}$ can easily be shown to be $1$-Lipschitz and $3$-certifiable functions of the color trials of all nodes. By \cref{claim:goodneighborsdense}, \cref{claim:goodnodelightcolor}, and \cref{lem:light}, $\Exp[f]=\Omega(e_v)$ and  $\Exp[h]=\Exp[f-g]=\Omega(e_v)$, and \cref{lem:talagrand-difference} therefore implies the claimed slack.
\end{proof}
The lemma now follows from \cref{claim:good-slack}.
\end{proof}

\section{Palette Sparsification}
\label{sec:palette}

The original sparsification theorem of \cite{ACK19} was for $\Delta+1$-colorings. It was generalized to $\deg+1$-coloring by \cite{AA20} (as well as an approximate version for $\degoLC$). We extend their results to $\degoLC$, but using larger sample size ($O(\log^2 n)$, instead of $O(\log n)$).

The more formal version of \cref{thm:palettesparsify} is the following (nearly verbatim from \cite{AA20}):
\begin{theorem}
Let $G(V,E)$ be any $n$-vertex graph and assume each vertex $v\in V$ is given a list $\pal_v$ of $d_v+1$ colors. Suppose for every vertex $v \in V$, we independently sample a set $L(v)$ of colors of size $\ell=\Theta(\log^2 n)$ uniformly at random from colors in $\pal_v$, then, w.h.p., there exists a proper coloring of $G$ from the lists $L(v)$ for $v\in V$.
\label{thm:sparsified}
\end{theorem}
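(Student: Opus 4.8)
The plan is to run the palette‑sparsification framework of \cite{ACK19,AA20} essentially verbatim, substituting our slack‑generation lemma \cref{L:slackgen-sparse} for the one‑round slack step that works only in the $\Delta+1$ setting; the only genuinely new point is why the sample size must grow to $\Theta(\log^2 n)$. Fix a large constant $c_0$, set $\ell = c_0\log^2 n$, and let $V_{\mathrm{low}} = \set{v : d_v < \ell}$. For $v \in V_{\mathrm{low}}$ we have $\card{\pal_v} \le \ell$, so $L(v) = \pal_v$ and such a vertex keeps its full palette. Since any $\degoLC$ instance is properly colourable greedily in an arbitrary vertex order — when a vertex is processed at most $d_v$ of its $\ge d_v+1$ colours are taken — and since $\degoLC$ is self‑reducible, I would set $V_{\mathrm{low}}$ aside, treat its vertices as uncoloured until the very end (which, as in the proof of \cref{thm:main_tech}, can only help the rest of the process), and colour it greedily from its full palettes in the cleanup. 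It then suffices to colour $G - V_{\mathrm{low}}$ from the sampled lists, and every vertex still in play has $d_v \ge \ell \ge c_0\log^2\Delta$, exactly the regime where \cref{L:slackgen-sparse} applies (with $s = \ell$). I would then compute a $(\deg+1)$ ACD of $G$ with a constant $\eacd$ (\cref{def:acd}), $V = \Vsp \sqcup \Vun \sqcup \Vdense$, and conceptually split each size‑$\ell$ sampled list into two uniform sub‑samples $L^{\mathrm{sg}}(v), L^{\mathrm{col}}(v)$ of size $\Theta(\log^2 n)$; by symmetry over the colours of $\pal_v$ a uniform colour of $L^{\mathrm{sg}}(v)$ is uniform in $\pal_v$, so drawing from $L^{\mathrm{sg}}$ faithfully simulates {\tryrandomcolor} (the mild without‑replacement dependence between the two sub‑samples affects all estimates by constant factors only, as in \cite{ACK19}).

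\emph{Sparse and uneven vertices.} I would simulate {\slackgeneration} on $G[\Vsp \cup \Vun]$ using the colours $L^{\mathrm{sg}}$. By \cref{L:slackgen-sparse}, afterwards each $v \in \Vst$ has $\Omega(d_v)$ uncoloured neighbours outside $\Vst$ with probability $1 - \exp(-\Omega(d_v))$, and each other vertex of $\Vsp \cup \Vun$ has slack $\Omega(d_v)$ with probability $1 - \exp(-\Omega(\sqrt{\ell})) = 1 - \exp(-\Omega(\sqrt{c_0}\log n))$; both are $1 - n^{-\omega(1)}$ for $c_0$ large, and both hold even against adversarial randomness outside the $2$‑hop neighbourhood, so a union bound over all vertices keeps every such guarantee. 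Conditioned on that, I would colour $\Vst$ before its neighbours outside $\Vst$ — giving those vertices $\Omega(d_v)$ temporary slack — and then the remaining vertices of $\Vsp \cup \Vun$, which carry permanent slack $\Omega(d_v)$. In both cases, with per‑vertex slack $\Omega(d_v)$ and $\Theta(\log^2 n)$ reserved colours per vertex, the residual colouring of $\Vsp \cup \Vun$ is precisely of the form handled by the second phase of \cite{ACK19,AA20}; their asymmetric Lovász‑Local‑Lemma argument (the event that a vertex stays uncoloured has probability $n^{-\omega(1)}$ and depends only on samples within its $2$‑hop neighbourhood) then yields, w.h.p.\ over the samples, a proper colouring of $\Vsp \cup \Vun$ from the sampled lists extending the slack‑generation partial colouring.

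\emph{Dense vertices.} Here I would follow the almost‑clique phase of \cite{AA20} unchanged. Each clique $C$ has $\Delta_C = \Theta(\card{C})$; if $\Delta_C = O(\ell)$ then all of $C$ still carries full palettes and joins the greedy cleanup, so assume $\Delta_C = \Omega(\ell)$. I would run {\slackgeneration} among dense nodes (whose guarantees in \cref{L:slackgen-dense} are $1 - \exp(-\Omega(\ell)) = 1 - n^{-\omega(1)}$) and {\putaside} in each low‑slack clique (giving $\card{P_C} = \Omega(\ell^2)$ by \cref{lem:lds}). The inlier structure of \cref{lem:degreesinac} — an $\Omega(1)$‑fraction of $C$ whose palettes and degrees differ pairwise by only $O(\sigma_x)$ — lets one run the clique‑colouring subroutine of \cite{AA20} (a random greedy colouring within $C$ from the sampled lists): as long as fewer than a $(1-\eps')$‑fraction of $\core_C$ is coloured, each remaining inlier still has $\Omega(\Delta_C)$ admissible colours, hence with probability $1 - n^{-\omega(1)}$ at least one of its $\Theta(\log^2 n)$ samples is admissible, so all but $O(\barsigma_C)$ vertices of $C$ get coloured (colouring the outliers, and any vertex with too many external conflicts, after the bulk of the inliers). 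The residual $O(\barsigma_C)$ vertices per clique then have slack $\Omega(\Delta_C)$ — from inliers coloured later, from {\slackgeneration}, and in low‑slack cliques from $P_C$ — and are finished by the same LLL step as the sparse vertices. Composing the phases (colour $\Vsp \cup \Vun$, then $\Vdense$, then the set‑aside full‑palette vertices greedily) and union‑bounding over the $O(1)$ high‑probability events yields a proper colouring of $G$ from the lists $L(v)$, w.h.p., which is \cref{thm:sparsified}.

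The step I expect to dominate the write‑up is pure bookkeeping: checking that no degree range is at once too small for the $1-\exp(-\Omega(\sqrt{d_v}))$ slack‑generation bounds to be $n^{-\omega(1)}$ and too large for the sampled list to already contain the whole palette — which is precisely why both the set‑aside threshold and the sample size are $\Theta(\log^2 n)$, and the only place our parameters differ from \cite{ACK19,AA20}. A secondary check, needing no new idea, is that the clique‑colouring and leftover‑slack arguments of \cite{AA20} — written there for $(1+\eps)\deg$‑list colouring — survive for exact $\degoLC$ once the inlier uniformity of \cref{lem:degreesinac} and the dense slack generation of \cref{L:slackgen-dense} are available; both are delicate only because $\degoLC$ palettes exceed the degrees by a mere~$1$.
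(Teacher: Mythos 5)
Your proposal follows essentially the same route as the paper: the paper's proof also reduces \cref{thm:sparsified} to the frameworks of \cite{ACK19,AA20}, identifies the sparse-node slack generation (\cref{L:slackgen-sparse}, restated for degrees $\ge C\log^2 n$, which is exactly where the $\Theta(\log^2 n)$ sample size comes from) as the only new ingredient, sets aside the nodes of degree $O(\log^2 n)$ whose full palettes are sampled to be colored greedily at the end, and fixes an ordering in which $\Vst$ is colored first so that Lemma 4.15 of \cite{AA20} applies with minimal changes. Your write-up is a correct, somewhat more detailed rendering of that same argument.
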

\begin{proof}
Essentially all the parts needed for \degoLC sparsification are already in \cite{ACK19,AA20}, except for the slack generation result for sparse nodes (\cref{L:slackgen-sparse}).
We restate our slack generation result for the case of higher-degree nodes.
The reason for the higher sample-size requirement of our result is that we need higher degree lower bounds in order to show w.h.p.\ that slack is generated for heavy nodes.

\begin{proposition}
\label{L:slackgen-sparse-highdeg}
Let $J$ be the set of nodes in $\Vun \cup \Vsp\setminus \Vst$ of degree at least $C\log^2 n$, for a sufficiently high constant $C$.
After running {\slackgeneration} in the subgraph induced by $\Vsp \cup \Vun$, each node $v$ in $J$ has slack $\Omega(d_v)$, w.h.p.
\end{proposition}

We handle the nodes of degree at most $C\log^2 n$ separately, coloring them after all the other ones are colored. Since we sample their whole palette, they can be $\deg+1$-colored. 
We need then only to identify the (high-degree) nodes of $\Vst$, and ensure that we color them first. 
With that, we have the following observation.
Recall that $\pal(v_i)$ be the (current) palette of node $v_i$. 
\begin{observation}
  Suppose after $\slackgeneration$, the uncolored nodes of $\Vsp$ are ordered arbitrarily $v_1, v_2, \ldots, v_{|\Vsp|}$ so that: a) nodes of $\Vst$ are ahead of the nodes of $\Vsp\setminus \Vst$, and b) nodes of degree $O(\log^2 n)$  are colored last (within each subset). Then, it holds for every $i=1,2,\ldots, |\Vst|$ that at most $(1-\delta)|\pal(v_i)|$ neighbors of $v_i$ are ahead of $v_i$ in the ordering.
  \label{obs:ordering-sparse}
\end{observation}
With this observation, the validity of the coloring of the sparse (and uneven) nodes (Lemma 4.15 of \cite{AA20}) holds with minimal changes.
\end{proof}

\cref{thm:sparsified} immediately gives an exponential-time algorithm to find the $\degoLC$-coloring, but the coloring can in fact be computed efficiently, even in restricted models of computation. The general schema is to proceed as follows: a) Compute the ACD, b) Compute the \emph{conflict graph} $G'=(V,E')$, where nodes $u$ and $v$ are adjacent if $L(u) \cap L(v) \ne \emptyset$, c) Generate slack for the sparse nodes, d) List color the sparse nodes, and e) List color the dense nodes.
Steps a), b), and e) were shown for \degoLC in \cite{AA20} and require no modification. The conflict graph contains $O(n \log^2 n)$ edges, w.h.p., by \cite[Lemma 5.2]{AA20}. 
Step c) is given by our \cref{L:slackgen-sparse-highdeg}.
Step d) follows the identical approach as \cite{ACK19,AA20}; the only difference is the ordering of the sparse nodes, since not all of them receive permanent slack. 
We show below how to implement this ordering in the three different models of \cite{ACK19}.

\paragraph{Implementation}
To simplify the task of ordering the nodes of $\Vst$, we modify the specification of $\Vst$.
Let $\Vue$ be the set of sparse nodes that are $2\sqrt{\ehat}$-uneven.
We redefine $\Veasy$ as
\[  \Veasy = \Vbal \cup \Vue \cup \Vun \cup 
    \{ v \in \Vsp : |N(v)\cap \Vdense| \ge \ehat d_v \}\ . \]
i.e., we replace $\Vdisc$ by $\Vue$ in the definition.
$\Vst$ is defined as before as the set of nodes $v$ with at least $\ehat d_v$ neighbors in $\Veasy$.
We observe that the proof of the tough case (\cref{lem:tough-slack}) goes through equally with this definition of $\Vst$. 
We focus then on indicating how to identify the redefined nodes of $\Vst$ in the different models. 

\paragraph{\MPC}
Each node can compute its degree and forward it to its neighbors, thus detecting if it is in $\Vbal$ or $\Vue$. Since the graph has only $O(n\log^2 n)$ edges, it can be gathered at a single node in $O(1)$ rounds using Lenzen's transform \cite{Lenzen13}.

\paragraph{Streaming}
Like \cite{ACK19}, we rely on the standard primitive of $\ell_0$-samplers for sampling elements in dynamic streams (see \cref{app:ps}). 
We assume that the palettes of the nodes are given as part of the stream so that the final palette of each node appears before the first occurrence of an incident edge. The palettes can then be sampled using the $\ell_0$-sampler.

The nodes can maintain their degrees, in $\tilde{O}(n)$ space.
We sample $\Theta(\delta^2 \log n)$ edges incident on each node $v$, u.a.r., obtaining a subset $S_v$ of neighbors of $v$. The fraction of neighbors of $v$ that are above/below a given threshold $q \cdot d_v$ is within $(1\pm \delta)$-factor of the fraction of nodes in $S_v$ with that property, by the basic Chernoff bound. Thus, we identify (within error $1\pm \delta$) if each given node is balanced or uneven.
From that, using a second sample of $\Theta(\delta \log n)$ neighbors of each node, we can identify within $1\pm 3\delta$-factor if a node should be in $\Vst$, i.e., if it has enough balanced or $\ehat$-uneven neighbors.
The resulting ordering of the nodes then consists of $\Vst$ (in any order), followed by the nodes of $\Vsp \setminus \Vst$ (in any order), and finally the nodes of degree at most $C\log^2 n$.

\paragraph{Query model}
It is important to specify the data model, which differs from the $\Delta+1$-coloring and $\degoC$ since the palettes are now part of the input.
We therefore add an operation that involves querying the palettes. There are two types of queries: a) what is the $i$-th neighbor of a given vertex $v$, and b) what is the $i$-th color in the palette of $v$. We assume that querying the $i$-th neighbor with $i>d_v$ or the $i$-th color with $i>\card{\pal_v}$ is allowed, and returns a special symbol $\bot$.
With binary search, one can also obtain the degree and palette size of each node in $O(n\log n)$ queries.

The second type of queries allow us to produce the palette sample $L(v)$ of each node $v$. From that, we can produce all the edges of the conflict graph.
The first type of queries allows us to sample $\Theta(\log n)$ edges incident on each given node, giving a subset $S_v$ of neighbors of $v$. We can proceed as in the streaming setting to determine which nodes are in $\Vst$.

The process can be made non-adaptive using the same modifications as in \cite{AA20}.

\appendix

\section{Concentration Bounds}
\label{app:concentration}

\begin{lemma}[Chernoff bounds]\label{lem:basicchernoff}
Let $\{X_i\}_{i=1}^r$ be a family of independent binary random variables with $\Pr[X_i=1]=q_i$, and let $X=\sum_{i=1}^r X_i$. For any $\delta>0$, $\Pr[|X-\Exp[X]|\ge \delta\Exp[X]]\le 2\exp(-\min(\delta,\delta^2) \Exp[X]/3)$.
\end{lemma}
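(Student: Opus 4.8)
The plan is to use the standard exponential-moment (Chernoff) method: bound the upper and lower deviation probabilities separately and combine them by a union bound, which is where the factor $2$ comes from. Write $\mu=\Exp[X]=\sum_{i=1}^r q_i$. The common starting point for both tails is the moment generating function bound: by independence of the $X_i$ and the inequality $1+x\le e^x$, for every $\lambda\in\mathbb{R}$,
\[
\Exp[e^{\lambda X}]=\prod_{i=1}^r\Exp[e^{\lambda X_i}]=\prod_{i=1}^r\bigl(1+q_i(e^{\lambda}-1)\bigr)\le\prod_{i=1}^r e^{q_i(e^{\lambda}-1)}=e^{\mu(e^{\lambda}-1)}.
\]

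For the upper tail I would fix $\delta>0$ and apply Markov's inequality to $e^{\lambda X}$ with the optimal choice $\lambda=\ln(1+\delta)>0$, obtaining
\[
\Pr[X\ge(1+\delta)\mu]\le\frac{\Exp[e^{\lambda X}]}{e^{\lambda(1+\delta)\mu}}\le\left(\frac{e^{\delta}}{(1+\delta)^{1+\delta}}\right)^{\mu}.
\]
It then remains to verify the single-variable inequality $(1+\delta)\ln(1+\delta)-\delta\ge\tfrac13\min(\delta,\delta^2)$ for all $\delta>0$: for $0<\delta\le1$ this follows from the (alternating) Taylor estimate $(1+\delta)\ln(1+\delta)-\delta\ge\delta^2/2-\delta^3/6\ge\delta^2/3$, and for $\delta>1$ it follows since the map $\delta\mapsto\bigl((1+\delta)\ln(1+\delta)-\delta\bigr)/\delta$ is increasing and equals $2\ln 2-1>1/3$ at $\delta=1$. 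Hence $\Pr[X\ge(1+\delta)\mu]\le e^{-\min(\delta,\delta^2)\mu/3}$.

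For the lower tail, if $\delta\ge1$ the event $\{X\le(1-\delta)\mu\}$ is empty when $\delta>1$ (since $(1-\delta)\mu<0\le X$) and forces $X=0$ when $\delta=1$, in which case its probability is $\prod_i(1-q_i)\le e^{-\mu}\le e^{-\min(\delta,\delta^2)\mu/3}$; for $0<\delta<1$ I would apply Markov to $e^{-\lambda X}$ with $\lambda=-\ln(1-\delta)>0$, which gives
\[
\Pr[X\le(1-\delta)\mu]\le\left(\frac{e^{-\delta}}{(1-\delta)^{1-\delta}}\right)^{\mu}\le e^{-\delta^2\mu/2}\le e^{-\min(\delta,\delta^2)\mu/3},
\]
the middle step being the elementary estimate $(1-\delta)\ln(1-\delta)+\delta\ge\delta^2/2$ (whose Taylor series has only nonnegative terms). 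A union bound over the two tail events then yields $\Pr[|X-\mu|\ge\delta\mu]\le 2e^{-\min(\delta,\delta^2)\mu/3}$, which is the claim.

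There is essentially no genuine obstacle here; the only mildly technical ingredients are the two elementary calculus inequalities bounding $e^{\pm\delta}/(1\pm\delta)^{1\pm\delta}$ by $e^{-\min(\delta,\delta^2)/3}$, which are dispatched as indicated — by a Taylor expansion in the small-$\delta$ regime and by a monotonicity argument for large $\delta$ — and the bookkeeping that handles the degenerate lower-tail range $\delta\ge1$ separately.
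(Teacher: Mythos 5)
Your proof is correct: the exponential-moment bound, the two elementary inequalities $(1+\delta)\ln(1+\delta)-\delta\ge\tfrac13\min(\delta,\delta^2)$ and $(1-\delta)\ln(1-\delta)+\delta\ge\delta^2/2$, and the handling of the degenerate range $\delta\ge 1$ for the lower tail are all accurate, and the union bound correctly accounts for the factor $2$. The paper states this lemma as a standard fact without proof, and your argument is precisely the canonical derivation one would cite for it.
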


\begin{lemma}[Hoeffding's inequality \cite{Hoeffding}]
\label{lem:chernoff-various-ranges}
Let $X_1 \ldots X_n$ be $n$ independent random variables distributed
in $\range{a_i,b_i}$, $X := \sum_{i=1}^n X_i$ their sum. For $t >0$:
\[
\Pr\event{\abs{X-\Exp[X]} > t} \leq 2\exp \parens*{ - \frac {2 \cdot t^2} {\sum_i (b_i - a_i)^2} }\ .\]
\end{lemma}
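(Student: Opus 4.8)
The plan is to prove Hoeffding's inequality by the Cram\'er--Chernoff exponential-moment method combined with Hoeffding's lemma on the moment generating function of a bounded, centered random variable. First I would center the variables: set $Y_i = X_i - \Exp[X_i]$, so that $\Exp[Y_i] = 0$ and $Y_i$ ranges over an interval $[\alpha_i,\beta_i]$ with $\beta_i - \alpha_i = b_i - a_i$. The crucial ingredient is Hoeffding's lemma: for every real $s$, $\Exp[e^{sY_i}] \le \exp\!\big(s^2 (b_i - a_i)^2 / 8\big)$. I would prove this by convexity of $y \mapsto e^{sy}$: on $[\alpha_i,\beta_i]$ this function lies below the chord through its endpoints, so $e^{sY_i} \le \frac{\beta_i - Y_i}{\beta_i - \alpha_i} e^{s\alpha_i} + \frac{Y_i - \alpha_i}{\beta_i - \alpha_i} e^{s\beta_i}$; taking expectations and using $\Exp[Y_i]=0$ gives a bound of the form $e^{\psi(s)}$, where one checks $\psi(0)=\psi'(0)=0$ and $\psi''(s) \le (\beta_i - \alpha_i)^2/4$ (the second derivative is a $p(1-p)$-type quantity scaled by $(\beta_i-\alpha_i)^2$, hence at most a quarter), so a second-order Taylor expansion with remainder yields $\psi(s) \le s^2(\beta_i-\alpha_i)^2/8$.

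With the lemma in hand the remainder is routine. For $s>0$, Markov's inequality applied to $e^{s(X-\Exp[X])}$ together with independence of the $Y_i$ gives
\[
\Pr\!\big[X - \Exp[X] > t\big] \le e^{-st}\prod_{i=1}^n \Exp[e^{sY_i}] \le \exp\!\Big({-st} + \frac{s^2}{8}\sum_{i=1}^n (b_i-a_i)^2\Big).
\]
Writing $V := \sum_i (b_i-a_i)^2$ and minimizing the exponent over $s>0$ (the optimum is $s = 4t/V$) produces the bound $\exp(-2t^2/V)$. Applying the same argument to the variables $-X_i$, which lie in intervals of the same length, bounds $\Pr[X-\Exp[X] < -t]$ by the same quantity, and a union bound over the two tails accounts for the factor $2$, giving the claimed inequality.

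The only genuinely non-mechanical step is establishing Hoeffding's lemma, and within it the bound $\psi''(s) \le (b_i-a_i)^2/4$; the centering, the exponential-Markov step, the optimization over $s$, and the union bound are all bookkeeping. Since this is a classical inequality due to \cite{Hoeffding}, in the paper the natural choice is simply to cite it rather than reproduce the argument above.
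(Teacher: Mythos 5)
Your proof is correct: it is the standard Cram\'er--Chernoff argument via Hoeffding's lemma, with the optimization $s=4t/\sum_i(b_i-a_i)^2$ and a union bound over the two tails all carried out correctly. The paper itself gives no proof of this lemma --- it is stated in the appendix purely with the citation to \cite{Hoeffding} --- so your closing remark that citing the classical result suffices is exactly what the authors do.
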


We use the following variants of Chernoff bounds for dependent random variables. The first one is obtained, e.g., as a corollary of Lemma 1.8.7 and Thms.\ 1.10.1 and 1.10.5 in~\cite{Doerr2020}.

\begin{lemma}[Martingales \cite{Doerr2020}]\label{lem:chernoff}
Let $\{X_i\}_{i=1}^r$ be binary random variables, and $X=\sum_i X_i$.
    If $\Pr[X_i=1\mid X_1=x_1,\dots,X_{i-1}=x_{i-1}]\le q_i\le 1$, for all $i\in [r]$ and $x_1,\dots,x_{i-1}\in \{0,1\}$ with $\Pr[X_1=x_1,\dots,X_r=x_{i-1}]>0$, then for any $\delta>0$,
    \[\Pr\event{X\ge(1+\delta)\sum_{i=1}^r q_i}\le \exp\parens*{-\frac{\min(\delta,\delta^2)}{3}\sum_{i=1}^r q_i}\ .\]
    If $\Pr[X_i=1\mid X_1=x_1,\dots,X_{i-1}=x_{i-1}]\ge q_i$, $q_i\in (0,1)$, for all $i\in [r]$ and $x_1,\dots,x_{i-1}\in \{0,1\}$ with $\Pr[X_1=x_1,\dots,X_r=x_{i-1}]>0$, then for any $\delta\in [0,1]$,
    \begin{equation}\label{eq:chernoffmore}
    \Pr[X\le(1-\delta)\sum_{i=1}^r q_i]\le \exp\left(-\frac{\delta^2}{2}\sum_{i=1}^r q_i\right)\ .
    \end{equation}
\end{lemma}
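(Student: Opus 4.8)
The plan is the textbook exponential-moment (Chernoff--Bernstein) argument; the only point where independence is normally used — the product formula for the moment generating function — is replaced by a telescoping conditioning step. Throughout, write $\mu=\sum_{i=1}^r q_i$, and note that the result specializes to the independent case (\cref{lem:basicchernoff}) when the conditional probabilities equal the $q_i$.

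\emph{Upper tail.} Fix $\lambda>0$ and consider $\Exp\event{e^{\lambda X}}$. Peeling off $X_r$ with the tower rule,
\[
\Exp\event{e^{\lambda X}}=\Exp\event*{e^{\lambda\sum_{i=1}^{r-1}X_i}\cdot \Exp\event*{e^{\lambda X_r}\mid X_1,\dots,X_{r-1}}}.
\]
Since $X_r\in\{0,1\}$ and, by hypothesis, $\Pr\event{X_r=1\mid X_1,\dots,X_{r-1}}\le q_r$ on an event of probability $1$, and since $e^{\lambda}-1>0$, the inner factor satisfies
\[
\Exp\event*{e^{\lambda X_r}\mid X_1,\dots,X_{r-1}}=1+(e^{\lambda}-1)\Pr\event{X_r=1\mid X_1,\dots,X_{r-1}}\le 1+(e^{\lambda}-1)q_r\le \exp\parens*{(e^{\lambda}-1)q_r},
\]
using $1+x\le e^{x}$. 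Iterating for $i=r,r-1,\dots,1$ (at step $i$ the prefactor $e^{\lambda\sum_{j<i}X_j}$ is measurable w.r.t.\ the conditioning) yields $\Exp\event{e^{\lambda X}}\le \exp\parens*{(e^{\lambda}-1)\mu}$. Markov's inequality then gives $\Pr\event{X\ge a}\le e^{-\lambda a}\exp\parens*{(e^{\lambda}-1)\mu}$ for any $a$; taking $a=(1+\delta)\mu$ and the optimal $\lambda=\ln(1+\delta)$ produces $\Pr\event{X\ge(1+\delta)\mu}\le\parens*{e^{\delta}/(1+\delta)^{1+\delta}}^{\mu}$, which a routine case split on $\delta\le 1$ versus $\delta>1$ bounds by $\exp\parens*{-\min(\delta,\delta^2)\mu/3}$.

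\emph{Lower tail.} The argument is symmetric. Fix $\lambda>0$ and bound $\Exp\event{e^{-\lambda X}}$ the same way; now, using $1-e^{-\lambda}>0$ and the hypothesis $\Pr\event{X_r=1\mid \cdots}\ge q_r$,
\[
\Exp\event*{e^{-\lambda X_r}\mid X_1,\dots,X_{r-1}}=1-(1-e^{-\lambda})\Pr\event{X_r=1\mid X_1,\dots,X_{r-1}}\le 1-(1-e^{-\lambda})q_r\le \exp\parens*{-(1-e^{-\lambda})q_r}.
\]
Iterating, $\Exp\event{e^{-\lambda X}}\le\exp\parens*{-(1-e^{-\lambda})\mu}$, hence $\Pr\event{X\le a}=\Pr\event{e^{-\lambda X}\ge e^{-\lambda a}}\le e^{\lambda a}\exp\parens*{-(1-e^{-\lambda})\mu}$. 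Taking $a=(1-\delta)\mu$ and $\lambda=-\ln(1-\delta)$ (valid for $\delta\in[0,1)$, with $\delta=1$ handled by continuity) gives $\Pr\event{X\le(1-\delta)\mu}\le\parens*{e^{-\delta}/(1-\delta)^{1-\delta}}^{\mu}\le\exp\parens*{-\delta^2\mu/2}$, the last step being the elementary inequality $(1-\delta)\ln(1-\delta)\ge -\delta+\delta^2/2$.

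Since this is a standard computation, there is no real obstacle; the one place that needs genuine care is the conditioning. One must ensure that each conditional-moment bound holds on a probability-one event — which is exactly what the "for all histories of positive probability" phrasing of the hypotheses guarantees — and that the peeling is performed in the order $i=r,r-1,\dots$, so that when $X_i$ is conditioned out the remaining factor $e^{\pm\lambda\sum_{j<i}X_j}$ is already determined by the conditioning variables. Everything after that is the usual optimization over $\lambda$ together with the elementary estimates needed to pass from the exact Chernoff expressions $e^{\delta}/(1+\delta)^{1+\delta}$ and $e^{-\delta}/(1-\delta)^{1-\delta}$ to the stated forms with exponents $\min(\delta,\delta^2)/3$ and $\delta^2/2$.
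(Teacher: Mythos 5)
Your proof is correct. The paper does not actually prove this lemma --- it is imported verbatim from \cite{Doerr2020} (Lemma~1.8.7 together with Theorems~1.10.1 and~1.10.5 there), and the argument in that source is exactly the one you give: bound the moment generating function by peeling off the variables in the order $i=r,r-1,\dots,1$ using the conditional-probability hypotheses, then optimize $\lambda$ and apply the standard elementary estimates on $e^{\delta}/(1+\delta)^{1+\delta}$ and $e^{-\delta}/(1-\delta)^{1-\delta}$. Your handling of the two points that require care --- that the conditional bounds hold on an event of probability one (which is what the ``all histories of positive probability'' hypothesis delivers) and the $\delta=1$ endpoint of the lower tail --- is fine.
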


A set $\{X_i\}_{i=1}^r$ of binary random variables is \emph{read-$k$} if there is a set $\{Y_j\}_{j=1}^{m}$ of $m$ independent binary random variables and subsets $\{P_i\}_{i=1}^r$ of indices, $P_i\subseteq [m]$, such that $X_i$ is a function of only $\{Y_j\}_{j\in P_i}$, $i\in [r]$, while for each $j\in [m]$, $|\{i : j\in P_i\}|\le k$. In words, each $Y_j$ influences at most $k$ variables $X_i$. 

\begin{lemma}[read-$k$ bound \cite{kread}]\label{lem:kread}
Let $\{X_i\}_{i=1}^r$ be a family  of read-$k$ binary random variables  and let $X=\sum_{i=1}^r X_i$. For any $\delta>0$,
$\Pr[\card{X - \Exp[X]} \ge \delta r]\le 2\exp(-2\delta^2 r/k)$.
\end{lemma}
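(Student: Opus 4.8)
The plan is to route through the exponential moment, mirroring the usual Chernoff argument, but with one structural ingredient replacing the independence step. Write $\mu=\Exp[X]$ and fix $\lambda>0$. Since $e^{\lambda X}=\prod_{i=1}^r e^{\lambda X_i}$ and the factors share the underlying independent variables $Y_1,\dots,Y_m$, this product does not factorize; instead I would establish and invoke the following \emph{generalized H\"older inequality for read-$k$ families}: for any non-negative functions $h_i$ with $h_i=h_i\bigl((Y_j)_{j\in P_i}\bigr)$ forming a read-$k$ family, $\Exp\bigl[\prod_{i=1}^r h_i\bigr]\le \prod_{i=1}^r \bigl(\Exp[h_i^k]\bigr)^{1/k}$.

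I would prove this inequality by induction on the number $m$ of underlying variables. The base case $m=0$ holds with equality (all $h_i$ are constants). For the step, let $S=\{i: m\in P_i\}$, so $|S|\le k$; conditioning on $Y_1,\dots,Y_{m-1}$, the product splits as $\bigl(\prod_{i\notin S}h_i\bigr)\cdot\bigl(\prod_{i\in S}h_i\bigr)$ with only the second factor depending on $Y_m$. Applying ordinary H\"older to that factor (with $|S|$ equal exponents $|S|$) and then Lyapunov's inequality $\bigl(\Exp_{Y_m}[h_i^{|S|}]\bigr)^{1/|S|}\le\bigl(\Exp_{Y_m}[h_i^{k}]\bigr)^{1/k}$ gives $\Exp_{Y_m}\bigl[\prod_i h_i\bigr]\le \bigl(\prod_{i\notin S}h_i\bigr)\prod_{i\in S}\bigl(\Exp_{Y_m}[h_i^{k}]\bigr)^{1/k}$. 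Now replace, for $i\in S$, the function $h_i$ by $\tilde h_i:=\bigl(\Exp_{Y_m}[h_i^{k}\mid Y_1,\dots,Y_{m-1}]\bigr)^{1/k}$: it is non-negative, depends only on $(Y_j)_{j\in P_i\setminus\{m\}}$ so the read count does not increase, and it satisfies $\Exp[\tilde h_i^{k}]=\Exp[h_i^{k}]$ by the tower property and the independence of $Y_m$. The resulting read-$k$ family lives on $m-1$ variables, so the induction hypothesis applies and closes the argument.

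Given the inequality, the rest is routine: apply it with $h_i=e^{\lambda X_i}\ge 0$ to get $\Exp[e^{\lambda X}]\le\prod_i\bigl(\Exp[e^{k\lambda X_i}]\bigr)^{1/k}$; since $X_i\in\{0,1\}$, Hoeffding's lemma yields $\Exp[e^{k\lambda X_i}]\le\exp\!\bigl(k\lambda\Exp[X_i]+k^2\lambda^2/8\bigr)$, hence $\bigl(\Exp[e^{k\lambda X_i}]\bigr)^{1/k}\le\exp\!\bigl(\lambda\Exp[X_i]+k\lambda^2/8\bigr)$ and $\Exp[e^{\lambda X}]\le\exp\!\bigl(\lambda\mu+rk\lambda^2/8\bigr)$. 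A Markov step $\Pr[X\ge\mu+\delta r]\le e^{-\lambda(\mu+\delta r)}\Exp[e^{\lambda X}]$ with the optimizing choice $\lambda=4\delta/k$ gives $\Pr[X\ge\mu+\delta r]\le\exp(-2\delta^2 r/k)$; applying the same bound to the read-$k$ family $\{1-X_i\}$ handles the lower tail, and a union bound over the two tails produces the factor $2$.

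I expect the main obstacle to be the generalized H\"older inequality for read-$k$ families; everything downstream is standard exponential-moment bookkeeping. The delicate points within that step are (i) checking that the auxiliary functions $\tilde h_i$ genuinely preserve the read-$k$ property after eliminating $Y_m$, and (ii) the identity $\Exp[\tilde h_i^{k}]=\Exp[h_i^{k}]$, which is precisely what prevents the induction from leaking constants and what makes the exponent come out as $r/k$ rather than something weaker (such as the $r/\chi$ one would get from naively partitioning the conflict graph into $\chi$ independent sets).
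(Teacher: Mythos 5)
Your proof is correct: the generalized H\"older inequality for read-$k$ families is established by a valid induction (the replacement of $h_i$ by $\tilde h_i=(\Exp_{Y_m}[h_i^k\mid Y_{<m}])^{1/k}$ preserves both the read-$k$ structure and the $k$-th moments), and the subsequent MGF bound, optimization at $\lambda=4\delta/k$, and two-sided union bound all check out. The paper itself gives no proof and simply imports this lemma from the cited reference, and your reconstruction is essentially the argument given there, so there is nothing further to compare.
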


A function $f(x_1,\ldots,x_n)$ is  \emph{$c$-Lipschitz} iff changing any single $x_i$ affects the value of $f$ by at most $c$, and $f$ is  \emph{$r$-certifiable} iff whenever $f(x_1,\ldots,x_n) \geq s$ for some value $s$, there exist $r\cdot s$ inputs $x_{i_1},\ldots,x_{i_{r\cdot s}}$ such that knowing the values of these inputs certifies $f\geq s$ (i.e., $f\geq s$ whatever the values of $x_i$ for $i\not \in \{i_1,\ldots,i_{r\cdot s}\}$).
\begin{lemma}[Talagrand's inequality~\cite{DP09}]
\label{lem:talagrand}
Let $\{X_i\}_{i=1}^n$ be $n$ independent random variables and $f(X_1,\ldots,X_n)$ be a $c$-Lipschitz $r$-certifiable function; then for $t\geq 1$,
\[\Pr\event*{\abs*{f-\Exp[f]}>t+30c\sqrt{r\cdot\Exp[f]}}\leq 4 \cdot \exp\parens*{-\frac{t^2}{8c^2r\Exp[f]}}\]
\end{lemma}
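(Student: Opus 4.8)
The plan is to obtain this corollary from Talagrand's abstract concentration inequality for the \emph{convex distance}, which is the statement actually established in \cite{DP09}; everything else is a deterministic geometric reduction plus a median-versus-mean estimate. Recall that for a product probability space $\Omega=\Omega_1\times\cdots\times\Omega_n$, a point $x$, a set $A\subseteq\Omega$, and the convex distance $d_T(x,A)=\sup_{\alpha\ge 0,\ \|\alpha\|_2\le 1}\ \inf_{y\in A}\ \sum_{i:\,x_i\ne y_i}\alpha_i$, Talagrand's inequality states $\Pr[A]\cdot\Pr[d_T(X,A)\ge\lambda]\le e^{-\lambda^2/4}$ for every $\lambda\ge 0$. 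So it suffices to lower-bound $d_T$ to a sublevel set of $f$ in terms of the gap in $f$-values.

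First I would prove the geometric bridge: if $f$ is $c$-Lipschitz and $r$-certifiable, $b$ is a threshold, $A=\{y:f(y)\le b\}$, and $f(x)=s>b$, then $d_T(x,A)\ge (s-b)/(c\sqrt{rs})$. Take an index set $I$ with $|I|\le rs$ certifying $f\ge s$ at $x$. For any $y\in A$, overwriting the $I$-coordinates of $y$ with those of $x$ yields a point of $f$-value $\ge s$ that differs from $y$ only on coordinates of $I$ where $x$ and $y$ already disagree; by $c$-Lipschitzness there must be at least $(s-b)/c$ such coordinates. Choosing $\alpha$ uniform on $I$ (entry $1/\sqrt{|I|}$) and $0$ elsewhere gives $\inf_{y\in A}\sum_{i:\,x_i\ne y_i}\alpha_i\ge (s-b)/(c\sqrt{|I|})\ge (s-b)/(c\sqrt{rs})$. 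A symmetric version — certifying $f\ge M$ at a point where $f\ge M$, via a certificate of size $\le rM$ — gives, for $A=\{y:f(y)\le s\}$ with $s<M$, that $f(x)\ge M$ implies $d_T(x,A)\ge (M-s)/(c\sqrt{rM})$.

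Next I would set $M$ to be a median of $f$ and apply Talagrand's inequality twice. Upper tail: with $A=\{f\le M\}$, so $\Pr[A]\ge 1/2$, and using that $(u-M)/\sqrt{u}$ is increasing for $u\ge M$, the event $\{f\ge s\}$ lies inside $\{d_T(\cdot,A)\ge (s-M)/(c\sqrt{rs})\}$, hence $\Pr[f\ge s]\le 2\exp\!\big(-(s-M)^2/(4c^2rs)\big)$. Lower tail: with $A=\{f\le s\}$ and $s<M$, the symmetric bridge gives $\{f\ge M\}\subseteq\{d_T(\cdot,A)\ge (M-s)/(c\sqrt{rM})\}$, and since $\Pr[f\ge M]\ge 1/2$ we get $\Pr[f\le s]\le 2\exp\!\big(-(M-s)^2/(4c^2rM)\big)$. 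Integrating the first bound shows $\Exp[f]\le M+O(c\sqrt{rM})$, and symmetrically $\Exp[f]\ge M-O(c\sqrt{rM})$, so $|M-\Exp[f]|=O(c\sqrt{r\,\Exp[f]})$. Substituting this into the two tail bounds, replacing the free variable $s$ (resp.\ $M$) in the exponent's denominator by $\Exp[f]$ up to constants (legitimate in the only regime that matters, $|f-\Exp[f]|\le\Exp[f]$), and union-bounding over the two sides yields $\Pr[|f-\Exp[f]|>t+30c\sqrt{r\,\Exp[f]}]\le 4\exp(-t^2/(8c^2r\,\Exp[f]))$.

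The genuinely hard ingredient — which I would not reprove but cite from \cite{DP09} — is Talagrand's convex-distance inequality itself; granting it, the remaining difficulty is purely bookkeeping the constants: choosing how to split the gap between the median correction (so it is absorbed into the $30c\sqrt{r\,\Exp[f]}$ slack term) and the Chernoff-type decay, so that the denominator comes out exactly $8c^2r\,\Exp[f]$ and the prefactor exactly $4$. I expect this constant-chasing to be the only fiddly part; the geometric reduction and the median-to-mean estimate above are routine.
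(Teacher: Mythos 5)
The paper does not prove this lemma at all: it is quoted as a known result with a citation to \cite{DP09}, so there is no in-paper proof to compare against. Your outline is the standard derivation of this ``method of bounded differences with certificates'' form of Talagrand's inequality from the convex-distance inequality --- the certifiability/Lipschitz bridge lower-bounding $d_T(x,\{f\le b\})$ by $(s-b)/(c\sqrt{rs})$, two applications around the median, and the median-to-mean correction absorbed into the $30c\sqrt{r\Exp[f]}$ slack --- which is exactly how the cited references (Dubhashi--Panconesi, Molloy--Reed) obtain it. The argument is sound; the only caveat, which you correctly flag, is that the specific constants $30$, $8$, and $4$ require the routine but careful bookkeeping in the final substitution step.
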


The following lemma 
implies the two standard arguments used regarding slack generation: based on discrepancy and based on successful non-edges.

\begin{lemma}
\label{lem:talagrand-difference}
Let $\set*{X_i}_{i=1}^n$ be $n$ independent random variables. Let $\set*{A_j}_{j=1}^k$ and $\set*{B_j}_{j=1}^k$ be two families of
events that are functions of the $X_i$'s. Let $f=\sum_{j\in[k]} \mathbb{I}_{A_j}$, $g=\sum_{j\in[k]} \mathbb{I}_{A_j \cap \overline{B}_j}$,\footnote{$\mathbb{I}$ denotes the indicator random variable of an event.} and $h=f-g$ be such that $f$ and $g$ are $c$-Lipschitz and $r$-certifiable w.r.t.\ the $X_i$'s, and $\Exp[h] \geq \alpha \Exp[f]$ for some constant $\alpha \in (0,1)$. Let $\delta \in (0,1)$. Then for $\Exp[h]$ large enough:
\[\Pr\event*{\abs*{h - \Exp[h]} > \delta \Exp[h]} \leq \exp(-\Omega(\Exp[h]))\]
\end{lemma}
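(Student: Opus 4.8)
The plan is to apply Talagrand's inequality (\cref{lem:talagrand}) separately to $f$ and to $g$ and then recombine: since $h=f-g$, the triangle inequality gives $|h-\Exp[h]|\le|f-\Exp[f]|+|g-\Exp[g]|$, so a deviation bound for $h$ follows from deviation bounds for $f$ and $g$ together with a union bound. We never apply Talagrand to $h$ directly, which is the whole point — $h$ is a difference and not obviously Lipschitz/certifiable, whereas $f$ and $g$ are so by hypothesis.

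First I would record the elementary relations among the three means. As $\mathbb{I}_{A_j\cap\overline{B}_j}\le\mathbb{I}_{A_j}$ pointwise, we have $g\le f$ everywhere, hence $h\ge 0$ and $\Exp[g]\le\Exp[f]$. Together with $\Exp[h]=\Exp[f]-\Exp[g]\ge\alpha\Exp[f]$ this yields $\Exp[h]\le\Exp[f]\le\Exp[h]/\alpha$ and $\Exp[g]\le(1/\alpha-1)\Exp[h]$. Writing $\mu=\Exp[h]$, both $\Exp[f]$ and $\Exp[g]$ are thus $O(\mu)$ with a constant depending only on $\alpha$ (and in fact $\Exp[f]=\Theta(\mu)$).

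Next I would apply \cref{lem:talagrand} to $f$ at deviation scale $\tfrac\delta2\mu$. Since $30c\sqrt{r\,\Exp[f]}\le 30c\sqrt{r\mu/\alpha}=o(\mu)$, for all $\mu$ above a threshold $\mu_0=\mu_0(c,r,\alpha,\delta)$ this correction term is at most $\tfrac\delta4\mu$; choosing $t=\tfrac\delta4\mu\ge 1$ then gives
\[
\Pr\!\left[|f-\Exp[f]|>\tfrac\delta2\mu\right]\le 4\exp\!\left(-\frac{t^2}{8c^2r\,\Exp[f]}\right)\le 4\exp\!\left(-\frac{\alpha\delta^2}{128\,c^2r}\,\mu\right).
\]
The same computation applied to $g$, using $\Exp[g]\le\mu/\alpha$ (the case $\Exp[g]=0$ being vacuous, as then $g\equiv 0$), yields the identical bound for $\Pr[|g-\Exp[g]|>\tfrac\delta2\mu]$. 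Union-bounding the two bad events and using the triangle inequality, on the good event $|h-\Exp[h]|\le\delta\mu$, so
\[
\Pr\!\left[|h-\Exp[h]|>\delta\Exp[h]\right]\le 8\exp\!\left(-\frac{\alpha\delta^2}{128\,c^2r}\,\Exp[h]\right)=\exp(-\Omega(\Exp[h])),
\]
which is the claim once $\Exp[h]\ge\mu_0$.

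I don't expect a genuine obstacle here; the only care needed is bookkeeping the ``for $\Exp[h]$ large enough'' quantifier so that the $30c\sqrt{r\,\Exp[f]}$ (and $30c\sqrt{r\,\Exp[g]}$) correction terms in Talagrand's bound get absorbed by the linear-in-$\mu$ slack we allow — this is precisely where the hypothesis that $\Exp[h]$ be large is used — and dispatching the degenerate case $\Exp[g]=0$ trivially.
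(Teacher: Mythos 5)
Your proposal is correct and matches the paper's proof essentially verbatim: both apply Talagrand's inequality separately to $f$ and $g$ (the paper with $t=\delta\Exp[h]/2$ and $\lambda=1/2$, which equals your choice $t=\delta\mu/4$ per function), combine via the triangle inequality and a union bound, use the same bounds $\Exp[f]\le\Exp[h]/\alpha$ and $\Exp[g]\le\frac{1-\alpha}{\alpha}\Exp[h]$, and arrive at the same constant $\frac{\alpha\delta^2}{128c^2r}$ in the exponent. Your explicit handling of the degenerate case $\Exp[g]=0$ is a small bonus the paper omits.
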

\begin{proof}
First, for any $\lambda\in (0,1)$ and $t>0$, we have 
\[\Pr\event*{\abs*{h - \Exp[h]} > 2t} \leq \Pr\event*{\abs*{f - \Exp[f]} > 2\lambda t} + \Pr\event*{\abs*{g - \Exp[g]} > 2(1-\lambda) t}.\]

Suppose for now that $\lambda t \geq 30 c\sqrt{r\Exp[f]}$ and $(1-\lambda) t \geq 30 c\sqrt{r\Exp[g]}$. Then by applying Talagrand's inequality (\cref{lem:talagrand}) we get:
\[\Pr\event*{\abs*{f - \Exp[f]} > 2\lambda t} \leq 4\exp\parens*{-\frac{\lambda^2 t^2}{8c^2 r \Exp[f]}},
\>\>\textrm{and}\>\>
\Pr\event*{\abs*{g - \Exp[g]} > 2(1-\lambda) t} \leq 4\exp\parens*{-\frac{(1-\lambda)^2 t^2}{8c^2 r \Exp[g]}}.
\]

The definitions of $f$, $g$, and $h$ together with that $\Exp[h] \geq \alpha \Exp[f]$ imply that $\Exp[f] \in \range*{\Exp[h], \frac 1 \alpha \Exp[h]}$ and $\Exp[g] \in \range*{0, \frac{1-\alpha}{\alpha}\Exp[h]}$. Setting $t=\delta\Exp[h]/2$, $\lambda= 1 / 2$, we get:
\[\Pr\event*{\abs*{h - \Exp[h]} > \delta \Exp[h]} \leq 8 \exp\parens*{-\frac{\alpha\delta^2}{128 c^2 r } \Exp[h] }.\]

This holds when $\Exp[h] \geq \frac{14000 \cdot c^2 r}{\delta^2 \alpha}$, as this implies the lower bounds on $\lambda t$ and $(1-\lambda) t$ we assumed previously.
\end{proof}

\section{Proof of \texorpdfstring{\cref{lem:slackcolor}}{Lemma~\ref{lem:slackcolor}}}
\label{app:slackcolor}

\slackcolorlemma*

A possible algorithm yielding \cref{lem:slackcolor} is described below (\cref{alg:slackcoloring}). It uses a procedure called $\multitrial(x)$ in which nodes get colored with probability $1-2^{-x}$ by (simultaneously) trying $x$ colors when their palette size to degree ratio $\card{\pal(v)}/d(v)$ is larger than $2x$ (see \cref{lem:multitrial-success}). Such a high probability of success allows us to show that by repeatedly applying $\multitrial(x)$ with rapidly increasing $x$, the uncolored degree of nodes decreases super-exponentially (see \cref{lem:slackcolor-tower}), implying the claimed runtime of the algorithm. A technical issue to solve along the way is to show (see \cref{lem:slackcolor-init}) that after a few initial random color trials, the slack of nodes increases by a constant factor so as to make them eligible for application of \cref{lem:multitrial-success}. Finally, when the degrees become too small, we can no longer increase the number of colors tried as fast as before, so we need to apply {\multitrial} with slower pace to finish coloring (see \cref{lem:slackcolor-finish}). 
In what follows, let $2\knuthupuparrow i$ be the iterated exponentiation (or tetration), defined by $2\knuthupuparrow 0 = 1$ and $2\knuthupuparrow (i+1) = 2^{2\knuthupuparrow i}$.

\begin{algorithm}[H]\caption{\slackcolor[($\smin$)], for node $v$} 
\label{alg:slackcoloring}
  \begin{algorithmic}[1]
  \STATE \algorithmicfor\ $O(1)$ rounds \algorithmicdo\  {\tryrandomcolor}($v$).\label{step:slackcolor-begin-init} 
    \STATE \algorithmicif\ $s(v) < 2d(v)$ \algorithmicthen\ terminate.\label{step:slackcolor-end-init}
    \STATE Let $\sminpow\gets \smin^{1/(1+\kappa)}$
    \FOR{$i$ from $0$ to $ \log^* \sminpow$}\label{step:slackcolor-begin-tower}
    \STATE $x_i \gets 2 \knuthupuparrow i$ 
    \STATE $\multitrial(x_i)$ 2 times.
    \STATE \algorithmicif\ $d(v) > s(v) / \min(2^{x_i},\sminpow^{\kappa})$ \algorithmicthen\ terminate.\label{step:slackcolor-termtower}
    \ENDFOR\label{step:slackcolor-end-tower}
    \FOR{$i$ from $1$ to $\ceil*{1/\kappa}$}\label{step:slackcolor-begin-finish}
    \STATE $x_i \gets \sminpow^{i \cdot \kappa}$ 
    \STATE $\multitrial(x_i)$ 3 times.
    \STATE \algorithmicif\ $d(v) > s(v) / \min(\sminpow^{(i+1)\cdot\kappa},\sminpow)$ \algorithmicthen\ terminate.\label{step:slackcolor-termfinishloop}
    \ENDFOR
    \STATE $\multitrial(\sminpow)$.\label{step:slackcolor-end-finish}
\end{algorithmic}
\end{algorithm}

\begin{algorithm}[H]\caption{{\multitrial}($x$), for node $v$}
\label{alg:multitrial}
  \begin{algorithmic}[1]
    \STATE $v$ picks a set $X_v$ of $x$ random colors in its palette $\pal_v$, sends them to its neighbors.
    \IF {$\exists \col \in X_v$ s.t. $\forall u\in N(v)$, $\col \not \in X_v$} 
    \STATE Adopt some such $\col$ as permanent color and broadcast to $N(v)$.
    \ENDIF
\end{algorithmic}
\end{algorithm}

\begin{lemma}
\label{lem:multitrial-success}
For every node $v$, if $x \leq \card{\pal(v)}/2\card{N(v)}$, then an execution of {\multitrial}$(x)$ colors $v$ with probability $1-2^{-x}$, even when conditioned on any particular combination of random choices of the other nodes.
\end{lemma}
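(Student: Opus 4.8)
The plan is to condition on arbitrary color choices of all nodes other than $v$ and show that $v$ still gets colored with probability $1-2^{-x}$ over its own randomness alone. Let $U = \bigcup_{u\in N(v)} X_u$ be the (now fixed) set of colors chosen by neighbors; since each neighbor picks $x$ colors, $\lvert U\rvert \le x\card{N(v)}$. Node $v$ fails to get colored precisely when \emph{every} one of the $x$ colors it picks lands in $U \cap \pal(v)$, a set of size at most $x\card{N(v)} \le \card{\pal(v)}/2$ by the hypothesis $x \le \card{\pal(v)}/(2\card{N(v)})$.

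First I would model $v$'s choice of $X_v$ as sampling $x$ colors from $\pal(v)$, and bound the failure probability. Whether $X_v$ is sampled with or without replacement, the probability that a given pick falls in the ``bad'' set $U\cap\pal(v)$ is at most $\card{U\cap\pal(v)}/(\card{\pal(v)} - x + 1)$; using $\card{U\cap\pal(v)} \le \card{\pal(v)}/2$ and $x \le \card{\pal(v)}/2$ (which follows from the hypothesis since $\card{N(v)}\ge 1$), this per-pick probability is at most $1/2$ (in the with-replacement model it is cleanly $\le 1/2$ directly). The picks being independent (with replacement) or negatively correlated (without replacement), the probability that all $x$ picks are bad is at most $(1/2)^x = 2^{-x}$. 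Hence $v$ is left with at least one color in $\pal(v)\setminus U$, and by the definition of {\multitrial} it adopts such a color, so $v$ is colored with probability at least $1-2^{-x}$.

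The only mild subtlety — and the step I would be most careful about — is the sampling model for $X_v$: if the $x$ colors are drawn without replacement one must use a conditional-probability (or negative-association) argument rather than plain independence to multiply the $\le 1/2$ bounds across the $x$ picks, and one must check that the denominator never drops below $\card{\pal(v)}/2$ during the $x$ draws, which is exactly what the hypothesis $x\le \card{\pal(v)}/(2\card{N(v)})\le \card{\pal(v)}/2$ guarantees. Everything else is immediate from the fact that the conditioning fixes $U$ and leaves $v$'s randomness untouched, which is what gives the ``even when conditioned on any particular combination of random choices of the other nodes'' clause for free.
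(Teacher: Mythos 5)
Your proposal is correct and follows essentially the same route as the paper: fix the neighbors' choices, observe that the union $\bigcup_{u\in N(v)}X_u$ of blocked colors has size at most $x\card{N(v)}\le\card{\pal(v)}/2$, and conclude that each of $v$'s $x$ picks avoids it with probability at least $1/2$, giving failure probability at most $2^{-x}$. The paper's proof is in fact terser than yours and does not even discuss the with/without-replacement distinction that you carefully flag (your crude per-pick bound $\card{U\cap\pal(v)}/(\card{\pal(v)}-x+1)$ is not by itself $\le 1/2$, but the conditional-probability argument you point to does close this).
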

\begin{proof}
    Consider $Y_v=\bigcup_{u\in N(v)} X_u$, the set of colors tried by neighbors of $v$.
    Note that $|Y_v|\le x|N(v)|\le |\pal(v)|/2$, and its composition is independent from $v$'s choice of random colors. Hence each node tried by $v$ has a probability at least $1/2$ of not being in $Y_v$, so $v$ gets colored w.p.\ at least $1-2^{-x}$.
\end{proof}

In the proofs of the coming statements, when analyzing the effect of running a procedure on the degree of a node $v$, let $d(v)$ be the degree of $v$ before applying the procedure, while $d'(v)$ is the its degree afterwards.

\begin{lemma}
    Let $\iratio > 1$. Suppose all nodes satisfy $s(v) \geq  d(v)/\iratio$. Then after $t=O(\iratio \log \iratio)$ iterations of all nodes running \tryrandomcolor, a node $v$ satisfies $s(v) \geq 2d(v)$ w.p.\ $1-\exp(-\Omega(s(v)))$. This holds conditioned on arbitrary random choices of nodes at distance $\geq 2$ from $v$.
\label{lem:slackcolor-init}
\end{lemma}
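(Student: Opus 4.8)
The plan is to show that after $O(\vartheta\log\vartheta)$ rounds of {\tryrandomcolor}, the slack-to-degree ratio roughly doubles, and then iterate this a constant number of times (within the $O(\vartheta\log\vartheta)$ budget) to push the ratio from $1/\vartheta$ up to $2$. The engine of each round is the elementary observation that when $v$ tries a uniformly random color of $\pal(v)$, that color survives if no neighbor tries it; since $|\pal(v)| \ge d(v)$, the survival probability is at least a constant (around $1/e$, or any fixed $c_0>0$ after being slightly careful about conditioning). So in one round a constant fraction of $v$'s still-uncolored neighbors get permanently colored in expectation, and by a Chernoff/Hoeffding bound over the neighbors (their color choices are independent, and $v$'s own choice is irrelevant to whether neighbors survive), the number that get colored is concentrated. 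Crucially, colors already removed from $\pal(v)$ do not come back: $s(v)=|\pal(v)|-d(v)$ never decreases (colors removed from $\pal(v)$ are matched by a neighbor leaving $N(v)$), so whenever $d(v)$ shrinks by a constant factor the ratio $s(v)/d(v)$ grows by that same factor.

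First I would fix a single round and let $d(v)$, $d'(v)$ denote the degree before and after. Conditioning on arbitrary choices of nodes at distance $\ge 2$ from $v$ (which only fixes which colors neighbors of $v$ are \emph{forbidden} from trying, still leaving each neighbor a palette of size $\ge d(u)\ge s(u)$), each neighbor $u\in N(v)$ gets permanently colored with probability at least some constant $c_0$, and these events for distinct neighbors are independent once we also fix $v$'s own color (which affects none of them, since a neighbor is blocked only by \emph{other} neighbors of it and by $v$, and we can simply always count $v$ as potentially blocking — the bound $c_0$ already accounts for that). By \cref{lem:basicchernoff} (or \cref{lem:chernoff-various-ranges}), with probability $1-\exp(-\Omega(d(v)))$ at least $c_0 d(v)/2$ neighbors get colored, so $d'(v)\le (1-c_0/2)d(v)$, and hence $s(v)/d'(v)\ge \frac{1}{1-c_0/2}\cdot s(v)/d(v)$. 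Since $s(v)$ is monotone and $d(v)$ is monotone decreasing, and since $s(v)\ge d(v)/\vartheta$ means $s(v)=\Omega(d(v)/\vartheta)$ throughout, and a node that fails (too few neighbors colored) can only help us, the failure probability of a single round is $\exp(-\Omega(s(v)\cdot\vartheta/\vartheta)) = \exp(-\Omega(s(v)))$ — here I should be slightly careful that $d(v)$ could in principle drop below $s(v)$ during the process, but that is exactly the good event ($s(v)\ge d(v)$ is stronger than needed), so we may assume $d(v)\ge s(v)$ in the failure analysis, giving $\exp(-\Omega(d(v)))\le\exp(-\Omega(s(v)))$.

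Next I would iterate. Starting from ratio $\ge 1/\vartheta$, each round multiplies the ratio by $(1-c_0/2)^{-1} = 1+\Omega(1)$, so after $t=O(\log\vartheta)$ rounds the ratio reaches $2$ — but there is a subtlety: once the ratio is a large constant, $v$'s degree may already be tiny and further rounds give diminishing absolute reductions, yet we only need the ratio to hit $2$, so we stop as soon as that happens. To be safe about the claimed $t=O(\vartheta\log\vartheta)$ (rather than merely $O(\log\vartheta)$) bound, I would note we can afford to run the extra rounds harmlessly — a node with $s(v)\ge 2d(v)$ can simply stop trying, and the statement allows any round count up to $O(\vartheta\log\vartheta)$. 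Finally, a union bound over the $t=O(\vartheta\log\vartheta)$ rounds keeps the total failure probability at $\exp(-\Omega(s(v)))$ (the $\log$ factor is absorbed, and for this to be meaningful we implicitly use that in the regime of interest $s(v)$ is at least polylogarithmic, consistent with the rest of the paper; otherwise the statement is vacuous). The conditioning claim carries through because in every single round the bound on each neighbor's success probability held conditionally on arbitrary choices of distance-$\ge 2$ nodes.

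The main obstacle I anticipate is the bookkeeping around conditioning and monotonicity: making precise that (i) the per-neighbor survival probability $c_0$ is genuinely robust to adversarial distance-$\ge 2$ randomness (one must check the adversary cannot shrink a neighbor's effective palette below its degree — it cannot, because the adversary only colors nodes outside $N^2(v)$, whose colors a neighbor of $v$ already had removed or will have removed, and the palette size $\ge$ current degree invariant is preserved), and (ii) that we never need an upper bound on how much $d(v)$ or $s(v)$ move in a bad round, only the one-sided "enough neighbors got colored" event, so bad events are benign and a clean union bound suffices. Everything else is a routine Chernoff argument.
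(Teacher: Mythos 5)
There is a genuine gap at the heart of your argument: the claim that in a single round of \tryrandomcolor each neighbor of $v$ gets permanently colored with a \emph{constant} probability $c_0$. You justify this via ``$|\pal(v)|\ge d(v)$, so the survival probability is around $1/e$,'' which is the classic $(\Delta+1)$-coloring calculation where every blocking neighbor collides with probability $O(1/\Delta)$. In the $\deg+1$-list setting of this lemma that calculation fails: a neighbor $u$ of $v$ tries a color uniform in \emph{its own} palette $\pal(u)$, whose size is only guaranteed to be $d(u)+s(u)\ge d(u)(1+1/\iratio)$, and $u$'s blockers $w\in N(u)$ may themselves have tiny palettes, so $\prod_{w\in N(u)}(1-1/|\pal(w)|)$ can be exponentially small (e.g., if $u$'s neighbors all have palettes of size $2$). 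The only bound that survives the hypothesis $s(\cdot)\ge d(\cdot)/\iratio$ is the one the paper uses: conditioned on all other nodes' choices, at most $d(u)$ of $u$'s $\ge d(u)+s(u)$ colors are blocked, so each trial succeeds with probability at least $s(u)/(d(u)+s(u))\ge 1/(1+\iratio)$ --- which is $\Theta(1/\iratio)$, not a constant. This is exactly why the lemma needs $t=\Theta(\iratio\log\iratio)$ rounds (so that $(1-1/(1+\iratio))^t\le 1/(4\iratio)$ and the expected surviving degree drops to $s(v)/4$); your own observation that your argument would give $O(\log\iratio)$ rounds, which you dismissed as harmless slack in the round budget, is precisely the symptom of the flawed constant.

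A secondary, more repairable issue: the events ``neighbor $u$ gets colored'' for distinct $u\in N(v)$ are \emph{not} independent (two neighbors of $v$ may be adjacent to each other or share blockers), so the plain Chernoff bound over independent variables does not apply. The paper instead bounds the probability that $u$ stays uncolored for all $t$ rounds by $(1-p_\iratio)^t$ \emph{conditionally on any outcomes of the other variables}, and then applies the martingale version (\cref{lem:chernoff}) once, to the number of neighbors still uncolored at the end, yielding failure probability $\exp(-\Omega(s(v)))$. Your per-round concentration plus union bound could be made to work with the same martingale tool, but the argument must first be rebuilt on the $1/(1+\iratio)$ per-trial success probability.
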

\begin{proof}
    Due to slack, each color try succeeds w.p.\ at least $p_\iratio=(1/\iratio)/(1+1/\iratio)=1/(1+\iratio)$ regardless of the random choices of other nodes. Notably, each color try in $v$'s neighborhood succeeds with at least this probability, regardless of the random choices at distance $\geq 2$ from $v$. In $t$ iterations of {\tryrandomcolor}, each node stays uncolored w.p.\ at most $(1-p_\iratio)^{t}$, hence in expectation, $(1-p_\iratio)^t d(v)$ neighbors of $v$ stay uncolored. Setting $t=\iratio\ln(4\iratio)$
    implies $(1-p_\iratio)^{-t}= (1+1/\iratio)^{t}\ge  4\iratio$, and with $\delta=(1-p_\iratio)^{-t}s(v)/(2d(v))-1$, we have $\delta\geq 4\iratio \cdot 1/(2\iratio)-1= 1$. The lemma then follows by \Cref{lem:chernoff}: 
    \[\Pr\event*{d'(v) \geq \frac {s(v)} 2} = \Pr\event*{d'(v) \geq (1+\delta)(1-p_\iratio)^{t} \cdot d(v)} \leq \exp\parens*{-\frac \delta 3 \cdot (1-p_\iratio)^{t} \cdot d(v)} \leq e^{-s(v)/12} \ .\qedhere\]
\end{proof}

\begin{lemma}
    Let $v$ be a node and $x\geq 1$ be an integer. Suppose $d(u) \leq s(u) / x$ and $s(u) \geq \smin$ for all $u\in N(v)\cup\set{v}$. Let $y\geq s(v) \cdot 2^{-x}$. Then after $t=12$ iterations of \multitrial[$(x)$], $v$ satisfies $d'(v) \leq y$ w.p.\ $1-\exp(-\Omega(y))-O(\nu\cdot \Delta)$, where $\nu \leq e^{-\Omega(\smin)}+n^{-\Theta(1)}$. This holds conditioned on arbitrary random choices of nodes at distance $\geq 2$ from $v$.
    \label{lem:slackcolor-tower}
\end{lemma}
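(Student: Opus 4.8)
The plan is to mimic the proof of \cref{lem:slackcolor-init}: follow the uncolored degree $d_j(v)$ of $v$ after $j$ rounds of $\multitrial(x)$ and show it shrinks geometrically — and, once the palette‑to‑degree ratios around $v$ have grown, super‑geometrically — until it drops below $y$. First, we may assume $d(v)>y$, since otherwise $d'(v)\le d(v)\le y$ trivially (the uncolored degree only decreases). Throughout the run the hypothesis $d(u)\le s(u)/x$ — equivalently $|\pal(u)|\ge (x+1)d(u)$ — is preserved for every $u\in N(v)\cup\{v\}$, since slack never decreases and the current degree never increases. This gives the basic one‑round estimate for a neighbour $u$ of $v$: the set $Y_u$ of colors tried by $u$'s uncolored neighbours has $|Y_u|\le x\,d(u)\le \tfrac{x}{x+1}|\pal(u)|$, so each of $u$'s $x$ sampled colors misses $Y_u$ with probability $\ge 1/(x+1)$, whence $u$ is colored in one round with probability $\ge 1-(x/(x+1))^x\ge 1-e^{-1}>\tfrac12$, and this holds for any fixing of the other nodes' random choices (the same computation as in \cref{lem:multitrial-success}, in the weaker slack regime). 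Moreover, as soon as $u$'s uncolored degree has halved we have $|\pal(u)|\ge 2x\,d(u)$ and \cref{lem:multitrial-success} applies verbatim, so $u$ is colored with probability $\ge 1-2^{-x}$.

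Given this, I would bound $d_{j+1}(v)$ from the configuration after round $j$. Revealing $v$'s neighbours one at a time and invoking the ``conditioned on the other nodes'' clause of the one‑round estimate, the indicators $X_u=\bbI[u\text{ still uncolored}]$ satisfy $\Pr[X_u=1\mid\{X_w\}_{w\ne u}]\le q$, with $q=(x/(x+1))^x<\tfrac12$ in general and $q\le 2^{-x}$ once the surviving neighbours have doubled their palette‑to‑degree ratio. Applying the martingale Chernoff bound (\cref{lem:chernoff}) with a suitable $\delta$, while $d_j(v)\ge y$ we get $d_{j+1}(v)\le \tfrac23 d_j(v)$, and $d_{j+1}(v)\le 2\cdot 2^{-x}d_j(v)$ in the accelerated phase, with probability $1-\exp(-\Omega(d_j(v)))\ge 1-\exp(-\Omega(y))$. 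A short geometric phase shrinks $v$'s uncolored degree and, along with it, the uncolored degrees of its surviving neighbours, until those neighbours have palette‑to‑degree ratio $\ge 2x$; two further rounds in the accelerated phase then drive $d(v)$ below $2^{-2x-1}s(v)/x\le s(v)\,2^{-x}\le y$, which fits comfortably inside $t=12$ rounds. A union bound over the $\le 12$ rounds keeps the total failure probability at $\exp(-\Omega(y))$.

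The additive $O(\nu\Delta)$ term is where the conditioning on the distance‑$\ge 2$ randomness is paid for: for each of the $\le\Delta$ neighbours $u$ of $v$, the events guaranteeing that the above per‑round estimates are applicable to $u$ — essentially, that $u$ has retained slack $\Omega(s(u))$ and degree $O(d(u))$, which rests on far‑away nodes behaving as promised — fail with probability at most $\nu=e^{-\Omega(\smin)}+n^{-\Theta(1)}$ per neighbour (this is exactly where the $e^{-\Omega(\smin)}$ Chernoff terms and the $n^{-\Theta(1)}$ union‑bound terms enter); a union bound yields the additive $O(\nu\Delta)$, and off that bad event the run depends only on the randomness of $v$ and its neighbours, so the statement is conditionally independent of everything at distance $\ge 2$. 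I expect the second paragraph to be the crux: one must show that the surviving neighbours' ratios genuinely grow, so that the per‑round reduction accelerates beyond a mere constant factor (which $12$ rounds alone could not achieve for large $x$), while simultaneously keeping the accumulated failure probability at $\exp(-\Omega(y))$ rather than the coarser $\exp(-\Omega(d(v)))$ — delicate because $d(v)$ can be as small as $y$ — and all of this under adversarial choices of the distance‑$\ge 2$ nodes.
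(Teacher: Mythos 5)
Your instinct that the hypothesis $d(u)\le s(u)/x$ only yields $|\pal(u)|\ge (x+1)d(u)$ --- a factor of $2$ short of the $|\pal(u)|\ge 2x\,d(u)$ required by \cref{lem:multitrial-success} --- is a fair observation, but the repair you build on it contains a genuine gap, and it is also not what the paper does. The crux of your argument is the bootstrapping step: after a few ``geometric'' rounds every surviving neighbour $u$ of $v$ is supposed to have halved its uncolored degree, so that its palette-to-degree ratio reaches $2x$ and its per-round failure probability drops to $2^{-x}$. But whether $d(u)$ decreases depends on the random choices of $u$'s \emph{other} neighbours, which (in the worst case) lie at distance $2$ from $v$ --- exactly the nodes whose randomness the lemma allows to be adversarial. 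Against an adversary that keeps those nodes uncolored, $u$'s ratio never improves beyond $x+1$, its per-round coloring probability stays at the constant $1-(x/(x+1))^x$, and after $12$ rounds $u$ is uncolored with probability $\approx e^{-12}$, a constant independent of $x$; then $\Exp[d'(v)]=\Omega(e^{-12}d(v))$, which for $d(v)=s(v)/x$ and $y=s(v)2^{-x}$ exceeds $y$ once $x$ is larger than an absolute constant. Even setting the adversary aside, the event ``every neighbour's degree halves'' needs a per-neighbour concentration bound whose failure probability is governed by $d(u)$ (or $\Exp[d'(u)]$), not by $s(u)\ge\smin$; the lemma places no lower bound on $d(u)$, so this failure probability can be constant, and a union bound over up to $\Delta$ neighbours cannot be charged to the $O(\nu\Delta)$ term with $\nu\le e^{-\Omega(\smin)}+n^{-\Theta(1)}$ as you propose.

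The paper's proof is far more direct and sidesteps all of this: it applies \cref{lem:multitrial-success} to each neighbour $u$ straight from the stated hypothesis (reading $d(u)\le s(u)/x$ as supplying the required ratio; the missing factor of $2$ is a constant meant to be absorbed by the thresholds in \cref{alg:slackcoloring}, such as the initial test $s(v)\ge 2d(v)$ and the loop exit tests). Since that lemma holds conditioned on arbitrary choices of the other nodes, two iterations already give $\Pr[u \text{ uncolored}]\le 2^{-2x}$ with no need for $u$'s ratio to improve in flight, hence $\Exp[d'(v)]\le 2^{-2x}d(v)\le (2^{-2x}/x)s(v)\le y/2$, and a single application of the martingale bound (\cref{lem:chernoff}) with $\delta=y/\Exp[d'(v)]-1\ge 1$ yields $\Pr[d'(v)>y]\le\exp(-\Omega(y))$. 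The lesson is that the $2x$ ratio must come from the invariant the algorithm maintains, not from degree reduction of the neighbours during the lemma's own execution; if you want to be rigorous about the factor of $2$, strengthen the hypothesis to $d(u)\le s(u)/(2x)$ and adjust the algorithm's termination tests accordingly, rather than bootstrapping.
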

\begin{proof}
    First, running \multitrial[$(x)$] $2$ times makes a node get colored w.p.\ at least $1-2^{-2x}$ by \cref{lem:multitrial-success}. This implies $\Exp\event{d'(v)}\le 2^{-2x}d(v)\le (2^{-2x}/x)s(v)\le y/2$.
    Applying \Cref{lem:chernoff} with $\delta=(y/\Exp\event{d'(v)}) - 1 \geq 1$, we get:
    \[\Pr\event*{ d'(v) > y } = \Pr\event*{ d'(v) > (1+\delta)\Exp\event{d'(v)} }\leq \exp\parens*{-(\delta/3)\Exp\event{d'(v)} } = \exp\parens*{-\Omega(y)} \]
    
    Therefore, a node that -- together with its neighborhood -- satisfies $d(v) \leq s(v) / x$, satisfies $d'(v) \leq y$ w.p.\ at least $1-\exp(-\Omega(y))$ after $2$ iterations of \multitrial[$(x)$]. Since $\multitrial$ succeeds with the claimed probability regardless of the random choices of a node's neighbors, the lemma holds for arbitrary random choices at distance $\geq 2$ from $v$. 
\end{proof}

\begin{lemma}
    Consider a node $v$ and integers $\smin$ and $x\geq \ln(d(v))$ such that each of $v$'s neighbors $u$ satisfies $s(u) \geq x \cdot d(u)$ and $s(u) \geq \smin$.
    Then for every $y\ge 1$, after $3$ iterations of \multitrial[$(x)$], $d'(v) \leq y/x$ w.p.\ $1-e^{-y}$. This holds conditioned on arbitrary random choices of nodes at distance $\geq 2$ from $v$.
\label{lem:slackcolor-finish}
\end{lemma}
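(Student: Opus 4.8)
The plan is to control $d'(v)$, the number of neighbors of $v$ still uncolored after the three runs of \multitrial[$(x)$], by a union bound over subsets of $N(v)$, feeding in the per-node success probability supplied by \cref{lem:multitrial-success}.

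First I would record the per-neighbor guarantee: for every fixed $u\in N(v)$, the probability that $u$ is still uncolored after the three runs is at most $2^{-3x}$, and this holds even conditioned on an arbitrary combination of random choices of all nodes other than $u$. This is just \cref{lem:multitrial-success} composed over the three rounds: the hypothesis $s(u)\ge x\cdot d(u)$ means $\card{\pal(u)}=s(u)+d(u)\ge x\cdot d(u)$, and this stays true throughout the procedure because each \trycolor step can only decrease $d(u)$ and never decreases $s(u)$ (a neighbor colored with a color outside $\pal(u)$ raises $s(u)$; one colored inside lowers $\card{\pal(u)}$ and $d(u)$ by the same amount). Hence in each round every still-uncolored $u$ is colored with probability at least $1-2^{-x}$ regardless of all other nodes' choices that round, and multiplying over the three rounds gives $2^{-3x}$.

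Second, for any fixed $S\subseteq N(v)$ with $\card{S}=k$, reveal the random choices of the nodes of $S$ one at a time; each time, the probability that the next revealed node is still uncolored, conditioned on what has been revealed so far (which is a subset of the choices of all nodes other than that node), is at most $2^{-3x}$ by the previous paragraph, so $\Pr[\text{all of }S\text{ uncolored after the three runs}]\le 2^{-3xk}$. Union-bounding over the $\binom{d(v)}{k}$ choices of $S$, and then using $d(v)\le e^{x}$ — which is exactly where the hypothesis $x\ge\ln d(v)$ enters — gives
\[ \Pr[\,d'(v)\ge k\,]\ \le\ \binom{d(v)}{k}\,2^{-3xk}\ \le\ d(v)^{k}\,2^{-3xk}\ \le\ e^{xk}\,2^{-3xk}\ =\ (e/8)^{xk}. \]
Now take $k=\lfloor y/x\rfloor+1$, the least integer exceeding $y/x$; since $d'(v)$ is a nonnegative integer, $\{d'(v)\le y/x\}$ is exactly the complement of $\{d'(v)\ge k\}$. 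As $xk>y$ and $0<e/8<e^{-1}$, we get $(e/8)^{xk}\le (e/8)^{y}<e^{-y}$, hence $\Pr[\,d'(v)>y/x\,]\le e^{-y}$; and since every step above was carried out conditionally on arbitrary choices of nodes at distance $\ge 2$ from $v$, the stated conditional bound follows.

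The only delicate point is bookkeeping rather than ideas: one must phrase the per-neighbor bound so that it is genuinely valid conditioned on the other nodes' randomness (so the subset-union-bound step is legitimate, instead of naively applying a Chernoff bound to the dependent indicators $\mathbb{I}[u\text{ uncolored}]$), and one should notice that three rounds rather than two are needed precisely so that $2^{3}=8>e$, which is what makes $(e/8)^{xk}$ beat $e^{-y}$ uniformly in $x,k\ge 1$. The hypothesis $s(u)\ge\smin$ plays no role in this lemma.
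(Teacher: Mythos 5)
Your proof is correct and follows essentially the same route as the paper's: a per-neighbor failure bound of $2^{-3x}$ from \cref{lem:multitrial-success} valid under arbitrary conditioning, a chain-rule bound of $2^{-3xk}$ for any fixed $k$-subset staying uncolored, and a union bound over $\binom{d(v)}{k}\le e^{xk}$ subsets using $x\ge\ln d(v)$, the factor $3$ being there exactly so that $2^3>e$. Your added bookkeeping (persistence of the slack condition across rounds, the precise conditioning needed for the chain rule, and the observation that $\smin$ is unused) is consistent with, and slightly more explicit than, the paper's argument.
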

\begin{proof}
     By \cref{lem:multitrial-success}, after $3$ iterations of \multitrial[$(x)$], each neighbor $u$ of $v$ stays uncolored w.p.\ at most $e^{-2x}$. This holds even conditioned on arbitrary random choices from $u$'s neighbors (and so of nodes at distance at least $2$ from $v$). Thus, for a specific set of $k\le d(v)$ neighbors of $v$, with the same conditioning, the probability that they all stay uncolored is bounded by $e^{-2k\cdot x}$ (using the chain rule).
     The probability that $k$ or more neighbors of $v$ stay uncolored is bounded by $\binom{d(v)}{k} \cdot e^{-2k\cdot x} \leq \exp(k\cdot (\ln d(v)-2x)) \leq e^{-k\cdot x}$. So, $d'(v) \leq y/x$ holds w.p.\ at least $1-e^{-y}$.
\end{proof}

\begin{proof}[Proof of \cref{lem:slackcolor}]
    After the first loop of \cref{alg:slackcoloring}, 
    by \cref{lem:slackcolor-init}, each node satisfies $s(v)\geq 2d(v)$ w.p.\ $1-\exp(-\Omega(s(v))) \geq 1-\exp(-\Omega(\smin))$. After step~\ref{step:slackcolor-end-init}, all non-terminated nodes $v$
    satisfy $s(v)\geq 2d(v)$. Let $\sminpow=\smin^{1/(1+\kappa)}$, as in the algorithm. Note that for every $v$, $s(v)\ge \sminpow^{1+\kappa}$.

    Let us consider steps~\ref{step:slackcolor-begin-tower} to~\ref{step:slackcolor-end-tower}. Let $x_i=2\knuthupuparrow i$ and $y_i=s(v)/\min(2^{x_i},\sminpow^{\kappa})$ (note that $2^{x_i}=x_{i+1}$). At the beginning of the $i$th execution of the loop (starting with $i=0$), all nodes satisfy $d(v) \leq s(v) / x_i$, and by definition $y_i \geq s(v)\cdot 2^{-x_i}$. By \cref{lem:slackcolor-tower}, the following execution of \multitrial[$(x_i)$] ensures that a node $v$ passes the test at the end of the $i$th loop w.p.\ $1-\exp(-\Omega(y_i))$. A node $v$ passes all the end-loop tests w.p.\ $1-\sum_i \exp(-\Omega(y_i)) = 1-\exp(-\Omega(s(v)/\sminpow^{\kappa})) \geq 1-\exp(-\Omega(\sminpow))$. At the end of this loop, each non-terminated node $v$ satisfies  $d(v)\le s(v)/\sminpow^\kappa$.
    
    Finally, in steps~\ref{step:slackcolor-begin-finish} to~\ref{step:slackcolor-end-finish}, each loop execution decreases the degree by a multiplicative factor of $\sminpow^{-\kappa}$. More precisely, let $y_i=s(v) \cdot \sminpow^{-i\cdot \kappa}$. By \cref{lem:slackcolor-finish}, the $i$th execution (starting from $1$) starts with nodes $v$ all satisfying $d(v) \le y_i$, and ends each of them satisfying $d(v) \le y_{i+1}$ (i.e., passing the test at line \ref{step:slackcolor-termfinishloop}) w.p.\ $1-e^{-\sminpow}$. Nodes that pass all the tests (w.p.\ $\ge 1-(1/\kappa)e^{-\sminpow}=1-e^{-\Omega(\sminpow)}$, since $\kappa>1/\smin$) end up with $s(v)/d(v) \geq \sminpow$. Running \multitrial[$(\sminpow)$] at this point, each remaining node gets colored w.p.\  $1-e^{-\Omega(\sminpow)}$. In total, the probability of not getting colored (in this last step or due to an early termination) is  $e^{-\Omega(\sminpow)}$. This holds even conditioned on arbitrary random choices at distance $\geq 2$ from $v$, as all the lemmas we invoked do.
\end{proof}

\section{Improvements and Limits for High-Degree Coloring}
\label{sec:betterhideg}

We prove in this section the following improvement for the \DeloC problem.

\logstarcorollarydelta*

Suppose $\ell$ is the threshold for high-slack almost-cliques.
The lower bound requirement $d$ on the degrees of nodes in our algorithm occur in three places:  a) {\slackcolor} requires that $d \ge \ell \ge \log^{1+\delta} n$, b) the heavy color argument requires that $d \ge \ell \ge \log^2 \Delta$, and c) {\putaside} works when $d = \Omega(\log^3 \ell)$. 
In the \DeloC setting, all colors are light, $d = \Delta$, and all the slack comes from balanced sources. Therefore, we may set $\ell = \log^{1+\delta} n$. The only bottleneck is then the put-aside construction. We show below how to improve the construction to work as long as $\Delta = d =  \Omega(\ell^2)$.

This is obtained by performing the sample-and-delete task of \disjointsample more gradually,
thereby maintaining better tradeoffs between the sample size and the dependency degree needed to apply the read-$k$ concentration bound (\cref{lem:kread}).

We observe that in essence, our task  is finding an \emph{independent transversal} in a graph derived from $G$. We start by stating our result in terms of transversals, since this may be of independent interest, then explain how it applies to our coloring algorithm.
The $k$-independent transversal problem takes as input a graph $H=(V_H,E_H)$, partitioned into independent sets $I_1, I_2, \ldots, I_t$, and the objective is to find an independent set $P \subset V_H$ such that $\card{I_i \cap P} \ge k$, for all $i$.
The primary parameters besides $k$ are the maximum degree $\Delta=\Delta_H$ and the size of the smallest set $I_i$. A celebrated result of Haxell \cite{Haxell01} shows that every graph has a 1-independent transversal when $|I_i| \ge 2 \Delta$, for all $i$, and this is best possible.
We show below how to find a $k$-independent transversal in $O(1)$ rounds of \CONGEST under the assumption that $|I_i|\ge ck\Delta$, for a large enough constant $c$, and $k=\Delta^{\delta/(1+\delta)}\log n$,  for any constant $\delta\in (0,1)$, where $n=|V_H|$. We assume, for simplicity, that $\delta=1/m$ is the inverse of an integer, although the proof is easy to adapt to any rational value.

Observe the main difference of this algorithm from \disjointsample: rather than keeping only nodes with no sampled neighbors, we keep the ones with few sampled neighbors, and refine them further.

\begin{algorithm}[H]\caption{{\lowdegreesample}($P$, $q$, $B$)}
\label{alg:lowtrans}
  \begin{algorithmic}[1]
  \STATE $S\gets$ each node $v\in P$ is sampled independently w.p.\ $\pdisj = 1/(2q)$
  \RETURN $P' \gets \{v\in S : |N(v)\cap S| < B/q\}$
 \end{algorithmic}
\end{algorithm}

\begin{algorithm}[H]\caption{{\transversal}($\delta$)} 
\label{alg:transversal}
  \begin{algorithmic}[1]
    \STATE  
    $m \gets 1/\delta$, $q\gets\Delta^{\delta/(1+\delta)}$, $B_0 \gets \Delta$, $P_0\gets V_H$
    \FOR{$j=1$ to $m+1$} 
      \STATE $P_j \gets$ {\lowdegreesample}($P_{j-1}$, $q$, $B_{j-1}$)
      \STATE $B_j \gets B_{j-1}/q = \Delta^{1-j\delta/(1+\delta)}$
    \ENDFOR
    \RETURN $P_{m+1}$
 \end{algorithmic}
\end{algorithm}

\begin{lemma}\label{lem:lowdeganalysis} 
Let numbers $B,q>0$ and set $P$  of vertices be such that for every $i\in [t]$, $|P\cap I_i| \ge c qB \log n$,  for a sufficiently large constant $c$, and $|N(v)\cap P|\le B$, for every $v\in I_i$.
Let $P' =$ \lowdegreesample[$(P,q,B)$].
Then, $|P'\cap I_i| \ge |P\cap I_i|/(8q)$, w.h.p.\ for all $i\in [t]$.
\end{lemma}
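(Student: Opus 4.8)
The plan is to decouple the two sources of randomness inside \lowdegreesample: the coins deciding $S\cap I_i$, and the coins deciding $S\cap(P\setminus I_i)$. Fix a part $I_i$, write $n_i=\card{P\cap I_i}$ and $\pdisj=1/(2q)$. The structural observation that makes everything work is that $I_i$ is an independent set of $H$, so for every $v\in P\cap I_i$ we have $N(v)\cap P\subseteq P\setminus I_i$; consequently the ``sampled degree'' $Z_v:=\card{N(v)\cap S}$ is a function of $S':=S\cap(P\setminus I_i)$ alone, which is independent of $S\cap I_i$. I would condition on $S'$: then the set $G:=\set{v\in P\cap I_i:Z_v<B/q}$ of \emph{good} vertices is determined, $P'\cap I_i=G\cap S$, and, conditionally on $S'$, $\card{P'\cap I_i}=\sum_{v\in G}\mathbb{I}[v\in S]$ is a sum of $\card{G}$ i.i.d.\ $\mathrm{Bernoulli}(\pdisj)$ variables. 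Thus the proof splits into (i) showing $\card{G}\ge \tfrac38 n_i$ with probability $1-n^{-\Omega(c)}$ over $S'$, and (ii) conditioned on that, applying the Chernoff bound (\cref{lem:basicchernoff}) to the i.i.d.\ sum to get $\card{P'\cap I_i}\ge \tfrac{2}{3}\cdot\tfrac{\card{G}}{2q}\ge \tfrac{n_i}{8q}$ with failure probability $\exp(-\Omega(n_i/q))=\exp(-\Omega(cB\log n))$. A union bound over the $t\le n$ parts then yields the lemma.

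The crux is step (i): bounding the number of \emph{bad} vertices $T:=n_i-\card{G}$. The naive approach --- bound $\Pr[Z_v\ge B/q]$ by a constant below $1$ and apply concentration to $\sum_{v}\mathbb{I}[v\in P']$ --- does not suffice, because Markov only gives $\Pr[Z_v\ge B/q]\le \Exp[Z_v]/(B/q)\le 1/2$ (which is essentially tight), and a read-$k$ bound on $\sum_v\mathbb{I}[v\in P']$ measures deviation against $n_i$ while the mean is only $\Theta(n_i/q)$, producing a failure probability of merely $n^{-\Omega(1/q)}$ --- not w.h.p.\ once $q$ is large. Instead I would pass to the edge count $W:=\sum_{v\in P\cap I_i}Z_v=\sum_{e\in E(P\cap I_i,\,P\setminus I_i)}\mathbb{I}[u(e)\in S']$, where $u(e)$ denotes the endpoint of $e$ lying in $P\setminus I_i$. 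Every bad vertex contributes at least $B/q$ (and at least $1$) to $W$, so $T\le W/\max(1,B/q)$. The family $\set{\mathbb{I}[u(e)\in S']}_e$ is read-$B$, since each coin $\mathbb{I}[u\in S']$ with $u\in P\setminus I_i$ controls only the $\le\card{N(u)\cap P}\le B$ edges incident to $u$; and $\Exp[W]=\pdisj\,\card{E(P\cap I_i,P\setminus I_i)}=\pdisj\sum_{v\in P\cap I_i}\card{N(v)\cap P}\le \pdisj\, n_iB$. Since also $\card{E(P\cap I_i,P\setminus I_i)}\le n_iB$, the additive slack we need between $W$ and $\Exp[W]$ is a constant fraction of the natural scale $\max(1,B/q)\,n_i$, and \cref{lem:kread} gives a deviation bound of the form $\exp\!\big(-\Omega(n_i/(Bq))\big)$. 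By the hypothesis $n_i\ge cqB\log n$ and a large enough constant $c$ this is $\exp(-\Omega(c\log n))$, as required; translating back through $T\le W/\max(1,B/q)$ gives $\card{G}=n_i-T\ge \tfrac38 n_i$ w.h.p.

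I expect step (i) to be the only real obstacle --- specifically, arranging the concentration so that the exponent comes out $\Omega(c\log n)$ rather than the useless $\Omega((c\log n)/q)$. The idea that resolves it is to never argue directly about which sampled vertices survive, but instead to (a) freeze $S\cap(P\setminus I_i)$ so the survival of $I_i$-vertices becomes an independent Bernoulli process, and (b) control the count of non-surviving candidates through the edge count $W$, whose mean is large relative to its read parameter $B$, so that a read-$k$ Chernoff bound is already strong enough. Everything else --- the Chernoff bound for the i.i.d.\ sum in step (ii), and the final union bound over the parts using $t\le\card{V_H}=n$ --- is routine bookkeeping.
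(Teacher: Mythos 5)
Your proof is correct, and your treatment of the concentration step takes a genuinely different route from the paper's. The paper works in the opposite order: it first applies a Chernoff bound to get $|S\cap I_i|\ge |P\cap I_i|/(4q)$ w.h.p., then, conditioned on that, looks at the indicators $X_v=\mathbb{I}[\,|N(v)\cap S|\ge B/q\,]$ for $v\in S\cap I_i$ only, bounds $\Pr[X_v=1]\le 1/2$ by Markov, observes that $\{X_v\}_{v\in S\cap I_i}$ is a read-$B$ family of the sampling coins, and applies \cref{lem:kread} over the $|S\cap I_i|\ge (c/4)B\log n$ variables to conclude that at most half of $S\cap I_i$ is discarded, with failure probability $\exp(-\Omega(|S\cap I_i|/B))=\exp(-\Omega(c\log n))$; so it resolves the same difficulty you correctly diagnose (a read-$k$ bound over all of $P\cap I_i$ would only give $n^{-\Omega(1/q)}$) by shrinking the index set to the sampled vertices rather than by your decoupling. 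Your version instead freezes $S\cap(P\setminus I_i)$ — valid precisely because $I_i$ is an independent set of $H$, so the sampled degree of $v\in I_i$ is a function of the outside coins alone — controls the number of disqualified candidates through the sampled edge count $W$ with a read-$B$ bound, and finishes with an i.i.d.\ Chernoff bound for the survivors. Quantitatively the two give the same conclusion (I checked your exponent $\exp(-\Omega(n_i/(Bq)))$ in both the $B\ge q$ and $B<q$ regimes; it is in fact conservative). One thing your route buys is robustness at the constant level: the paper's step ``$\sum_{v\in S\cap I}X_v\le |S\cap I|/2$ w.h.p.'' is asking for concentration at a threshold that equals the Markov upper bound on the mean, so as written it needs a small adjustment of constants (e.g., discarding at most $3|S\cap I|/4$), whereas you explicitly budget a genuine additive deviation of $n_i/8$ above the mean-derived bound $n_i/2$, so no such adjustment is needed. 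Your final union bound over $t\le n$ parts matches the paper's implicit one.
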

\begin{proof}
    Let $S$ be the sampled set  in \lowdegreesample[($P,q,B$)], and let $I=I_i$, for some $i\in [t]$.
    Observe that by Lemma~\ref{lem:basicchernoff} with  $q_i=p_{s}=1/(2q)$, we have $|S\cap I| \ge |P\cap I|/(4q)\ge (c/4)B\log n$, w.h.p.  The remainder of the proof is conditioned on this event.
    
    Let $Y_w$, $w\in P$, be the independent indicator random variable of the event that $w \in S$, and let $X_v$, $v\in I$, be the indicator random variable of the event that  $|N(v)\cap S| \ge B/q$.
    Since $|N(v)\cap P|\le B$,
    $\Exp[|N(v)\cap S|]\le B \cdot p_{s} = B/(2q)$.
By Markov, $\Pr[X_v=1]\le \Exp[|N(v)\cap S|]/(B/q) \le 1/2$. 
Note that each variable $X_v$ is a function of independent variables $Y_w$, for $w\in N(v)$, and each $Y_w$ influences at most $|N(w)\cap P|\le B$ of the variables $X_{v}$; 
thus, for a given sample $S$, $\{X_v\}_{v\in S\cap I}$ is a read-$B$ family of random variables, and by Lemma~\ref{lem:kread}, $X_I = \sum_{v\in S\cap I}X_v \le |S\cap I|/2$ holds w.p.\  $1-\exp(-\Omega(|S\cap I|/B)) = 1 - \exp(-\Omega(c\log n))$, recalling $|S\cap I| \ge   (c/4)B\log n$.  We choose the constant $c$ large enough, so that the bound holds w.h.p.; then, at least $|S\cap I| - X_I \ge |S\cap I|/2 \ge |P\cap I|/(8q)$ nodes have  degree at most $B/q$ in $S$, as claimed. 
\end{proof}

\begin{theorem}
Let $\delta\in (0,1)$. Consider an instance $H$ with a partition $\{I_i\}_{i\in [t]}$, where for all $i\in [t]$, $|I_i|\ge ck\Delta$, for a large enough constant $c$, and $k\ge \Delta^{\delta/(1+\delta)}\log n$.
Then $\transversal(\delta)$  
returns a $k$-independent transversal, w.h.p. 
\end{theorem}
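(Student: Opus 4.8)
The plan is to run an induction over the $m+1$ iterations of $\transversal(\delta)$, maintaining the invariant that after iteration $j$, for every part $i\in[t]$ we have $|P_j\cap I_i|\ge |I_i|/(8q)^j$, and every surviving node $v\in P_j$ satisfies $|N(v)\cap P_j|\le B_j=\Delta^{1-j\delta/(1+\delta)}$. The base case $j=0$ is immediate since $P_0=V_H$, $|P_0\cap I_i|=|I_i|$, and $|N(v)\cap V_H|\le \Delta_H\le \Delta=B_0$. For the inductive step one applies \cref{lem:lowdeganalysis} to the call $\lowdegreesample(P_{j-1},q,B_{j-1})$: the degree precondition $|N(v)\cap P_{j-1}|\le B_{j-1}$ is exactly part (b) of the invariant at step $j-1$ (and, crucially, this is all the proof of \cref{lem:lowdeganalysis} actually uses, since its concentration step only sums the indicators $X_v$ over sampled nodes $v\in S\cap I_i\subseteq P_{j-1}\cap I_i$), while the size precondition $|P_{j-1}\cap I_i|\ge c_L\,q\,B_{j-1}\log n$ will follow from $|P_{j-1}\cap I_i|\ge |I_i|/(8q)^{j-1}\ge ck\Delta/(8q)^{j-1}$ after an exponent check done below. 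The lemma then yields $|P_j\cap I_i|\ge |P_{j-1}\cap I_i|/(8q)\ge |I_i|/(8q)^j$, and the improved degree bound $|N(v)\cap P_j|\le |N(v)\cap S_j|<B_{j-1}/q=B_j$ holds by the very definition of $\lowdegreesample$. Since each application of \cref{lem:lowdeganalysis} succeeds simultaneously for all $i\in[t]$ w.h.p., and there are only $m+1=1/\delta+1=O(1)$ (constant depending on $\delta$) of them, a union bound over the conditional failure events keeps the total failure probability $n^{-\Omega(1)}$; one conditions on the invariant holding through step $j-1$, so that $P_{j-1}$ is a fixed set before the fresh, independent sampling of step $j$ is drawn.

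The second ingredient is purely arithmetic parameter bookkeeping. With $q=\Delta^{\delta/(1+\delta)}$ and $m=1/\delta$, one has $q^{m}=\Delta^{1/(1+\delta)}$ and $q^{m+1}=\Delta$, hence $B_m=\Delta/q^{m}=q$ and $B_{m+1}=1$. Consequently the last call $\lowdegreesample(P_m,q,B_m)$ retains only nodes $v\in S_{m+1}$ with $|N(v)\cap S_{m+1}|<B_m/q=1$, i.e.\ with no sampled neighbour, so $P_{m+1}\subseteq S_{m+1}$ is a genuine independent set (deterministically). For the sizes, $(8q)^{m+1}=8^{m+1}\Delta$, so $|P_{m+1}\cap I_i|\ge |I_i|/(8^{m+1}\Delta)\ge ck/8^{m+1}\ge k$, provided the constant $c$ in the hypothesis $|I_i|\ge ck\Delta$ is at least $c_L\cdot 8^{m+1}$ (a quantity depending only on $\delta$). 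The intermediate preconditions ask for $ck\Delta/(8q)^{j-1}\ge c_L\,q\,B_{j-1}\log n$ for $1\le j\le m+1$; substituting $B_{j-1}=\Delta/q^{j-1}$ this collapses to $ck/8^{j-1}\ge c_L\,q\log n$, which holds for every such $j$ once $c\ge c_L 8^{m+1}$, because the hypothesis gives $k\ge \Delta^{\delta/(1+\delta)}\log n=q\log n$. Thus a single $c$ large enough as a function of $\delta$ simultaneously pays the per-round overhead of \cref{lem:lowdeganalysis} and leaves at least $k$ nodes of each part alive at the end, which together with independence of $P_{m+1}$ gives the desired $k$-independent transversal.

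The only genuinely delicate point is the concentration statement inside \cref{lem:lowdeganalysis}, which is already established: survival of $v\in I_i$ into $P'$ is a read-$B$ function of the independent sampling bits $\{Y_w\}$ (each bit $Y_w$ influences at most $|N(w)\cap P|\le B$ of the events $X_v$), so \cref{lem:kread} gives the needed concentration of $\sum_{v\in S\cap I}X_v$ precisely because $|S\cap I|=\Omega(B\log n)$. This is the reason the size invariant has to be kept a $\log n$ factor above $B_j$, and hence the reason the sampling is carried out in $m+1$ gentle rounds, each shrinking the surviving set by only a factor $1/(8q)$ and the degree by $q$, rather than in a single aggressive round as in $\disjointsample$. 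Everything else in the argument is the exponent arithmetic above together with the routine $O(1)$-term union bound over rounds.
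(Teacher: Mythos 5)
Your proposal is correct and follows essentially the same route as the paper: iterate \cref{lem:lowdeganalysis} over the $m+1$ rounds, maintain the size invariant $|P_j\cap I_i|\ge|I_i|/(8q)^j$ together with the degree bound $B_j$, observe that the final threshold $B_m/q=1$ forces $P_{m+1}$ to be independent, and verify the arithmetic showing the lemma's size precondition holds at every step given $k\ge q\log n$ and $c$ large enough. Your write-up is in fact slightly more careful than the paper's (explicit induction, checking the precondition at every round rather than only the last, and noting that the degree hypothesis of \cref{lem:lowdeganalysis} is only needed for nodes of $P_{j-1}$, which is what the invariant supplies), but there is no substantive difference in approach.
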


\begin{proof}
Let $P = P_{m+1}$ be the set output by \transversal, and let $I=I_i$, for some $i$.
The last iteration, $m+1$, has $B_{m+1}/q = \Delta^0 = 1$. Thus, by construction, $P$ is a transversal. To prove the size bound, we apply
Lemma \ref{lem:lowdeganalysis} and the union bound to get $|P_j\cap I| \ge |P_{j-1}\cap I|/(4q)$, for each $j=1, 2, \ldots, m+1$, and thus 
\[
|P\cap I| = |P_{m+1}\cap I| \ge \frac{|P_0\cap I|}{(4q)^{m+1}} = \frac{|I|}{(4q)^{m+1}}\ge \frac{ck\Delta}{4^{1+1/\delta}\Delta}=(c4^{-1-1/\delta})k\ .
\]
To apply Lemma~\ref{lem:lowdeganalysis}, we need $|P_m\cap I|\ge c' qB_m\log n$, for a large enough $c'$. Note that $B_m=q$, while the calculation above shows that $|P_m\cap I|\ge (c4^{-1-1/\delta})kq=(c4^{-1-1/\delta})qB_m\log n$.
\end{proof}

To apply this to our coloring setting, we let $H$ be the subgraph of $G$ induced by $\bigcup_{C: \zeta_C\le B_0} \core_C$, where the union is over all almost-cliques with sparsity $\zeta_C\le B_0=O(\log^{1+\delta} n)$, and we remove all edges within each $C$. Thus, we have the correspondence $I_i\gets \core_{C_i}$, where $C_i$ is the $i$th such almost-clique, 
and the degree of a node in $H$ is (at most) its external degree in $G$. Since we apply the procedure to almost-cliques $C$ with $\zeta_C=O(\log^{1+\delta} n)$, the latter also bounds the external degree of nodes, that is, the degree in $H$. We let $k=\Theta(\log^{1+\delta} n)$, and so we only need $|I_i|=\Omega(k\cdot \log^{1+\delta} n)=\Omega(\log^{2+2\delta} n)$. Thus \transversal allows us to sample put-aside sets $P_C$ of size $\Omega(\log^{1+\delta} n)$ in cliques of sparsity $O(\log^{1+\delta} n)$ when the maximum degree $\Delta$ of $G$ is $\Omega(\log^{2+2\delta} n)$. Replacing {\putaside} by this alternative procedure in Alg.~\ref{alg:logstar-dense} is the only modification to the algorithm.

We state as conclusion the following improvement of \cref{thm:main}.

\begin{corollary}
\label{cor:log-star}
There is a randomized \CONGEST $\Delta+1$-coloring algorithm with runtime $O(\log^* n)$, for graphs with $\Delta =\Omega(\log^{2+\delta} n)$, for any constant $\delta>0$.
\end{corollary}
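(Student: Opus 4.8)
The plan is to run \cref{alg:logstar-both} in its $\Delta+1$-coloring specialization, with two changes — set the high-slack threshold $\ell$ to be of order $\log^{1+\delta/2}n$ (a function of $n$, not of $\Delta$, with the leading constant tuned so that $\Delta=\Omega(\ell^2)$), and replace \putaside in \cref{alg:logstar-dense} by $\transversal(\delta/2)$ — and then to show that when $\Delta=\Omega(\log^{2+\delta}n)$ this colors every node after $O(\log^*n)$ rounds with probability $1-n^{-\omega(1)}$. The payoff of the last point is that, unlike in \cref{thm:main,alg:whole}, there are no shattered components left and hence no deterministic post-processing: a union bound over the $n$ nodes finishes the proof. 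Concretely I would compute the standard almost-clique decomposition for $\Delta+1$-coloring (in $O(1)$ \CONGEST rounds, \cite{HKMT21}), run \slackgeneration, build the put-aside sets with \transversal, \slackcolor the outliers, run \synchronizedcolortrial once, \slackcolor the inliers, and finally color the put-aside sets locally; the sparse side is handled by \cref{alg:logstar-sparse}.

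The first step is to exploit the structural collapse that $\Delta+1$-coloring causes to two of the three degree requirements identified above. Since every palette equals $[\Delta+1]$, disparity, discrepancy and unevenness vanish identically, so $\Vun=\emptyset$, $\barsigma_v=\zeta_v$, and \cref{lem:low-ext-degree,lem:low-anti-degree,lem:degreesinac,lem:degred} hold verbatim with $\zeta_v$ in place of $\sigma_v$; and every color has weight $H_v(c)\le d_v/(\Delta+1)<1/\ehc$, so no node is heavy and \cref{lem:slack-heavy} — the source of the $d\ge\log^2\Delta$ requirement — is never used. The balanced/unbalanced distinction also disappears, because the only place a neighbor's degree enters the proof of \cref{lem:balanced-spar} is to lower-bound a palette overlap, which here is automatic; so that argument applies to the full sparsity of any sparse node, tough/gritty nodes never arise, and one may simply take $\Vst=\emptyset$. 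Consequently every node of $\Vsp$ either has palette surplus $\Delta+1-d_v=\Omega(\Delta)$ (when $d_v\le(1-\eacd)\Delta$) or is $\espa d_v$-sparse with $d_v=\Theta(\Delta)$ and gains $\Omega(\Delta)$ slack from \slackgeneration w.p.\ $1-\exp(-\Omega(\Delta))$, and is then colored by \slackcolor; with slack $\Omega(\ell)$ and a constant $\kappa<\delta/2$, \cref{lem:slackcolor} gives failure probability $\exp(-\Omega(\ell^{1/(1+\kappa)}))+\Delta e^{-\Omega(\ell)}=n^{-\omega(1)}$ since $(1+\delta/2)/(1+\kappa)>1$. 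The same reduction dispatches every high-slack almost-clique ($\zeta_C\ge\ell$): slack $\Omega(\ell)$ from \slackgeneration, reduction to $O(\ell)$ uncolored nodes by one \synchronizedcolortrial (\cref{lem:degred}), then \slackcolor, all with failure probability $n^{-\omega(1)}$.

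The crux — the third degree requirement, and the reason for the quadratic rather than cubic polylog bound — is the low-slack almost-cliques, where the original \putaside needs $\Delta_C=\Omega(\ell^3)$. The plan there is to invoke the $k$-independent transversal theorem proved above (via the geometric sample-and-delete routine \lowdegreesample): apply it to the graph $H$ obtained from $G$ by restricting to $\bigcup_{C:\zeta_C\le\ell}\core_C$ and deleting all intra-clique edges, with partition classes $I_i\gets\core_{C_i}$. A node's degree in $H$ is at most its external degree, which is $O(\zeta_{C})=O(\ell)$ by \cref{lem:low-ext-degree}, while $|I_i|=|\core_{C_i}|=\Omega(\Delta_{C_i})=\Omega(\Delta)=\Omega(\log^{2+\delta}n)$ because almost-cliques have size $\Theta(\Delta)$; taking $k:=\ell$ and transversal parameter $\delta':=\delta/2$, the hypotheses $|I_i|\ge ck\,\Delta_H$ and $k\ge\Delta_H^{\delta'/(1+\delta')}\log n$ both hold, so $\transversal(\delta/2)$ returns in $O(1/\delta)=O(1)$ \CONGEST rounds an independent transversal, w.h.p. — i.e.\ put-aside sets $P_C\subseteq\core_C$, pairwise non-adjacent across cliques and of size $\Omega(\ell)$ — supplying exactly the temporary slack $\Omega(\ell)$ needed to finish each low-slack clique by \slackcolor, the $P_C$'s themselves being colored locally at the end. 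With all subroutines running in $O(\log^*n)$ total rounds and each node failing with probability $n^{-\omega(1)}$, correctness w.h.p.\ follows by the union bound. The hard part is this last piece: one must verify that the geometric refinement in \lowdegreesample shrinks the permitted dependency degree $B_j/q$ in lockstep with the sampling rate $1/(2q)$ so that the read-$k$ bound (\cref{lem:kread}) stays applicable at each of the $O(1/\delta)$ stages while the surviving sample remains of size $\Omega(\ell)$ — this is precisely what makes $\Delta=\Omega(\ell^2)$, rather than $\Omega(\ell^3)$, sufficient.
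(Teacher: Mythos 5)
Your proposal is correct and follows essentially the same route as the paper: specialize to $\Delta+1$-coloring so that discrepancy/unevenness and heavy colors vanish and all slack comes from balanced sources, set $\ell=\log^{1+\Theta(\delta)}n$, and replace \putaside by the \transversal construction applied to the inlier sets of low-slack almost-cliques with external degree playing the role of $\Delta_H$. Your parameter check ($k=\ell$, $\Delta_H=O(\ell)$, $|I_i|=\Omega(\Delta)=\Omega(\ell^2)$) matches the paper's, up to the cosmetic renaming $\delta\mapsto\delta/2$ versus the paper's $2\delta\mapsto\delta$.
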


\paragraph{Limitation result} 
The question if the degree lower bound of \cref{thm:main} can be further decreased is open. We note here that our transversal construction is nearly tight. 
We show via the probabilistic method that there is a graph $H$ with a partition $V_H=I_1\sqcup\dots\sqcup I_t$ such that $|I_i|=\Theta(k\Delta/\log (k\Delta))$, and $H$ has no $k$-independent transversal. 

Let $k\ge 2,t\ge 2,\Delta\ge 64+12\ln(kt)$ be integers. To construct $H$, let $|I_i|=D$, where $D$ is the largest integer such that $D< k\Delta/(16\ln D)$; note that $D=\Theta(k\Delta/\ln(k\Delta))$ and $D\ge 4$. Each edge between different parts $I_i,I_j$ is drawn independently, w.p.\ $p=\Delta/(2n)$, where $n=|V_H|=Dt$. By Chernoff bound and union bound, w.p.\ $1-n e^{-\Delta/6}>1/2$,
 the degree of each node is at most $\Delta$, where we used $\ln n-\Delta/6\le \ln (Dt)-\Delta/6\le \ln(k\Delta t)-\Delta/6\le -1$.
 For every subset $S\subseteq V_H$ such that $|S\cap I_i|=k$, $1\le i\le t$, the probability that $H[S]$ contains no edges is $(1-p)^{\binom{t}{2}k^2}\le e^{-pk^2t(t-1)/2}=e^{-k^2(t-1)\Delta/(4D)}<e^{-4k(t-1)\ln D}\leq D^{-2kt}$. 
 The number of such subsets $S$ is $\binom{D}{k}^t<D^{kt}$, so by the union bound, the probability that there exists a subset $S$ with the desired property is at most $D^{-kt}\le 1/16$. Thus, w.p.\ $1-ne^{-\Delta/6}-D^{-kt}>0$,  
 $H$ has maximum degree at most $\Delta$, and contains no $k$-independent transversal; in particular, such $H$ exists.

In the setting of our coloring algorithm, this means that in order to create slack $k=\log^{1+\delta} n$ corresponding to the external degree $\Delta_H=\log^{1+\delta} n$, we must have $\Delta_G\approx D =\Omega(k\Delta_H/\log(k\Delta_H))=\Omega(\log^{2+\delta'} n)$, for some $\delta'\in (0,\delta)$.

\section{Additional Material Related to Palette Sparsification}
\label{app:ps}

For sampling in dynamic stream, we use the following standard result based on $\ell_0$-samplers.
\begin{proposition}[{\cite[Thm.~2]{JST11}}]
There exists a streaming algorithm that given a subset $P \subseteq V\times V$ of pairs of vertices and an integer $k \ge 1$ at the beginning of a dynamic stream, outputs with high probability a set $S$ of $k$ edges from the edges in $P$ that appear in the final graph (it outputs all edges if their number is smaller than $k$). The set $S$ of edges can be either chosen uniformly at random with replacement or without replacement from all edges in $P$ that appear in the final graph. The space needed by the algorithm is $O(k \cdot \log^3 n)$.
\label{prop:l0-sampler}
\end{proposition}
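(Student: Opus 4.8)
The plan is to appeal to \cite{JST11} directly: the statement is a packaging of their $\ell_0$-sampler, so formally the proof is a citation, and below I only sketch the construction it unpacks to. First I would encode the substream restricted to pairs in $P$ as a frequency vector $x\in\mathbb{Z}^{P}$, where $x_e$ is the net number of insertions minus deletions of the pair $e$; at the end of the stream $\mathrm{supp}(x)=\{e\in P: x_e\ne 0\}$ is exactly the set of edges of $P$ present in the final graph. The goal is $k$ near-uniform samples from $\mathrm{supp}(x)$, and since every update is additive it suffices to maintain a \emph{linear sketch} of $x$, which handles deletions for free; note $|P|\le n^2$, so $\log|P|=O(\log n)$.

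The core gadget is a single $\ell_0$-sampler built from geometric subsampling plus sparse recovery. Fix a hash function $h:P\to\{0,\dots,\lceil\log|P|\rceil\}$ from a suitably independent family with $\Pr[h(e)\ge j]=2^{-j}$, and for each level $j$ let $x^{(j)}$ be $x$ restricted to the coordinates with $h(e)\ge j$. For every level keep a linear $s$-sparse-recovery sketch of $x^{(j)}$ of size $O(s\log n)$ together with an $O(\log n)$-bit random fingerprint of $x^{(j)}$; the sparse-recovery sketch reconstructs $x^{(j)}$ exactly whenever $\|x^{(j)}\|_0\le s$, and the fingerprint detects, w.h.p., whether the reconstruction is valid (in particular whether $x^{(j)}=0$ or $\|x^{(j)}\|_0>s$). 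Because $\mathbb{E}[\|x^{(j)}\|_0]=\|x\|_0\,2^{-j}$, there is a level $j^\star$ where this expectation is $\Theta(s)$; a concentration argument on the subsampling gives $0<\|x^{(j^\star)}\|_0\le s$ with constant probability, at which point one recovers the surviving coordinates and reports one of them uniformly at random. Attaching an independent uniform priority to each coordinate and reporting the minimum-priority survivor makes the output within $1/\poly(n)$ of uniform on $\mathrm{supp}(x)$, and taking $s=\Theta(\log n)$ drives the per-sampler failure probability to $1/\poly(n)$; each sampler then costs $O(\log n)$ levels times $O(s\log n)=O(\log^3 n)$ space.

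To get $k$ samples I would run $k$ independent copies in parallel, for total space $O(k\log^3 n)$ and, by a union bound over the copies and the $O(k\log n)$ recovery events, success with high probability; this already gives with-replacement samples. For the without-replacement variant, either oversample by an $O(\log k)$ factor and discard duplicates (a coupon-collector bound gives $k$ distinct samples w.h.p.) or replace the per-level $s$-sparse recovery by a single $k$-sparse recovery that reports the whole surviving set when $\|x^{(j)}\|_0\le k$; either way the space stays $O(k\log^3 n)$. The hard part --- and precisely what \cite{JST11} supplies --- is establishing that the output is provably near-uniform despite the bias from which level happens to succeed, and that the limited-independence hash families used for $h$ and inside the sparse-recovery sketches still yield the required concentration of $\|x^{(j)}\|_0$ within the stated space; I would import these facts rather than reprove them.
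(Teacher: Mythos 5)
Your proposal correctly identifies that this proposition is imported verbatim from \cite{JST11}, and the paper itself provides no proof beyond the citation, so your approach coincides with the paper's. Your supplementary sketch of the underlying construction (linear sketching of the restricted frequency vector, geometric level subsampling, per-level sparse recovery with fingerprint verification, and $k$ parallel copies for the $O(k\log^3 n)$ space bound) is an accurate account of the standard $\ell_0$-sampler and appropriately defers the near-uniformity and limited-independence details to the cited work.
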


\bibliographystyle{alpha}
\bibliography{arxiv_v1}
\end{document}